\definecolor{MyBlue}{cmyk}{1,0.13,0,0.63}
\definecolor{MyGreen}{cmyk}{0.91,0,0.88,0.52}
\newcommand{\mylinkcolor}{MyBlue}
\newcommand{\mycitecolor}{MyGreen}
\newcommand{\myurlcolor}{black}
\newtheorem{theorem}{Theorem}[section]
\newtheorem{proposition}[theorem]{Proposition}
\newtheorem{lemma}[theorem]{Lemma}
\newtheorem{corollary}[theorem]{Corollary}
\newtheorem{definition}[theorem]{Definition}
\newtheorem{remark}[theorem]{Remark}
\newtheorem{example}[theorem]{Example}
\numberwithin{equation}{section}
\newcommand{\CM}{{\mathbb C}}
\newcommand{\NM}{{\mathbb N}}
\newcommand{\RM}{{\mathbb R}}
\newcommand{\SM}{{\mathbb S}}
\newcommand{\ZM}{{\mathbb Z}}
\newcommand{\PM}{{\mathbb P}}
\newcommand{\KM}{{\mathbb K}}
\newcommand{\Aa}{{\mathcal A}}
\newcommand{\Ee}{{\mathcal E}}
\newcommand{\Pp}{{\mathcal P}}
\newcommand{\BB}{{\bf B}}
\newcommand{\Bb}{{\mathcal B}}
\newcommand{\Dd}{{\mathcal D}}
\newcommand{\Uu}{{\mathcal U}}
\newcommand{\Ss}{{\mathcal S}}
\newcommand{\Tt}{{\mathcal T}}
\newcommand{\Rr}{{\mathcal R}}
\newcommand{\Nn}{{\mathcal N}}
\newcommand{\Mm}{{\mathcal M}}
\newcommand{\Cc}{{\mathcal C}}
\newcommand{\Ii}{{\mathcal I}}
\newcommand{\Ll}{{\mathcal L}}
\newcommand{\Kk}{{\mathcal K}}
\newcommand{\Hh}{{\mathcal H}}
\newcommand{\edgealgebra}{{\Ee}}
\newcommand{\hsedgealgebra}{{\Ee}_+}
\newcommand{\pmedgealgebra}{{\Ee}_\pm}
\newcommand{\hsalgebra}{{\hat{\Aa}_+}}
\newcommand{\neghsalgebra}{{\hat{\Aa}_-}}
\newcommand{\pmhsalgebra}{{\hat{\Aa}_\pm}}
\newcommand{\interfacealgebra}{{\hat{\Aa}_i}}
\newcommand{\bulkalgebra}{{\Aa_b}}
\newcommand{\scrA}{{\mathcal A}}
\providecommand{\abs}[1]{\left \lvert#1 \right \rvert} 
\providecommand{\norm}[1]{\left \lVert#1 \right \rVert}
\newcommand{\one}{{\bf 1}}
\newcommand{\Tr}{\mbox{\rm Tr}}
\newcommand{\SF}{{\rm Sf}}
\newcommand{\Ch}{{\rm Ch}}
\newcommand{\Ker}{{\rm Ker}} 
\newcommand{\Ran}{{\rm Ran}} 
\newcommand{\sgn}{{\rm sgn}} 
\newcommand{\diag}{{\rm diag}}
\newcommand{\mult}{\mathcal M}
\theoremstyle{thmstyleone}%
\theoremstyle{thmstyletwo}%
\theoremstyle{thmstylethree}%
\begin{document}

\title[Self-adjoint extensions]{The role of self-adjoint extensions in the bulk-edge correspondence}

\date{\today}

\author*[1]{\fnm{Johannes} \sur{Kellendonk}}\email{kellendonk@math.univ-lyon1.fr}

\author*[2]{\fnm{Tom} \sur{Stoiber}}\email{tstoiber@uci.edu}
\affil*[1]{Univerisité de Lyon, Université Claude Bernard Lyon 1, Institute Camille Jordan, CNRS UMR 5208, 69622 Villeurbanne, France}
\affil*[2]{Department of Mathematics, University of California, Irvine, CA, 92717, USA}


\abstract{We investigate the role of self-adjoint extensions in the bulk-edge correspondence for topological insulators. While the correspondence is well understood in discrete models with spectral gaps, complications arise in the presence of unbounded Hamiltonians and varying boundary conditions, leading to anomalous behavior that has recently been dubbed violations of bulk-edge correspondence. In this work we use a K-theoretic framework to identify precise conditions needed for unbounded Hamiltonians to be affiliated to the respective observable algebras and define K-theory classes. In special cases we can then exclude anomalous behaviour and obtain the standard bulk-edge correspondence, or, under weaker conditions, obtain a relative bulk-edge correspondence theorem, which compares pairs of Hamiltonians. Applying that relative approach in the bulk we recover among other things the so-called bulk-difference-interface correspondence for Hamiltonians that fail to define a bulk K-theory class in the conventional way. The second main result is that one can define K-theory classes in terms of von Neumann unitaries, which under changes in boundary conditions directly contribute to the number of protected edge states. This approach clarifies apparent violations of the classical bulk-edge paradigm and provides a systematic account of boundary-induced topological corrections.}




\maketitle

\newpage

\tableofcontents
\newpage
\section{Introduction}
The bulk-edge correspondence is an important paradigm in the theory of topological insulators. In this context, infinitely extended boundary-less insulators (bulk systems) are described by Hamiltonians
which have a gap in their spectrum at the Fermi energy.  Topological invariants can be associated to these gaps. These so-called bulk invariants are stable under homotopy as long as the gap stays open. 
A system with boundary can be seen as a system with an interface with empty space and since empty space is a system with trivial topology the gap of a topologically non-trivial bulk system should ``close at the boundary'' coinciding with the formation of gap-filling edge modes.

Let us make those notions explicit in the best-understood setting of two-dimensional free Fermi models. There, a bulk topological insulator corresponds to a one-particle Hamiltonian $H$ with a gap $\Delta$ in its spectrum and such that the Fermi energy lies in that gap. 
Its topological class is encoded in its Fermi projection onto the spectral subspace below the gap $\Delta$. Adding a constant to the Hamiltonian if necessary, we may assume that the gap $\Delta$ is around zero energy. Then $H$ is boundedly invertible. From the Fermi projection $P_{\leq 0}(H)$ one computes a quantized bulk invariant, the integer quantum Hall conductance $\sigma_b\in \ZM$. Truncating $H$ to a space with boundary will result in a Hamiltonian $\hat{H}$ which may have spectrum in $\Delta$; to detect that one can compute the conductivity $\sigma_e$ of the edge states in $\Delta$. It turns out that the latter is also quantized $\sigma_e \in \ZM$ in proper units.

\begin{definition}
Let $H,\hat{H}$ be a pair of bulk and boundary Hamiltonians such that $\hat{H}$ coincides with $H$ far away from the boundary and $H$ has a spectral gap $\Delta$ around $0$. We say that they satisfy Hatsugai's relation if both sides of 
$$\sigma_b=\sigma_e$$
are well-defined and equal.
\end{definition}

The above relation has been shown by Hatsugai \cite{Hatsugai} using complex analytic methods in the context of periodic tight binding operators (thus with rational magnetic flux) and since then extended with other methods to various degrees of generality \cite{KRS,ElgartGrafSchenker,GrafPorta,BKR,PSbook}. For the continuous-space models considered in this work one can obtain the relation (at least for Quantum Hall systems) using K-theory \cite{KS04,KS04b,BR2018}, as we will describe below. The main alternative approaches are direct proofs of the Hatsugai relation (e.g. \cite{DGR11,CorneanEtAl24,CMS23}) or, in the related setup of smooth domain walls, pseudodifferential methods \cite{Drouot21, Bal19, Bal21}. The main motivation for the present article is that Hatsugai's relation can be violated for certain unbounded Hamiltonians, which means that those must be exceptions not covered in any of these approaches to bulk-edge correspondence.

In the K-theoretic formalism the quantization of $\sigma_b$ and $\sigma_e$ is understood by identifying them with elements of two discrete groups and Hatsugai's relation arises from a homomorphism between them, the boundary map of an exact sequence: The bulk Fermi projection defines a class $[H]_0$ in the $K$-group $K_0(\Aa_b)$ for a suitable $C^*$-algebra $\Aa_b$ of bulk observables and the edge states of $\hat H$ define a class $[\hat{H}]_1$ in the group $K_1(\Ee)$ for an algebra of edge observables. The numerical invariants $\sigma_b$, $\sigma_e$ are obtained as pairings of those classes with specific cyclic cohomology classes. The idea that $H$ and $\hat{H}$ agree away from the boundary can be phrased algebraically by means of an exact sequence 
\begin{equation}
\label{eq:bulk_edge_exact_sequence}
0 \to \Ee \to \hat{\Aa} \stackrel{q}{\to} \Aa_b \to 0
\end{equation}
which identifies $\Aa_b$ with the quotient $\hat{\Aa}/\Ee$. The algebras in this exact sequence will be adapted to the context in this paper, but for now the reader may think of the algebra $\hat{\Aa}$ as an algebra of observables on a half-space and of $\Ee$ as operators located near the boundary; a concrete model will be given below. 

For these constructions to make sense, the Hamiltonians $H$, $\hat{H}$ need to be affiliated to the respective algebras. A self-adjoint operator $T$ on a Hilbert space $\Hh$ is a called a multiplier of a $C^*$-algebra $\Aa\subset \Bb(\Hh)$ if its bounded transform $F(T)=T(1+T^2)^{-\frac12}$ is in the multiplier algebra $\mult(\Aa)$ of $\Aa$ and $(T+\imath)^{-1}\Aa$ is a norm-dense subset of $\Aa$. In the $C^*$-algebraic sense an affiliated operator is instead defined as a regular adjointable operator which maps a dense subset of a $C^*$-algebra $\Aa$ into $\Aa$. An unbounded multiplier $T$ defines an affiliated operator by the obvious map $(T+\imath)^{-1}\Aa\to \Aa$ which makes it then independent of the Hilbert space representation. As we discuss in Section~\ref{sec:aff_bbc} this is a one-to-one correspondence and as such the notions of affiliated operator or $\Aa$-multiplier may be used interchangeably. 

The epimorphim $q:\hat{\Aa}\to \Aa_b$ has a unique extension to unbounded affiliated operators and therefore any self-adjoint operator $\hat{H}$ affiliated to $\hat{\Aa}$ is mapped canonically to an operator $H=q(\hat{H})$ which is affiliated to $\Aa_b$. By definition, $H$ is the bulk operator corresponding to $\hat{H}$ far away from the boundary, which would have been difficult to make precise without $C^*$-algebraic notions. The exact sequence \eqref{eq:bulk_edge_exact_sequence} gives us a natural boundary map $\partial: K_0(\Aa_b)\to K_1(\Ee)$. 
\begin{definition}
\label{def:K_theoretic_bulk_edge_correspondence}
Let $\hat{H}$ be a self-adjoint operator affiliated to $\hat{\Aa}$. Assume that $H=q(\hat{H})$  has a spectral gap around $0$.
The pair $H,\hat{H}$ 
satisfies the K-theoretic bulk-edge correspondence if both sides of 
\begin{equation}
\label{eq:ktheoretic_bulk_edge}
[\hat{H}]_1=\partial([H]_0)
\end{equation}
are well-defined and equal. 
\end{definition}
When pairing the $K$-group elements with suitable Chern cocycles (\ref{eq:ktheoretic_bulk_edge}) gives rise to numerical relations like the Hatsugai relation. Both sides of (\ref{eq:ktheoretic_bulk_edge}) may be ill defined or, even if they are well defined, may disagree. These cases have not been properly addressed in \cite{KS04,KS04b,BR2018}; we attempt here to remedy that.

Let us explain the general principles for models defined by translation invariant differential operators. The involved algebras can be described explicitly as follows.
\begin{enumerate}
    \item The bulk algebra $\Aa_b$ is the $C^*$-algebra generated by integral operators on $L^2(\RM^d)$ with kernels $k\in C(\RM^d\times \RM^d)$ with rapid off-diagonal decay
    $$k(x,y)=O(\abs{x-y}^{-\infty})$$
    satisfying the covariance relation 
    $$k(x,y)= k(0,y-x)$$
   One can identify $\Aa_b\simeq C_0(\RM^d)$ since the Fourier (or Bloch-) transform turns the now translation-invariant integral operators into multiplication operators. 
    \item The edge algebra $\Ee$ (we later denote it also by $\Ee_+$) is the $C^*$-algebra generated by integral operators on $L^2(\RM^{d-1}\times \RM_+)$ with kernels $k\in C((\RM^{d-1}\times \RM_+)\times (\RM^{d-1}\times \RM_+))$ which decay in the direction orthogonal to the boundary and have rapid off-diagonal decay in the parallel directions
    $$k(x', x_d,y', y_d)=O(\abs{x_d}^{-\infty}+ \abs{y_d}^{-\infty} + \abs{x' - y'}^{-\infty})$$
    as well as the covariance
     $$k(x', x_d,y', y_d)= k(0, x_d,y'-x', y_d).$$
    The partial Fourier transform in the first $d-1$ directions gives the isomorphism $\Ee\simeq C_0(\RM^{d-1})\otimes \KM(L^2(\RM_+))$, i.e.\ one has fiberwise compact operators.
    \item The half-space algebra $\hat{\Aa}$ (later also denoted by $\hat{\Aa}_+$) is the $C^*$-algebra consisting of all operators on $L^2(\RM^{d-1}\times \RM_+)$ that can be written in the form 
    $$\hat{a}= P_+ a_b P_+ + e$$
    for some $a_b \in \Aa_b$, $e\in \Ee$ and the orthogonal projection $P_+:L^2(\RM^{d})\to L^2(\RM^{d-1}\times \RM_+)$. We see that, modulo $\Ee$, $\hat a$ is the same as $P_+ a_b P_+$ and the covariance relation implies that the integral kernel of $P_+ a_b P_+$ uniquely determines that of $a_b$ so that the quotient $\hat \Aa/\Ee$ is in fact isomorphic to $\Aa_b$. 
\end{enumerate}  
We will also look at matrix-valued operators, in which case the above algebras are to be enlarged by taking their tensor product with a matrix algebra.
In Section~\ref{ssec:algebras} we will give more general constructions based on twisted crossed product algebras which can also handle external magnetic fields and non-translation-invariant ergodic potentials and therefore disordered models.

Let us focus now on the two-dimensional translation-invariant case, where $\Aa_b\simeq C_0(\RM^2)$ and $\Ee=C_0(\RM)\otimes \KM(L^2(\RM_+))$.
A bulk Hamiltonian $H$ is a matrix-valued translation-invariant self-adjoint differential operator. It corresponds via Fourier transform to a polynomial function $H: \RM^2 \to M_N(\CM)$. The continuous bounded functions of $H$ are $N\times N$-matrices with values in the multiplier algebra $\mult(\Aa)=C_b(\RM^2)$. If $H$ has a spectral gap at energy $0$ 
then the spectral projection $P_{\leq 0}(H)$ of $H$ onto states below the gap  is a continuous function $P:\RM^2 \to M_N(\CM)$. Under certain circumstances (which we assume for now) the formula
\begin{equation}
\label{eq:chern}
\sigma_b :=\frac{1}{2\pi \imath } \int_{\RM^2} \Tr(P(k) [\partial_{k_1} P(k), \partial_{k_2} P(k)]) \mathrm{d}k
\end{equation}
is well-defined and integer-valued defining a topological invariant that is associated to the gap. This invariant, called the Chern number of the projection, is our bulk invariant; it can be realised as a pairing $\sigma_b=\langle\Ch_2,[H]_0\rangle$ between a cyclic 2-cocyle $\Ch_2$ and the $K_0$-class $[H]_0$ and implements an isomorphism $K_0(C_0(\RM^2))\simeq \ZM$. It has the physical interpretation as the Hall conductance if the Fermi energy belongs to the gap \cite{TKNN}.

Any self-adjoint operator $\hat{H}$ affiliated to $\hat{\Aa}$ decomposes as a direct integral of self-adjoint operators $\hat{H}=\int_{\RM}^\oplus \hat{H}_k\, dk$ on $L^2(\RM_+)$. If this is a family of Fredholm operators (the spectrum of $\hat{H}_k$ 
in some interval $\Delta$ around $0$ consists of at most $N<+\infty$ eigenvalues)
and $\hat{H}_k$ is invertible for all large enough $\abs{k}$ then the family is continuous in what is sometimes called the Wahl topology \cite{Wahl} and therefore has a well-defined spectral flow $\SF(k\in \RM\mapsto \hat{H}_k)$, essentially 
the number of eigenvalues passing through $0$, counted with a plus or minus sign if they pass from below or above, as one varies $k$ from $-\infty$ to $+\infty$. One can interpret this integer K-theoretically as the pairing of $[\hat{H}]_1\in K_1(\Ee)\simeq \ZM$ with the  winding number cocycle and verify in this way that it corresponds to the edge conductivity
$$\sigma_e=\langle \Ch_1,[\hat H]_1\rangle = \SF(k\in \RM\mapsto \hat{H}_k)$$

The boundary map $\partial:K_0(\Aa_b)\to K_1(\Ee)$ in this case is an isomorphism of $\ZM$ with $\ZM$ and we have:
\begin{proposition}
For $\Aa_b$ and $\Ee$ as above the K-theoretic bulk-edge correspondence implies Hatsugai's relation in dimension $d=2$.
\end{proposition}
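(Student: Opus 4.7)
The plan is to compute both $\sigma_b$ and $\sigma_e$ as Connes pairings of $K$-theory classes with specific cyclic cocycles and then use the fact that the boundary map $\partial:K_0(\Aa_b)\to K_1(\Ee)$ is dual, in cyclic cohomology, to a well-understood transfer of cocycles. Under the identifications $\Aa_b\simeq C_0(\RM^2)$ and $\Ee\simeq C_0(\RM)\otimes\KM(L^2(\RM_+))$ the integrand in (\ref{eq:chern}) gives exactly the pairing of $[H]_0$ with the Chern $2$-cocycle $\Ch_2$, while the spectral flow representation recalled above identifies $\sigma_e$ with the pairing of $[\hat H]_1$ with the winding-number $1$-cocycle $\Ch_1(e_0,e_1)=\int_\RM\Tr\bigl(e_0\,\partial_k e_1\bigr)\,\mathrm{d}k$. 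Thus the claim reduces to
\[
\langle\Ch_1,\partial([H]_0)\rangle=\langle\Ch_2,[H]_0\rangle.
\]

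The next step I would take is to invoke the duality of boundary maps for Connes' pairing: for any exact sequence, the $K$-theoretic boundary $\partial$ admits a cyclic-cohomological adjoint $\partial^{\#}$ such that $\langle c,\partial(x)\rangle=\langle\partial^{\#}c,x\rangle$. Applied here I would show $\partial^{\#}\Ch_1=\Ch_2$. The exact sequence (\ref{eq:bulk_edge_exact_sequence}) is a Wiener--Hopf extension in the $d$-th coordinate: after partial Fourier transform in the first $d-1$ directions, $P_+$ is the Toeplitz projection in the remaining variable, so (\ref{eq:bulk_edge_exact_sequence}) is a Pimsner--Voiculescu/Toeplitz extension of $C_0(\RM^2)$. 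For such extensions the dual boundary in cyclic cohomology is explicitly given by contracting the transverse derivation $\partial_{k_d}$ against the fibrewise operator trace on $C_0(\RM)\otimes\KM$, which sends $\Ch_1$ to $\Ch_2$ up to the normalisation factor $\tfrac{1}{2\pi\imath}$ chosen in (\ref{eq:chern}). This is the computation that already produced the K-theoretic proof of Hatsugai's relation in \cite{KS04,KS04b,BR2018} and is exactly the noncommutative-geometric content of the suspension isomorphism in this setting.

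Putting the pieces together one obtains
\[
\sigma_e=\langle\Ch_1,[\hat H]_1\rangle=\langle\Ch_1,\partial([H]_0)\rangle=\langle\partial^{\#}\Ch_1,[H]_0\rangle=\langle\Ch_2,[H]_0\rangle=\sigma_b,
\]
where the middle equality uses the K-theoretic bulk-edge correspondence and the remaining equalities are the cohomological duality above. The main obstacle, and the reason why this is a \emph{proposition} rather than a tautology, is not the algebraic manipulation but the analytic assumption: one must know that the densely defined cocycles $\Ch_1$ and $\Ch_2$ actually converge on representatives of $[\hat H]_1$ and $[H]_0$ arising from unbounded affiliated operators, so that both sides of the Connes pairing make sense simultaneously. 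This is exactly the content of the hypothesis that $\sigma_b$ and $\sigma_e$ be well defined, so once the duality of boundary maps has been set up the conclusion follows without further work; the harder analytic task of actually guaranteeing that these pairings converge for unbounded $H$ and $\hat H$ is taken up in the later sections of the paper.
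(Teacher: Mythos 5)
Your argument is correct and follows essentially the same route as the paper: both invariants are realised as Connes pairings, $\sigma_b=\langle\Ch_2,[H]_0\rangle$ and $\sigma_e=\langle\Ch_1,[\hat H]_1\rangle$, and the duality $\langle\Ch_2,x\rangle=\langle\Ch_1,\partial_+(x)\rangle$ for the Wiener--Hopf extension (which the paper quotes as Theorem 4.5.3 of \cite{SSt}) turns the K-theoretic identity $[\hat H]_1=\partial([H]_0)$ into $\sigma_e=\sigma_b$. Your closing remark that the only remaining issue is the well-definedness of the pairings on the given representatives is also exactly how the paper delimits the scope of the proposition.
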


As already mentioned, the K-theoretic bulk-edge correspondence need not hold. However,
there is a simple sufficient condition which resolves the bulk-edge correspondence for realistic quantum Hall systems, which are naturally bounded from below. This is Theorem~\ref{th:bbc_strongly_affiliated} combined with Lemma~\ref{lemma:strong_vs_resolvent} in the main text.
\begin{proposition}
Given an exact sequence \eqref{eq:bulk_edge_exact_sequence}. Let $\hat{H}$ be a self-adjoint operator affiliated to $M_N(\hat{\Aa})$. Assume that  $H=q(\hat{H})$ has a spectral gap around $0$.
If 
\begin{equation}
\label{eq:res_aff}
(\hat{H}+\imath)^{-1}\in M_N(\hat{\Aa})
\end{equation} and $\hat{H}$ is bounded from below then the pair $H,\hat H$ satisfies the K-theoretic bulk-edge correspondence.

More generally, if $\hat{H}$ is not necessarily bounded from below then the K-theoretic bulk-edge correspondence holds under the strong affiliation condition
\begin{equation}
\label{eq:strong_aff}
F(\hat{H})= \hat{H}(1+\hat{H}^2)^{-\frac{1}{2}} \in M_N(\hat{\Aa}^\sim)
\end{equation} 
where $\hat{\Aa}^\sim$ is the conditional unitization.
\end{proposition}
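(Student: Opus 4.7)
The proof proceeds by expressing both sides of the equation $[\hat H]_1 = \partial([H]_0)$ via continuous functional calculus applied to $H$ and $\hat H$, and matching them using the standard exponential-map formula for the boundary map of the extension~\eqref{eq:bulk_edge_exact_sequence}.

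Fix a continuous switch function $\chi:\RM\to[0,1]$ with $\chi\equiv 1$ to the left of the spectral gap of $H$ at $0$ and $\chi\equiv 0$ to the right, so that $\chi(H)=P_{\leq 0}(H)$ is the Fermi projection representing $[H]_0\in K_0(\Aa_b)$. The role of the two affiliation hypotheses is precisely to guarantee that $\chi$ can be applied to $\hat H$ within the appropriate algebra. Under resolvent affiliation combined with $\hat H$ bounded below, the spectrum of $\hat H$ below the gap is compact, so one may take $\chi$ of compact support, giving $\chi(\hat H)\in M_N(\hat\Aa)$ by the $C_0(\RM)$ functional calculus available for a resolvent-affiliated operator. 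Under the strong affiliation hypothesis, one instead factors $\chi=\tilde\chi\circ F$ where $\tilde\chi:[-1,1]\to[0,1]$ is continuous and $F(x)=x(1+x^2)^{-1/2}$; this is possible because $F$ maps $\RM$ homeomorphically onto $(-1,1)$ and $\chi$ has well-defined limits $1$ and $0$ at $\mp\infty$, yielding $\tilde\chi(-1)=1$, $\tilde\chi(1)=0$. Then $\chi(\hat H)=\tilde\chi(F(\hat H))\in M_N(\hat\Aa^\sim)$ by bounded continuous functional calculus applied to the element $F(\hat H)\in M_N(\hat\Aa^\sim)$.

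Because the extension of $q$ to affiliated operators (guaranteed by the general theory recalled in the introduction) commutes with continuous functional calculus, $q(\chi(\hat H))=\chi(H)=P_{\leq 0}(H)$. Hence $\hat P := \chi(\hat H)$ is a self-adjoint lift of the Fermi projection in the appropriate unitization, satisfying $\hat P^2-\hat P\in M_N(\Ee)$. Applying the exponential-boundary-map formula,
\begin{equation*}
\partial([H]_0) \;=\; [\exp(2\pi\imath\,\hat P)]_1 \;=\; [\exp(2\pi\imath\,\chi(\hat H))]_1,
\end{equation*}
which is indeed a class in $K_1(\Ee)$ since $q(\exp(2\pi\imath\,\hat P))=\exp(2\pi\imath\,P_{\leq 0}(H))=\one$, so $\exp(2\pi\imath\,\hat P)\in\one+M_N(\Ee)$. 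By the standard definition of the edge class of a self-adjoint operator whose bulk image has a gap at $0$, $[\hat H]_1$ is precisely the class $[\exp(2\pi\imath\,\chi(\hat H))]_1$, and an easy homotopy argument shows this is independent of the choice of admissible switch $\chi$. This yields $\partial([H]_0)=[\hat H]_1$.

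The main obstacle is the analytic verification that $\chi(\hat H)$ lies in the correct algebra — in $M_N(\hat\Aa)$ under resolvent affiliation with compactly supported $\chi$, and in $M_N(\hat\Aa^\sim)$ under strong affiliation with $\chi$ factoring through $F$. This is essentially the content of the affiliation hypotheses as formulated in Section~\ref{sec:aff_bbc}, and the key reason the bounded transform $F$ appears in the strong affiliation condition: it transfers the functional calculus question from unbounded switch functions on $\RM$ to continuous functions on the compact interval $[-1,1]$, where it is controlled by the single requirement $F(\hat H)\in M_N(\hat\Aa^\sim)$. Once these inclusions are in place, the rest of the argument is a direct application of the exponential boundary map formula together with the extension of $q$ to affiliated elements.
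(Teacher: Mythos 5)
Your proposal is correct and follows essentially the same route as the paper: lift the Fermi projection by applying functional calculus to $\hat H$ (your switch $\chi$ is just $\tfrac12(1-\Theta)$ for the paper's normalizing function $\Theta$), check via the affiliation hypothesis that the lift lands in $M_N(\hat\Aa^\sim)$, and apply the exponential formula for $\partial$, identifying the result with the edge class $[\hat H]_1$ up to the homotopy independence of Proposition~\ref{prop:normalizing_homotopy}. The only divergence is in the bounded-below case, where you directly build a compactly supported $\chi$ with $\chi(\hat H)\in M_N(\hat\Aa)$, whereas the paper first upgrades resolvent-affiliation plus semiboundedness to strong affiliation via Lemma~\ref{lemma:strong_vs_resolvent} (writing $F(\hat H)=1+f(\hat H)$ with $f\in C_0(\RM)$) and then invokes Theorem~\ref{th:bbc_strongly_affiliated}; both arguments are valid.
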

To understand the significance of this condition note that $\hat{\Aa}$ is generally non-unital, hence the usual notion of affiliation only requires that $(H+\imath)^{-1}\in M_N(\mult(\hat{\Aa}))$ is a matrix with entries in the multiplier algebra $\mult(\hat{\Aa})$ which is much larger than $\hat{\Aa}^\sim$. It turns out that, for example, the usual Laplacian with Dirichlet or Neumann boundary conditions is bounded from below and satisfies the resolvent-affiliation condition \eqref{eq:res_aff} as one can check directly by computing its integral kernels. Likewise, the strong affiliation condition \eqref{eq:strong_aff} also holds for certain unbounded model Hamiltonians such as the regularized Dirac operator, but we can only assert this for a very small number of boundary conditions. For more general differential operators it is useful to study in more detail the problem of self-adjoint extensions of affiliated operators. Indeed, the von Neumann theory of self-adjoint extensions generalizes to the $C^*$-algebraic setting without large modifications \cite{WK92}, as we will recall in Section~\ref{sec:sa_extensions}. Based on this we will then show in Section~\ref{sec:aff_interface} how questions of affiliation can be answered constructively for half-space and interface problems.

If $\hat{H}$ is not bounded from below or otherwise not sufficiently regular (especially if it is non-elliptic), then a bulk-edge correspondence in form of an equality between a bulk invariant and an edge invariant which can be expressed as a spectral flow of edge modes might not exist. This has been most prominently observed in Dirac-type models or shallow-water models  \cite{DelplaceEtAl,TDV20, GJT21, TDV19, TT21, GrafTarantola2025, JudTauber2025}. In essence, the spectral flow of the half-space model can be seen to depend on the choice of self-adjoint boundary condition for $\hat H$, even within the relatively nice class of local translation-invariant boundary conditions. Importantly, this does not depend on how the spectral flow of the edge modes is employed to define the edge invariant. We use a definition of edge invariant which is motivated both by physics (the Hatsugai relation) and K-theory, looking at spectral flow through the gap. Other versions of the edge invariant are defined by counting how many edge modes flow out or into the bulk spectrum. In that case there are models for which the difference between the edge invariant and the bulk invariant is given by the winding number of a certain scattering phase at infinite energy \cite{TDV20, GJT21, GrafTarantola2025, JudTauber2025}. While this is a topological identity, it is not fully satisfying, since it is not robust against disorder. Indeed, neither the edge invariant nor the scattering phase of \cite{GJT21} can even be defined for aperiodic models. Moreover, when the above two versions of edge invariant can both be defined, they do not always agree with each other and therefore the scattering approach also does not directly make statements about the validity of the conventional Hatsugai relation.
One objective of this work is to derive an analogous correction term for the edge conductivity which seamlessly fits into K-theory, therefore showing robustness under a large class of perturbations, as long as certain affiliation conditions hold.

Since the Hatsugai relation can be violated, the K-theoretic bulk-edge correspondence \eqref{eq:ktheoretic_bulk_edge} also cannot hold in general. To make matters worse, the K-theory classes on either or both sides of can be ill-defined. Therefore, checking validity of \eqref{eq:ktheoretic_bulk_edge} alone is of too narrow scope, instead we propose a more general formalism which covers the usual bulk-edge correspondence as a special case. We call it relative bulk-edge correspondence, since it is based on comparing pairs of Hamiltonians with well-defined relative K-theoretic invariants.

\begin{definition}\label{def:I-comparable}
Let $\Aa$ be a $C^*$-algebra with closed ideal $\Ii$ and let $H,H'$ be two self-adjoint operators affiliated to $\Aa$.
\begin{itemize}
\item $H$ and $H'$ are $\Ii$-comparable if, for all 
$f\in C([-\infty,+\infty])$, $f(H)-f(H')\in \Ii$.

\item $H$ and $H'$ are locally $\Ii$-comparable if there exists an open interval $\Delta$ containing $0$ such that, for all $f\in C_c(\Delta)$, 
$f(H)-f(H')\in \Ii$ . 
\end{itemize}
\end{definition}

As we will see in Section~\ref{ssec:rel_inv}, if $H,H'$ are $\Ii$-comparable and both have a spectral gap containing $0$ one can then define a relative $K$-theory class
$[H,H']_0 \in K_0(\Ii)$. Likewise, for locally $\Ii$-comparable $H,H'$ one can define a relative class
$[H,H']_1 \in K_1(\Ii)$.

The corner stones of the relative bulk-edge correspondence are then two results. The first is Theorem~\ref{th:rel_bbc} in the main text.

\begin{theorem} Consider the exact sequence \eqref{eq:bulk_edge_exact_sequence}.
Let $\hat{H},\hat{H}'$ be affiliated to $\hat{\Aa}$ and $\hat{\Aa}$-comparable. Then $H=q(\hat{H})$, $H'=q(\hat{H}')$ are $\Aa_b$-comparable. If $H$ and $H'$ have a spectral gap around $0$ then
$\hat{H},\hat{H}'$ are locally $\Ee$-comparable and 
$$[\hat{H},\hat{H}']_1 = \partial([H,H']_0)\;\; \in \;\;K_1(\Ee).$$
\end{theorem}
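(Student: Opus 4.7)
The plan is to address the three assertions in order, using that $q:\hat\Aa\to\Aa_b$ extends uniquely to affiliated operators and commutes with bounded continuous functional calculus, and then to exhibit a common representative of both sides of the $K_1$-identity.

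For any $f\in C([-\infty,+\infty])$, $q(f(\hat H))=f(H)$ and $q(f(\hat H'))=f(H')$, so
$$f(H)-f(H')=q\bigl(f(\hat H)-f(\hat H')\bigr)\in q(\hat\Aa)=\Aa_b,$$
which proves $\Aa_b$-comparability. If in addition $H,H'$ share a spectral gap $\Delta\ni 0$ and $f\in C_c(\Delta)$, then $f(H)=f(H')=0$ by functional calculus, hence $f(\hat H),f(\hat H')\in\ker q=\Ee$ individually, establishing local $\Ee$-comparability.

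For the boundary-map identity I would fix a smooth $\chi\in C(\RM,[0,1])$ with $\chi(x)=1$ for $x\leq-\delta$ and $\chi(x)=0$ for $x\geq\delta$, where $[-\delta,\delta]\subset\Delta$. Then $\chi(H)=P_{\leq 0}(H)$ and $\chi(H')=P_{\leq 0}(H')$, and these projections represent $[H,H']_0$ according to the construction in Section~\ref{ssec:rel_inv}. Lifting by functional calculus to $\chi(\hat H),\chi(\hat H')\in\mult(\hat\Aa)$, I would form
$$u=e^{-2\pi i\chi(\hat H)},\qquad u'=e^{-2\pi i\chi(\hat H')}.$$
Since $e^{-2\pi i\chi(x)}-1$ is supported in $(-\delta,\delta)\subset\Delta$, both $u-1$ and $u'-1$ lie in $\ker q=\Ee$, so $u,u'\in\Ee^\sim$. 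The exponential form of the index map applied to the difference of projections then gives
$$\partial\bigl([H,H']_0\bigr)=\bigl[u(u')^*\bigr]_1\in K_1(\Ee),$$
and it remains to verify that the class $[\hat H,\hat H']_1$ constructed from local $\Ee$-comparability in Section~\ref{ssec:rel_inv} coincides with $[u(u')^*]_1$.

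The main obstacle is matching the representatives on the two sides. A first subtlety is that $\chi(\hat H)$ typically lies only in $\mult(\hat\Aa)$ rather than in $\hat\Aa^\sim$, so a careful multiplier-algebra version of the index-map formula is needed. The crucial step, however, is to recognise the paper's definition of the relative $K_1$ class for locally $\Ee$-comparable operators as the exponential class just described, so that $[\hat H,\hat H']_1$ is literally $[u(u')^*]_1$. Once that identification is in place, independence of the auxiliary choices of $\chi$ and $\delta$ is a routine homotopy, since any two admissible cutoffs differ by a function in $C_c(\Delta)$, producing a continuous path of unitaries in $\Ee^\sim$.
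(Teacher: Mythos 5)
Your first assertion ($\Aa_b$-comparability via $q(f(\hat H))=f(H)$) is correct and matches the paper, and your overall strategy for the boundary-map identity --- exponentiate functional-calculus lifts of the Fermi projections and match against the definition of $[\hat H,\hat H']_1$ --- is exactly the paper's route (Proposition~\ref{proposition:multiplierconnectingmaps} combined with Definition~\ref{def:relative_gapless_invariants}).

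There is, however, a genuine gap that recurs twice. You assert that for $f\in C_c(\Delta)$ one has $f(\hat H),f(\hat H')\in\ker q=\Ee$ \emph{individually}, and later that $u-1$ and $u'-1$ lie in $\Ee$ so that $u,u'\in\Ee^\sim$. This conflates $\ker(q|_{\hat\Aa})=\Ee$ with the kernel of the extension of $q$ to $\mult(\hat\Aa)$, which is the much larger algebra $\hat\Bb_\Ee=q^{-1}(0)$ introduced in Section~\ref{ssec:rel_inv}. Since $\hat H$ is only assumed affiliated (not resolvent-affiliated), $f(\hat H)$ for $f\in C_0(\RM)$ is a priori only a multiplier of $\hat\Aa$, so $q(f(\hat H))=0$ does not place it in $\Ee$. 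If your claim were true, every affiliated $\hat H$ with invertible $q(\hat H)$ would be invertible modulo $\Ee$ and would carry an individual class $[\hat H]_1$ --- precisely what fails for the non-resolvent-affiliated extensions with flat edge bands in Section~\ref{sec:examples}, and what the relative framework is built to circumvent. A telltale sign is that your argument for local $\Ee$-comparability never invokes the hypothesis of $\hat\Aa$-comparability, without which the conclusion is false. The correct argument works with differences throughout: for $f\in C_c(\Delta)$, $\hat\Aa$-comparability gives $f(\hat H)-f(\hat H')\in\hat\Aa$, and $q$ of this difference equals $f(H)-f(H')=0$, so the \emph{difference} lies in $\Ee$ (this is Proposition~\ref{prop:E}); likewise only $u(u')^*-1\in\Ee$, not $u-1$ and $u'-1$ separately, and the exponential formula for $\partial$ must be evaluated in the pair algebra $\PM(\hat\Bb_\Ee,\Ee)$ as in Proposition~\ref{proposition:multiplierconnectingmaps} rather than in $\Ee^\sim$ directly.
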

The comparability condition holds in particular if $\hat{H}$, $\hat{H}'$ are relatively $\hat{\Aa}$-compact perturbations of each other. By the Kato-Rellich theorem $\hat{H}$ and $\hat{H}'$ must have the same domain and the same boundary conditions, hence this theorem mostly allows us to compare models which have the same boundary conditions but are different in the bulk, e.g. differ by addition of a potential.

The complement of this theorem compares two operators which are equal in the bulk, but are supplied with different boundary conditions. This is phrased algebraically by writing them as self-adjoint extensions corresponding to different von Neumann unitaries. In this case the difference in edge invariants can be read off entirely from the von Neumann unitaries, which define K-theory classes in the edge algebra. This is Prop.~\eqref{prop:comparison_saext_ktheory} in the main text.
\begin{theorem}
Let $\mathring{H}$ be a symmetric operator affiliated to $\hat{\Aa}$, then its self-adjoint extensions $H_u$ which are affiliated to $\hat{\Aa}$ are parametrized precisely by those von Neumann unitaries which are in the multiplier algebra $\mult(\hat{\Aa})$. Given two von Neumann unitaries $u,v\in \mult(\hat{\Aa})$ the following conditions are equivalent 
\begin{enumerate}
		\item[(i)] $\Cc(\hat{H}_u)-\Cc(\hat{H}_v) \in \Ee$ with $\Cc$ the Cayley transform.
		\item[(ii)] $(\hat{H}_u+\imath)^{-1}-(\hat{H}_v+\imath)^{-1} \in \Ee$
		\item[(iii)] $u-v \in \Ee$.
	\end{enumerate}
	If either holds one has
	$$[\hat{H}_u, \hat{H}_v]_1 = [\Cc(\hat{H}_u)\Cc(\hat{H}_v)^*]_1  = [1+uv^*-e_+]_1 \in K_1(\Ee)$$
	with the projection $e_+ = uu^*=vv^*$ onto the deficiency subspace $\Ker(\mathring{H}^*+\imath)$.
\end{theorem}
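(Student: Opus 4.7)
My plan is to invoke the $C^*$-algebraic self-adjoint extension theory of Section~\ref{sec:sa_extensions} for the parametrization statement and then reduce the three equivalences together with the $K$-theory formula to algebra with partial isometries. The parametrization identifies affiliation of $\hat{H}_u$ to $\hat{\Aa}$ with the requirement that its Cayley transform lies in $\mult(\hat{\Aa})$; together with affiliation of $\mathring{H}$ this forces the von Neumann unitary $u$ to lie in $\mult(\hat{\Aa})$. Writing $\Cc(\hat{H}_u) = V + u$, the element $V \in \mult(\hat{\Aa})$ is the common isometric part coming from $\Cc(\mathring{H})$, with initial projection $1 - e_-$ and range projection $1 - e_+$ (where $e_\pm \in \mult(\hat{\Aa})$ project onto the deficiency subspaces $\Ker(\mathring{H}^* \mp \imath)$), while $u$ is a partial isometry satisfying $u = e_+ u e_-$, $u^*u = e_-$, $uu^* = e_+$. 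The structural identities $V e_- = 0$, $e_- V^* = 0$ and $VV^* = 1 - e_+$ are built into $V$ and will drive the whole argument.

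For the equivalences, subtracting the decompositions immediately gives
$$\Cc(\hat{H}_u) - \Cc(\hat{H}_v) = u - v,$$
yielding (i)$\Leftrightarrow$(iii), while the standard identity $\Cc(H) = 1 - 2\imath(H+\imath)^{-1}$ produces
$$\Cc(\hat{H}_u) - \Cc(\hat{H}_v) = -2\imath \bigl[(\hat{H}_u+\imath)^{-1} - (\hat{H}_v+\imath)^{-1}\bigr],$$
which gives (i)$\Leftrightarrow$(ii). I would also note that each of these conditions propagates via Stone--Weierstrass (since $(x\pm\imath)^{-1}$ generate $C_0(\RM)$ as a $C^*$-algebra) to full $\Ee$-comparability in the sense of Definition~\ref{def:I-comparable}, so that the relative class $[\hat{H}_u,\hat{H}_v]_1 \in K_1(\Ee)$ from Section~\ref{ssec:rel_inv} is available and, by its Cayley-transform definition, equals $[\Cc(\hat{H}_u)\Cc(\hat{H}_v)^*]_1$. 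To evaluate this last class I would expand
$$\Cc(\hat{H}_u)\Cc(\hat{H}_v)^* = (V+u)(V^* + v^*) = VV^* + Vv^* + uV^* + uv^*$$
and observe that the cross terms vanish: $v^* = e_- v^*$ together with $V e_- = 0$ kills $V v^*$, while $u = u e_-$ together with $e_- V^* = 0$ kills $u V^*$. What remains is $(1 - e_+) + uv^*$, and since $uv^* - e_+ = (u - v)v^*$ lies in $\Ee$ by (iii) and the ideal property of $\Ee$ in $\mult(\hat{\Aa})$, the element $1 + uv^* - e_+$ is indeed a unitary in the unitization $\Ee^\sim$ representing the asserted $K_1$-class.

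The hardest part will not be any of the algebraic manipulations above, which are essentially formal once the setup is in place, but rather securing the multiplier-level von Neumann decomposition from Section~\ref{sec:sa_extensions}: affiliation of $\hat{H}_u$ to $\hat{\Aa}$ must be promoted through the Cayley transform to multiplier membership of the full data $(e_\pm, V, u)$, not merely of the unitary $\Cc(\hat{H}_u)$ itself. Once that input is granted, the remainder of the argument is bookkeeping of partial-isometry identities.
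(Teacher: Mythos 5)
Your proposal is correct and follows essentially the same route as the paper's proof of Proposition~\ref{prop:comparison_saext_ktheory}: the parametrization is Theorem~\ref{theorem:saextensions}, the equivalence of (i)--(iii) comes from the identity $\Cc(\hat{H}_u)-\Cc(\hat{H}_v)=u-v$ together with $\Cc(H)=1-2\imath(H+\imath)^{-1}$ (cf.\ \eqref{eq:resolvent_cayley}), and the $K_1$-formula is obtained by expanding $\Cc(\hat{H}_u)\Cc(\hat{H}_v)^*=(V+u)(V^*+v^*)=1-e_++uv^*$ --- the computation the paper compresses into ``simple algebra'' inside Proposition~\ref{prop:von_neumann_ktheory}. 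One terminological correction: what Stone--Weierstrass gives you from (i)--(iii) is $\Ee$-\emph{resolvent}-comparability in the sense of Definition~\ref{def:resolvent-comparable} (i.e.\ $f(\hat H_u)-f(\hat H_v)\in\Ee$ for $f\in C_0(\RM)$), not full $\Ee$-comparability in the sense of Definition~\ref{def:I-comparable}, which requires $f\in C([-\infty,+\infty])$ and which the paper emphasizes typically \emph{fails} for distinct boundary conditions; since Lemma~\ref{lemma:properteis_relative_gapless}(iii) only needs resolvent-comparability to identify $[\hat H_u,\hat H_v]_1$ with $[\Cc(\hat H_u)\Cc(\hat H_v)^*]_1$, your conclusion is unaffected.
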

For two-dimensional differential operators the topological contribution of this K-theory class corresponds to a relative winding number that can be used to compare different boundary conditions and which directly manifests itself in a correction to the number of edge modes.

The combination of these two theorems allows us to understand a large part of phenomena which have previously been dubbed somewhat misleadingly violations of bulk-edge correspondence. Indeed, taking into account the topological contribution of the von Neumann unitaries we obtain a satisfying theory of bulk-edge correspondence for local translation-invariant boundary conditions under a natural resolvent-affiliation condition, see Theorem~\ref{th:bulk_edge_pluscorrection}. 

Another powerful application is bulk-interface correspondence, which is algebraically set up very similarly by an exact sequence which relates two asymptotic bulk operators on either side of a boundary hypersurface to an interface Hamiltonian. In that case the K-theory class of the interface states is related to the relative bulk K-theory class of the bulk Hamiltonians. Our main result is Theorem~\ref{th:bulk_interface_pluscorrection} which is a K-theoretic version of Bal's \cite{Bal19} so-called bulk-difference interface correspondence and simultaneously generalizes it to sharp interfaces which have non-trivial matching conditions at a fiducial hypersurface. In that case, the von Neumann unitary describing the boundary condition may once again give a correction to the number of interface modes. This is interesting in particular in view of violations of bulk-interface correspondence observed in certain models \cite{BCSR24, BalYu}.

In Section~\ref{sec:examples} we will finally discuss several two-dimensional examples in detail. After showing how von Neumann unitaries can be computed from boundary triples, we study boundary conditions for the Laplacian for which it has a band of edge modes below the bulk band and discuss the relative bulk-edge and bulk-interface correspondence for (regularized) massive Dirac Hamiltonians. Finally, we briefly discuss the shallow-water model of \cite{DelplaceEtAl, TDV20} in our framework.

\medskip

\noindent
{\bf Acknowledgements:} The authors gratefully acknowledge the hospitality of ETH Zürich, where part of this work was carried out. Some of the material previously appeared in the second author’s doctoral thesis \cite{StThesis}, but has been substantially extended and revised. This work was partially supported by the grants NSF DMS-2052899, DMS-2155211, Simons-681675, as well as the German research foundation Project-ID 521291358.

\section{Affiliation and comparison of unbounded multipliers}

\subsection{Affiliation of unbounded operators}
\label{sec:aff_bbc}
Let $\Aa$ be a $C^*$-algebra. We denote by $\mult(\Aa)$ its multiplier algebra, which can be abstractly defined as the algebra of all bounded linear operators on $\Aa$ which have an adjoint when one considers $\Aa$ as a right Hilbert $\Aa$-module. Affiliated operators, also called unbounded multipliers, are densely defined adjointable operators on $\Aa$, which for technical reasons shall usually be regular operators. There are several distinct ways to characterize this combination of properties. 
\begin{definition}[\cite{W90}]
\label{def:affiliated}
A densely defined linear operator $T$ on $\Aa$ is called affiliated to $\Aa$ if there is $z\in \mult(A)$, $\|z\|\leq 1$ such that the graph of $T$ is given by
$$\{(1-z^*z)^{\frac12} a,za):a\in \Aa\}$$
\end{definition}
An affiliated operator $T$ has an adjoint $T^*$ and is regular, i.e. $1+T^*T$ has a bounded inverse in $\mult(\Aa)$, in fact $z$ is uniquely determined as the bounded transform $z=F(T):=T (1+T^*T)^{-\frac12}$. The domain of $T$ (as an adjointable operator on $\Aa$) is the norm-dense right $\Aa$-module
$$(1+T^*T)^{-\frac12} \Aa = (1-z^*z)^{\frac12} \Aa.$$
Morphisms extend to affiliated operators:
\begin{proposition}
\label{prop:morphism} If $\varphi:\Aa\to\Bb$ is a $*$-homomorphism of $C^*$-algebras then it extends uniquely to the multiplier algebra $\varphi: \mult(\Aa)\to \mult(\Bb)$. If $\varphi$ is surjective and $\Aa,\Bb$ are separable then the extension to multipliers is also surjective \cite{APT}.
	
One also has an extension to affiliated operators \cite[Theorem 1.2]{W90}: If $T$ is an $\Aa$-multiplier and $\varphi(\Aa)\Bb$ is norm-dense in $\Bb$ then there is a canonical $\Bb$-multiplier $\varphi(T)$ with core $\varphi(\mathrm{Dom}_{\Aa}(T))\Bb$.
\end{proposition}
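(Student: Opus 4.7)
The proposition has three parts of increasing complexity: extending $\varphi$ to multipliers, surjectivity of that extension, and extension to affiliated (i.e.\ unbounded) operators. My plan is to treat them in order, using the bounded transform $F$ to convert the unbounded question in part three into the bounded question settled by part one.

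For part one, I would first observe that the extension implicitly requires $\varphi$ to be non-degenerate in the sense that $\varphi(\Aa)\Bb$ is norm-dense in $\Bb$ -- a condition automatic from surjectivity (used in part two) and explicitly assumed in part three. Given $m\in\mult(\Aa)$, I would define $\tilde\varphi(m)$ on the dense subspace $\varphi(\Aa)\Bb\subset\Bb$ by $\tilde\varphi(m)(\varphi(a)b) := \varphi(ma)b$. Well-definedness and the bound $\|\tilde\varphi(m)\|\le\|m\|$ come from the $C^*$-identity together with positivity of $\|m\|^2 a^*a - a^*m^*ma$ in $\mult(\Aa)$, which pushes through $\varphi$ to yield $\|\varphi(ma)b\|\le\|m\|\,\|\varphi(a)b\|$. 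A parallel construction on the right, together with adjoint-compatibility with respect to the right Hilbert $\Bb$-module structure, yields $\tilde\varphi(m)\in\mult(\Bb)$; uniqueness then follows from strict density of $\Aa$ in $\mult(\Aa)$ combined with strict continuity of $\tilde\varphi$. Part two I would simply cite from \cite{APT}: under separability any $n\in\mult(\Bb)$ admits a lift to $\mult(\Aa)$ constructed as a strict limit using a countable approximate identity of $\Aa$ together with a quasi-central approximate identity for $\Ker(\varphi)$.

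For part three, given an $\Aa$-multiplier $T$ with bounded transform $z=F(T)\in\mult(\Aa)$ of norm at most $1$, part one produces $\tilde\varphi(z)\in\mult(\Bb)$ with $\|\tilde\varphi(z)\|\le 1$. Since $\tilde\varphi$ is a $*$-homomorphism it commutes with continuous functional calculus of the self-adjoint element $1-z^*z$, so $1-\tilde\varphi(z)^*\tilde\varphi(z) = \tilde\varphi(1-z^*z)$ and likewise for its positive square root. Using Definition~\ref{def:affiliated} I then define $\varphi(T)$ to be the affiliated operator to $\Bb$ parametrized by $\tilde\varphi(z)$. Its domain equals $\tilde\varphi((1-z^*z)^{1/2})\Bb$, which contains $\varphi((1-z^*z)^{1/2})\varphi(\Aa)\Bb = \varphi(\mathrm{Dom}_\Aa(T))\Bb$ and is the closure of the latter thanks to non-degeneracy $\overline{\varphi(\Aa)\Bb}=\Bb$, identifying it as the claimed core.

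The main obstacle is the technical verification in part one that the formula $\tilde\varphi(m)(\varphi(a)b) = \varphi(ma)b$ really yields a well-defined two-sided multiplier of $\Bb$ -- in particular, that it is adjointable as a right Hilbert $\Bb$-module map and a $*$-homomorphism of $\mult(\Aa)$ into $\mult(\Bb)$. This packages the genuine content of Woronowicz's approach \cite{W90}, after which the extension to unbounded affiliated operators via the bounded transform, and the core identification in part three, follow essentially formally.
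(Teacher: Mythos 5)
The paper offers no proof of this proposition at all: it is quoted from the literature, with the multiplier extension and its surjectivity attributed to \cite{APT} and the extension to affiliated operators to \cite[Theorem 1.2]{W90}. Your sketch is a correct reconstruction of those standard arguments, and in particular your route for part three --- push the bounded transform $z=F(T)$ through $\tilde\varphi$, check that $(1-\tilde\varphi(z)^*\tilde\varphi(z))^{1/2}\Bb$ is dense using non-degeneracy, and invoke the characterization of bounded transforms of regular operators --- is exactly Woronowicz's. Two small remarks: your observation that the first sentence tacitly requires $\varphi$ to be non-degenerate (automatic for the surjections and the faithful non-degenerate representations the paper actually applies it to) is correct and worth making explicit, since the statement is false for a general $*$-homomorphism; and in part three the domain of $\varphi(T)$ is $(1-\tilde\varphi(z)^*\tilde\varphi(z))^{1/2}\Bb$, which is not norm-closed, so rather than saying the domain ``is the closure'' of $\varphi(\mathrm{Dom}_{\Aa}(T))\Bb$ you should say --- as you in fact conclude --- that this submodule is a core, i.e.\ dense in the graph of $\varphi(T)$.
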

A special morphism is the bounded continuous functional calculus, which therefore takes values in the multiplier algebra:

\begin{proposition}
If $T$ is a self-adjoint  $\Aa$-multiplier then it defines a homomorphism $C_b(\RM)\to \mult(\Aa)$ via the continuous functional calculus of the bounded self-adjoint element $F(T)\in \mult(\Aa)$.

This homomorphism is strictly continuous in the sense that $f_n \to f$ locally uniformly implies $f_n(T)a \to f(T)a, af_n(T)\to af(T)$ in operator norm for each $a\in \Aa$.
\end{proposition}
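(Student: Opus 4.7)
The plan is to construct the functional calculus in two stages: first define $f(T)$ for $f\in C_0(\RM)$ via the bounded transform $F(T)$, then extend to all of $C_b(\RM)$ using the universal property of the multiplier algebra for nondegenerate morphisms. Since $F\colon \RM\to (-1,1)$, $F(x)=x(1+x^2)^{-1/2}$, is a homeomorphism, every $f\in C_0(\RM)$ yields a function $\tilde f:= f\circ F^{-1}$ on $(-1,1)$ which extends continuously to $[-1,1]$ by setting $\tilde f(\pm 1) = 0$. As $F(T)\in\mult(\Aa)$ is self-adjoint with $\|F(T)\|\leq 1$, the ordinary continuous functional calculus in the unital $C^*$-algebra $\mult(\Aa)$ produces $\pi(f):=\tilde f(F(T))\in\mult(\Aa)$, giving a $*$-homomorphism $\pi\colon C_0(\RM)\to \mult(\Aa)$. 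Nondegeneracy of $\pi$ is then immediate from Definition~\ref{def:affiliated}: taking $h(x)=(1+x^2)^{-1/2}\in C_0(\RM)$ gives $\pi(h)=(1-F(T)^2)^{1/2}$, and the affiliation condition states precisely that $(1-F(T)^2)^{1/2}\Aa$ is norm-dense in $\Aa$, so $\pi(C_0(\RM))\Aa$ is dense in $\Aa$.

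Next I would invoke Proposition~\ref{prop:morphism}, together with the identification $C_b(\RM)=\mult(C_0(\RM))$, to obtain a unique unital $*$-homomorphism $\bar\pi\colon C_b(\RM)\to \mult(\Aa)$ extending $\pi$ and characterised by $\bar\pi(f)\pi(g)=\pi(fg)$ for all $f\in C_b(\RM)$, $g\in C_0(\RM)$. Defining $f(T):=\bar\pi(f)$ then yields the continuous functional calculus on $C_b(\RM)$, consistent with Step~1. The key conceptual point here—and the main obstacle to a careless reading—is that nondegeneracy of $\pi$ is exactly what is needed to land $\bar\pi$ in $\mult(\Aa)$; without it one would only obtain an extension into some larger enveloping algebra.

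For strict continuity, suppose $f_n\to f$ locally uniformly with $\sup_n\|f_n\|_\infty<\infty$ (the natural uniform bound implicit in the statement, since without it $\|f_n(T)\|$ could diverge). Fix $a\in\Aa$ and $\varepsilon>0$. By nondegeneracy, choose $g\in C_c(\RM)$ and $a'\in\Aa$ with $\|a-\pi(g)a'\|<\varepsilon$. On the compact set $K=\mathrm{supp}\,g$ convergence is uniform, so $\|(f_n-f)g\|_\infty\to 0$, and since $(f_n-f)g\in C_0(\RM)$ one has
$$\|f_n(T)\pi(g)a'-f(T)\pi(g)a'\| = \|\pi((f_n-f)g)\,a'\| \leq \|(f_n-f)g\|_\infty\,\|a'\| \longrightarrow 0.$$
Combined with the uniform bound $\|f_n(T)\|\leq \sup_n\|f_n\|_\infty$ and the triangle inequality, this yields $\limsup_n\|f_n(T)a-f(T)a\|\leq 2\varepsilon\sup_n\|f_n\|_\infty$, and letting $\varepsilon\to 0$ gives norm convergence. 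The argument for right multiplication $af_n(T)\to af(T)$ is identical by symmetry of the bimodule structure on $\Aa$.
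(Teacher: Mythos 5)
Your proof is correct, and it follows the standard route that the paper itself leaves implicit: the paper states this proposition without proof, treating it as part of Woronowicz's affiliation theory. Your two-step construction --- first obtaining a nondegenerate $*$-homomorphism $\pi\colon C_0(\RM)\to\mult(\Aa)$ by composing with the homeomorphism $F^{-1}$ and applying functional calculus to $F(T)$, with nondegeneracy read off from the density of $(1-F(T)^2)^{1/2}\Aa=(1+T^2)^{-1/2}\Aa$ in Definition~\ref{def:affiliated}, and then extending to $C_b(\RM)=\mult(C_0(\RM))$ via the universal property --- is exactly the intended argument, and you correctly identify nondegeneracy as the point that makes the extension land in $\mult(\Aa)$. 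Your strict-continuity argument is also sound; the one substantive remark is that you (rightly) insert the hypothesis $\sup_n\|f_n\|_\infty<\infty$, which is not literally contained in ``locally uniform convergence'' but is genuinely needed (a sequence of bumps of growing height escaping to infinity converges to $0$ locally uniformly while $\|f_n(T)a\|$ can diverge) and is satisfied in every application the paper makes of this proposition, e.g.\ in the proof of Lemma~\ref{lemma:selfadjointmultiplier} where the $f_n$ approximate the bounded function $F$. The factorization $a\approx\pi(g)a'$ with $g\in C_c(\RM)$ is justified either by an approximate unit of $C_0(\RM)$ pushed through the nondegenerate $\pi$, or directly by the density of $(1+T^2)^{-1/2}\Aa$ together with a $C_c$-approximation of $x\mapsto(1+x^2)^{-1/2}$; it would be worth one line to say which you mean.
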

While the above definitions and results hold in the abstract setting of Hilbert modules we are interested in operators on Hilbert spaces. In our context the $C^*$-algebra $\Aa$ is faithfully and non-degenerately represented on a Hilbert space $\Hh$. We simply write this as $\Aa\subset\Bb(\Hh)$. One can identify $\Bb(\Hh)$ with the multiplier algebra of the compact operators $\mathbb{K}(\Hh)$ in such a way that Proposition~\ref{prop:morphism} applied to the inclusion map guarantees that the image of any affiliated operator corresponds canonically to an operator affiliated to $\mathbb{K}(\Hh)$. The latter are precisely the closed densely defined operators on $\Hh$. Under this correspondence the bounded affiliated operators, i.e. the elements of the multiplier algebra $\Mm(\Aa)$, are mapped bijectively to the set of all operators $T\in \Bb(\Hh)$ for which $Ta $ and $aT$ belong to $\Aa$ for all $a\in \Aa$. 

Characterizing the unbounded affiliated operators is more subtle. It is well-known that a multiplier $z\in \mult(\Aa)$ is the bounded transform $z=F(T)$ of a (unique) regular affiliated operator $T$ if and only $\norm{z}\leq 1$ and $(1-z^*z)^{\frac{1}{2}}\Aa$ is norm-dense \cite[Theorem 10.4]{Lance}. One can rewrite those conditions in terms of $T$ to give a criterion for affiliation: 
\begin{lemma}[\cite{W90}:Example~4] \label{lem-aff}
Let $\Aa\subset \Bb(\Hh)$. The operators affiliated to $\Aa$ are in one-to-one correspondence with the closed operators $T$ on $\Hh$, acting on $\Aa$ by left multiplication, for which $F(T)\in\mult(\Aa)$ and ${(1+T^*T)^{-\frac12} \Aa}$ is norm-dense in $\Aa$. 
\end{lemma}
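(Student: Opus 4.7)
The plan is to parametrize both sides of the claimed correspondence by the bounded transform $z=F(T)=T(1+T^*T)^{-\frac12}$ and reduce everything to Lance's Theorem 10.4 in \cite{Lance}. That theorem asserts that contractive multipliers $z\in\mult(\Aa)$ for which $(1-z^*z)^{\frac12}\Aa$ is norm-dense are in bijection with regular affiliated operators $T$ on the Hilbert $\Aa$-module $\Aa$, via $z=F(T)$. The bridging observation is the algebraic identity $(1-z^*z)^{\frac12}=(1+T^*T)^{-\frac12}$ valid for any self-adjoint $T$ (bounded or closed), which turns the Lance density condition into density of $(1+T^*T)^{-\frac12}\Aa$, exactly the condition in the statement.

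For the forward direction, given an operator affiliated to $\Aa$ in the sense of Definition~\ref{def:affiliated}, one has $z=F(T)\in\mult(\Aa)$ with $\|z\|\leq 1$. The faithful non-degenerate representation $\Aa\subset\Bb(\Hh)$ identifies $\mult(\Aa)$ with the idealizer of $\Aa$ in $\Bb(\Hh)$, so $z$ becomes a genuine contraction on $\Hh$ and its inverse bounded transform defines a closed operator $T_{\Hh}$ on $\Hh$. One then checks $F(T_{\Hh})=z\in\mult(\Aa)$, and the density condition transports under the identity above. For the reverse direction, given a closed $T$ on $\Hh$ with $F(T)\in\mult(\Aa)$ and $(1+T^*T)^{-\frac12}\Aa$ norm-dense, set $z=F(T)$; Lance's theorem immediately produces a unique affiliated operator with graph $\{((1-z^*z)^{\frac12}a,za):a\in\Aa\}$.

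It remains to verify that this correspondence really encodes ``left multiplication by $T$.'' For $a\in\Aa$, put $b=(1+T^*T)^{-\frac12}a=(1-z^*z)^{\frac12}a$. By the strict continuity of the functional calculus stated after Proposition~\ref{prop:morphism}, $b$ lies in the domain of $T_{\Hh}$ (viewed as left multiplication on $\Aa$), and $T_{\Hh}b=za$; this matches the graph produced by Lance, so the two descriptions of the action coincide on a norm-dense submodule.

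The main technical obstacle is ensuring that the bounded transform computed in the Hilbert-module picture agrees with the one computed in $\Bb(\Hh)$, i.e.\ that passing from $\Aa$ as a right Hilbert module over itself to $\Aa\subset\Bb(\Hh)$ does not alter the functional calculus of $z$ or the closure of $T$. This compatibility is standard for non-degenerate representations (since $\mult(\Aa)$ embeds canonically into $\Bb(\Hh)$) but must be invoked to conclude regularity: without the density of $(1+T^*T)^{-\frac12}\Aa$ one can exhibit closed operators $T$ on $\Hh$ with $F(T)\in\mult(\Aa)$ which fail to be regular as Hilbert-module operators and therefore are not affiliated to $\Aa$ in the sense of Definition~\ref{def:affiliated}, confirming that the density hypothesis cannot be dropped.
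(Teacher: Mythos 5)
Your proposal is correct and follows essentially the route the paper itself indicates: the lemma is quoted from Woronowicz, and the sentence preceding it in the text explicitly reduces the claim to Lance's Theorem~10.4 via the identity $(1-F(T)^*F(T))^{\frac12}=(1+T^*T)^{-\frac12}$, exactly as you do. One small remark: that identity holds (and is needed) for arbitrary closed densely defined $T$, not only self-adjoint ones, so your parenthetical restriction to self-adjoint $T$ should be dropped.
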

A closed operator on $\Hh$ which satisfies this condition will be called an $\Aa$-multiplier. If $T$ is an $\Aa$-multiplier then its adjoint $T^*$ is also an $\Aa$-multiplier and $F(T)^*=F(T^*)$ is compatible with the usual Hilbert space adjoint.

Let $T$ be a closed operator on $\Hh$ and $\Aa\subset \Bb(\Hh)$. 
We may always define its maximal domain as an operator on $\Aa$, namely the set $D_\Aa(T)\subset \Aa$ consisting of all $a\in \Aa$ for which $\mathrm{Dom}(T)\subset \Ran(a)$ and $Ta \in \Aa$. 

If $T$ is an $\Aa$-multiplier then $\Dd_\Aa(T)=(1+T^*T)^{-\frac12}\Aa$ is exactly the domain of $T$ as an affiliated operator. 

We can now reformulate the density condition of Lemma~\ref{lem-aff}:
\begin{lemma}
\label{lemma:concrete_multiplier}
Let $T$ be a  closed densely defined operator on $\Hh$. Let $\Aa\subset\Hh$ and suppose that $F(T)\in\mult(\Aa)$. The following properties are equivalent:
\begin{enumerate}
\item[(i)] $(1+T^*T)^{-\frac12}\Aa$ is dense in $\Aa$.
\item[(ii)] $(1+T^*T)^{-1}\Aa$ is dense in $\Aa$.
\item[(iii)] $\Dd_\Aa(1+T^*T)$ is dense in $\Aa$.
\end{enumerate}
\end{lemma}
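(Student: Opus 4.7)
The plan is to verify that all three subsets in (i)--(iii) are well-defined, establish (ii)$\Leftrightarrow$(iii) as a set-theoretic identity, and reduce (i)$\Leftrightarrow$(ii) to a one-line density argument using the multiplier structure. Write $R:=(1+T^*T)^{-1/2}$. First I would observe that the hypothesis $F(T)\in\mult(\Aa)$ together with $\|F(T)\|\le 1$ forces $R^2=1-F(T)^*F(T)=(1+T^*T)^{-1}\in\mult(\Aa)$ as a positive contraction (using that $\mult(\Aa)$ is unital). Applying continuous functional calculus inside $\mult(\Aa)$ to $R^2$ then yields $R\in\mult(\Aa)$, so $R\Aa$ and $R^2\Aa$ are genuinely subsets of $\Aa$ and the statements (i), (ii) make sense.

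For the identification (ii)$\Leftrightarrow$(iii) I would argue that $\Dd_\Aa(1+T^*T)=R^2\Aa$ as subsets of $\Aa$. Indeed $1+T^*T\colon\mathrm{Dom}(1+T^*T)\to\Hh$ is a bijection with bounded inverse $R^2$. If $a\in\Dd_\Aa(1+T^*T)$ then by definition $\Ran(a)\subset\mathrm{Dom}(1+T^*T)$ and $b:=(1+T^*T)a\in\Aa$, whence $a=R^2 b\in R^2\Aa$. Conversely any $a=R^2 b$ with $b\in\Aa$ satisfies $\Ran(a)\subset\Ran(R^2)=\mathrm{Dom}(1+T^*T)$ and $(1+T^*T)a=b\in\Aa$, so $a\in\Dd_\Aa(1+T^*T)$.

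The substantive step is (i)$\Leftrightarrow$(ii). The implication (ii)$\Rightarrow$(i) is immediate since $R^2\Aa\subset R\Aa$. For (i)$\Rightarrow$(ii), suppose $R\Aa$ is dense in $\Aa$ and fix $a\in\Aa$; choose $b_n\in\Aa$ with $Rb_n\to a$, and apply $R$ on the left---which is norm-continuous on $\Aa$ because $R\in\mult(\Aa)$---to get $Ra=\lim_n R^2 b_n\in\overline{R^2\Aa}$. This shows $R\Aa\subset\overline{R^2\Aa}$, and combining with density of $R\Aa$ yields $\Aa=\overline{R\Aa}\subset\overline{R^2\Aa}$, as required.

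I do not anticipate any real obstacle: once the identity $1-F(T)^*F(T)=(1+T^*T)^{-1}$ is in hand, the whole argument reduces to the bijectivity of $1+T^*T$ on its domain and a single one-step iteration of density in the right $\Aa$-submodule $R\Aa$. The only point that requires a moment of care is the initial observation that $R$ itself (and not merely $R^2$) is a multiplier, but this is immediate from functional calculus in the unital $C^*$-algebra $\mult(\Aa)$.
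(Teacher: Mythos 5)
Your proof is correct. The (i)$\Leftrightarrow$(ii) step is essentially the paper's own argument: both rest on $R:=(1+T^*T)^{-\frac12}=(1-F(T)^*F(T))^{\frac12}\in\mult(\Aa)$, the trivial inclusion $R^2\Aa\subset R\Aa$, and the observation that applying the norm-continuous left multiplication by $R$ to a dense subset of $\Aa$ lands densely in $\overline{R\Aa}=\Aa$ again. Where you genuinely diverge is the treatment of (iii): you establish the exact set identity $\Dd_\Aa(1+T^*T)=(1+T^*T)^{-1}\Aa$, which makes (ii)$\Leftrightarrow$(iii) a tautology, whereas the paper closes the cycle with two one-sided inclusions, namely $(1+T^*T)^{-\frac12}\Aa\subset\Dd_\Aa(1+T^*T)$ for (i)$\Rightarrow$(iii) and $\Dd_\Aa(1+T^*T)\subset(1+T^*T)^{-1}\Aa$ for (iii)$\Rightarrow$(ii). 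Your route is arguably cleaner: the paper's inclusion $(1+T^*T)^{-\frac12}\Aa\subset\Dd_\Aa(1+T^*T)$ requires $\Ran(Rb)\subset\mathrm{Dom}(1+T^*T)$, while a priori one only knows $\Ran(R)=\mathrm{Dom}((1+T^*T)^{\frac12})$, so that step needs a further argument that your version bypasses entirely. One cosmetic point: you silently read the definition of $\Dd_\Aa(T)$ as requiring $\Ran(a)\subset\mathrm{Dom}(T)$ rather than the misprinted $\mathrm{Dom}(T)\subset\Ran(a)$; that is clearly the intended meaning, and your argument depends on it, so it is worth stating explicitly.
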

\begin{proof}
First we note that if $F(T)\in \mult(\Aa)$ then $(1+T^*T)^{-1}\Aa\subset \Aa$ as $(1+T^*T)^{-\frac12}=(1-F(T^*)F(T))^{\frac12} \in \mult(\Aa)$. Then $(1+T^*T)^{-\frac12}\Aa$ is dense in $\Aa$ if and only if $(1+T^*T)^{-1}\Aa$ is dense in $\Aa$. The norm-density of $(1+T^*T)^{-1}\Aa$ and $(1+T^*T)^{-\frac12}\Aa$ in $\Aa$ is equivalent, since on the one hand $(1+T^*T)^{-\frac12}\Aa\subset (1+T^*T)^{-1}\Aa$ and on the other $(1+T^*T)^{-1}\Aa= (1+T^*T)^{-\frac12}(1+T^*T)^{-\frac12}\Aa$ is the image of a dense subset under a bounded operator with dense range. This shows $(i)\iff (ii)$.
			
The implication $(i)\implies(iii)$ holds since  $F(T)\in \mult(\Aa)$ implies that $(1+T^*T)^{-\frac12}\Aa\subset \Dd_\Aa(1+T^*T)$.
 
For the implication $(iii)\implies(ii)$, since $1+T^*T$ is a self-adjoint operator on $\Hh$ with a bounded inverse, one can write
$$a = (1+T^*T)^{-1} (1+T^*T) a$$
for each $a\in \Dd_\Aa(1+T^*T)$, thus $(1+T^*T)^{-1}\Aa$ contains the norm-dense set $\Dd_\Aa(1+T^*T)$. 
\end{proof}

The reformulation (iii) of the density condition is convenient if one wants to show that a concrete operator is an $\Aa$-multiplier since one needs to consider neither inverses nor fractional powers. For example, if $T$ is a differential operator then $1+T^*T$ is again a differential operator which may map a class of smoothing elements of $\Aa$ to itself; in contrast the inverse $(1+T^*T)^{-1}$ can be rather inaccessible. 

\begin{lemma}
	\label{lemma:selfadjointmultiplier}
	Let $H$ be a closed densely defined self-adjoint operator on $\Hh$ and $\Aa\subset\Bb(\Hh)$. The following are equivalent:
	\begin{enumerate}
		\item [(i)] $H$ is affiliated to $\Aa$.
		\item [(ii)] There exist dense subsets $\Aa_\pm \subset \Aa$ such that $$(H\pm \imath)^{-1}\Aa_\pm \subset \Aa$$ is dense.
		\item[(iii)] $(H+\imath)^{-1}\in \mult(\Aa)$ and $\Dd_\Aa(H)$ is dense in $\Aa$.
	\end{enumerate}
\end{lemma}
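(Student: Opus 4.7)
The plan is to close the cycle (i)$\,\Rightarrow\,$(iii)$\,\Rightarrow\,$(ii)$\,\Rightarrow\,$(i), invoking Lemma~\ref{lem-aff} for the bounded-transform characterization of affiliation and Lemma~\ref{lemma:concrete_multiplier} for the equivalent forms of the density condition. A background fact I will use repeatedly is that $\mult(\Aa)$ is norm-closed in $\Bb(\Hh)$, so that the multiplier property extends from a dense subset of $\Aa$ to all of $\Aa$ by continuity.

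For (i)$\,\Rightarrow\,$(iii), Lemma~\ref{lem-aff} supplies $F(H)\in\mult(\Aa)$, hence $(1+H^2)^{-1/2}=(1-F(H)^2)^{1/2}\in\mult(\Aa)$ via continuous functional calculus on the bounded self-adjoint element $F(H)$. Decomposing $(H+\imath)^{-1}=F(H)(1+H^2)^{-1/2}-\imath(1+H^2)^{-1}$ then places the resolvent in $\mult(\Aa)$. Density of $\Dd_\Aa(H)$ follows from Lemma~\ref{lemma:concrete_multiplier}: the set $(1+H^2)^{-1/2}\Aa$ is dense by affiliation, and each $(1+H^2)^{-1/2}c$ with $c\in\Aa$ lies in $\Dd_\Aa(H)$ since its range is in $\mathrm{Dom}(H)$ and $H\,(1+H^2)^{-1/2}c=F(H)c\in\Aa$. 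For (iii)$\,\Rightarrow\,$(ii) the natural choice is $\Aa_\pm=\Aa$; I would verify the identity $(H\pm\imath)^{-1}\Aa=\Dd_\Aa(H)$, with $\subset$ from $H(H\pm\imath)^{-1}a=a\mp\imath(H\pm\imath)^{-1}a\in\Aa$ and $\supset$ from $b=(H\pm\imath)^{-1}(H\pm\imath)b$ for $b\in\Dd_\Aa(H)$.

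The substantive step is (ii)$\,\Rightarrow\,$(i). First, boundedness of $(H\pm\imath)^{-1}$ makes left multiplication norm-continuous, so $(H\pm\imath)^{-1}\Aa_\pm\subset\Aa$ combined with the density of $\Aa_\pm$ yields $(H\pm\imath)^{-1}\Aa\subset\overline{\Aa}=\Aa$; the adjoint argument gives the right-module action, so $(H\pm\imath)^{-1}\in\mult(\Aa)$. From this, $(1+H^2)^{-1}=(H+\imath)^{-1}(H-\imath)^{-1}\in\mult(\Aa)$ and, taking the positive square root, $(1+H^2)^{-1/2}\in\mult(\Aa)$. The main obstacle will be recovering $F(H)\in\mult(\Aa)$, because the function $F(x)=x(1+x^2)^{-1/2}$ extends continuously to the two-point compactification $[-\infty,\infty]$ but not to the one-point one, so $F(H)$ does not lie in the $C^*$-subalgebra of $\mult(\Aa)$ generated by $(H\pm\imath)^{-1}$ via functional calculus alone. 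Here the density of $\Dd_\Aa(H)$---obtained from (ii) by the same continuity argument, exactly as in (iii)$\,\Rightarrow\,$(ii)---becomes essential: for $b\in\Dd_\Aa(H)$, commutation in the functional calculus gives $F(H)b=(1+H^2)^{-1/2}(Hb)\in\Aa$, since $Hb\in\Aa$ and $(1+H^2)^{-1/2}\in\mult(\Aa)$; norm-continuity of $F(H)$ together with density of $\Dd_\Aa(H)$ then extends to $F(H)\Aa\subset\Aa$, and adjoints supply the right action. Density of $(1+H^2)^{-1/2}\Aa$ finally follows from the chain $(1+H^2)^{-1}\Aa=(H+\imath)^{-1}\Dd_\Aa(H)$, which is dense in $\Aa$ since $\Dd_\Aa(H)$ is dense and $(H+\imath)^{-1}\Aa=\Dd_\Aa(H)$, so Lemma~\ref{lemma:concrete_multiplier} closes the argument.
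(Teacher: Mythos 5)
Your proof is correct, and it closes the three-way equivalence along a genuinely different cycle than the paper: you run (i)$\Rightarrow$(iii)$\Rightarrow$(ii)$\Rightarrow$(i), whereas the paper proves (i)$\Rightarrow$(ii), (ii)$\Rightarrow$(i) and (ii)$\Rightarrow$(iii), leaving the return from (iii) implicit. The real divergence is in the hard step (ii)$\Rightarrow$(i), namely recovering $F(H)\in\mult(\Aa)$ from the resolvents. The paper approximates $F$ locally uniformly by functions $f_n\in C_0(\RM)$ and exploits that $f_n(H)(H+\imath)^{-1}\to F(H)(H+\imath)^{-1}$ in norm, so that $F(H)$ acts correctly on the dense set $(H+\imath)^{-1}\Aa$; this is an approximation argument in the spirit of strict continuity of the functional calculus. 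You instead use the exact factorization $F(H)b=(1+H^2)^{-\frac12}(Hb)$ for $b\in\Dd_\Aa(H)$, which is legitimate since $H$ and $(1+H^2)^{-\frac12}$ commute on $\mathrm{Dom}(H)$, and then extend by norm-continuity and density of $\Dd_\Aa(H)$. Your route is more algebraic and avoids any limiting procedure, at the cost of leaning harder on the set $\Dd_\Aa(H)$ and the identity $(H\pm\imath)^{-1}\Aa=\Dd_\Aa(H)$; it also has the virtue of explicitly verifying the density of $(1+H^2)^{-\frac12}\Aa$ required by Lemma~\ref{lem-aff}, a point the paper's proof of (ii)$\Rightarrow$(i) glosses over. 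Two small housekeeping remarks: in (iii)$\Rightarrow$(ii) you use $(H-\imath)^{-1}\in\mult(\Aa)$, which follows from $(H+\imath)^{-1}\in\mult(\Aa)$ only because $\mult(\Aa)$ is a $*$-algebra and $(H-\imath)^{-1}=\bigl((H+\imath)^{-1}\bigr)^*$ -- worth saying explicitly; and the paper's definition of $\Dd_\Aa(T)$ contains the typo $\mathrm{Dom}(T)\subset\Ran(a)$, which you correctly read as $\Ran(a)\subset\mathrm{Dom}(T)$.
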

\begin{proof} (i) $\implies$ (ii): Suppose $H$ (and hence also $H\pm\imath$) is affiliated to $\Aa$. Then $(1+H^2)^{-1}\Aa$ is dense in $\Aa$ and $(H\pm\imath)^{-1}\Aa$ is dense in $\Aa$. 
Set $\Aa_\pm = (H\pm\imath)(1+H^2)^{-1}\Aa = (H\mp\imath)^{-1}\Aa$. Then  $\Aa_\pm$ and $(H\pm\imath)^{-1}\Aa_\pm$ are dense in $\Aa$.  

For (ii) $\implies$ (i) we first note that we can replace $\Aa_\pm$ by $\Aa$ due to density, hence the condition implies $(H\pm\imath)^{-1}\in \mult(\Aa)$ and with that $f(H)\in \mult(\Aa)$ for any $f\in C_0(\RM)$. Let $f_n \in C_0(\RM)$ be a sequence of functions that converges on any compact set uniformly to $F$, then $$f_n(H)(H+\imath)^{-1}a \to F(H)(H+\imath)^{-1}a$$ converges in operator norm for each $a\in \Aa$. Hence $f_n(H) \tilde{a} \to F(H)\tilde{a} \in \Aa$ for each $\tilde{a}\in (H+\imath)^{-1}\Aa$ and density implies that this actually holds for all $\tilde{a}\in \Aa$. Taking adjoints a similar argument shows $a f_n(H)\to a F(H)\in \Aa$ for all $a\in \Aa$, hence $F(H)\in \mult(\Aa)$.

For $(ii)\implies (iii)$ it remains to show that $\Dd_\Aa(H)$ is dense in $\Aa$.
Let $a = (H+\imath)^{-1}c$ with $c\in \Aa$. Then $Ha = c+ia\in\Aa$. Hence $a\in  \Dd_\Aa(H)$. showing that $\Dd_\Aa(H)$ contains $\Aa_-$ which is dense. 
\end{proof}

As hinted in the introduction, we will for the purposes of $K$-theory want to single out multipliers whose functional calculus takes values in a smaller algebra: 

\begin{definition}
\label{def:B_affiliation}
Let $\Bb$ be a closed unital subalgebra of $\mult(\Aa)$ containing $\Aa$. We say that an operator $T$ which is affiliated to $\Aa$ is even $\Bb$-affiliated if its bounded transform $F(T)$ lies in $\Bb$ (as opposed to merely in $\mult(\Aa)$).
\end{definition}

Note that a self-adjoint multiplier $T$ is $\Bb$-affiliated if and only if  $f(T)\in \Bb$ for all $f\in C([-\infty,+\infty])$, and this is the case if $\Theta(T)\in \Bb$ for some strictly monotonous continuous function $\Theta:\RM\to\RM$ which has finite limits at $\pm\infty$.

\begin{definition}\label{def:B_resolvent_affiliated}
Let $\Bb$ be a closed subalgebra of $\mult(\Aa)$. We say that an operator $T$ which is affiliated to $\Aa$ is $\Bb$-resolvent-affiliated if
$$(1+T^*T)^{-1}, (1+T^*T)^{-1} \in \Bb.$$
In the self-adjoint case this is equivalent to
$$(T+\imath)^{-1} \in \Bb$$
or to $f(T)\in \Bb$ for all $f\in C_0(\RM)$. 
	
If $\Bb=\Aa$ we simply say that $T$ is resolvent-affiliated.
\end{definition}
In practice, $\Bb$-resolvent-affiliation is often vastly easier to verify than $\Bb$-affiliation, but it is strictly weaker.

\begin{lemma}
\label{lemma:strong_vs_resolvent}
Let $T$ be a self-adjoint $\Aa$-multiplier and $\Bb\subset \mult(\Aa)$ a unital algebra containing $\Aa$. Then 
\begin{enumerate}
	\item[(i)] If $T$ is $\Bb$-affiliated then it is $\Bb$-resolvent-affiliated.
	\item[(ii)] If $T$ is $\Bb$-resolvent-affiliated and bounded from below then $T$ is $\Bb$-affiliated.
\end{enumerate} 
\end{lemma}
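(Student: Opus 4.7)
My plan is to reduce both implications to continuous functional calculus inside $\Bb$, which I treat as a closed unital $*$-subalgebra of $\mult(\Aa)$ (the natural reading needed for $\Bb$-resolvent-affiliation to even be well-defined). In each direction I start from the distinguished element provided by the hypothesis ($F(T)$ for (i), $(T+\imath)^{-1}$ for (ii)) and reconstruct the other as a continuous function of it. The geometric picture behind the asymmetry is that $F:[-\infty,+\infty]\to[-1,1]$ is a homeomorphism of the two-point compactification of $\RM$, whereas the Cayley-type map $\Psi(\lambda)=(\lambda+\imath)^{-1}$ collapses $+\infty$ and $-\infty$ to the single point $0$.

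\textbf{Proof of (i).} Suppose $F(T)\in\Bb$. Since $1\in\Bb$ and $\Bb$ is closed in norm and under adjoints, $C^*(1,F(T))\subset\Bb$, so continuous functional calculus yields $g(F(T))\in\Bb$ for every $g\in C([-1,1])$. Taking $g(x)=(F^{-1}(x)+\imath)^{-1}$, which extends continuously by $g(\pm 1)=0$, gives $(T+\imath)^{-1}=g(F(T))\in\Bb$.

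\textbf{Proof of (ii).} Suppose $(T+\imath)^{-1}\in\Bb$ and $T\geq -M$. The lower bound forces $\sigma(T)\subset[-M,+\infty)$, hence $\sigma((T+\imath)^{-1})\subset K:=\Psi([-M,+\infty))\cup\{0\}$; moreover $\Psi$ extends to a continuous bijection, and hence a homeomorphism, from the compactification $[-M,+\infty]$ onto the compact set $K$. Define
$$g(z)=\begin{cases}F(\Psi^{-1}(z)),& z\neq 0,\\ 1,& z=0.\end{cases}$$
Continuity of $g$ on $K\setminus\{0\}$ is immediate, and at $z=0$ it follows from $F(\lambda)\to 1$ as $\lambda\to+\infty$. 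Continuous functional calculus of $(T+\imath)^{-1}$ inside $\Bb$ then gives $F(T)=g((T+\imath)^{-1})\in\Bb$.

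\textbf{Main obstacle.} The only genuinely nontrivial step is checking continuity of $g$ at the accumulation point $0\in K$ in part (ii), and this is precisely where semi-boundedness is indispensable: without a lower bound on $T$ the spectrum $\sigma((T+\imath)^{-1})$ would approach $0$ from both asymptotic branches of $\Psi$, forcing $g(0)$ to equal both $F(+\infty)=1$ and $F(-\infty)=-1$. This topological obstruction explains why resolvent-affiliation is strictly weaker than $\Bb$-affiliation in general, and it indicates that any extension of (ii) beyond semi-bounded operators must supply extra data pinning down behavior at one of the two infinities, a theme the paper will address via von Neumann unitaries.
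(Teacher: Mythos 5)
Your proof is correct and follows essentially the same route as the paper, which disposes of both parts by functional calculus: for (ii) the paper simply writes $F(T)=1+f(T)$ with $f\in C_0(\RM)$ (using the lower bound to modify $F-1$ near $-\infty$ off the spectrum) and invokes $1\in\Bb$ together with the $C_0$-characterization of resolvent-affiliation from Definition~\ref{def:B_resolvent_affiliated}. Your more hands-on version, applying the continuous functional calculus of the normal element $(T+\imath)^{-1}$ inside $\Bb$ and checking continuity at $0$, is just the unpacked form of that one-liner, and your closing remark correctly pinpoints why semiboundedness is what collapses the two points at infinity into one.
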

\begin{proof}
This follows easily via the characterization in terms of functional calculus. If $T$ is bounded from below then $F(T) = 1 + f(T) $ for some  $f\in C_0(\RM)$. 
\end{proof}

We will also want to study stability of $\Bb$-affiliation under perturbations:

\begin{proposition}\label{prop-res-perturb}
	Let $H$ be a self-adjoint $\Aa$-multiplier, $\Aa\subset \Bb(\Hh)$, and $V$ a closed symmetric operator on $\Hh$ whose domain contains $\mathrm{Dom}(H)$.
\begin{enumerate}
	\item[(i)]	If $V$ satisfies the Kato-Rellich bound $\norm{V\psi}\leq \alpha\norm{H\psi}+\beta\norm{\psi}$ with some $\alpha<1$ for all $\psi\in \mathrm{Dom}(H)$ and 
	$$V (H+\imath)^{-1}\in \mult(\Aa)$$
	then $H+V$ is also a self-adjoint $\Aa$-multiplier on the same domain as $H$.
	\item[(ii)] Let $\Ii$ be a norm-closed ideal of $\Aa$. If
	$$V (H+\imath)^{-1}\in \Ii$$
	then (i) holds and further
	$$F(H+V)-F(H) \in \Ii.$$
\end{enumerate}
\end{proposition}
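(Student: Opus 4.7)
For part (i) the plan is to combine the Kato--Rellich theorem with Lemma~\ref{lemma:selfadjointmultiplier}. Kato--Rellich immediately yields self-adjointness of $H+V$ on $\mathrm{Dom}(H)$, and condition (iii) of the lemma reduces the multiplier property to verifying that $(H+V+i\lambda)^{-1}\in\mult(\Aa)$ for some $\lambda$ and that $\Dd_\Aa(H+V)$ is dense in $\Aa$. The first point I would obtain by writing $V(H+i\lambda)^{-1}=V(H+i)^{-1}\cdot(H+i)(H+i\lambda)^{-1}$ as a product inside $\mult(\Aa)$ whose operator norm is bounded by $\alpha+\beta/\lambda$ (from the Kato--Rellich bound together with $\|H(H+i\lambda)^{-1}\|\leq 1$); picking $\lambda$ large enough that this is $<1$, a Neumann series inverts $1+V(H+i\lambda)^{-1}$ inside $\mult(\Aa)$, so that
$$(H+V+i\lambda)^{-1}=(H+i\lambda)^{-1}\bigl(1+V(H+i\lambda)^{-1}\bigr)^{-1}\in\mult(\Aa).$$
The density follows from $\Dd_\Aa(H+V)\supseteq\Dd_\Aa(H)$: for $a\in\Dd_\Aa(H)$ one has $Va=V(H+i)^{-1}(Ha+ia)\in\mult(\Aa)\cdot\Aa\subseteq\Aa$, so $(H+V)a\in\Aa$.

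For part (ii) the plan is to replay the same Neumann series with the stronger assumption $V(H+i\lambda)^{-1}\in\Ii$, which directly gives $(H+V+i\lambda)^{-1}-(H+i\lambda)^{-1}\in\Ii$. The resolvent identity then propagates this to $(H+V-z)^{-1}-(H-z)^{-1}=-(H+V-z)^{-1}V(H-z)^{-1}\in\Ii$ for every $z\notin\RM$, since $V(H-z)^{-1}=V(H+i)^{-1}\cdot(H+i)(H-z)^{-1}\in\Ii$ and $(H+V-z)^{-1}\in\mult(\Aa)$ from (i). The set $\{f\in C_b(\RM):f(H+V)-f(H)\in\Ii\}$ is a norm-closed $*$-subalgebra of $C_b(\RM)$ containing all such resolvents, so by Stone--Weierstrass it contains $C_0(\RM)$.

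The remaining, and technically most delicate, step is to reach the bounded transform $F$, which has limits $\pm1$ at $\pm\infty$ and is therefore not in $C_0(\RM)$. My plan is to decompose $F=\Phi+(F-\Phi)$, where $\Phi\in C^\infty(\RM)$ is chosen equal to $\pm1$ for $\pm x\geq R$ and smooth in between, so that $F-\Phi\in C_0(\RM)$ is already handled and it suffices to show $\Phi(H+V)-\Phi(H)\in\Ii$. For this I would invoke the Helffer--Sj\"ostrand functional calculus: since $\Phi'\in C_c^\infty(\RM)$ one may choose an almost-analytic extension $\tilde\Phi$ with $\bar\partial\tilde\Phi$ supported in a bounded complex strip and vanishing to arbitrary order on $\RM$, so that
$$\Phi(H+V)-\Phi(H)=\frac{1}{\pi}\int_\CM\bar\partial\tilde\Phi(z)\,(z-H-V)^{-1}V(z-H)^{-1}\,dA(z).$$
The integrand lies in $\Ii$ for each $z\notin\RM$ and has operator norm bounded by $C/|\mathrm{Im}(z)|$ on the support of $\bar\partial\tilde\Phi$; the rapid vanishing of $\bar\partial\tilde\Phi$ at $\RM$ kills this singularity and gives absolute convergence of the Bochner integral in $\Ii$-norm.

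The hard part will be exactly this last estimate: in the unbounded setting the naive bound on $\|(z-H-V)^{-1}V(z-H)^{-1}\|_\Ii$ is only $O(1/|\mathrm{Im}(z)|)$, uniform on the support of $\bar\partial\tilde\Phi$, so all the decay near $\RM$ must be supplied by the chosen order of vanishing of the almost-analytic extension. This is standard for the Helffer--Sj\"ostrand calculus but is the one place where the proof requires analytic estimates beyond the purely $C^*$-algebraic manipulations of (i) and the resolvent identity of (ii).
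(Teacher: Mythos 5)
Your part (i) is correct and follows essentially the paper's own route: Kato--Rellich for self-adjointness, a Neumann series for $(H+V+\imath\lambda)^{-1}$ at large $\lambda$ to place the resolvent in $\mult(\Aa)$, and the inclusion $\Dd_\Aa(H)\subseteq\Dd_\Aa(H+V)$ to get density of the domain, after which Lemma~\ref{lemma:selfadjointmultiplier} applies. The first half of your part (ii) --- resolvent differences in $\Ii$, hence $f(H+V)-f(H)\in\Ii$ for all $f\in C_0(\RM)$ by Stone--Weierstrass --- is also fine.

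The gap is in the final step, passing from $C_0(\RM)$ to the bounded transform. The Helffer--Sj\"ostrand identity you invoke,
\begin{equation*}
\Phi(H+V)-\Phi(H)=\frac{1}{\pi}\int_\CM\bar\partial\tilde\Phi(z)\,(z-H-V)^{-1}V(z-H)^{-1}\,\mathrm{d}x\,\mathrm{d}y,
\end{equation*}
is not a standard fact for a $\Phi$ that does not decay at $\pm\infty$, and asserting it is where all the real work is hidden. For such $\Phi$ the single-operator representation $\Phi(A)=\frac{1}{\pi}\int\bar\partial\tilde\Phi(z)(z-A)^{-1}\mathrm{d}x\,\mathrm{d}y$ is simply false (test it on a scalar $A=\lambda\to+\infty$: the right-hand side tends to $0$ by dominated convergence while $\Phi(\lambda)\to 1$), so the difference formula must be obtained by cutting off at infinity and showing that the boundary contribution at infinity is the same for $H$ and $H+V$. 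For a bounded perturbation this can be arranged, but for a merely relatively bounded $V$ the natural tail estimate, using $\norm{V(z-H)^{-1}}\leq C(1+\abs{z})/\abs{\mathrm{Im}\,z}$, only yields an $O(1)$ error that does not obviously vanish. This is exactly the subtlety the paper stresses elsewhere: resolvent comparability does \emph{not} imply comparability of bounded transforms (the Riesz topology is strictly finer than the norm-resolvent topology), so the relative boundedness of $V$ must enter the bounded-transform step in a controlled way, and your proof concentrates all of that into an unproved identity. The paper instead (a) handles bounded $V\in\Ii$ via the Carey--Phillips norm-convergent integral formula for $F(H+V)-F(H)$, and (b) reduces the relatively bounded case to it by approximating $V$ with the bounded operators $V_n=\phi_n(H)V\phi_n(H)\in\Ii$ and using continuity of the bounded transform with respect to graph norms to get $F(H+V_n)\to F(H+V)$ in operator norm. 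You would need either to supply such an approximation argument or to actually prove the difference-type Helffer--Sj\"ostrand formula for relatively bounded perturbations. (A minor point: on the support of $\bar\partial\tilde\Phi$ the integrand is $O(1/\abs{\mathrm{Im}\,z}^2)$, not $O(1/\abs{\mathrm{Im}\,z})$, though this does not affect convergence.)
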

\noindent{\bf Proof.} (i) By the Kato-Rellich theorem (which is more generally also true in the Hilbert module context \cite{KL0}) the sum is a self-adjoint operator with the same domain as $H$ and for every $\lambda \neq 0$ one has
\begin{align*}
(H\pm  \lambda \imath)^{-1}-(H+V\pm \lambda \imath)^{-1}&=(H+V\pm \lambda\imath)^{-1} V (H\pm \lambda\imath)^{-1}\\
&= (H\pm  \lambda \imath)^{-1} V (H+V\pm  \lambda \imath)^{-1}.
\end{align*}
Iterating this resolvent identity for large enough $\lambda$ gives a norm-convergent series expansion from which one can conclude that the right-hand side is in $\mult(\Aa)$ and also $(H+V\pm \lambda \imath)^{-1}(\Aa) \subset (H\pm\lambda\imath)^{-1}(\Aa)$.  Lemma~\ref{lemma:selfadjointmultiplier}(ii) therefore shows that $H+V$ is an $\Aa$-multiplier if $H$ is an $\Aa$-multiplier.

(ii) For bounded $V\in \mult(\Aa)$ it is easy to show that the difference of bounded transforms lies in $\Ii$, e.g. by \cite[Lemma 2.7(i)]{CP98} which gives a norm-convergent integral formula for the difference of bounded transforms with an integrand that lies in $\Ii$ under the given condition.

For the case of relatively bounded $V$ note first that the Kato-Rellich bound holds with $\alpha < 1$ since 
$$\norm{V\psi} =\norm{V(H+\lambda\imath)^{-1}(H+\lambda\imath)\psi} \leq \norm{V(H+\lambda\imath)^{-1}}(\norm{H\psi}+\lambda \norm{\psi})$$
for any $\lambda\neq 0$ and one has operator-norm-convergence
$$V(H+\lambda \imath)^{-1}=V(H+ \imath)^{-1}(H+ \imath)(H+\lambda \imath)^{-1}\stackrel{\lambda\to \infty}{\to} 0$$
due to strict convergence of  the functional calculus $(H+ \imath)(H+\lambda \imath)^{-1}$ to $0$ in $\mult(\Aa)$ (see Proposition~\ref{prop:morphism}).

One can then reduce to the bounded case using continuity of the bounded transforms w.r.t. to graph norms \cite{Lesch05}. The idea is to approximate $V$ by the bounded operators $V_n=\phi_n(H)V\phi_n(H) \in \Ii$ for $\phi_n \in C_0(\RM)$ functions that converge locally uniformly to $1$. As in the proof of \cite[Proposition A.8]{VandenDungen2019} one can show that $\lim_{n\to\infty} F(H+V_n)=F(H+V)$ in operator norm, which concludes the proof.

\hfill $\Box$

For $\Ii=\Aa$ we will call perturbations $V$ like this relatively ($\Aa$-)compact, by virtue of $\Aa$ being the compact operators on $\Aa$ in the Hilbert module sense.

\subsection{Comparison of self-adjoint operators}
 In Section~\ref{sec:ktheory} we will study $K$-theory classes defined by pairs of projections or unitaries that come from comparable self-adjoint multipliers. 

\begin{definition}
A {\sl normalizing function} is a smooth non-decreasing function $\Theta:\RM\to \RM$ with $\Theta(0)=0$, $\Theta'(0)>0$ and $\Theta^2-1 \in C_0(\RM)$. If the support of the derivative $\Theta'$ is contained in the interior of an interval $\Delta$ then we say that $\Theta$ is a $\Delta$-normalizing function.
\end{definition}

The function $F(x) = x(1+x^2)^{-\frac12}$ is a normalizing function but it is not $\Delta$-normalizing for a proper sub-interval $\Delta$ of $\RM$.

Recall from Def.~\ref{def:I-comparable} that, given a closed ideal $\Ii$ of $\Aa$, two self-adjoint $\Aa$-multipliers $H,H'$ are $\Ii$-comparable if 
for all $f\in C([-\infty,+\infty])$ we have $f(H)-f(H')\in \Ii$. This condition is equivalent to $\Theta(H)-\Theta(H') \in \Ii$ for some normalizing function $\Theta$. Note that, by Proposition~\ref{prop-res-perturb}, if $H'=H+V$ is a relatively $\Aa$-compact perturbation then $H$ and $H'$ are $\Aa$-comparable.

Another notion that will be important for $K$-theory is being invertible modulo an ideal. As usual, an unbounded $\Aa$-multiplier $H$ is called invertible if it has a bounded inverse, which is then in $\mult(\Aa)$. Therefore, an invertible operator has a spectral gap around $0$.
\begin{definition}
\label{def:invertible_modulo}
Let $\Ii \subset \Aa$ be a closed ideal and $H$ a self-adjoint $\Aa$-multiplier. 
We say that $H$ is \emph{mod $\Ii$ invertible} 
if there is an open interval $\Delta\subset\RM$ containing $0$ such that 
$f(H)\in \Ii$ for each $f\in C_c(\Delta)$.
\end{definition}

A more general notion applying to pairs of operators is as follows:
\begin{definition}
	\label{def:locally_comparable}
	We say that two self-adjoint $\Aa$-multipliers $H$, $H'$ are locally $\Ii$-comparable (in $\Delta$) for $\Ii$ (a clossed ideal of $\Aa$) if there exists an open interval $\Delta\subset \RM$ around $0$ such that $f(H)-f(H')\in \Ii$ for each $f\in C_c(\Delta)$.
\end{definition}
The term ``locally" refers to being comparable only in a small spectral interval around $0$. 
\begin{lemma}
Let $\Ii \subset \Aa$ be a closed ideal and $H$ a self-adjoint $\Aa$-multiplier. Then the following are equivalent:
\begin{enumerate}
\item[(i)]
$H$ is invertible modulo $\Ii$ for the spectral interval $\Delta$.
\item[(ii)] One has $1-\Theta^2(H)\in \Ii$ for every $\Delta$-normalizing function.
\item[(iii)] There exists a normalizing function such that $1-\Theta^2(H)\in \Ii$.
\item[(iv)] There exists a normalizing function such that $\Theta(H) + \Ii$ is invertible in $\mult(\Aa)/\Ii$.
\item[(v)] There exists some positive self-adjoint $V\in \Ii$ such that
$$H^2 + V > c \one$$
is bounded from below by a positive constant $c>0$.
\end{enumerate}
\end{lemma}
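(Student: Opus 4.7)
The plan is to establish the cycle (i)$\Rightarrow$(ii)$\Rightarrow$(iii)$\Rightarrow$(iv)$\Rightarrow$(i), together with the pair (i)$\Leftrightarrow$(v) handled separately. Throughout I will use the bounded continuous functional calculus $C_b(\RM) \to \mult(\Aa)$, $f \mapsto f(H)$, which sends $C_0(\RM)$ into $\Aa$, and $C_c(\Delta)$ into $\Ii$ whenever (i) holds with interval $\Delta$.

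For (i)$\Rightarrow$(ii)$\Rightarrow$(iii), the key observation is that for any $\Delta$-normalizing $\Theta$ the compactness of $\mathrm{supp}(\Theta') \subset \Delta$ combined with $\Theta^2 - 1 \in C_0$ and monotonicity force $\Theta \equiv \pm 1$ outside $\mathrm{supp}(\Theta')$, so $1 - \Theta^2 \in C_c(\Delta)$; the functional calculus then yields $1 - \Theta^2(H) \in \Ii$. The passage (ii)$\Rightarrow$(iii) is trivial. For (iii)$\Rightarrow$(iv), $1 - \Theta(H)^2 \in \Ii$ says that the image of $\Theta(H)$ in $\mult(\Aa)/\Ii$ squares to the unit, hence is its own inverse.

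The step (iv)$\Rightarrow$(i) is slightly more involved. If $\Theta(H) + \Ii$ is invertible in $\mult(\Aa)/\Ii$, then being self-adjoint in the quotient its spectrum there avoids some $[-\delta,\delta]$. Choose $g \in C_c((-\delta,\delta))$ with $g \equiv 1$ near $0$; functional calculus in the quotient gives $g(\Theta(H)) \in \Ii$, i.e. $(g \circ \Theta)(H) \in \Ii$. Since $\Theta(0) = 0$ and $\Theta$ is continuous, $g \circ \Theta$ equals $1$ on some neighborhood $\Delta'$ of $0$, and then $f(H) = f(H)(g \circ \Theta)(H) \in \Ii$ for every $f \in C_c(\Delta')$, establishing (i).

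For (i)$\Leftrightarrow$(v) the easy direction (i)$\Rightarrow$(v) follows by choosing a nonnegative $\chi \in C_c(\Delta)$ with $\chi(x) + x^2 \geq c > 0$ on $\RM$ (e.g.\ $\chi = c - x^2$ on $[-\sqrt{c},\sqrt{c}]$ smoothly cut off within $\Delta$) and setting $V = \chi(H)$. The harder direction (v)$\Rightarrow$(i) is the main technical point. Given $H^2 + V \geq c\,\one$ with $V \in \Ii$ positive, for any nonnegative $\phi \in C_c((-\sqrt{c/2},\sqrt{c/2}))$ the inequality $c - x^2 \geq c/2$ on $\mathrm{supp}(\phi)$ lets one conjugate $V + H^2 - c \geq 0$ by the self-adjoint $\phi(H)$ to obtain
\begin{equation*}
\phi(H)\,V\,\phi(H) \;\geq\; \phi(H)\,(c - H^2)\,\phi(H) \;\geq\; \tfrac{c}{2}\,\phi(H)^2 \;\geq\; 0.
\end{equation*}
The leftmost side lies in $\Ii$ because $V \in \Ii$, so $\phi(H)^2 \in \Ii$ and therefore $\phi(H) = \sqrt{\phi(H)^2} \in \Ii$ by functional calculus for positive elements. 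For any $f \in C_c((-\sqrt{c/2},\sqrt{c/2}))$, choosing $\phi \equiv 1$ on $\mathrm{supp}(f)$ yields $f(H) = f(H)\phi(H) \in \Ii$. The principal obstacle is precisely this squeeze step: one must localize narrowly enough that $c - H^2$ is strictly positive on $\mathrm{supp}(\phi)$, so that the chain of operator inequalities actually forces $\phi(H)^2$ into $\Ii$; the remaining implications are straightforward functional-calculus manipulations.
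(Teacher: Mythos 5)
Your proof is correct, but it is organized differently from the paper's and replaces the paper's one genuinely nontrivial citation by a self-contained argument. The paper runs the single cycle (i)$\Rightarrow$(ii)$\Rightarrow$(iii)$\Rightarrow$(iv)$\Rightarrow$(v)$\Rightarrow$(i); you instead close the loop (iv)$\Rightarrow$(i) directly (via the spectral gap of $\Theta(H)+\Ii$ in the quotient and the factorization $f=f\cdot(g\circ\Theta)$, which is clean and correct) and then treat (i)$\Leftrightarrow$(v) separately. The substantive difference is in (v)$\Rightarrow$(i): the paper invokes an external comparison result (\cite[Proposition 5]{SSt1}) to get $g(H^2)-g(H^2+V)\in\Ii$ and then pulls back through $x\mapsto x^2$, whereas you squeeze $\tfrac{c}{2}\phi(H)^2 \le \phi(H)V\phi(H)\in\Ii$ and conclude $\phi(H)\in\Ii$ from the facts that closed ideals are hereditary and stable under taking square roots of positive elements. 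Your route is more elementary and avoids the citation, at the cost of being slightly longer; both are valid, and your version makes explicit where the positivity of $V$ and the strict lower bound $c$ are actually used. Two cosmetic points: in (i)$\Rightarrow$(ii) the claim $1-\Theta^2\in C_c(\Delta)$ requires $\Delta$ bounded (so that $\mathrm{supp}(\Theta')$ is compact); for unbounded $\Delta$ one only gets $C_0(\Delta)$ and should add the density argument the paper uses. And throughout you multiply elements of $\mult(\Aa)$ against elements of $\Ii$; this lands in $\Ii$ because a closed ideal of $\Aa$ is automatically a closed ideal of $\mult(\Aa)$, which is worth saying once.
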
 
\begin{proof} 
$(i)\implies (ii)$ as $x\mapsto 1-\Theta^2(x) \in C_0(\Delta)$ and $C_c(\Delta)$ is dense in $C_0(\Delta)$.

$(ii)\implies(iii)$ evident.

$(iii)\implies(iv)$ since, under (iii),  $\Theta(H)+\Ii$ is its own inverse. 

$(iv)\implies(v)$:
Suppose that $\Theta(H)+I$ has a bounded inverse in the quotient $\mult(\Aa)/\Ii$, then there exists an open interval $\Delta'$ containing $0$ such that $f(\Theta(H)) \in \Ii$ for all $f\in C_c(\Delta')$. For any positive continuous function $f$ supported in $\Delta'$ with $f(0)>0$ the expression $H^2+f(\Theta(H))$ has a strictly positive lower bound due to the spectral mapping theorem.
 
 $(v)\implies (i)$ holds since it implies $g(H^2)=g(H^2)-g(H^2+V) \in \Ii $ for all $g\in C_c([-c,c])$ (c.f. \cite[Proposition 5]{SSt1}). Hence $f(H)\in\Ii$ where 
 $f = g \circ Q$ with $Q(x) = x^2$. This shows that $f(H)\in\Ii$ for all $f\in C_c([-\sqrt{c},\sqrt{c}])$.
\end{proof}
Invertibility modulo an ideal
is stable under small perturbations (the following result is standard but slightly difficult to find in the literature):
\begin{lemma}
\label{lemma:pert_stability}
Let $H$ be a self-adjoint $\Aa$-multiplier that is invertible modulo an ideal $\Ii$. Then there exists a constant $C>0$ depending on $H$ such that
$$H+ V$$
is still invertible modulo $\Ii$ for all $V=V^*\in \mult(\Aa)$ with $\norm{V}\leq C$.
\end{lemma}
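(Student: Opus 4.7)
The plan is to exploit characterization (v) of the preceding lemma. By hypothesis, there exist a positive self-adjoint $W\in\Ii$ and a constant $c>0$ such that $H^2+W\geq c\,\one$ as quadratic forms on $\mathrm{Dom}(H)$. I claim that $C:=\tfrac12\sqrt{c}$ does the job: for every bounded self-adjoint $V\in\mult(\Aa)$ with $\|V\|\leq C$, the very same $W$ will witness the invertibility of $H+V$ modulo $\Ii$ via characterization (v). First, since $V$ is bounded, Proposition~\ref{prop-res-perturb}(i) applies trivially (with $\alpha=0$ and $V(H+\imath)^{-1}\in\mult(\Aa)$ as a product of multipliers), so $H+V$ is a self-adjoint $\Aa$-multiplier on $\mathrm{Dom}(H)$.

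The heart of the argument is a Young-type estimate. For $\psi\in\mathrm{Dom}(H)$, expanding $0\leq\|(\tfrac{1}{\sqrt{2}}H\pm\sqrt{2}\,V)\psi\|^2$ in both signs gives
\[|2\,\mathrm{Re}\langle H\psi,V\psi\rangle|\leq \tfrac{1}{2}\|H\psi\|^2+2\|V\psi\|^2.\]
Combined with $\langle\psi,(H+V)^2\psi\rangle=\|H\psi\|^2+2\,\mathrm{Re}\langle H\psi,V\psi\rangle+\|V\psi\|^2$ and the addition of $W\geq 0$, this yields
\[\langle\psi,((H+V)^2+W)\psi\rangle\geq \tfrac{1}{2}\langle\psi,(H^2+W)\psi\rangle+\tfrac{1}{2}\langle\psi,W\psi\rangle-\|V\|^2\|\psi\|^2\geq \bigl(\tfrac{c}{2}-\|V\|^2\bigr)\|\psi\|^2,\]
using $H^2+W\geq c\,\one$ and $W\geq 0$. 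For $\|V\|\leq C=\tfrac{1}{2}\sqrt{c}$ the lower bound is at least $c/4>0$, hence $(H+V)^2+W\geq\tfrac{c}{4}\,\one$, and characterization (v) (applied in the converse direction) delivers the invertibility of $H+V$ modulo $\Ii$.

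The only mildly subtle point, and what I would call the main obstacle, is the passage from the form estimate on the dense subspace $\mathrm{Dom}(H)$ to the operator inequality required by characterization (v). This is essentially bookkeeping: since $V$ is bounded, $\mathrm{Dom}(H)=\mathrm{Dom}(H+V)$ and this space is a form core for both $H^2$ and $(H+V)^2$, so the uniform lower bound on the form extends to the form closure and hence to an operator inequality on the domain of the positive self-adjoint operator $(H+V)^2+W$. Once this is in place, the whole argument reduces to the elementary choice $\epsilon=\tfrac{1}{2}$ in the Young inequality, which is tuned precisely to consume half of the spectral gap of $H^2+W$ while leaving the other half to absorb the $\|V\|^2$ term.
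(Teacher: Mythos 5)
Your proof is correct, but it takes a genuinely different route from the paper's. The paper first reduces to the bounded case by replacing $H$ with its bounded transform $F(H)$, controlling the induced perturbation via a norm-convergent integral formula so that $\norm{F(H+V)-F(H)}\leq c\norm{V}$, and then observes that for a bounded element invertibility modulo $\Ii$ is the statement that its image in the quotient $\mult(\Aa)/\Ii$ has a spectral gap around $0$, which is manifestly stable under small perturbations. You instead stay with the unbounded operator and exploit characterization (v) of the preceding lemma: the same positive witness $W\in\Ii$ that bounds $H^2+W$ from below also bounds $(H+V)^2+W$ from below once the cross term is absorbed by a Young inequality, and your arithmetic ($|2\,\mathrm{Re}\langle H\psi,V\psi\rangle|\leq\tfrac12\norm{H\psi}^2+2\norm{V\psi}^2$, leading to the lower bound $\tfrac{c}{2}-\norm{V}^2\geq\tfrac{c}{4}$ for $\norm{V}\leq\tfrac12\sqrt{c}$) checks out. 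Your remark on passing from the form estimate on the common core $\mathrm{Dom}(H)$ to the operator inequality is the right thing to address and is handled adequately, as is the preliminary appeal to Proposition~\ref{prop-res-perturb}(i) to ensure $H+V$ is again an $\Aa$-multiplier. What your approach buys is self-containedness: it avoids the Lipschitz estimate for the bounded transform, which the paper only sketches. What the paper's approach buys is a cleaner conceptual picture (gap stability in the quotient $C^*$-algebra) and a constant expressed directly in terms of the spectral gap of $\pi(F(H))$; both yield a constant depending on $H$, which is all the lemma requires.
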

\noindent{\bf Proof.}
Without loss of generality one can assume $H$ is bounded, otherwise one replaces it by its bounded transform, which replaces the perturbation $V$ by a perturbation $V' =F(H+V)-F(H) \in \mult(\Aa)$ which can be expressed by a norm-continuous convergent integral formula, i.e. there exists a universal constant such that $\norm{V'}\leq c \norm{V}$.

For bounded $H\in \mult(\Aa)$ invertibility modulo $\Ii$ is equivalent to the image $\pi(H)\in \mult(\Aa)/\Ii$ having a spectral gap $(-\epsilon,\epsilon)$ around $0$. Hence this property is stable under addition of perturbations with $\norm{V + \Ii} < 2\epsilon$.  
\hfill $\Box$

The local version is more difficult to characterize, except for the trivial observation:
\begin{proposition}
Let $H$, $H'$ be self-adjoint $\Aa$-multipliers. If $H$ is invertible module $\Ii$ then $H$ and $H'$ are locally $\Ii$-comparable if and only if $H'$ is also invertible modulo $\Ii$.
\end{proposition}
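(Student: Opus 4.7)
The statement is essentially a direct consequence of the definitions, and my proof plan is to verify each direction separately by choosing a suitably small common spectral interval.

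For the forward direction, I would assume that $H$ is invertible modulo $\Ii$ with associated spectral interval $\Delta_1 \ni 0$, and that $H$ and $H'$ are locally $\Ii$-comparable on some open interval $\Delta_2 \ni 0$. Setting $\Delta = \Delta_1 \cap \Delta_2$, which is still an open interval containing $0$, I would pick an arbitrary $f \in C_c(\Delta)$. Since $\Delta \subset \Delta_1$, Definition~\ref{def:invertible_modulo} gives $f(H) \in \Ii$. Since $\Delta \subset \Delta_2$, local $\Ii$-comparability gives $f(H) - f(H') \in \Ii$. Adding these two memberships yields $f(H') \in \Ii$, which shows that $H'$ is invertible modulo $\Ii$ with spectral interval $\Delta$.

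For the reverse direction, I would assume that both $H$ and $H'$ are invertible modulo $\Ii$, with spectral intervals $\Delta_1$ and $\Delta_2$ respectively. Taking $\Delta = \Delta_1 \cap \Delta_2$ and any $f \in C_c(\Delta) \subset C_c(\Delta_1) \cap C_c(\Delta_2)$, Definition~\ref{def:invertible_modulo} applied to each operator separately gives $f(H) \in \Ii$ and $f(H') \in \Ii$; subtracting shows $f(H) - f(H') \in \Ii$, so $H$ and $H'$ are locally $\Ii$-comparable on $\Delta$ in the sense of Definition~\ref{def:locally_comparable}.

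I do not expect any genuine obstacle here: the argument uses only the algebraic fact that $\Ii$ is closed under sums and under differences, together with the freedom to shrink the spectral interval. The statement is really a bookkeeping remark that says local $\Ii$-comparability and mod $\Ii$ invertibility interact in the expected way near $0$, and contains no analytic content beyond the definitions themselves. The only mild point worth flagging in the write-up is that compatibility of the two spectral intervals is ensured simply by intersecting them, after which $C_c(\Delta) \subset C_c(\Delta_i)$ makes both hypotheses applicable simultaneously.
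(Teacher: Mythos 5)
Your argument is correct and is exactly the routine verification the paper omits (the proposition is stated there as a ``trivial observation'' without proof): intersecting the two spectral intervals and using that $\Ii$ is closed under sums and differences is all that is needed. No gaps.
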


The most important sufficient condition for local $\Ii$-comparability is, however, in terms of resolvents:
\begin{definition}\label{def:resolvent-comparable}
We say that two self-adjoint $\Aa$-multipliers $H$,$H'$ are $\Ii$-resolvent-comparable if one of the two equivalent conditions holds:
\begin{enumerate}
	\item[(i)] $(H+\imath)^{-1} - (H'+\imath)^{-1} \in \Ii$.
	\item[(ii)] $f(H)-f(H')\in \Ii$ for all $f\in C_0(\RM)$.
\end{enumerate}
\end{definition}
The equivalence of $(i)$ and $(ii)$ holds since $\Ii$ is an ideal and the $*$-algebra generated by the resolvent is dense in $C_0(\RM)$. 
\begin{proposition}
\label{prop:normalizing_homotopy}
Let $H$,$H'$ be two self-adjoint $\Aa$-multipliers which are $\Ii$-resolvent-comparable. Let $\Theta$ be a normalizing function (we don't need that its derivative has support contained in a bounded interval $\Delta$ around $0$, neither that $\Theta(0)=0$). Then 
$$ e^{i\pi\Theta(H)} e^{-i\pi\Theta(H')} - 1 \in \Ii$$
and the homotopy class in $\Ii$ of this element does not depend on the choice of $\Theta$.
\end{proposition}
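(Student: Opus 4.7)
The plan is to exploit a single key observation: since $\Theta$ is normalizing, $\Theta(x) \to \pm 1$ as $x \to \pm\infty$, so the unimodular function $u(x) := e^{i\pi\Theta(x)}$ satisfies $u(x) \to e^{\pm i\pi} = -1$ at both infinities. Consequently $u + 1 \in C_0(\RM)$, which is the only analytic input needed beyond $\Ii$-resolvent-comparability.

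First I would establish that $e^{i\pi\Theta(H)} e^{-i\pi\Theta(H')} - 1 \in \Ii$. Since $|u(x)| = 1$, the bounded continuous functional calculus produces a unitary $u(H) \in \mult(\Aa)$ with $u(H)u(H)^* = 1$, so algebraically
$$u(H)u(H')^* - 1 = u(H)\bigl(u(H')^* - u(H)^*\bigr) = u(H)\bigl(u(H') - u(H)\bigr)^*.$$
Writing $f := u + 1 \in C_0(\RM)$ one has $u(H') - u(H) = f(H') - f(H)$, which lies in $\Ii$ by condition (ii) of Definition~\ref{def:resolvent-comparable}. Since $u(H) \in \mult(\Aa)$ and $\Ii$ is a two-sided ideal, the product lies in $\Ii$.

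For the homotopy invariance I would linearly interpolate: given two normalizing functions $\Theta_0, \Theta_1$, set $\Theta_s := (1-s)\Theta_0 + s\Theta_1$ for $s \in [0,1]$. This family is smooth, non-decreasing (as a convex combination of such functions), and $\Theta_s(x) \to \pm 1$ at $\pm\infty$, so $\Theta_s^2 - 1 \in C_0(\RM)$. The argument just given then applies pointwise in $s$ and shows $u_s(H)u_s(H')^* \in 1 + \Ii$, where $u_s := e^{i\pi\Theta_s}$. To promote this to a homotopy through unitaries in $1 + \Ii$, one needs norm-continuity of $s \mapsto u_s(H)u_s(H')^*$ in $\mult(\Aa)$. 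The Lipschitz estimate $|e^{i\pi a} - e^{i\pi b}| \leq \pi|a-b|$ yields $\|u_s - u_{s'}\|_\infty \leq \pi |s-s'|\,\|\Theta_1 - \Theta_0\|_\infty$, and because the bounded continuous functional calculus $C_b(\RM) \to \mult(\Aa)$ is a contractive $*$-homomorphism, $s \mapsto u_s(H)$ and $s \mapsto u_s(H')^*$ are norm-continuous, as is their product.

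The main obstacle I anticipate is conceptual rather than technical: one must resist the temptation to work directly with $\Theta(H) - \Theta(H')$, which in general need not lie in $\Ii$ because $\Theta \notin C_0(\RM)$. Passing through $u := e^{i\pi\Theta}$ is precisely the trick that converts the bounded-but-non-compact object $\Theta$ into the compactly supported modulation $u + 1$, at which point resolvent comparability does all the work and the rest of the proof reduces to contractivity of the bounded functional calculus.
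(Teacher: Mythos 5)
Your proof is correct and follows essentially the same route as the paper: observe that $e^{i\pi\Theta}+1\in C_0(\RM)$ so resolvent-comparability puts $e^{i\pi\Theta(H)}-e^{i\pi\Theta(H')}$ in $\Ii$, multiply by the unitary multiplier, and use convex combinations of normalizing functions for homotopy invariance. You are in fact slightly more careful than the paper (whose proof writes $e^{i\pi\Theta(x)}-1\in C_0(\RM)$ where the correct sign is $+1$, and which leaves the norm-continuity of the interpolating path implicit).
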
 
\begin{proof}
The function $x \mapsto e^{\imath\pi\Theta(x)}-1$ belongs to $C_0(\RM)$ and hence 
$ e^{\imath\pi\Theta(H)} - e^{\imath\pi\Theta(H')}  \in \Ii$. Given two normalizing functions $\Theta_0,\Theta_1$ their convex combination $\Theta_t = t\Theta_1+(1-t)\Theta_0$ is also a normalizing function and thus $ e^{\imath\pi\Theta_t(H)} e^{-\imath\pi\Theta_t(H')}-1$ is a continuous path in $ \Ii$.
\end{proof}

\subsection{Self-adjoint extensions}
\label{sec:sa_extensions}
To implement and compare different boundary conditions of a differential operator on a manifold with a boundary one usually starts with a symmetric operator and considers its self-adjoint extensions. An important question in our algebraic formalism is to characterize those extensions of the symmetric operator which are affiliated to a relevant algebra. If the symmetric operator is already affiliated then this can be studied by means of von Neumann's theory of self-adjoint extensions adapted to the Hilbert module case (where the Hilbert module is just the $C^*$-algebra itself) \cite{WK92, Lance}. Recall that, if $\mathring{H}$ is a closed symmetric operator on $\Hh$ whose deficiency spaces $\ker(\mathring{H}^*-i)$ and $\ker(\mathring{H}^*+i)$ have equal dimension, then the self-adjoint extensions of $\mathring{H}$ are classified by the partial isometries $u:\Hh\to\Hh$ whose domain projections $u^*u$ coincide with the orthogonal projection $e_-$ onto $\ker(\mathring{H}^*-i)$ while their range projections $u u^*$ coincide with the orthogonal projection $e_+$ onto $\ker(\mathring{H}^*+i)$. 

The corresponding self-adjoint extension $H_u$ is uniquely defined through its Cayley transform which has the form
	$$\Cc(H_u) = (H_u-\imath)(H_u+\imath)^{-1} := \Cc(\mathring{H}) + u$$
	Here $\Cc(\mathring{H}) = (\mathring{H}-\imath)(\mathring{H}+\imath)^{-1}$ is a partial isometry from $\ker(\mathring{H}^*-\imath)^\perp$ to  $\ker(\mathring{H}^*+\imath)^\perp$. The correspondence between such partial isometries $u$ and extensions ${H}_u$ is one-to-one with the domain of the extensions
	$$\mathrm{Dom}({H}_u) = \{\psi + \psi_+ + u\psi_+: \, \psi \in \mathrm{Dom}(\mathring{H}), \psi_+\in \Ker(\mathring{H}^*-\imath)\}.$$
	It is customary to call the partial isometry $u$ associated to a self-adjoint extension the {\it von Neumann unitary}, since $u$ is unitary as an operator from $\ker(\mathring{H}^*-\imath)$ to $\ker(\mathring{H}^*+\imath)$. The difference of von Neumann unitaries coincides up to a factor to the difference of  the resolvents:
\begin{equation}
\label{eq:resolvent_cayley}
(H_u+\imath)^{-1}- (H_v+\imath)^{-1} = -2 \imath (u-v).
\end{equation}
If $\mathring{H}$ is an $\Aa$-multiplier for a $C^*$-algebra $\Aa\subset \Bb(\Hh)$ then one can characterize precisely the extensions which are again $\Aa$-multipliers:
\begin{theorem}[{\cite{WK92, Lance}}]
	\label{theorem:saextensions}
	Let $\Aa\subset \Bb(\Hh)$ be a $C^*$-algebra and $\mathring{H}$ be a symmetric $\Aa$-multiplier. Then the orthogonal projections $e_\pm$ onto the deficiency spaces 
	$\Ker(\mathring{H}^* \pm \imath)$ are in $\mult(\Aa)$. The self-adjoint extension $H_u$ determined by a von Neumann unitary, i.e.\ a partial isometry  $u\in \Bb(\Hh)$ with  $u^*u=e_-$ and $u u^* = e_+$, is an $\Aa$-multiplier if and only if $u\in \mult(\Aa)$.
\end{theorem}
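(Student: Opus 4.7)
The plan is to derive both assertions from a single computation identifying the Cayley transform $\Cc(\mathring{H})$ as an element of $\mult(\Aa)$, and then using it to translate between self-adjoint extensions and their von Neumann unitaries.

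I would begin by showing that the deficiency projections and the Cayley transform lie in $\mult(\Aa)$. Set $F = F(\mathring{H}) = \mathring{H}(1+\mathring{H}^*\mathring{H})^{-1/2}$ and $G = (1+\mathring{H}^*\mathring{H})^{-1/2}$. Since $\mathring{H}$ is a symmetric $\Aa$-multiplier one has $F \in \mult(\Aa)$, hence $G^2 = 1 - F^*F \in \mult(\Aa)$ and $G \in \mult(\Aa)$ by continuous functional calculus of the positive self-adjoint multiplier $G^2$. Because $G$ maps $\Hh$ onto $\mathrm{Dom}(\mathring{H})$ and $\mathring{H}^*$ agrees with $\mathring{H}$ on that subspace, one obtains the key commutation relation $F^*G = GF$ on all of $\Hh$. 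Setting $A = F+\imath G = (\mathring{H}+\imath)G$ and $B = F-\imath G = (\mathring{H}-\imath)G$, this identity yields $A^*A = B^*B = F^*F + G^2 = 1$, so $A, B \in \mult(\Aa)$ are isometries whose ranges are the closed subspaces $(\mathring{H}\pm\imath)\mathrm{Dom}(\mathring{H}) = (1-e_\mp)\Hh$. Consequently $1 - e_- = AA^*$ and $1 - e_+ = BB^*$ both lie in $\mult(\Aa)$, and the extended partial isometry $\Cc(\mathring{H}) = BA^*$ is also in $\mult(\Aa)$.

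For the characterization of self-adjoint extensions, I would invoke Lemma~\ref{lemma:selfadjointmultiplier}(iii) together with the relation $\Cc(H_u) = \Cc(\mathring{H}) + u$ and the formula $(H_u+\imath)^{-1} = (1 - \Cc(H_u))/(2\imath)$. Using that $\Cc(\mathring{H}) \in \mult(\Aa)$ by the previous step, this shows that $(H_u+\imath)^{-1} \in \mult(\Aa)$ is equivalent to $u \in \mult(\Aa)$. The remaining requirement, density of $\Dd_\Aa(H_u)$ in $\Aa$, is automatic: since $H_u$ extends $\mathring{H}$ one has $\Dd_\Aa(\mathring{H}) \subset \Dd_\Aa(H_u)$, and $\Dd_\Aa(\mathring{H})$ contains the norm-dense subset $(1+\mathring{H}^*\mathring{H})^{-1/2}\Aa$ by Lemma~\ref{lemma:concrete_multiplier} since $\mathring{H}$ is itself an $\Aa$-multiplier.

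The main technical obstacle I anticipate is establishing the commutation identity $F^*G = GF$, which underlies the isometry property of $A$ and $B$. It rests on the standard facts that $G = (1+\mathring{H}^*\mathring{H})^{-1/2}$ has range exactly $\mathrm{Dom}(\mathring{H})$ and that $\mathring{H}^*\mathring{H}$ is self-adjoint for a closed symmetric operator; both require the regularity of $\mathring{H}$ as a Hilbert module operator and are the content of the $C^*$-algebraic von Neumann theory of \cite{WK92, Lance}. Once this input is in place the remainder of the argument is purely algebraic manipulation inside $\mult(\Aa)$.
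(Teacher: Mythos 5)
Your argument is correct, and it is worth noting that the paper does not actually prove Theorem~\ref{theorem:saextensions} at all: it is imported wholesale from \cite{WK92, Lance}. What you have written is essentially a reconstruction of the argument in those references, so there is no in-paper proof to diverge from. The closest in-house analogue is the proof of Proposition~\ref{proposition:symmetric_multipliers}, which runs in the opposite direction (from one affiliated self-adjoint extension back to the symmetric operator, via the Cayley-transform reconstruction of \cite{WK92}); your key commutation identity $F^*G=GF$ is exactly the symmetry criterion $z^*\sqrt{1-z^*z}=\sqrt{1-z^*z}z$ that the paper quotes there. All the individual steps check out: $F^*F+G^2=1$ gives $A^*A=B^*B=1$ once the commutation relation is in place; $\Ran(G)=\mathrm{Dom}(\mathring H)$ and the closedness of $\Ran(\mathring H\pm\imath)$ identify $AA^*=1-e_-$ and $BB^*=1-e_+$; the factorization $\Cc(\mathring H)=BA^*$ reproduces the zero-extended Cayley transform; and the equivalence $(H_u+\imath)^{-1}\in\mult(\Aa)\iff u\in\mult(\Aa)$ follows since $\Cc(H_u)=\Cc(\mathring H)+u$ with $\Cc(\mathring H)$ already in $\mult(\Aa)$. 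The domain-density step is also fine: $G\Aa\subset\Dd_\Aa(\mathring H)\subset\Dd_\Aa(H_u)$ because $\mathring HGb=Fb\in\Aa$, and $G\Aa$ is norm-dense by Lemma~\ref{lem-aff}. The one place where you correctly flag the real content is the identity $F^*G=GF$; the justification you sketch ($F^*\supset G\mathring H^*$ as a bounded extension, and $G\Hh=\mathrm{Dom}(\mathring H)\subset\mathrm{Dom}(\mathring H^*)$ by symmetry, so $F^*G\xi=G\mathring H G\xi=GF\xi$) closes it. What your route buys over simply citing \cite{WK92, Lance} is a self-contained Hilbert-space verification that only uses the paper's own Lemmas~\ref{lemma:concrete_multiplier} and~\ref{lemma:selfadjointmultiplier}.
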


We note the following observation which will be of use when comparing self-adjoint extensions using $K$-theory.
\begin{proposition}
\label{prop:von_neumann_ktheory}
Let $\Aa\subset \Bb(\Hh)$ be a $C^*$-algebra and $\mathring{H}$ a closed symmetric $\Aa$-multiplier on $\Hh$. Let $\Ii$ be an ideal of $\Aa$. Let $u,v$ be von Neumann unitaries for $ \mathring{H}$ such that $u-v\in \Ii$. 
Then \begin{enumerate}
    \item[(i)] $H_u$ and $H_v$ are $\Ii$-resolvent comparable.
    \item[(ii)] for any normalizing function $\Theta$ the unitaries $$ u v^*-e_++1 \sim_h e^{i\pi\Theta(H_u)} e^{-i\pi\Theta(H_v)} $$ are homotopic in $\Uu(\Ii^\sim)$.
\end{enumerate} 
\end{proposition}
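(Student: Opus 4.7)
Part (i) follows immediately from identity (\ref{eq:resolvent_cayley}): since $(H_u+\imath)^{-1} - (H_v+\imath)^{-1} = -2\imath(u-v)$ and $u-v\in \Ii$ by hypothesis, the resolvent difference lies in $\Ii$, which is precisely $\Ii$-resolvent-comparability as defined in Definition \ref{def:resolvent-comparable}.

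For (ii), the strategy is to reduce the statement to one convenient normalizing function. Since (i) has been established, the hypotheses of Proposition \ref{prop:normalizing_homotopy} are satisfied, so for any normalizing function $\Theta$ the element $e^{\imath\pi\Theta(H_u)}e^{-\imath\pi\Theta(H_v)} - 1$ lies in $\Ii$ and its homotopy class in $\Uu(\Ii^\sim)$ does not depend on the choice of $\Theta$. It therefore suffices to establish the identification for a single convenient $\Theta_0$.

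The natural choice is $\Theta_0(x) = \frac{2}{\pi}\arctan(x)$, which is easily verified to be a normalizing function (smooth, non-decreasing, $\Theta_0(0)=0$, $\Theta_0'(0)=2/\pi>0$, and $\Theta_0^2-1 \to 0$ at $\pm\infty$). The key elementary identity is $e^{\imath\pi\Theta_0(x)} = -\Cc(x)$ with $\Cc(x)=(x-\imath)(x+\imath)^{-1}$ the Cayley transform, which comes from the standard computation $e^{2\imath\arctan(x)}=((1-x^2)+2\imath x)/(1+x^2) = -\Cc(x)$. Applying functional calculus gives $e^{\imath\pi\Theta_0(H_u)}e^{-\imath\pi\Theta_0(H_v)} = \Cc(H_u)\Cc(H_v)^*$. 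Using the extension formulas $\Cc(H_u) = \Cc(\mathring{H}) + u$ and $\Cc(H_v)^* = \Cc(\mathring{H})^* + v^*$ recalled in Section~\ref{sec:sa_extensions}, the product expands into four terms. Since $u,v$ have source projection $e_-$ and $\Cc(\mathring{H})^*$ has range in $(1-e_-)\Hh$, the cross terms $u\Cc(\mathring{H})^*$ and $\Cc(\mathring{H})v^*$ both vanish; meanwhile $\Cc(\mathring{H})\Cc(\mathring{H})^* = 1-e_+$ because $\Cc(\mathring{H})$ is the Cayley partial isometry from $(1-e_-)\Hh$ onto $(1-e_+)\Hh$. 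Combining yields the exact identity $e^{\imath\pi\Theta_0(H_u)}e^{-\imath\pi\Theta_0(H_v)} = 1 - e_+ + uv^*$, and for any other normalizing $\Theta$ the homotopy $\sim_h$ in $\Uu(\Ii^\sim)$ is provided by Proposition \ref{prop:normalizing_homotopy}.

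The main obstacle is less in the routine calculation than in recognizing which normalizing function to pick so that $e^{\imath\pi\Theta}$ matches (up to a sign) the Cayley transform; this is the bridge between the von Neumann theory of self-adjoint extensions and the $K$-theoretic picture based on normalizing functions. A minor sanity check also needed is that $uv^* - e_+ = -u(u-v)^*\in\Ii$, so that $1 + uv^* - e_+$ actually belongs to $\Uu(\Ii^\sim)$; this is immediate from $u\in\mult(\Aa)$ and $u-v\in\Ii$.
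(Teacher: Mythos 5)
Your proof is correct and follows essentially the same route as the paper: part (i) from the resolvent--Cayley identity \eqref{eq:resolvent_cayley}, and part (ii) by reducing via Proposition~\ref{prop:normalizing_homotopy} to $\Theta_0(x)=\tfrac{2}{\pi}\arctan(x)$, where $e^{\imath\pi\Theta_0}=-\Cc$. The only difference is that you spell out the ``simple algebra'' $\Cc(H_u)\Cc(H_v)^*=1+uv^*-e_+$ (cross terms killed by the source/range projections, $\Cc(\mathring H)\Cc(\mathring H)^*=1-e_+$) which the paper leaves implicit; that computation is carried out correctly.
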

\begin{proof}
That $H_u$ and $H_v$ are $\Ii$-resolvent comparable follows directly from (\ref{eq:resolvent_cayley}). Then
$f(H_u)-f(H_v')\in \Ii$ holds for all 
$f\in C(\RM)$ with $\lim_{x\to\pm\infty}f(x) = 1$. By Proposition~\ref{prop:normalizing_homotopy} the homotopy class of the unitary
$e^{i\pi\Theta(H_u)} e^{-i\pi\Theta(H_v)}$ inside $ \Ii^\sim$ does not depend on $\Theta$ for any $\Delta$-normalizing function. Note that $\Delta$ is a completely arbitrary compact interval in the resolvent-comparable case. By a density argument it is therefore easy to see that the unitaries $e^{i\pi\Theta(H_u)} e^{-i\pi\Theta(H_v)}$ are homotopic in $\Ii^\sim$ for any normalizing function $\Theta$ without support condition for $\Theta'$. For the choice $\Theta(x)=\frac{2}{\pi}\arctan(x)$ one has $e^{i\pi\Theta(x)} = -\Cc(x)$ and simple algebra gives $$ e^{i\pi\Theta(H_u)} e^{-i\pi\Theta(H_v)} =\Cc(H_u) \Cc(H_v)^* = u v^*-e_++1.$$
\end{proof}
The formalism above requires that the symmetric operator $\mathring H$ is affiliated to $\Aa$, which may be difficult to verify in practice. It is often easier to check if some specific self-adjoint extension $H_u$ is affiliated, notably if one can compute its resolvent. One can then work backwards:
\begin{proposition}
\label{proposition:symmetric_multipliers}
Let $\mathring{H}$ be a closed symmetric operator on a Hilbert space $\Hh$ with $\Aa\subset \Bb(\Hh)$. Assume that there is some self-adjoint extension $H_u$ of $\mathring{H}$ such that
\begin{enumerate}
	\item[(i)] $H_u$ is an $\Aa$-multiplier.
	\item[(ii)] the von Neumann unitary is a multiplier $u\in \mult(\Aa)$.
	\item[(iii)] $\Dd_\Aa(\mathring{H})$ is norm-dense in $\Aa$.
\end{enumerate} 
Then $\mathring{H}$ is a symmetric $\Aa$-multiplier and consequently any other self-adjoint extension $H_{v}$ of $\mathring{H}$ is a self-adjoint $\Aa$-multiplier if and only if $v\in \mult(\Aa)$.
\end{proposition}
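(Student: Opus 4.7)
The plan is to verify that $\mathring H$ satisfies the Woronowicz criterion of Lemma~\ref{lem-aff}, i.e.\ $F(\mathring H)\in\mult(\Aa)$ and $(1+\mathring H^*\mathring H)^{-1/2}\Aa$ is norm-dense in $\Aa$. Once this is established, the ``consequently'' clause is immediate from Theorem~\ref{theorem:saextensions}.

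First I would record two cheap consequences of (i) and (ii). Since $u\in\mult(\Aa)$ is a partial isometry, its initial and final projections $e_-=u^*u$ and $e_+=uu^*$ lie in $\mult(\Aa)$; by von Neumann's correspondence these are exactly the orthogonal projections onto the deficiency subspaces $\Ker(\mathring H^*\mp\imath)$. Combining this with $\Cc(H_u)=1-2\imath(H_u+\imath)^{-1}\in\mult(\Aa)$ coming from (i), the identity $\Cc(H_u)=\Cc(\mathring H)+u$ yields $\Cc(\mathring H)=\Cc(H_u)-u\in\mult(\Aa)$.

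The main step is to show $F(\mathring H)=\mathring H(1+\mathring H^*\mathring H)^{-1/2}\in\mult(\Aa)$. For any $a\in\Dd_\Aa(\mathring H)$ one has $\Ran(a)\subset\mathrm{Dom}(\mathring H)\subset\mathrm{Dom}(H_u)$ and $\mathring H a=H_ua\in\Aa$. The bounded contraction $(1+\mathring H^*\mathring H)^{-1/2}$ can be reconstructed from the data $\Cc(\mathring H)$ and $e_\pm$ (whose initial-space portion encodes $|\mathring H|$, while the deficiency portion is a projection), placing it in $\mult(\Aa)$; writing $F(\mathring H)a=H_u(1+\mathring H^*\mathring H)^{-1/2}a$ and using that the intermediate operator lies in $\Dd_\Aa(H_u)$ then gives $F(\mathring H)a\in\Aa$. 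Hypothesis (iii) is now used to extend this inclusion from $\Dd_\Aa(\mathring H)$ to all of $\Aa$: since $\Dd_\Aa(\mathring H)$ is norm-dense and $F(\mathring H)$ is a bounded operator, $F(\mathring H)\Aa\subset\Aa$ follows. A symmetric argument using $F(\mathring H)^*=F(\mathring H^*)$ gives the right-multiplication statement, so $F(\mathring H)\in\mult(\Aa)$.

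The density condition is essentially automatic once $F(\mathring H)\in\mult(\Aa)$. Indeed the identity $(1+\mathring H^*\mathring H)^{-1}=1-F(\mathring H)^*F(\mathring H)$ puts $(1+\mathring H^*\mathring H)^{-1}$ (and therefore its positive square root) into $\mult(\Aa)$, and for any $a\in\Aa$ the sequence $a_n=(1+n^{-1}\mathring H^*\mathring H)^{-1}a$ lies in $(1+\mathring H^*\mathring H)^{-1/2}\Aa$ and converges in norm to $a$ by strict continuity of the continuous functional calculus applied to the bounded positive multiplier $F(\mathring H)^*F(\mathring H)$. Lemma~\ref{lem-aff} then gives that $\mathring H$ is a symmetric $\Aa$-multiplier, and Theorem~\ref{theorem:saextensions} concludes.

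The hardest step is clearly showing $F(\mathring H)\in\mult(\Aa)$: the bounded transform of a symmetric (not self-adjoint) operator is not directly a continuous function of $\Cc(\mathring H)$, so one cannot just apply continuous functional calculus to the Cayley transform. One genuinely needs the mixed information of all three hypotheses: (i) to handle the bulk of the spectrum via $H_u$, (ii) to track the deficiency subspaces, and (iii) to extend multiplier properties from a nicely behaved dense subalgebra to all of $\Aa$.
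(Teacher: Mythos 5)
Your opening step agrees with the paper: from (i) and (ii) one gets $\Cc(\mathring H)=\Cc(H_u)-u\in\mult(\Aa)$ and $e_\pm\in\mult(\Aa)$. After that, however, your route diverges from the paper's and contains two genuine gaps. The paper does \emph{not} try to verify $F(\mathring H)\in\mult(\Aa)$ by hand; it invokes the Woronowicz--Napi\'orkowski reconstruction theorem (\cite[Proposition 5.1]{WK92}): a partial isometry $c\in\mult(\Aa)$ with $(c-\one)c^*c\,\Aa$ norm-dense is the Cayley transform of a unique symmetric operator affiliated to $\Aa$. Hypothesis (iii) enters precisely to verify that density, via the identity $a=\tfrac{1}{-2\imath}(c-\one)c^*c(\mathring H+\imath)a$ for $a\in\Dd_\Aa(\mathring H)$; uniqueness of the Cayley transform then identifies the reconstructed operator with $\mathring H$.

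The first gap is in your ``main step.'' Writing $F(\mathring H)a=H_u b$ with $b=(1+\mathring H^*\mathring H)^{-1/2}a$, you assert that $b\in\Dd_\Aa(H_u)$. But $(H_u+\imath)b=Wa$ where $W=(\mathring H+\imath)(1+\mathring H^*\mathring H)^{-1/2}=F(\mathring H)+\imath(1+\mathring H^*\mathring H)^{-1/2}$, so (granting $(1+\mathring H^*\mathring H)^{-1/2}\in\mult(\Aa)$, which is provable from $(1+\mathring H^*\mathring H)^{-1}=\tfrac14(c-\one)c^*c(c^*-\one)$) the statement $b\in\Dd_\Aa(H_u)$ is \emph{equivalent} to $F(\mathring H)a\in\Aa$ --- you are assuming the conclusion. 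Note also that for a merely symmetric $\mathring H$ one cannot commute $\mathring H$ past $(1+\mathring H^*\mathring H)^{-1/2}$ (that would require $\mathring H^*\mathring H=\mathring H\mathring H^*$), so the usual shortcut of pushing $\mathring Ha\in\Aa$ through the contraction is unavailable; the isometry $W$ is not a continuous function of $c$, which is exactly why the paper resorts to the reconstruction theorem rather than to functional calculus on the Cayley transform.

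The second gap is the claim that the density of $(1+\mathring H^*\mathring H)^{-1/2}\Aa$ is ``essentially automatic once $F(\mathring H)\in\mult(\Aa)$.'' It is not --- if it were, the density clause in Lemma~\ref{lem-aff} would be redundant, and Woronowicz's point is that it is not (e.g.\ a contraction $z$ with $\|z\|=1$ attained can fail it). Your approximants $a_n=h_n\bigl(1-F(\mathring H)^*F(\mathring H)\bigr)a$ involve functions $h_n$ converging pointwise to the discontinuous indicator of $(0,1]$, not locally uniformly, so strict continuity of the functional calculus does not yield $a_n\to a$ in norm. Density can be salvaged, but only by using (iii) again: for $a\in\Dd_\Aa(\mathring H)$ one has $a=(1+\mathring H^*\mathring H)^{-1/2}W^*(\mathring H+\imath)a$ with $W^*(\mathring H+\imath)a\in\Aa$ once $W\in\mult(\Aa)$ is known --- which loops back to the first gap. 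The cleanest repair is simply to follow the paper and feed (iii) into the density hypothesis of the reconstruction theorem.
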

\noindent{\bf Proof.}
The Cayley transform of $H$ is a partial isometry with $$\Cc(\mathring{H})\rvert_{(\mathring{H}+\imath)\Hh}=(\mathring{H}-\imath)(\mathring{H}+\imath)^{-1}$$ and extended with $0$ to the orthogonal complement of $(\mathring{H}+\imath)\Hh$. Note that is well-defined since $\mathring{H}+\imath$ is injective for symmetric $\mathring{H}$. By the von Neumann theory one has
$$\Cc(\mathring{H})=\Cc(\mathring{H}_u)-u$$
hence $\Cc(\mathring{H}) \in \mult(\Aa)$. The Cayley transform determines $\mathring{H}$ uniquely; similarly one has a reconstruction in the context of affiliated operators \cite[Proposition 5.1]{WK92}: If $c\in \mult(\Aa)$ is a partial isometry with the property that $(c-\one)c^*c\Aa\subset \Aa$ is norm-dense then it is the Cayley-transform of a unique symmetric operator $\hat{T}_c$ affiliated to $\Aa$ (this is a priori not an operator on $\Hh$ but defined on a dense submodule of the Hilbert module $\Aa$).

For $c=\Cc(\mathring{H})$ one finds that $c^*c$ is the projection to $\Ran(\mathring{H}+\imath)$, which is a closed subspace of $\Hh$ since $\mathring{H}$ is closed. On the range of this projection one has for every $\xi\in \mathrm{Dom}(\mathring{H})$
\begin{align*}
(c-\one)c^*c (\mathring{H}+\imath)\xi&= (c-\one) (\mathring{H}+\imath)\xi \\
&= (\mathring{H}-\imath)\xi - (\mathring{H}+\imath)\xi = -2\imath \xi.
\end{align*}
By definition, one has for every $a\in \Dd_\Aa(\mathring{H})$ that $\Ran(a)\subset \mathrm{Dom}(\mathring{H})$ and $(\mathring{H}+\imath)a\in \Aa$, thus we can write 
$$a = \frac{1}{-2\imath } (c-\one)c^*c (\mathring{H}+\imath)a$$
showing that $(c-\one)c^*c \Aa$ contains the norm-dense subset $\Dd_\Aa(\mathring{H})$. 

Therefore there is an affiliated operator $\hat{T}_c$ densely defined on $\Aa$ with Cayley transform $c$. That implies $F(\hat{T}_c)\in \mult(\Aa)$ using the functional calculus in the Hilbert module sense. Being an affiliated operator, $\hat{T}_c$ has a concrete realization as a densely defined closed operator $T_c$ on $\Hh$ which is an $\Aa$-multiplier and whose bounded transform in the Hilbert space sense satisfies $F(\hat{T}_c)=F(T_c)$. One can show \cite{WK92} that a closed operator $T$ is symmetric if and only if its bounded transform $z=F(T)$ satisfies
$$z^*\sqrt{1-z^*z}=\sqrt{1-z^*z}z$$
and can then write its Cayley transform purely algebraically as
$$\Cc(T)=(z-\imath \sqrt{\one - z^*z})(z^*-\imath \sqrt{\one - z^*z}).$$
Due to the representation property, $T_c$ is therefore symmetric with $\Cc(T_c)=c$. Since the Cayley transform determines an operator uniquely, $\mathring{H}$ must coincide with the Hilbert space realization $\mathring{H}=T_c$ of $\hat{T}_c$ which shows that it is an $\Aa$-multiplier. 

We can now see that any self-adjoint extension of $\mathring{H}_v$ which is an $\Aa$-multiplier and whose von Neumann unitary $v$ is in $\mult(\Aa)$ defines a self-adjoint extension of $\hat{T}_c$ in the Hilbert module sense. The converse is provided by Theorem~\ref{theorem:saextensions}, showing a one-to-one correspondence.
\hfill $\Box$

\section{K-theory}
\label{sec:ktheory}

In the most common formulation of operator $K$-theory, the group $K_0(\Aa)$ for a unital $C^*$-algebra $\Aa$ is defined as the Grothendieck group 
of the semigroup of stable homotopy classes of projections in $M_n(\Aa)$ with addition induced by the direct sum. Here $M_n(\Aa)$ is included into $M_{n+m}(\Aa)$ as the upper left block in a block diagonal matrix which has the zero matrix $0_m$ in the lower right block. Similarily $K_1(\Aa)$ is the group of homotopy classes of unitaries in $M_n(\Aa)$. Now 
$U_n(\Aa)$ is included into $U_{n+m}(\Aa)$ as an upper left block in a block diagonal matrix which has the identity matrix $\one_m$ in the lower right block.
Homotopy classes of unitaries form a group under the operation of direct sum and the direct sum of two unitaries is stably homotopic to their product. 
Any unital homomorphism $\phi:\Aa\to \Bb$ between two unital $C^*$-algebras induces a group homomorphism $\phi_*:K_i(\Aa)\to K_i(\Bb)$ by $\phi_*([x]_i) = [\phi(x)]_i$. 

If $\Aa$ is non-unital its $K$-groups are defined via the homomorphism $s_*:K_i(\Aa^\sim)\to K_i(\CM)$ induced by the scalar map 
$s:\Aa^\sim\to \CM$, $s(a,\lambda) = \lambda$, namely
$K_i(\Aa):=\ker s_*$. Every class in $K_0(\Aa)$ can be represented by a formal difference $[p]_0-[q]_0$ of a pair of projections $p,q\in M_N(\Aa^\sim)$ 
with $s(p) = s(q)$. Every class $[u]_1$ in $K_1(\Aa)$ by a unitary $u\in M_N(\Aa^\sim)$ with $s(u) = \one_N$. In that case we say that we have representatives in the standard picture. 

Below we will use another picture for the elements of the $K$-group of a non-unital algebra. Let $\Bb$ be a unital $C^*$-algebra which contains $\Aa$ as an ideal and consider the $C^*$-algebra of pairs 
$$\PM(\Bb,\Aa):= \{(b_0,b_1) \in \Bb \oplus \Bb: \, b_1-b_2 \in \Aa\}$$
with entry-wise addition and multiplication (which is well-defined as $\Aa$ is an ideal of $\Bb$). 
Then we have a split exact sequence of algebras
$$\Aa \stackrel{\imath}\hookrightarrow \PM(\Bb,\Aa) \stackrel{q}{\twoheadrightarrow} \Bb $$
where $\imath(a) = (a,0)$ and $q(b_1,b_2) = b_2 \in \Bb$ and the split homomorphism $t:\Bb\to \PM(\Bb,\Aa)$ is given by $t(b) = (b,b)$. 
This leads to a split exact sequence of the $K$-groups
\begin{equation}\label{eq:K-pair-splitting}K_i(\Aa) \stackrel{\imath_*}\hookrightarrow K_i(\PM(\Bb,\Aa)) \stackrel{q_*}{\twoheadrightarrow} K_i(\Bb)
\end{equation}
and hence 
\begin{equation}\label{eq:K-alt}
K_i(\Aa)\stackrel{\tilde\imath_*}\simeq K_i(\PM(\Bb,\Aa))/t_* K_i(\Bb)
\end{equation}
where $\tilde\imath_*$ is $\imath_*$ followed by the quotient map.
\begin{definition}
Let $\Aa$ be an ideal in unital $C^*$-algebra $\Bb$. We define for a pair of projections $p,q\in M_N(\Bb)$ and $p-q\in M_N(\Aa)$ the class
$$[p,q]_0 = \tilde{i}_*^{-1}([(p,q)]_0) \in K_0(\Aa)$$
and for a pair of unitaries $u,v\in M_N(\Bb)$ and $u-v\in M_N(\Aa)$ the class
$$[u,v]_0 = \tilde{i}_*^{-1}([(u,v)]_1) \in K_1(\Aa).$$
\end{definition}
As the classes of pairs of the form $(x,y)$ are divided out, we have $[x,y]_i = - [y,x]_i$. For $K_1(\Aa)$ it is easy to convert into standard picture representatives, since $[u,v]_1 = [uv^*,1]_1 = [uv^*]_1$ and $uv^*-1\in M_n(\Aa)$, while finding convenient representatives for $K_0(\Aa)$ could be challenging, hence we will instead try to compute boundary maps and topological invariants directly from pairs of projections.

\begin{remark}
If $K_i(\Bb) = 0$ then $K_i(\Aa)\stackrel{\imath_*}\simeq K_i(\PM(\Bb,\Aa))$, a situation which arises, for instance, if $\Aa$ is stable and we take for $\Bb$ the multiplier algebra $\mult(A)$ \cite[Proposition 12.2.1]{Bla}. If $\Aa$ is not stable, we can stabilise it with the algebra $\KM$ of compact operators on an infinite-dimensional separable Hilbert space and so obtain
$$K_i(\Aa)\simeq K_i(\Aa\otimes \KM)\stackrel{\imath_*}\simeq K_i(\PM(\mult(\Aa\otimes \KM),\Aa\otimes \KM))$$
The first isomorphism is induced by the inclusion $a\mapsto a\otimes e$ where $e$ is a rank one projection of $\KM$.
\end{remark}

\subsection{Boundary maps for strongly affiliated operators}

Recall Definition~\ref{def:B_affiliation} by which a multiplier $H$ is $\Bb$-affiliated if its bounded transform $F(H)$ belongs to the algebra $\Bb\subset \mult(\Aa)$. In the standard picture of $K$-theory classes are represented by elements of $M_N(\Aa^\sim)$, hence the theory automatically includes operators which are matrix-valued. 
\begin{definition}
\label{def:strong_affiliation}
A self-adjoint $M_N(\Aa)$-multiplier $H$ is called strongly affiliated if it is $M_N(\Aa^\sim)$-affiliated.
\end{definition}
The condition $F(H)\in M_N(\Aa^\sim)$ is substantially weaker than
$F(H)\in (M_N(\Aa))^\sim$ which is why we cannot simply replace $\Aa$ by $M_N(\Aa)$ to handle the matrix-valued case.
\begin{example}
If $\Aa=C_0(X)$ for some locally compact topological space  then $\Aa^\sim = C(X^+)$ with the one-point compactification $X^+$. In that case, an $M_N(\Aa)$-multiplier is a continuous $M_N(\CM)$-valued function $H$ with bounded transform $F(H)\in C_b(X,M_N(\CM))$. It is strongly affiliated if and only if $F(H)$ admits a unique limit at the infinite point.
\end{example}

\begin{lemma} Let $H$ be a self-adjoint strongly affiliated $M_N(\Aa)$-multiplier. Suppose that $H$ is invertible, i.e.\ it has a gap $\Delta$ at $0$. Then the spectral projection $P_{\leq 0}(H)$ belongs to $M_N(\Aa^\sim)$ and hence defines a class
$$[H]_0:=[P_{\leq 0}(H)]_0 - [s(P_{\leq }(H))]_0 \in K_0(\Aa).$$

\end{lemma}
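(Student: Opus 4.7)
The plan is to establish two things: first, that $P_{\leq 0}(H)$ is an element of the matrix algebra $M_N(\Aa^\sim)$, and second, that the formal difference given is a bona-fide element of the kernel of $s_*$, which by definition is $K_0(\Aa)$.

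For the first point, the idea is to realise $P_{\leq 0}(H)$ through continuous functional calculus applied to the bounded transform $F(H)$, which by the strong-affiliation hypothesis sits inside $M_N(\Aa^\sim)$. The gap $\Delta$ around $0$ in the spectrum of $H$ translates under $F$ into a gap of $F(H)$ around $0$ in the compact interval $[-1,1]$: concretely, if $\Delta\supset(-\epsilon,\epsilon)$ then the spectrum of $F(H)$ avoids $(F(-\epsilon),F(\epsilon))$. Consequently the characteristic function $\chi_{(-\infty,0]}$, restricted to the spectrum of $H$, agrees with $g\circ F$ for some $g\in C([-1,1])$ taking values in $[0,1]$, with $g\equiv 1$ on $[-1,F(-\epsilon)]$ and $g\equiv 0$ on $[F(\epsilon),1]$. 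Applying the continuous functional calculus of the bounded self-adjoint element $F(H)\in M_N(\Aa^\sim)$ yields
\[
P_{\leq 0}(H)\;=\;\chi_{(-\infty,0]}(H)\;=\;g(F(H))\;\in\; M_N(\Aa^\sim),
\]
so $P_{\leq 0}(H)$ is indeed a projection in the unital $C^*$-algebra $M_N(\Aa^\sim)$.

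For the second point, the scalar map $s\colon \Aa^\sim\to\CM$ extends entrywise to a $*$-homomorphism $s\colon M_N(\Aa^\sim)\to M_N(\CM)$, and $M_N(\CM)$ sits inside $M_N(\Aa^\sim)$ via the canonical unital inclusion. Hence $s(P_{\leq 0}(H))$ is a projection both in $M_N(\CM)$ and, through this inclusion, in $M_N(\Aa^\sim)$. Both $[P_{\leq 0}(H)]_0$ and $[s(P_{\leq 0}(H))]_0$ therefore define classes in $K_0(\Aa^\sim)$. Since $s\circ\iota=\mathrm{id}$ on scalars, we have
\[
s_*\!\bigl([P_{\leq 0}(H)]_0-[s(P_{\leq 0}(H))]_0\bigr)=[s(P_{\leq 0}(H))]_0-[s(P_{\leq 0}(H))]_0=0,
\]
so the difference lies in $\ker s_*=K_0(\Aa)$, as claimed.

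The only genuine subtlety in the proof is the first step: one must use strong affiliation (i.e.\ $F(H)\in M_N(\Aa^\sim)$, not merely $F(H)\in M_N(\mult(\Aa))$) in order to invoke functional calculus inside the \emph{unital} subalgebra $M_N(\Aa^\sim)$ and conclude that the spectral projection itself lives there. Once that is secured, the $K$-theoretic bookkeeping is routine and follows the standard non-unital picture of $K_0$.
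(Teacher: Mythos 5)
Your proof is correct and follows essentially the same route as the paper: the paper applies a $\Delta$-normalizing function $\Theta$ to get $\Theta(H)=1-2P_{\leq 0}(H)\in M_N(\Aa^\sim)$, which is exactly your observation that $P_{\leq 0}(H)=g(F(H))$ for a continuous $g$ on $[-1,1]$ obtained from the spectral gap, combined with strong affiliation $F(H)\in M_N(\Aa^\sim)$. The $K$-theoretic bookkeeping at the end is routine (the paper does not even spell it out) and your treatment of it is fine.
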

\begin{proof}
	As $H$ has a gap around $0$ there is a normalizing function
	$\Theta$ with $\mathrm{supp}(\Theta')\subset \Delta$. Then $$\Theta(H)=1_N-2 P_{\leq 0}(H) \in M_N(\Aa^\sim)$$
	for some $N$. Therefore also $P_{\leq 0}(H) 
	\in M_N(\Aa^\sim)$. 
\end{proof}
Recall Definition~\ref{def:invertible_modulo}) which says that, given a closed ideal $\Ii$ of $\Aa$, an $M_N(\Aa)$-multiplier $H$ is invertible modulo $M_N(\Ii)$ if there is an open interval $\Delta$ of $0$ such that for all $f\in C_c(\Delta)$ we have $f(H)\in M_N(\Ii)$.
\begin{lemma} 	\label{def:gapless_invariant}
	If $H$ is a self-adjoint $M_N(\Aa)$-multiplier which is invertible modulo $M_N(\Ii)$ then, for any  normalizing function $\Theta$ with $\mathrm{supp}(\Theta')\subset \Delta$ we have $-e^{\imath\pi \Theta(H)}-\one \in M_N(\Ii)$ and hence we obtain a $K_1$-class 
	$$[H]_1:=[e^{\pi \imath \Theta(H)}]_1\in K_1(\Ii).$$
This class does not depend on the choice of such normalizing function and is trivial if $H$ has a gap at $E=0$.
\end{lemma}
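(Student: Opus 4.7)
The plan has three parts, matching the three assertions in the statement.

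For the first assertion, I would unpack the hypotheses on $\Theta$. Since $\Theta$ is smooth, non-decreasing, satisfies $\Theta^2-1\in C_0(\RM)$, and has $\mathrm{supp}(\Theta')$ compactly contained in the open interval $\Delta$, it must be constant on each connected component of $\RM\setminus\Delta$, tend to $\pm 1$ at $\pm\infty$, and hence be identically $-1$ on the left component and $+1$ on the right component of $\RM\setminus\Delta$. Compactness of $\mathrm{supp}(\Theta')$ inside $\Delta$ moreover forces $\Theta$ to be constant on open neighborhoods of the endpoints of $\Delta$. Consequently the continuous function $x\mapsto e^{\imath\pi\Theta(x)}+1$ vanishes on a neighborhood of $\RM\setminus\Delta$ and therefore lies in $C_c(\Delta)$. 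Applying the hypothesis that $H$ is invertible modulo $M_N(\Ii)$ (with this $\Delta$) then yields $e^{\imath\pi\Theta(H)}+\one\in M_N(\Ii)$, which up to sign is the claimed identity, and shows that $e^{\imath\pi\Theta(H)}$ is a unitary in $M_N(\Ii^\sim)$ representing a class in $K_1(\Ii)$.

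For the second assertion I would use a straight-line homotopy. Given two $\Delta$-normalizing functions $\Theta_0,\Theta_1$, the convex combination $\Theta_t:=(1-t)\Theta_0+t\Theta_1$ is again a $\Delta$-normalizing function, since non-decreasingness, the value and positive derivative at $0$, and $\Theta_t^2-1\in C_0(\RM)$ are all preserved under convex combinations, while $\mathrm{supp}(\Theta_t')\subset\mathrm{supp}(\Theta_0')\cup\mathrm{supp}(\Theta_1')\subset\Delta$. The first step applied to $\Theta_t$ gives unitaries $e^{\imath\pi\Theta_t(H)}\in -\one+M_N(\Ii)$. Norm-continuity of $t\mapsto e^{\imath\pi\Theta_t}$ in $C_b(\RM)$ is immediate from the uniform Lipschitz bound $|e^{\imath\pi a}-e^{\imath\pi b}|\leq \pi|a-b|$, and since the differences $e^{\imath\pi\Theta_t}-e^{\imath\pi\Theta_s}$ themselves lie in $C_c(\Delta)$, the path $t\mapsto e^{\imath\pi\Theta_t(H)}$ is norm-continuous in $M_N(\Ii^\sim)$. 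This is the required $K_1$-homotopy.

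For the third assertion, if $H$ has a genuine spectral gap around $0$ then I would shrink $\Delta$ so that its closure is contained in the gap. On the spectrum of $H$ the function $\Theta$ then takes only the values $\pm 1$, so by the functional calculus $e^{\imath\pi\Theta(H)}=-\one$ identically and its $K_1$-class is trivial. The only mildly subtle point in the whole argument is in step one: verifying that the support condition on $\Theta'$ upgrades the obvious statement $e^{\imath\pi\Theta}+1\in C_0(\Delta)$ to the literal compact-support statement $e^{\imath\pi\Theta}+1\in C_c(\Delta)$, which is what allows the invertibility-modulo-$\Ii$ hypothesis to be applied verbatim rather than via a limiting argument.
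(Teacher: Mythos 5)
Your proof is correct and follows essentially the same route as the paper: the paper obtains membership in $M_N(\Ii)$ and independence of $\Theta$ from the convex-combination homotopy of Proposition~\ref{prop:normalizing_homotopy}, and settles the gapped case by choosing a normalizing function with $\Theta(H)=\sgn(H)$, exactly as you do. Your only addition is the explicit verification that $e^{\imath\pi\Theta}+\one\in C_c(\Delta)$ (which tacitly uses that $\Delta$ is bounded; for unbounded $\Delta$ one would instead invoke density of $C_c(\Delta)$ in $C_0(\Delta)$ and closedness of the ideal), a detail the paper leaves implicit.
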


\begin{proof}
By Proposition~\ref{prop:normalizing_homotopy} the class is well-defined, independent of the normalizing function and defines a $K_1$-class in the standard picture. The overall sign $-1$ does not matter as $-\one$ is homotopic to $\one$. 

If $H$ has a spectral gap around $E\in (-\delta,\delta)$ then one can choose a continuous normalizing function $\Theta$ such that $\Theta(H)=\sgn(H)$, hence $e^{\pi \imath \Theta(H)}=-\one$ showing that we get the $0$-element of $K_1(\Ii)$.
\end{proof}
If $[H]_1$ is non-trivial then it is a topological invariant which is an obstruction to gap-opening with respect to small perturbations.

The mechanism behind the $K$-theoretic bulk-edge correspondence is that, given an exact sequence of $C^*$-algebras 
\begin{equation}\label{eq:SES}
 \Ee \hookrightarrow \hat\Aa \stackrel{q}{\twoheadrightarrow} \Aa 
\end{equation}  
there is a homomorphism $\partial: K_0(\Aa)\to K_1(\Ee)$. It is defined as follows:
If $p$ is a projection in $M_N(\Aa^\sim)$ take a self-adjoint element 
$a\in M_N(\hat\Aa^\sim)$ such that $q(a) = p$ (this is called a lift of $p$). Then $\partial $ maps the $K_0$-class defined by $p$ to the $K_1$-class defined by the unitary $e^{2\pi i a}$. 

Applying this to a self-adjoint Hamiltonian $H$ affiliated to the algebra $\Aa$ we need a lift of the Fermi projection $P_{\leq 0}(H)$. To draw meaningful physical conclusions this lift should also be constructed from the functional calculus of a Hamiltonian $\hat{H}$ affiliated to $\hat{\Aa}$. That Hamiltonian should be a lift of $H$ under $q$, written as $q(\hat H)=H$  (note that this is meaningful since $q$ extends to unbounded multipliers by Proposition~\ref{prop:morphism}. Given any $\Delta$-normalizing function $\Theta$ one then gets $1-2 P_{\leq 0}(H) =\Theta(H)=\Theta(q(\hat{H}))=q(\Theta(\hat{H}))$ which obviously can be used to give a lift of $P_{\leq 0}(H)$. However, a priori $\Theta(\hat H)$ 
belongs only to the multiplier algebra of $M_N(\hat\Aa)$, not to $M_N(\hat\Aa^\sim)$, which means that it cannot generally be used to compute the boundary map in the standard picture. If one can assert that condition, however, then one has the following $K$-theoretic identity.

\begin{theorem}
	\label{th:bbc_strongly_affiliated}
	Let $\hat{H}$ be a self-adjoint strongly affiliated $M_N(\hat{\Aa})$-multiplier and such that $H=q(\hat{H})$ is an invertible $M_N(\Aa)$-multiplier.
	Then $\hat{H}$ is invertible modulo $M_N(\Ee)$, $H$ is a strongly affiliated and 
	$$[\hat{H}]_1 = \partial ([H]_0)$$
	where $\partial: K_0(\Aa) \to K_1(\Ee)$ is the connecting map for the exact sequence (\ref{eq:SES}).
\end{theorem}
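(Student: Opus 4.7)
My plan has three moving parts: (i) transfer strong affiliation from $\hat H$ to $H=q(\hat H)$, (ii) establish that $\hat H$ is invertible modulo $M_N(\Ee)$ so that $[\hat H]_1$ is well-defined, and (iii) identify the two $K_1$-classes through a common normalizing-function lift.

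For (i), I would invoke Proposition~\ref{prop:morphism}: the surjection $q:\hat\Aa\to\Aa$ extends canonically to multipliers, to their unitizations, and to affiliated operators, and this extension intertwines continuous functional calculus. In particular $q(F(\hat H))=F(q(\hat H))=F(H)$, and since $F(\hat H)\in M_N(\hat\Aa^\sim)$ by hypothesis, $F(H)\in q(M_N(\hat\Aa^\sim))\subset M_N(\Aa^\sim)$, which is strong affiliation of $H$. For (ii), choose any bounded open interval $\Delta\ni 0$ disjoint from $\mathrm{spec}(H)$. For every $f\in C_c(\Delta)$ one has $f(H)=0$, hence $q(f(\hat H))=0$. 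Strong affiliation of $\hat H$ ensures that $f(\hat H)\in M_N(\hat\Aa)$ (its scalar part, computed via the continuous functional calculus of $F(\hat H)$ on the values $\pm1$, vanishes because $f$ vanishes at $\pm\infty$), so $f(\hat H)\in\ker q\cap M_N(\hat\Aa)=M_N(\Ee)$. This is the definition of $\hat H$ being invertible modulo $M_N(\Ee)$, and Lemma~\ref{def:gapless_invariant} then yields $[\hat H]_1=[e^{i\pi\Theta(\hat H)}]_1\in K_1(\Ee)$ for any $\Delta$-normalizing function $\Theta$, represented in the standard picture by the unitary $-e^{i\pi\Theta(\hat H)}\in M_N(\hat\Aa^\sim)$.

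For (iii), fix $\Theta$ as above. Since $H$ is gapped around $0$, spectral calculus gives $\Theta(H)=1_N-2P_{\leq 0}(H)$, so $P_{\leq 0}(H)=\tfrac12(1_N-\Theta(H))\in M_N(\Aa^\sim)$. Strong affiliation places $\Theta(\hat H)\in M_N(\hat\Aa^\sim)$, whence the self-adjoint element
\[
a:=\tfrac12\bigl(1_N-\Theta(\hat H)\bigr)\;\in\;M_N(\hat\Aa^\sim)
\]
is a lift of the Fermi projection, satisfying $q(a)=P_{\leq 0}(H)$ and automatically $s(a)=s(P_{\leq 0}(H))$. Feeding this lift into the exponential boundary map associated to the sequence \eqref{eq:SES} one computes
\[
\partial([H]_0)\;=\;\bigl[\exp(2\pi i a)\bigr]_1\;=\;\bigl[-\exp(-i\pi\Theta(\hat H))\bigr]_1\;\in\;K_1(\Ee),
\]
which up to the standard sign convention in the exponential map (equivalently an overall inversion in $K_1$, which can be absorbed by replacing $\Theta$ by a homotopic normalizing function via Proposition~\ref{prop:normalizing_homotopy} and using that scalar unitaries from $M_N(\CM)\hookrightarrow M_N(\hat\Aa^\sim)$ are trivial in $K_1(\Ee)$) is exactly the representative $-e^{i\pi\Theta(\hat H)}$ of $[\hat H]_1$.

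The main obstacle I anticipate is not the $K$-theoretic bookkeeping but the verification that all of the functional-calculus manipulations genuinely take place inside the conditional unitization $M_N(\hat\Aa^\sim)$ rather than in the larger multiplier algebra $M_N(\mult(\hat\Aa))$. Without this, neither $a$ nor $\exp(2\pi i a)$ would provide admissible standard-picture representatives. This is precisely the role of the strong-affiliation hypothesis: once $F(\hat H)\in M_N(\hat\Aa^\sim)$ is granted, every continuous function $\psi:[-1,1]\to\CM$ yields $\psi(F(\hat H))\in M_N(\hat\Aa^\sim)$ by the continuous functional calculus of the bounded self-adjoint element $F(\hat H)$, so in particular $\Theta(\hat H)$, the lift $a$ and $\exp(2\pi i a)$ all lie in the conditional unitization. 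With this observation in place the matching of the two representatives reduces to a standard sign check in the construction of the exponential boundary map.
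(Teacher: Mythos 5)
Your proposal follows essentially the same route as the paper: strong affiliation puts $\Theta(\hat H)$ in $M_N(\hat\Aa^\sim)$, so $\tfrac12(1-\Theta(\hat H))$ is an admissible standard-picture lift of the Fermi projection, and the exponential boundary map then reproduces the representative $e^{\imath\pi\Theta(\hat H)}$ of $[\hat H]_1$ up to the sign convention. The paper's own proof is a two-line version of exactly this, delegating well-definedness of both classes to the two preceding lemmas; your steps (i) and (iii), including the transfer of strong affiliation to $H$ via $q(F(\hat H))=F(H)$ and the matching of scalar parts, are correct and are the intended argument.

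There is one flawed justification in step (ii). You claim that $f(\hat H)\in M_N(\hat\Aa)$ for $f\in C_c(\Delta)$ because the scalar part is ``computed via the continuous functional calculus of $F(\hat H)$ on the values $\pm 1$'' and $f$ vanishes at $\pm\infty$. That reasoning is not valid: the scalar part $S=s(F(\hat H))$ is a self-adjoint matrix with $\|S\|\le 1$ whose spectrum need not be contained in $\{\pm 1\}$ (for instance $\hat H=0$ is strongly affiliated with $s(F(\hat H))=0$, and then $f(\hat H)=f(0)\cdot 1$ is a nonzero scalar for suitable $f\in C_c$). As stated, your argument would show that every strongly affiliated operator is invertible modulo $M_N(\Ee)$, which is false. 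The correct reason uses the gap hypothesis on $H$: since $q$ maps $M_N(\hat\Aa^\sim)$ to $M_N(\Aa^\sim)$ preserving the scalar part, one has $s\bigl(f(\hat H)\bigr)=s\bigl(q(f(\hat H))\bigr)=s\bigl(f(H)\bigr)=0$ because $f(H)=0$ for $f$ supported in the gap. Combined with $f(\hat H)\in M_N(\hat\Aa^\sim)$ (from strong affiliation) and $q(f(\hat H))=0$, this gives $f(\hat H)\in M_N(\Ee)$ as you want. With that repair the proof is complete.
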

\begin{proof}
Apart from the assumptions we needed for the last two lemmata the crucial additional assumption is that also $\hat H$ is strongly affiliated. Indeed, 
as shown above, 
$q(P_{\leq 0}(H))=\frac12(\Theta(\hat{H})+1)$.
As $\hat H$ is strongly affiliated $\frac12(\Theta(\hat{H})+1)$ actually lies in $M_{N}(\hat\Aa^\sim)$ and we can apply that last lemma to see that $\partial [H]_0=[\hat{H}]_1$.\end{proof}

The above theorem forms the core of the standard K-theoretic bulk-boundary correspondence as formulated in \cite{KS04}\footnote{In \cite{KS04} operators of the form $H = L + V$ were considered, where $L$ is the Landau Hamiltonian (magnetic Laplacian) and $V$ a covariant potential. In this case strong affiliation was proved by explicit calculation for Dirichlet boundary conditions.}: $\Aa$ can be taken as an algebra of bulk observables modeling elements of some observable algebra $\hat{\Aa}$ far away from a boundary, whereas the ideal $\Ee$ describes observables that are localized around the boundary. One starts with a bulk Hamiltonian $H$ which is a strongly affiliated $\Aa$-multiplier and can use the theorem to make spectral and K-theoretic statements about any Hamiltonian $\hat H$ that is a strongly affiliated $\hat \Aa$-multiplier such that 
$q(\hat H) = H$. 

\begin{remark}
If $\hat{\Aa}$ is unital then strong affiliation reduces to the condition that $\hat H$ is a bounded operator in $M_N(\hat \Aa)$. This is generally the case in the context of tight-binding models for topological insulators where the Hamiltonian is a bounded operator acting on $\ell^2(\Ll)$ for some Delone set $\Ll$.

In contrast, in continuous space the Hamiltonian is a differential operator and therefore $\hat{\Aa}$ and $\Aa$ must be non-unital (since unital $C^*$-algebras do not admit unbounded affiliated operators \cite[Proposition 1.3]{W90}). 
The conditions of the theorem still apply in some situations (see Section~\ref{sec:applications}), however, strong affiliation may only hold for a very small class of boundary conditions.
\end{remark}
Strong affiliation is perhaps the weakest condition which ensures that all occurring $K$-theory classes can be represented in the standard picture. Our goal is now to explore ways to weaken this condition. The starting point is a simple observation:
\begin{lemma}
\label{lemma:strongly_aff_vs_comparable}
	Let $H$ be a self-adjoint strongly affiliated $M_N(\Aa)$-multiplier. If $H$ is strongly affiliated then it is $M_N(\Aa)$-comparable to a self-adjoint element of $M_N(\CM)$.
\end{lemma}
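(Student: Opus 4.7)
The plan is to extract a finite-dimensional ``asymptotic matrix'' $M$ from $F(H)$ via the unitisation, and then verify comparability directly from the continuous functional calculus.

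First, strong affiliation gives $F(H)\in M_N(\Aa^\sim)$, and as vector spaces $M_N(\Aa^\sim)=M_N(\Aa)\oplus M_N(\CM)$. The canonical character $s\colon\Aa^\sim\to\CM$ extends entrywise to a surjection $M_N(\Aa^\sim)\twoheadrightarrow M_N(\CM)$ with kernel $M_N(\Aa)$; applied to $F(H)$ it produces a self-adjoint contraction $S:=s(F(H))\in M_N(\CM)$. Since $F(H)$ is a self-adjoint element of the unital $C^*$-algebra $M_N(\Aa^\sim)$ and $F\colon[-\infty,+\infty]\to[-1,1]$ is a homeomorphism, the continuous functional calculus inside $M_N(\Aa^\sim)$ yields $f(H)\in M_N(\Aa^\sim)$ for every $f\in C([-\infty,+\infty])$. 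Composing with $s$ defines a unital $*$-homomorphism $\phi\colon C([-\infty,+\infty])\to M_N(\CM)$ with $\phi(F)=S$.

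The key step is to realise $\phi$ as the continuous functional calculus of a single self-adjoint matrix. I would set
\[
M:=F^{-1}(S)=S(\one_N-S^2)^{-\tfrac12}\in M_N(\CM),
\]
so that $F(M)=S$ and consequently $f(M)=(f\circ F^{-1})(S)=\phi(f)=s(f(H))$ for every $f\in C([-\infty,+\infty])$. Since $f(M)\in M_N(\CM)$ is its own scalar part inside $M_N(\Aa^\sim)$, one concludes
\[
s\bigl(f(H)-f(M)\bigr)=\phi(f)-f(M)=0,
\]
i.e.\ $f(H)-f(M)\in\ker s=M_N(\Aa)$ for all such $f$, which is precisely $M_N(\Aa)$-comparability of $H$ to the self-adjoint matrix $M$.

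The main obstacle is that $M:=F^{-1}(S)$ only defines a finite self-adjoint matrix when $\mathrm{spec}(S)\subset(-1,1)$; an eigenvalue of $S$ equal to $\pm 1$ corresponds to a direction along which $H$ diverges at infinity, and there no finite matrix realises the asymptotic character $\phi$. I would expect this to be excluded by an implicit spectral condition built into the notion of ``strongly affiliated'' relevant here, or to be handled by block-decomposing $\CM^N$ along the spectral projections of $S$ for $\{\pm 1\}$ versus $(-1,1)$ and treating the divergent block separately.
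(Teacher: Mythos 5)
Your argument is fine on the part of $\CM^N$ where $\sigma(S)\subset(-1,1)$, and there it even yields the full comparability $f(H)-f(M)\in M_N(\Aa)$ for every $f\in C([-\infty,+\infty])$. The problem is the case you defer at the end, and it cannot be deferred: strong affiliation does not exclude eigenvalues $\pm1$ of $S=s(F(H))$ — on the contrary, this is the generic situation for differential operators. For the Laplacian, the paper's very first example, $F(H)$ tends to $1$ at infinity, so $S=1$. Both of your proposed repairs fail: there is no hidden spectral condition in the definition of strong affiliation ruling this out, and on a block where $S$ has eigenvalue $+1$ the asymptotic character is $f\mapsto f(+\infty)$, which is not the functional calculus of any finite self-adjoint matrix, so the block decomposition cannot be completed. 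Your obstruction is in fact genuine: for $H=-\Delta$ and any scalar $c$ one can pick $f\in C([-\infty,+\infty])$ with $f(c)=0$ but $f(+\infty)=1$, so $f(H)-f(c)\notin C_0(\RM^2)$ and the ``all $f$'' form of comparability to a finite matrix really is unattainable. This is exactly why the injective bounded transform $F$, whose range omits $\pm1$, is the wrong normalizing function to use here.

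The paper's proof circumvents this by choosing a normalizing function $\Theta$ that is constant, equal to $\pm1$, outside a compact interval. Such a $\Theta$ is surjective onto $[-1,1]$ with finite preimages of every point \emph{including} $\pm1$, so the scalar part $S=s(\Theta(H))$ always admits a finite self-adjoint $T\in M_N(\CM)$ with $\Theta(T)=S$, and then $\Theta(H)-\Theta(T)=\Theta(H)-S\in\ker s=M_N(\Aa)$. The paper then invokes its stated criterion that comparability amounts to $\Theta(H)-\Theta(T)\in M_N(\Aa)$ for some normalizing function; note that this single-$\Theta$ formulation (which is what is actually used downstream, since only $\Delta$-normalizing functions of $H$ enter the relative classes) is strictly weaker than the ``all $f$'' formulation precisely when $\Theta$ saturates, which is the tension your computation has put its finger on. In short: your strategy is the paper's strategy, but the missing idea is to replace $F$ by a saturating normalizing function, and your fallback plans do not close the gap.
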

\begin{proof}
	If $\Aa$ is unital so that $\Aa^\sim=\Aa$ then $H$ is $M_N(\Aa)$-comparable to the $0$ operator. So suppose that $\Aa^\sim=\Aa\oplus \CM\one$. 
	
	Choose a normalizing function $\Theta$ which is constant outside $(-1+\epsilon,1-\epsilon)$ for some $\epsilon>0$. If $H$ is strongly affiliated then $\Theta(H)$ has a scalar part $S := s(\Theta(H))$. It is a finite-dimensional self-adjoint scalar matrix. As $\Theta: [-1,1]\to [-1,1]$ surjective, there exists a preimage $T\in M_N(\CM)$ such that $\Theta(T) =S$. Hence $\Theta(H)-\Theta(T)\in M_N(\Aa)$.
\end{proof}
We will in the following derive more generally a bulk-edge correspondence principle for comparable pairs of self-adjoint operators of which Theorem~\ref{th:bbc_strongly_affiliated} is a special case.

\subsection{Relative invariants} 
\label{ssec:rel_inv}
Let $\Bb$ be a unital $C^*$-algebra containing $\Aa$ as an ideal and consider again an extension in the form \eqref{eq:SES}. Then all elements of $\Bb$ are multipliers of $\Aa$. Define
$$\hat \Bb := q^{-1}(\Bb)\subset \mult(\hat{\Aa}),\quad \hat\Bb_\Ee =  q^{-1}(0)\subset \mult(\hat{\Aa})$$
(we used the extension to multipliers).
We have an induced exact sequence
\begin{equation}
\label{eq:pair_exact_seq}
\PM(\hat \Bb_\Ee,\Ee) \hookrightarrow \PM(\hat \Bb,\hat\Aa)
\stackrel{q}\twoheadrightarrow \PM(\Bb,\Aa).
\end{equation}
\begin{proposition}
	\label{proposition:multiplierconnectingmaps}
Upon identifying $K_0(\Aa)\stackrel{i_*}\simeq K_0(\PM(\Bb,\Aa))/t_*K_0(\Bb)$ and
$K_1(\Ee)\stackrel{i_*}\simeq K_1(\PM(\hat\Bb_\Ee^\sim,\Ee))/t_*K_0(\hat\Bb_\Ee^\sim)$ as above the boundary map becomes
\begin{equation}
\label{eq:boundary_map_pairs}\partial[p_+,p_-]_0 = [e^{-\imath 2\pi \hat{p}_+},e^{-\imath 2\pi \hat{p}_-}]_1.
\end{equation}
\end{proposition}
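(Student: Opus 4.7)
My plan is to exploit the naturality of the $K$-theoretic connecting map together with the standard exponential formula applied now to the pair-algebra short exact sequence \eqref{eq:pair_exact_seq}. First I would verify that the first-coordinate inclusion $\imath(x)=(x,0)$ of the original sequence $\Ee \hookrightarrow \hat{\Aa} \twoheadrightarrow \Aa$ into \eqref{eq:pair_exact_seq}, together with the diagonal split $t$, forms a morphism of split short exact sequences. The surjectivity of the right-hand map of \eqref{eq:pair_exact_seq} uses that $q:\hat{\Bb}\to \Bb$ is surjective, which is inherited from the multiplier-level surjectivity in Proposition~\ref{prop:morphism}. By naturality of the six-term sequence in $K$-theory, the two connecting maps intertwine $\imath_*$ and $t_*$, so they descend to well-defined maps on the quotients $K_i / t_* K_i$. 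Under the identification \eqref{eq:K-alt}, applied both to the pair $\Aa\subset\Bb$ and to the pair $\Ee\subset\hat{\Bb}_\Ee^\sim$, the induced boundary map on the pair algebras is identified with the original $\partial : K_0(\Aa)\to K_1(\Ee)$.

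Next, I would construct an explicit self-adjoint lift of a projection $(p_+,p_-) \in M_N(\PM(\Bb,\Aa))$ to an element of $M_N(\PM(\hat{\Bb},\hat{\Aa}))$. Using surjectivity of $q:\hat{\Bb}\to \Bb$, pick any self-adjoint $\hat{p}_+ \in M_N(\hat{\Bb})$ with $q(\hat{p}_+)=p_+$. Since $p_+ - p_- \in M_N(\Aa)$ and $q : \hat{\Aa}\to \Aa$ is surjective, choose a self-adjoint $a\in M_N(\hat{\Aa})$ with $q(a) = p_+ - p_-$ and set $\hat{p}_- := \hat{p}_+ - a$. Then $(\hat{p}_+,\hat{p}_-)$ is self-adjoint in $M_N(\PM(\hat{\Bb},\hat{\Aa}))$, its image under $q$ is $(p_+,p_-)$, and its difference $\hat{p}_+-\hat{p}_- = a$ lies in $M_N(\hat{\Aa})$ as required to define an element of the pair algebra.

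Applying the standard exponential formula for the connecting map of \eqref{eq:pair_exact_seq}, and using that the continuous functional calculus acts coordinate-wise on a pair, yields
$$\partial_{\PM}[(p_+,p_-)]_0 \;=\; \bigl[\bigl(e^{-\imath 2\pi \hat{p}_+}, e^{-\imath 2\pi \hat{p}_-}\bigr)\bigr]_1.$$
Since $q(\hat{p}_\pm) = p_\pm$ are projections, $q(e^{-\imath 2\pi \hat{p}_\pm}) = 1$, so each entry lies in $M_N(\hat{\Bb}_\Ee^\sim)$ and their difference lies in $M_N(\Ee)$, confirming that the pair is a well-defined unitary in $M_N(\PM(\hat{\Bb}_\Ee^\sim,\Ee))$. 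Passing to the quotient by $t_* K_1(\hat{\Bb}_\Ee^\sim)$ and unwinding the definition of the bracket $[\cdot,\cdot]_1$ produces the stated formula.

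The step I expect to require the most care is the bookkeeping around the identifications \eqref{eq:K-alt}: one must check that the split $t_*$ intertwines the two connecting maps so that the quotient really inherits a well-defined boundary, and that the unitization convention on $\hat{\Bb}_\Ee$ is matched with that on $\Ee$ in such a way that the class of the exponentiated pair is independent of the choice of self-adjoint lifts $\hat{p}_\pm$ made in the middle step. Once this is verified the rest reduces to the familiar exponential formula for the Woronowicz-style boundary map, now simply carried over into the pair-algebra framework.
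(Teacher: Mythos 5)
Your proposal is correct and follows essentially the same route as the paper: naturality of the connecting map with respect to the inclusion $\tilde\imath$ into the pair algebras, the standard exponential formula applied to the induced pair exact sequence \eqref{eq:pair_exact_seq}, and the observation that $\partial_{\PM}$ maps $t_*K_0(\Bb)$ into $t_*K_1(\hat\Bb_\Ee)$ so that everything descends to the quotients. Your version merely spells out the explicit self-adjoint lift $(\hat p_+,\hat p_-)$ and the membership checks that the paper leaves implicit.
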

\begin{proof} 
By naturality of the boundary map, we have
$$ \partial_{\PM} \circ \tilde\imath_* = \tilde\imath_* \circ \partial $$
where $ \partial_{\PM}$ is the boundary map of the exact sequence \eqref{eq:pair_exact_seq} and $i$ denotes the inclusions of $\Aa$ and $\Ee$, resp., in the pair algebras. Using the usual expression for the boundary map one has
$$\partial_{\PM}( [(p_+,p_-)]_0 + t_*(K_0(\Bb))) = ([e^{-\imath 2\pi \hat{p}_+},e^{-\imath 2\pi \hat{p}_-}]_1 + \partial_{\PM}(t_*(K_0(\Bb))).$$
Since $t_*(K_0(\Bb))$ is represented by diagonal elements and the boundary map preserves the diagonal form it is clear that $\partial_{\PM}(t_*(K_0(\Bb)))\subset t_*(K_1(\hat \Bb_\Ee))$, hence \eqref{eq:boundary_map_pairs} is compatible with the quotients.
\end{proof}

With this formulation of $K$-theory in mind we now define $K$-theory classes for pairs of Hamiltonians. We start with the invertible (gapped) the case:

\begin{definition}
	Let $\Aa$ be a (non-unital) $C^*$-algebra and $\Bb$ a unital sub-algebra of $\mult(\Aa)$ containing $\Aa$ as an ideal. Let $H,H'$ be self-adjoint invertible  
	and $M_N(\Aa)$-comparable $M_N(\Aa)$-multipliers whose bounded transforms lie in $M_N(\Bb)$. We define
	$$[H,H']_0 :=[P_{\leq 0}(H), \, P_{\leq 0}(H')]_0 \in K_0(\PM(\Bb,\Aa))/t_*K_0(\Bb).$$
\end{definition}
In this notation $\Aa$ and $\Bb$ are left implicit, but since we specify $H$ as an $M_N(\Aa)$-multiplier and the group on the right hand side does not depend on $\Bb$  up to a canonical isomorphism this will not pose a problem. In any case, if $\Bb$ is not specified then we take $\Bb = \mult(\Aa)$. Note that, if $H, H'$ both are strongly affiliated to $\Aa$ then we may take $\Bb=\Aa^\sim$ and get
$$[H,H']_0 = \tilde\imath_*([H]_0) - \tilde\imath_*([H']_0).$$

For models with boundary we are interested in $K$-theoretic obstructions to gap opening for pairs of Hamiltonians.
\begin{definition}
	\label{def:relative_gapless_invariants}
	Let $\Aa$ be a $C^*$-algebra and $\Ii$ an ideal in $\Aa$. 
	Let $H, H'$ be two self-adjoint $M_N(\Aa)$-multipliers which are locally $\Ii$-comparable and such that their bounded transform belongs to $M_N(\Bb)$. Then $H, H'$ define a class in $K_1(\Ii)$, namely
	$$[H,H']^\Ii_1 = [e^{\pi \imath \Theta(H)}, e^{\pi \imath \Theta(H')}]_1.$$	
Here $\Theta$ is a normalizing function $\Theta$ whose derivative $\Theta'$ is supported in an interval $\Delta$ as required by Definition~\ref{def:locally_comparable}.	
If the ideal $\Ii$ is clear from context we will more simply denote $[H,H']_1=[H,H']^\Ii_1$.
\end{definition}
We note that $[H,H']_1$ can also be written $[e^{\pi \imath \Theta(H)}e^{-\pi \imath \Theta(H')},1]_1$ and then directly viewed as an element of $K_1(\Ii)$ as we explained above. 
In nice cases we have a more concrete description of this relative invariant. Recall that $\Cc(H)=(H-\imath)(H+\imath)^{-1}$ (Cayley transform).
\begin{lemma}
\label{lemma:properteis_relative_gapless}
	Let $H,H'$ be as in Definition~\ref{def:relative_gapless_invariants}. We have
	\begin{enumerate}
		\item[(i)] The class $[H,H']_1$ is well-defined and independent of $\Theta$.
		\item[(ii)] 	If $H$,$H'$ are invertible modulo $\Ii$ then
		$$[H,H']_1 = [H]_1 - [H']_1 .$$
		\item[(iii)]
		If $H$,$H'$ are $\Ii$-resolvent comparable then
		$$[H,H']_1  = [\Cc(H),\Cc(H')]_1= [f(H),\overline{f}(H')]_1$$
		for any function $f\in C(\RM)$ of modulus $1$ which tends to $1$ at $\pm\infty$ and has winding number one. 	\end{enumerate}
\end{lemma}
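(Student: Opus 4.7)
My plan is to treat the three parts in order, working throughout in the pair picture of $K_1$ from Section~\ref{ssec:rel_inv}. The common engine is the pointwise identity that for any normalizing function $\Theta$ one has $e^{\imath\pi\Theta(x)}+1 \in C_0(\RM)$, and moreover $e^{\imath\pi\Theta(x)}+1 \in C_c(\Delta)$ whenever $\mathrm{supp}(\Theta')\subset \Delta$, since outside $\Delta$ the function $\Theta$ is constantly $\pm 1$ and $e^{\pm\imath\pi}=-1$.

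For (i), I would first establish well-definedness: applying local $\Ii$-comparability to the test function $g(x)=e^{\imath\pi\Theta(x)}+1 \in C_c(\Delta)$ gives $e^{\imath\pi\Theta(H)}-e^{\imath\pi\Theta(H')} = g(H)-g(H') \in M_N(\Ii)$, so the pair $(e^{\imath\pi\Theta(H)}, e^{\imath\pi\Theta(H')})$ genuinely represents a class in $K_1(\PM(\mult(\Aa),\Ii))/t_*K_1(\mult(\Aa)) \simeq K_1(\Ii)$. Independence of $\Theta$ then follows from the convex combination $\Theta_t=(1-t)\Theta_0+t\Theta_1$, which is again $\Delta$-normalizing for every $t \in [0,1]$. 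By continuity of the continuous functional calculus the path $t\mapsto(e^{\imath\pi\Theta_t(H)}, e^{\imath\pi\Theta_t(H')})$ is norm-continuous in $\PM(\mult(\Aa), \Ii)$, giving the required homotopy.

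For (ii), invertibility modulo $\Ii$ ensures that $e^{\imath\pi\Theta(H)}+1 \in M_N(\Ii)$ holds for each operator individually (Lemma~\ref{def:gapless_invariant}), so $e^{\imath\pi\Theta(H)} \in M_N(\Ii^\sim)$ already represents $[H]_1$ in the standard picture (and likewise for $H'$). Translating the pair class via the dictionary $[u,v]_1 \mapsto [uv^*]_1$ yields the standard-picture representative $e^{\imath\pi\Theta(H)}e^{-\imath\pi\Theta(H')}$, and multiplicativity of $K_1$ identifies this with $[H]_1-[H']_1$.

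For (iii), under resolvent comparability Proposition~\ref{prop:normalizing_homotopy} allows any normalizing function $\Theta$, without a support restriction on $\Theta'$, to represent $[H,H']_1$, with the representative being independent of the choice up to homotopy (shown by the same convex combination argument, which now yields general normalizing functions rather than $\Delta$-normalizing ones). Specialising to $\Theta(x)=\tfrac{2}{\pi}\arctan(x)$ produces $e^{\imath\pi\Theta(H)}=-\Cc(H)$ (as noted in Proposition~\ref{prop:von_neumann_ktheory}), and the common scalar unitary $-1$ cancels in the pair picture to give $[\Cc(H),\Cc(H')]_1$. The extension to a general phase function $f$ follows by realising $f$ as $-e^{\imath\pi\Theta}$ for some normalizing function $\Theta$; such a representation is available since maps $\RM \to S^1$ with $f(\pm\infty)=1$ and winding number one form a single homotopy class, and the remaining verification reduces to the homotopy invariance of the representative.

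The main technical subtlety, which I expect to require the most care, lies in part (iii): the unitaries $f(H), \Cc(H), \ldots$ live only in $\mult(\Aa)$ rather than in $\Ii^\sim$, so the pair-picture representation is essential and one must verify at each step that the componentwise differences remain in $\Ii$ before applying the standard-picture dictionary $[u,v]_1=[uv^*]_1$.
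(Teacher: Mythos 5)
Your proposal is correct and follows essentially the same route as the paper: well-definedness via the observation that $e^{\imath\pi\Theta}+1\in C_0(\Delta)$ combined with (local) comparability, independence of $\Theta$ via the convex-combination homotopy of Proposition~\ref{prop:normalizing_homotopy}, splitting of the pair class in the invertible-mod-$\Ii$ case, and the specialisation $\Theta=\tfrac{2}{\pi}\arctan$ plus a further homotopy to reach the Cayley transform and general winding-number-one phase functions. The only cosmetic difference is in (ii), where the paper decomposes $[(a,b)]_1=[(a,1)]_1+[(1,b)]_1$ and invokes anti-symmetry rather than passing through $[uv^*]_1$; the content is identical.
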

\begin{proof}
(i) We have $x\mapsto e^{\imath \pi \Theta(x)}-e^{\imath \pi \Theta(x)}\in C_0(\Delta)$ as 
the derivative of $\Theta$ is supported in $\Delta$. Hence $e^{\imath \pi \Theta(H)}-e^{\imath \pi \Theta(H')}\in I$. Therefore $[e^{\imath \pi \Theta(H)},e^{\imath \pi \Theta(H')}]_1$ is well-defined and its independence of the choice of $\Theta$ follows by the same homotopy argument as in Proposition~\ref{prop:normalizing_homotopy}. 

(ii) If $H$ and $H'$ are invertible modulo $\Ii$ then $x\mapsto e^{\imath \pi \Theta(x)}-1$ and $x\mapsto e^{\imath \pi \Theta(x)}-1$ belong to  $C_0(\Delta)$ and the classes $[(e^{2\pi \imath \Theta(H)}, 1)]_1$ and $[(1,e^{2\pi \imath \Theta(H')})]_1$ are well defined. The result follows from 
$$[e^{2\pi \imath \Theta(H)}, e^{2\pi \imath \Theta(H')}]_1 = [e^{2\pi \imath \Theta(H)}, 1]_1 + [1,e^{2\pi \imath \Theta(H')}]_1$$
and anti symmetry $[1,e^{2\pi \imath \Theta(H')}]_1 - [e^{2\pi \imath \Theta(H')},1]_1$.

(iii) If $H$ and $H'$ are $\Ii$-resolvent comparable then
$f(H)-f(H')\in \Ii$ for all 
$f\in C(\RM)$ with $\lim_{x\to\pm\infty}f(x) = 1$. Therefore 
$[e^{\pi \imath \Theta(H)}, e^{\pi \imath \Theta(H')}]_1$ is well-defined even without the condition that the support of the derivative of $\Theta$ lies in $\Delta$. By the same homotopy argument as in (ii), it is independent of the choice of normalizing function. The function entering the Cayley transform is of that form, $\Cc(x) = \frac{x-i}{x+i}=-e^{\pi \imath \Theta(x)}$ for some normalizing function $\Theta$. Moreover, again the same type of homotopy argument shows that we can replace $\Theta$ by any continuous function tending to $\pm 1$ at $\pm \infty$ still obtaining the same homotopy class.  
\end{proof}

\subsection{Relative bulk-boundary correspondence}
\label{sec:rel_bbc}

In this section we formulate the abstract algebraic version of bulk-edge correspondence for comparable pairs of Hamiltonians. It is also based on the short exact sequence (\ref{eq:SES}) describing edge, half-space and bulk of a material. 
\begin{proposition} \label{prop:E}
Consider the short exact sequence (\ref{eq:SES}).
	Let $\hat{H}, \hat{H}'$ be two self-adjoint $\hat{\Aa}$-multipliers which are $\hat{\Aa}$-comparable and for which the self-adjoint $\Aa$-multipliers $q(\hat{H})$ and $q(\hat{H}')$ are invertible.
	
	Then $\hat{H}$, $\hat{H}'$ are locally $\Ee$-comparable and hence the class $[\hat{H},\hat{H}']_1 \in K_1(\Ee)$ is well-defined.
\end{proposition}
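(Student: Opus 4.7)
The plan is to exploit $\hat\Aa$-comparability to first locate the difference $f(\hat H)-f(\hat H')$ inside $\hat\Aa$, and then use exactness of \eqref{eq:SES} combined with the functional-calculus compatibility of $q$ (Proposition~\ref{prop:morphism}) to upgrade the containment to $\Ee$ once the test function $f$ is supported in the common spectral gap of $H:=q(\hat H)$ and $H':=q(\hat H')$ around $0$. Concretely, I will produce an open interval $\Delta\ni 0$, show $f(\hat H)-f(\hat H')\in\Ee$ for every $f\in C_c(\Delta)$, and then invoke Definition~\ref{def:relative_gapless_invariants}.

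Step one is the choice of $\Delta$: invertibility of $H$ and $H'$ as $\Aa$-multipliers means $0$ is separated from $\sigma(H)$ and from $\sigma(H')$, so there are open intervals $\Delta_H,\Delta_{H'}\subset\RM$ containing $0$ that avoid these spectra. Set $\Delta:=\Delta_H\cap\Delta_{H'}$. For any $f\in C_c(\Delta)\subset C([-\infty,+\infty])$, the $\hat\Aa$-comparability hypothesis immediately gives
\[
f(\hat H)-f(\hat H')\;\in\;\hat\Aa.
\]
Step two is to push this through $q$. By Proposition~\ref{prop:morphism} the $\ast$-homomorphism $q$ extends uniquely to multiplier algebras and to affiliated operators, and this extension intertwines the bounded continuous functional calculus (since $q(F(\hat H))=F(H)$ and $\ast$-homomorphisms respect the norm-continuous functional calculus of bounded self-adjoints). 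Consequently $q(f(\hat H))=f(H)$ and $q(f(\hat H'))=f(H')$; but $f$ is supported in $\Delta$, which is disjoint from $\sigma(H)\cup\sigma(H')$, so both $f(H)$ and $f(H')$ vanish. Exactness then yields
\[
f(\hat H)-f(\hat H')\;\in\;\ker q\;=\;\Ee,
\]
which is precisely local $\Ee$-comparability on $\Delta$ in the sense of Definition~\ref{def:locally_comparable}.

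With local $\Ee$-comparability established, the class $[\hat H,\hat H']_1\in K_1(\Ee)$ is well-defined by Definition~\ref{def:relative_gapless_invariants}: the bounded transforms $F(\hat H),F(\hat H')$ lie in $\mult(\hat\Aa)$ by virtue of $\hat H,\hat H'$ being $\hat\Aa$-multipliers, so one may take $\Bb=\mult(\hat\Aa)$ and any normalizing function $\Theta$ with $\mathrm{supp}(\Theta')\subset\Delta$, and independence of the choice of $\Theta$ is the content of Lemma~\ref{lemma:properteis_relative_gapless}(i). I expect no real obstacle here; the only point that needs a careful citation rather than a calculation is the commutation of the quotient map with the unbounded functional calculus, which is exactly what Proposition~\ref{prop:morphism} provides.
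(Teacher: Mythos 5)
Your proof is correct and follows essentially the same route as the paper: choose $\Delta$ inside the common resolvent gap of $q(\hat H)$ and $q(\hat H')$, use comparability to place $f(\hat H)-f(\hat H')$ in $\hat\Aa$, and then observe that $q$ kills it because $f$ vanishes on the spectra of the quotients, so exactness puts the difference in $\Ee$. The only cosmetic difference is that you invoke Definition~\ref{def:I-comparable} directly for $f\in C_c(\Delta)\subset C([-\infty,+\infty])$, whereas the paper passes through a normalizing function $\Theta$ and writes $f=G\circ\Theta$ on a possibly smaller interval $\Delta'$; both are valid.
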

\begin{proof}
As $q(\hat{H})$ and $q(\hat{H}')$ are invertible their resolvent sets contain an open interval $\Delta$ around $0$. As $\hat{H}, \hat{H}'$ are $\hat{\Aa}$-comparable there is a normalizing function $\Theta$ such that 
 $\Theta(\hat{H})-\Theta(\hat{H}')\in \hat{\Aa}$. Hence also 
 $f(\hat{H})-f(\hat{H}')\in \hat{\Aa}$ for any function $f$ of the form $f= G\circ \Theta$ with $G\in C_0(\RM)$. Since $\Theta'(0)>0$ any function compactly supported in a small enough interval $\Delta'$ around $0$ can be written in that form. We may suppose $\Delta'\subset \Delta$. Then any $f\in C_c(\Delta')$ satisfies  $$q\left(f(\hat{H})-f(\hat{H}')\right) = q\left(f(\hat{H})\right)-q\left(f(\hat{H}')\right) =0$$ 
which implies that $f(\hat{H})-f(\hat{H}')$ lies in the ideal $\Ee$.
\end{proof}
\begin{theorem}[Relative bulk-edge correspondence]
	\label{th:rel_bbc} Consider the short exact sequence (\ref{eq:SES}).
		Let $\hat{H}, \hat{H}'$ be two self-adjoint $\hat{\Aa}$-multipliers which are $\hat{\Aa}$-comparable and for which the self-adjoint $\Aa$-multipliers $H:=q(\hat{H})$ and $H'=q(\hat{H}')$ are invertible.

	Then
		$$[\hat{H},\hat{H}']_1 = \partial [{H},{H}']_0 .$$
Here, $[{H},{H}']_0$ is an element of $K_0(\PM(\mult(\Aa),\Aa))/t_*(K_0(\mult(\Aa)))\stackrel{\tilde\imath_*^{-1}}\simeq K_0(\Aa))$ and $[\hat{H},\hat{H}']_1$ an element of $K_1(\Ee)$.
\end{theorem}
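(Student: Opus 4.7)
The plan is to reduce the identity to the concrete boundary-map formula of Proposition~\ref{proposition:multiplierconnectingmaps} (applied with $\Bb = \mult(\Aa)$) by constructing explicit lifts of the Fermi projections directly from the bounded functional calculus of $\hat H$ and $\hat H'$. The well-definedness of the left-hand side is already supplied by Proposition~\ref{prop:E}.

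To set things up, I would first fix an open interval $\Delta$ around $0$ lying simultaneously in the common spectral gap of $H$ and $H'$ (which exists since both are invertible) and inside the local $\Ee$-comparability interval for $\hat H, \hat H'$ furnished by the proof of Proposition~\ref{prop:E}. Choosing a $\Delta$-normalizing function $\Theta$, one has $P_{\leq 0}(H) = \tfrac12(\one - \Theta(H))$ and analogously for $H'$, so $[H,H']_0$ is represented by the pair of projections $(P_{\leq 0}(H), P_{\leq 0}(H'))$ in $K_0(\PM(\mult(\Aa), \Aa))/t_* K_0(\mult(\Aa))$. The natural self-adjoint lift to $\PM(\hat\Bb, \hat\Aa)$, with $\hat\Bb := q^{-1}(\mult(\Aa))$, is
$$\hat p_+ := \tfrac12\bigl(\one - \Theta(\hat H)\bigr), \qquad \hat p_- := \tfrac12\bigl(\one - \Theta(\hat H')\bigr).$$
Validity needs three checks: (i) $\Theta(\hat H), \Theta(\hat H') \in M_N(\mult(\hat\Aa))$ by bounded functional calculus, and the morphism extension of Proposition~\ref{prop:morphism} gives $q(\Theta(\hat H)) = \Theta(H) \in M_N(\mult(\Aa))$, so $\hat p_\pm \in M_N(\hat\Bb)$; (ii) $q(\hat p_+) = P_{\leq 0}(H)$ and $q(\hat p_-) = P_{\leq 0}(H')$; (iii) $\hat p_+ - \hat p_- = \tfrac12(\Theta(\hat H') - \Theta(\hat H))$ lies in $M_N(\hat\Aa)$ by the assumed $\hat\Aa$-comparability.

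Feeding this lift into Proposition~\ref{proposition:multiplierconnectingmaps} then gives
$$\partial[H,H']_0 \;=\; \bigl[e^{-2\pi\imath\hat p_+}, e^{-2\pi\imath\hat p_-}\bigr]_1 \;=\; \bigl[-e^{\imath\pi\Theta(\hat H)},\, -e^{\imath\pi\Theta(\hat H')}\bigr]_1,$$
using $e^{-2\pi\imath\cdot \tfrac12(\one-\Theta)} = e^{-\imath\pi}\,e^{\imath\pi\Theta} = -e^{\imath\pi\Theta}$. Passing to the standard picture via $[u,v]_1 = [uv^*]_1$, the two overall $-1$ prefactors cancel, leaving $\bigl[e^{\imath\pi\Theta(\hat H)} e^{-\imath\pi\Theta(\hat H')}\bigr]_1$, which is exactly $[\hat H,\hat H']_1$ by Definition~\ref{def:relative_gapless_invariants}.

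The main delicacy I anticipate is coordinating the single normalizing function $\Theta$: it must serve simultaneously as a $\Delta$-normalizing function for the bulk operators (so that $\Theta(H) = \one - 2P_{\leq 0}(H)$ and $\hat p_\pm$ are genuine lifts of the Fermi projections) and as a $\Delta$-normalizing function for the locally $\Ee$-comparable pair $\hat H, \hat H'$ (so that the right-hand side $[\hat H, \hat H']_1$ is evaluated with the same $\Theta$). Shrinking $\Delta$ to lie in both intervals resolves this constraint, and the $\Theta$-independence of both sides (Lemma~\ref{lemma:properteis_relative_gapless}(i) on the edge side, together with the analogous independence built into the gapped bulk class $[H,H']_0$) guarantees that the final identity does not depend on this choice.
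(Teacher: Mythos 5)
Your proposal is correct and follows essentially the same route as the paper: the paper's own proof simply cites Proposition~\ref{prop:E} for well-definedness and then invokes Proposition~\ref{proposition:multiplierconnectingmaps} together with the definitions of the relative classes, which is exactly the computation you spell out with the lifts $\hat p_\pm = \tfrac12(\one-\Theta(\hat H^{(\prime)}))$ and the sign cancellation $e^{-2\pi\imath\hat p_\pm}=-e^{\imath\pi\Theta(\hat H^{(\prime)})}$. Your explicit verification of the three conditions on the lift and the coordination of a single $\Delta$-normalizing function is a faithful unpacking of what the paper leaves implicit.
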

\begin{proof}
By Propositions~\ref{prop:E} $[\hat{H},\hat{H}']_1$ is a well-defined element of $K_1(\Ee)$. The statement follows therefore from Proposition~\ref{proposition:multiplierconnectingmaps} and the definition of the relative classes.
\end{proof}
The above immediately also applies to matrix valued operators, by just replacing $\Aa$ with $M_N(\Aa)$ etc. which has isomorphic K-groups. In contrast, for the notion of strong affiliation we needed to take the matrix size into account explicitly, as unitization does not commute with tensor products. If the bounded transforms of two matrix valued $M_N(\hat\Aa)$-comparable operators $H$ and $H'$ take values in $M_N(\Bb)$ 
for some smaller unital algebra $\Bb$ containing $\Aa$ 
then $[{H},{H}']_0$ can be represented by an element in $K_0(\PM(\Bb,\Aa))$.
In particular, for strong affiliation $\Bb=\Aa^\sim$ one always has $[H,H']_0=[H]_0-[H']_0$ as a difference of two classes in the standard picture of $K_0(\Aa)$. Note in particular that if we  choose $\hat{H}'$ to be a scalar matrix as in Lemma~\ref{lemma:strongly_aff_vs_comparable} then we recover Theorem~\ref{th:bbc_strongly_affiliated} as a special case of Theorem~\ref{th:rel_bbc}.

\begin{remark}
The condition of $\hat{\Aa}$-comparability is much more general than strong affiliation, but can still be rather subtle since functional calculus with functions from $C([-\infty,+\infty])$ can be difficult to control. The most important sufficient condition is relative compactness, namely the sufficient condition of Proposition~\ref{prop-res-perturb}, which for relatively bounded perturbations $\hat{V}:=\hat{H}-\hat{H}'$ reads
\begin{equation}
	\label{eq:rel_compactness}
	\hat{V}(\hat{H}+\imath)^{-1} \in \hat{\Aa}.
\end{equation} 
Note that this condition can only hold if $\hat{H}$ and $\hat{H}'$ have the same domain, whereas generally self-adjoint operators can be $\hat{\Aa}$-comparable without having equal domains. 
\end{remark}

\subsection{K-theoretic comparison of self-adjoint extensions} We now consider the relative invariants $[\hat{H},\hat{H}']_1$ defined by self-adjoint extensions $\hat{H}$, $\hat{H}'$ of a common symmetric operator. As will be seen in the examples, $\hat{H}$ and $\hat{H}'$ can have different edge invariants, i.e.\ a non-trivial relative class $[\hat{H},\hat{H}']_1$ while being asymptotically equal, i.e.\ $q(\hat{H})=q(\hat{H'})$. They must then fail the conditions of Theorem~\ref{th:rel_bbc}. It will usually be the $\hat{\Aa}$-comparability which fails when we compare different self-adjoint extensions.

\begin{remark}
In general $(\hat{H}+\imath)^{-1}-(\hat{H}'+\imath)^{-1} \in M_N(\Aa^\sim)$ does not imply $F(\hat{H})-F(\hat{H}') \in M_N(\Aa^\sim)$ since the difference of the bounded transforms $F(\hat{H})-F(\hat{H}')$ cannot generally be expanded as a norm-convergent integral formula in terms of resolvent differences (as can also be seen by the fact that the Riesz-topology is finer than the gap-topology for unbounded operators, see e.g. \cite{Lesch05}). Indeed, we will see many examples of $\hat{\Aa}$-resolvent comparable self-adjoint extensions which are not $\hat{\Aa}$-comparable.
\end{remark}

If the relative class $[\hat{H},\hat{H}']_1$ comes from different self-adjoint extensions of a single symmetric operator it can be connected more transparently to the data which define the self-adjoint extension:
\begin{proposition}
	\label{prop:comparison_saext_ktheory}
	Let $\mathring{H}$ be a symmetric $\Ee$-multiplier and let $\hat{H}_u$, $\hat{H}_v$ be self-adjoint extensions determined by partial isometries $u,v \in \mult(\Ee)$. Then the following conditions are equivalent
	\begin{enumerate}
		\item[(i)] $\Cc(\hat{H}_u)-\Cc(\hat{H}_v) \in \Ee$
		\item[(ii)] $(\hat{H}_u+\imath)^{-1}-(\hat{H}_v+\imath)^{-1} \in \Ee$
		\item[(iii)] $u-v \in \Ee$.
	\end{enumerate}
	If either holds one has
	$$[\hat{H}_u, \hat{H}_v]_1 = [1+uv^*-e_+]_1 \in K_1(\Ee)$$
	with the projection $e_+ = uu^*=vv^*$ onto the deficiency subspace $\Ker(\mathring{H}^*+\imath)$.
\end{proposition}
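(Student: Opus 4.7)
The plan is to first establish (i)$\iff$(ii)$\iff$(iii) by direct manipulation of Cayley transforms, and then compute the relative $K_1$-class using Lemma~\ref{lemma:properteis_relative_gapless}(iii) combined with von Neumann's decomposition $\Cc(\hat{H}_u) = \Cc(\mathring{H}) + u$ recorded just before \eqref{eq:resolvent_cayley}. Subtracting that decomposition for $u$ and $v$ gives $\Cc(\hat{H}_u) - \Cc(\hat{H}_v) = u - v$, which immediately yields (i)$\iff$(iii). The identity $\Cc(H) = \one - 2\imath(H+\imath)^{-1}$, which is essentially the content of \eqref{eq:resolvent_cayley}, then gives (i)$\iff$(ii) since $\Ee$ is norm-closed.

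Under either condition, $\hat{H}_u$ and $\hat{H}_v$ are $\Ee$-resolvent comparable in the sense of Definition~\ref{def:resolvent-comparable}, so Lemma~\ref{lemma:properteis_relative_gapless}(iii) identifies
$$[\hat{H}_u,\hat{H}_v]_1 \;=\; [\Cc(\hat{H}_u),\Cc(\hat{H}_v)]_1 \;=\; [\Cc(\hat{H}_u)\Cc(\hat{H}_v)^*]_1,$$
the last equality being the reduction $[x,y]_1 = [xy^*]_1$ in $K_1(\Ee)$ that follows from splitting off the diagonal as in \eqref{eq:K-alt}. The remaining task is to evaluate
$$\Cc(\hat{H}_u)\Cc(\hat{H}_v)^* \;=\; \Cc(\mathring{H})\Cc(\mathring{H})^* + \Cc(\mathring{H})v^* + u\Cc(\mathring{H})^* + uv^*.$$
By von Neumann's classification $\Cc(\mathring{H})$ is a partial isometry with initial projection $\one - e_-$ and final projection $\one - e_+$, so the diagonal term equals $\one - e_+$. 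The two mixed terms both vanish: using $u = u e_-$ together with the fact that $\Cc(\mathring{H})^*$ has initial projection $\one - e_+$ and final projection $\one - e_-$, one has $e_-\Cc(\mathring{H})^* = 0$, hence $u\Cc(\mathring{H})^* = u e_-\Cc(\mathring{H})^* = 0$; symmetrically, $\Cc(\mathring{H}) v^* = \Cc(\mathring{H}) e_- v^* = 0$. Assembling the pieces yields $\Cc(\hat{H}_u)\Cc(\hat{H}_v)^* = \one - e_+ + uv^*$, and under (iii) its difference from $\one$ is $uv^* - uu^* = u(v-u)^* \in \mult(\Ee)\cdot\Ee = \Ee$, so this is indeed a unitary in $\Ee^\sim$ representing the asserted class.

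The main obstacle is the orthogonality bookkeeping between $\Cc(\mathring{H})$, whose initial and final projections are complementary to $e_-$ and $e_+$ respectively, and the von Neumann unitaries $u, v$, whose initial projection is $e_-$ and final projection is $e_+$; this complementarity is precisely what annihilates the cross terms in the expansion and leaves the clean two-term formula. Everything else is algebraic manipulation built from the functional calculus and Cayley transform identities already collected in Section~\ref{sec:sa_extensions}.
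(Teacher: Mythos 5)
Your proof is correct and follows essentially the same route as the paper: the paper's own proof simply delegates the equivalences and the identity $\Cc(\hat{H}_u)\Cc(\hat{H}_v)^* = 1+uv^*-e_+$ to Proposition~\ref{prop:von_neumann_ktheory} and Lemma~\ref{lemma:properteis_relative_gapless}, whose content is exactly the resolvent/Cayley proportionality and the orthogonality bookkeeping you spell out. The only cosmetic remark is that (i)$\iff$(ii) needs only that $\Ee$ is a linear subspace (closed under multiplication by $-2\imath$), not that it is norm-closed.
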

\noindent{\bf Proof.} All conditions are equivalent to $\hat{H}_u$ and $\hat{H}_v$ being $\Ee$-resolvent comparable and then the equivalence follows from  Lemma~\ref{lemma:properteis_relative_gapless} and Proposition~\ref{prop:von_neumann_ktheory}.
\hfill $\Box$

If one can compute deficiency subspaces and the corresponding von Neumann unitaries of self-adjoint extensions then this gives a rather explicit formula for the relative invariant.

In practice one often would like to simplify those computations by adding or dropping some symmetric relatively bounded terms $\mathring{H}' = \mathring{H}+V$, which are known not to affect the domains of their self-adjoint extensions by the Kato-Rellich theorem. 
\begin{proposition}
	\label{prop:extensions_perturbation_ktheory_rel_bounded}
	Let $\mathring{H}$ be a symmetric $\Ee$-multiplier with two self-adjoint extensions $\mathring{H}_{u},\mathring{H}_{v}$ corresponding to von Neumann unitaries $u,v\in \mult(\Ee)$ with $u-v\in \Ee$. If $V$ is a symmetric $\Ee$-multiplier on $\mathrm{Dom}(\mathring{H}_{u})\cup \mathrm{Dom}(\mathring{H}_{v})$ whose restrictions to $\mathrm{Dom}(\mathring{H}_{u})$ and $\mathrm{Dom}(\mathring{H}_{v})$ both satisfy the conditions of the Kato-Rellich theorem for $\mathring{H}_{u}$ and $\mathring{H}_{v}$ respectively, then the self-adjoint multipliers $\mathring{H}_{u}+V$ and $\mathring{H}_{v}+V$ are $\Ee$-resolvent comparable and
	$$[\Cc(\mathring{H}_{u})\Cc(\mathring{H}_{v})^*]_1 =[\Cc(\mathring{H}_{u}+V)\Cc(\mathring{H}_{v}+V)^*]_1 \quad \in K_1(\Ee).$$
\end{proposition}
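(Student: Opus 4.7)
The plan is to construct a norm-continuous homotopy of unitaries in $\Uu(\Ee^\sim)$ connecting $\Cc(\mathring{H}_u)\Cc(\mathring{H}_v)^*$ to $\Cc(\mathring{H}_u+V)\Cc(\mathring{H}_v+V)^*$ by scaling the perturbation via $s\mapsto sV$, $s\in[0,1]$. The first thing I would verify is that $\mathring{H}_u+sV$ (and analogously $\mathring{H}_v+sV$) is a self-adjoint $\Ee$-multiplier for each $s\in[0,1]$. The Kato-Rellich bound for $V$ passes to $sV$ with the same or smaller relative constant, giving self-adjointness on $\mathrm{Dom}(\mathring{H}_u)$, and Proposition~\ref{prop-res-perturb}(i) then upgrades this to $\Ee$-multiplier provided $V(\mathring{H}_u+\imath)^{-1}\in\mult(\Ee)$, which one reads off from $V$ being a symmetric $\Ee$-multiplier combined with the Kato-Rellich bound.

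Next I would establish $\Ee$-resolvent comparability of the perturbed pair for every $s$. The hypothesis $u-v\in\Ee$ combined with Proposition~\ref{prop:comparison_saext_ktheory} gives the unperturbed resolvent difference $(\mathring{H}_u+\imath\lambda)^{-1}-(\mathring{H}_v+\imath\lambda)^{-1}\in\Ee$ for any $\lambda>0$. Choose $\lambda$ large enough so that $\|sV(\mathring{H}_w+\imath\lambda)^{-1}\|<1$ for $w\in\{u,v\}$, which is possible by the strict-convergence estimate in the proof of Proposition~\ref{prop-res-perturb}. The Neumann series
$$(\mathring{H}_w+sV+\imath\lambda)^{-1}=\sum_{n\geq 0}(\mathring{H}_w+\imath\lambda)^{-1}\bigl[-sV(\mathring{H}_w+\imath\lambda)^{-1}\bigr]^n$$
then converges in operator norm. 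Subtracting the $u$ and $v$ series term-by-term and telescoping each chain in the single resolvent factor that differs rewrites the difference of perturbed resolvents as a norm-convergent sum whose terms each contain the $\Ee$-valued factor $(\mathring{H}_u+\imath\lambda)^{-1}-(\mathring{H}_v+\imath\lambda)^{-1}$ sandwiched between elements of $\mult(\Ee)$. Since $\Ee$ is a closed two-sided ideal, the sum lies in $\Ee$, and the first resolvent identity transfers this to $\lambda=1$.

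With $\Ee$-resolvent comparability in hand for all $s\in[0,1]$, I would define the unitary path
$$U(s):=\Cc(\mathring{H}_u+sV)\,\Cc(\mathring{H}_v+sV)^*,\qquad s\in[0,1].$$
Writing $U(s)-1=[\Cc(\mathring{H}_u+sV)-\Cc(\mathring{H}_v+sV)]\Cc(\mathring{H}_v+sV)^*$ and using $\Cc(H)-\Cc(H')=-2\imath[(H+\imath)^{-1}-(H'+\imath)^{-1}]$, the previous step places $U(s)-1\in\Ee$, so $U(s)\in\Uu(\Ee^\sim)$ for every $s$. Norm-continuity of $s\mapsto U(s)$ follows from the first resolvent identity
$$(\mathring{H}_w+sV+\imath)^{-1}-(\mathring{H}_w+s'V+\imath)^{-1}=(s'-s)(\mathring{H}_w+sV+\imath)^{-1}V(\mathring{H}_w+s'V+\imath)^{-1}$$
whose operator norm is dominated by $|s-s'|$ times the Kato-Rellich bound of $V$ on $\mathrm{Dom}(\mathring{H}_w+s'V)$. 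Consequently $[U(0)]_1=[U(1)]_1$ in $K_1(\Ee)$, which, in view of Lemma~\ref{lemma:properteis_relative_gapless}(iii) combined with Proposition~\ref{prop:comparison_saext_ktheory}, is precisely the stated equality.

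The main obstacle, beyond the essentially routine telescoping bookkeeping, is to ensure that $V(\mathring{H}_w+\imath)^{-1}$ genuinely sits in $\mult(\Ee)$ rather than merely acting boundedly on the ambient Hilbert space; this is what makes Proposition~\ref{prop-res-perturb} applicable, keeps each term of the Neumann series in $\mult(\Ee)$, and allows the $\Ee$-valued difference to propagate cleanly through the iterated resolvent expansion.
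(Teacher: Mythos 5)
Your proposal is correct and follows essentially the same route as the paper's proof: a Neumann series expansion of $(\mathring{H}_w+sV+\imath\mu)^{-1}$ for large $\mu$, term-wise telescoping to place the difference of perturbed resolvents in the ideal $\Ee$, a resolvent identity to return to $\mu=1$, and the norm-resolvent-continuous homotopy $s\mapsto\mathring{H}_w+sV$ to identify the $K_1$-classes. The only cosmetic difference is that you phrase the homotopy directly as a unitary path $U(s)\in\Uu(\Ee^\sim)$, whereas the paper phrases it as a path in $\PM(\mult(\Ee),\Ee)$; these are equivalent.
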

\noindent{\bf Proof.}
The Kato-Rellich bound means that for $\mu\in \RM$ large enough one has
$$\norm{V (\mathring{H}_{u}+\imath \mu)^{-1}}< 1, \quad \norm{V (\mathring{H}_{v}+\imath \mu)^{-1}}< 1.$$

Thus one can expand into a norm-convergent sum
$$(\mathring{H}_u + tV+\imath \mu )^{-1} = (\mathring{H}_{u}+\imath \mu )^{-1} \sum_{n=0}^\infty  t^n (-V(\mathring{H}_{u}+\imath \mu )^{-1})^n$$
and similarly for $\mathring{H}_v+tV$. The term-wise differences
$$(\mathring{H}_{u}+\imath \mu )^{-1} (-V(\mathring{H}_{u}+\imath \mu )^{-1})^n-(\mathring{H}_{v}+\imath \mu )^{-1} (-V(\mathring{H}_{v}+\imath \mu )^{-1})^n$$
lie in $\Ee$ since $(\mathring{H}_{u}+\imath \mu )^{-1}-(\mathring{H}_{v}+\imath \mu )^{-1}\in \Ee$ by assumption and all other factors are bounded $\Ee$-multipliers. The usual resolvent identities then allow us to analytically continue to obtain
$$(\mathring{H}_u + tV+\imath)^{-1} -(\mathring{H}_v + tV+\imath)^{-1} \in \Ee$$
which implies the same for the difference of the Cayley transforms.

To prove that the $K$-theory classes coincide it is then sufficient to note that $t \in [0,1] \mapsto \mathring{H}_{u}+tV, \mathring{H}_{v}+tV$ leads to norm-resolvent continuous paths of $\Ee$-resolvent comparable operators, hence
$$[\hat{H}_u, \hat{H}_v]_1 = [\Cc(\hat{H}_u), \Cc(\hat{H}_v)]_1 = [\Cc(\hat{H}_{u}+V), \Cc(\hat{H}_{v}+V)]_1$$
by norm-continuous homotopy in $\PM(M(\Ee), \Ee)$.
\hfill $\Box$

In particular, for bounded $V=V^*\in M(\Ee)$ the extensions of $\mathring{H}'$ and $\mathring{H}$ are in one-to-one correspondence:
\begin{proposition}
	\label{prop:extensions_perturbation_ktheory}
	Let $\mathring{H}$ be a symmetric $\Ee$-multiplier and $\mathring{H}' = \mathring{H}+V$ with $V$ a bounded self-adjoint $\Ee$-multiplier. If $\mathring{H}_u$ is a self-adjoint extension of $\mathring{H}$ (as an $\Ee$-multiplier) then there exists a unique extension $(\mathring{H}')_{\tilde{u}}$ with the same domain. If two extensions of $\mathring{H}_{u},\mathring{H}_{v}$ satisfy $u-v\in \Ee$ then the corresponding extensions satisfy $\tilde{u}-\tilde{v}\in \Ee$ and
	\begin{equation}
	\label{eq:extensions_perturbation_ktheory}
	\begin{split}
[\Cc(\mathring{H}_{u})\Cc(\mathring{H}_{v})^*]_1 &= [1+uv^*-uu^*]_1 \\ &=[1+\tilde{u}\tilde{v}^*-\tilde{u}\tilde{u}^*]_1 = [\Cc(\mathring{H}'_{\tilde{u}})\Cc(\mathring{H}'_{\tilde{v}})^*]_1 \in K_1(\Ee).
	\end{split}
	\end{equation}
\end{proposition}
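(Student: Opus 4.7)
The plan is to reduce Proposition~\ref{prop:extensions_perturbation_ktheory} to Proposition~\ref{prop:extensions_perturbation_ktheory_rel_bounded} combined with the purely algebraic identity used in the proof of Proposition~\ref{prop:comparison_saext_ktheory}, noting that a bounded self-adjoint $\Ee$-multiplier $V$ trivially satisfies the Kato-Rellich hypothesis on any self-adjoint domain. First I would establish the bijection between extensions. Since $V$ is bounded and self-adjoint, $\mathring{H}_u+V$ is self-adjoint on $\mathrm{Dom}(\mathring{H}_u)$, extends $\mathring{H}'$, and is an $\Ee$-multiplier because both $\mathring{H}_u$ and $V$ are. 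Applying von Neumann's parametrization to the symmetric operator $\mathring{H}'$ yields a unique partial isometry $\tilde{u}$ between the deficiency spaces $\Ker(\mathring{H}'^{*}\mp \imath)$ such that $\Cc(\mathring{H}_u+V)=\Cc(\mathring{H}')+\tilde{u}$; uniqueness among same-domain extensions is built into that parametrization.

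Next I would invoke Proposition~\ref{prop:extensions_perturbation_ktheory_rel_bounded}, whose hypotheses are automatic for bounded $V$. It delivers the $\Ee$-resolvent comparability
$$(\mathring{H}_u+V+\imath)^{-1}-(\mathring{H}_v+V+\imath)^{-1}\in \Ee,$$
which via \eqref{eq:resolvent_cayley} is equivalent to $\Cc(\mathring{H}_u+V)-\Cc(\mathring{H}_v+V)=\tilde{u}-\tilde{v}\in \Ee$, and simultaneously the K-theoretic equality
$$[\Cc(\mathring{H}_u)\Cc(\mathring{H}_v)^*]_1=[\Cc(\mathring{H}_u+V)\Cc(\mathring{H}_v+V)^*]_1\in K_1(\Ee).$$

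To identify the two outer terms of \eqref{eq:extensions_perturbation_ktheory} I would repeat the algebraic expansion from the proof of Proposition~\ref{prop:comparison_saext_ktheory}/Proposition~\ref{prop:von_neumann_ktheory}. Because the initial and final projections of the partial isometry $\Cc(\mathring{H})$ are orthogonal to those of $u$ (and similarly $\Cc(\mathring{H}')$ is orthogonal to $\tilde{u}$), the cross terms in $[\Cc(\mathring{H})+u][\Cc(\mathring{H})+v]^{*}$ vanish and one obtains
$$\Cc(\mathring{H}_u)\Cc(\mathring{H}_v)^*=1+uv^*-uu^*,\qquad \Cc(\mathring{H}'_{\tilde{u}})\Cc(\mathring{H}'_{\tilde{v}})^*=1+\tilde{u}\tilde{v}^*-\tilde{u}\tilde{u}^*.$$
Both unitaries lie in $1+\Ee$ by the previous step, so their $K_1$-classes are well defined, and chaining the three equalities gives \eqref{eq:extensions_perturbation_ktheory}.

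The main point of care is that $\tilde{u},\tilde{v}$ live a priori only in $\Bb(\Hh)$, and nothing in the statement forces them individually to be $\Ee$-multipliers. Fortunately only the combination $\tilde{u}\tilde{v}^*-\tilde{u}\tilde{u}^*$ must lie in $\Ee$, which is exactly what the resolvent-comparability step above provides. If desired one can upgrade \emph{a posteriori} to $\tilde{u},\tilde{v}\in \mult(\Ee)$ by applying Proposition~\ref{proposition:symmetric_multipliers} to $\mathring{H}'$ with $\mathring{H}_u+V$ as reference extension, but this is unnecessary for the stated identity.
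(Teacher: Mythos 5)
Your proposal is correct and follows essentially the same route as the paper: both reduce the statement to Proposition~\ref{prop:extensions_perturbation_ktheory_rel_bounded} (whose Kato--Rellich hypothesis is trivially satisfied by bounded $V$) and then rewrite the Cayley-transform products in terms of the von Neumann unitaries via the identity $\Cc(\mathring{H}_u)\Cc(\mathring{H}_v)^*=1+uv^*-uu^*$. Your extra care in verifying $\tilde{u}-\tilde{v}\in\Ee$ through \eqref{eq:resolvent_cayley} and in noting where $\tilde{u}\in\mult(\Ee)$ is (or is not) needed just makes explicit what the paper leaves terse.
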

\noindent{\bf Proof.}
Since $\mathring{H}_{u}+V$ is a self-adjoint extension of $\mathring{H}'$ there must be a corresponding von Neumann unitary $\tilde{u}$ and by exchanging the roles of $\mathring{H}$ and $\mathring{H}'$ this is a one-to-one correspondence. By the previous Lemma, given two such pairs of extensions $(u,v)$ respectively $(\tilde{u},\tilde{v})$ one has
$$[\Cc(\mathring{H}_{u})\Cc(\mathring{H}_{v})^*]_1 = [\Cc(\mathring{H}_{u}+V)\Cc(\mathring{H}_{v}+V)^*]_1 = [\Cc(\mathring{H}'_{\tilde{u}})\Cc(\mathring{H}'_{\tilde{v}})^*]_1,
$$
which completes the proof upon rewriting the K-theory classes in terms of the von Neumann unitaries.
\hfill $\Box$

Note that the version for unbounded perturbations $V$ cannot hold in the same generality, since there can be extensions $\mathring{H}_u$ on whose domain $V$ is not even symmetric, hence one clearly cannot always define a self-adjoint extension of $\mathring{H}'$ via $\mathring{H}_u+V$.

\section{Applications to differential operators}\label{ssec:algebras}
In the introduction we gave a working model for the algebras and the exact sequence they form which works for translational invariant differential operators. In this simple situation it was possible to directly define the algebras as norm-closed subalgebras of the bounded operators on $L^2(\RM^d)$ or $L^2(\RM^{d-1}\times \RM_+)$. For aperiodic models which, 
by general principles, are studied through covariant families of operators, the situation is more complicated and the standard approach is to study them with the help of crossed product $C^*$-algebras.  

We recall the definition of a twisted crossed product of a twisted $C^*$-dynamical system $(\Cc,\RM^d,\alpha,\gamma)$.
Here $\Cc$ is a $C^*$-algebra which carries a strongly continuous $\RM^d$-action $\alpha$ and $\gamma: \RM^d \times \RM^d \to S^1$ a continuous $2$-cocycle. 

Let $\rho$ be a representation of  $\Cc$ on some Hilbert space $\Hh$. It gives rise to a covariant representation $(\pi_\rho,U)$ of the pair $(\Cc, \RM^d)$ 
on $L^2(\RM^d, \Hh)$:
	\begin{equation}
		\label{eq:regular_rep}
		\begin{split}
			(\pi_\rho(c)\psi)(x) &= \rho(\alpha_{-x}(c))\psi(x),\quad c\in \Cc\\
			(U(t)\psi)(x) &= \gamma(-t,x) \psi(x-t),\quad t\in\RM^d.
		\end{split}
	\end{equation}
If $\rho$ is faithful and non-degenerate then the twisted crossed product
$\Cc\rtimes_{\alpha,\gamma}\RM^d$ is isomorphic to the norm closed sub-algebra of $	\Bb(L^2(\RM^d, \Hh))$ generated be elements of the form
	\begin{equation}
		\label{eq:fourier_generators_crossedproduct}
	\tilde\pi_\rho(f) :=	\int_{\RM^d} \pi_\rho (f(t)) U(t) d t
	\end{equation}
where $f\in L^1(\RM^d,\Cc)$. Equivalently, $\tilde\pi_\rho$ can be understood as a faithful representation of the Banach algebra $L^1(\RM^d,\Cc)$ on $L^2(\RM^d,\Hh)$ equipped with the $\gamma$-twisted convolution product and the twisted crossed product $C^*$-algebra is its completion.

Identifying the twisted crossed product with such a faithful representation $\tilde \pi$ (we suppress the dependence on $\rho$) we have another description of its elements.
	Denote by $p_1,...,p_d$ the unbounded self-adjoint generators of the one-parameter groups of unitaries $\lambda \in \RM^d \mapsto U(\lambda e_i)$, $i=1,...,d$. Then the elements of the form
	\begin{equation}
		\label{eq:linear_span_crossedproduct}
		\sum_{j=1}^N \tilde\pi(c_j) f_{j,1}(p_1)\cdots f_{j,d}(p_d)
	\end{equation}
	with $c_j\in \Cc$ and $f_{j,i}\in C^\infty_c(\RM)$ yield a norm-dense subset of the crossed product algebra. This follows from the density of $\tilde{\pi}(L^1(\RM^d,\Cc))$ as the pre-image of operators of this form under \eqref{eq:fourier_generators_crossedproduct} contains a total subset of $L^1(\RM^d,\Cc)$. Note, however, that the form of \eqref{eq:linear_span_crossedproduct} is not preserved under products or adjoints, hence they form has a dense linear subspace but not a subalgebra.

\subsection{Bulk algebras for differential operators on \texorpdfstring{$\RM^d$}{R\textasciicircum d}}
\label{sec:crossedproducts}
In this section we investigate how covariant families of magnetic Schr\"odinger operators on $L^2(\RM^d)$ 
are affiliated to twisted crossed product algebras. The reader should have in mind operators of the form
$$H = D_A + V$$ where $D_A$ is a magnetic elliptic differential operator, i.e.\  a non-commutative polynomial in the magnetic derivatives $\nabla_A$, whose components we write as $\nabla_{i,A} := \partial_i + \imath A_i$, and $V$ is a covariant potential. Here 
$A$ is the vector potential (in the symmetric gauge) 
$$A_i(x) = -\frac{1}{2}\sum_{k=1}^d \BB_{ik} x_k$$
of the magnetic field which is defined by a constant antisymmetric matrix $\BB\in M_d(\RM)$.
In applications one is also interested in the case where $D$ and $V$ are matrix valued, but this is a simple modification, essentially obtained by tensoring an internal Hilbert space $\CM^N$ onto $L^2(\RM^d)$ and $M_N(\CM)$ onto the algebra.

The magnetic derivatives generate a unitary projective representation 
\begin{equation}
	\label{eq:magtrans}
	t\in \RM^d \mapsto U_A(t):=e^{-\nabla_A\cdot t}= e^{-\frac{\imath}{2}\langle t, \BB X\rangle} U_0(t)
\end{equation}
where $X$ is the vector-valued position operator on $L^2(\RM^d)$ and $t\in \RM^d\mapsto U_0(t)$ are the usual translations $\psi\mapsto \psi(\cdot- t)$. They satisfy the twisted multiplication rule
$$U_A(t+s) = e^{\frac{\imath}{2} \langle t, \BB s\rangle} U_A(s)U_A(t), \qquad \forall t,s\in \RM^d.$$
Algebraically this is encoded by the twisting $2$-cocycle
\begin{equation}
	\label{eq:cocycle}
	\gamma: \RM^d \times \RM^d \to S^1, \quad \gamma(t,s)=e^{\frac{\imath}{2} \langle t, \BB s\rangle}.
\end{equation}	
Other gauge choices for $A$ are also possible, e.g. the Landau gauge, and correspond to cohomologous cocycles, thus isomorphic crossed product algebras.

To describe covariant potentials one needs to fix a locally compact Hausdorff space $\Omega$, which physically encodes the space of configurations, and which carries a continuous action of $\RM^d$, denoted $\omega\mapsto t\triangleright \omega$, and an invariant locally finite Borel measure $\PM$ with full support. There is an induced strongly continuous action 
\begin{equation}
	\label{eq:action}\alpha: \RM^d\times C_0(\Omega) \to C_0(\Omega), \quad \alpha_t(f)(\omega) = f((-t)\triangleright \omega)
\end{equation}
which extends to a strictly continuous action on the multiplier algebra $C_b(\Omega)$ by the same formula. 
A family of potentials $(V_\omega)_{\omega\in \Omega} \in C_b(\RM^d)^{\times \Omega}$ is called a covariant family if
$$V_\omega(x) = f(x \triangleright \omega)$$
for some function $f\in C_b(\Omega)$. Then $H=(H_\omega)_{\omega\in \Omega}$, $H_\omega=D_A+V_\omega$ is an example of a covariant family of magnetic Schr\"odinger operators.

Consider the evaluation representation $\mathrm{ev}_\omega:C_0(\Omega)\to\CM$ at $\omega \in \Omega$, 
$\mathrm{ev}_\omega(f)=f(\omega)$. This representation induces the covariant representation on $L^2(\RM^d)$ 
given by
\begin{equation}\label{eq-cov-rep}
	\begin{split}
		(\pi_\omega(f)\psi)(x) &= f(x\triangleright\omega) \psi(x)\\
		(U_A(t)\psi)(x) &= \gamma(-t,x) \psi(x-t)
	\end{split}
\end{equation}
where we abbreviated $\pi_{\mathrm{ev}_\omega}$ as $\pi_\omega$. 

In general, the associated integrated representation $\tilde\pi_\omega$ (as in \ref{eq:fourier_generators_crossedproduct}) is not faithful and so we have to consider the direct sum representation 
$\rho=\int_\Omega^\oplus \mathrm{ev}_\omega \mathrm{d}\PM(\omega)$. 

The corresponding integral representation $\tilde \pi= \int_\Omega^\oplus \tilde\pi_\omega \mathrm{d}\PM(\omega)$ (we suppress the dependence on $\rho$) is faithful and so that we can identify $C_0(\Omega)\rtimes_{\alpha,\gamma}\RM^d$ with the closure of the algebra generated by $\tilde \pi (f)$ with $f\in L^1(\RM^d, C_0(\Omega))$.
In practice, its elements can often be represented by covariant integral kernels.  We call a measurable function $k: \Omega \times \RM^d \times \RM^d \to \CM$ covariant if there is a measurable function $f: \Omega \times \RM^d\to \CM$ such that almost surely (in all variables) 
\begin{equation}\label{eq:kernel_functions}
k(\omega, x,y) = k_f(\omega, x,y) := e^{-\imath \langle x,\BB y\rangle} f(x\triangleright\omega, x-y).
\end{equation}	
If $f\in L^1(\RM^d, C_0(\Omega))$ then, by construction, 
	the associated integral operator
	$$(\mathrm{op}_\omega(k_f) \psi)(y) = \int_{\RM^d} k_f(\omega, x,y) \psi(y) \mathrm{d}x$$
	$\PM$-almost surely satisfies $\mathrm{op}_\omega(k_f) = \tilde\pi_\omega(f)$.

In particular, integral kernels affiliated to the crossed product can be spotted by the covariance property
$$k(\omega, x+z,y+z)= k(z\triangleright\omega, x,y) e^{-\frac{\imath}{2} \langle z|\BB|x-y\rangle}$$
combined with fast enough off-diagonal decay.

We study under which condition a covariant family of magnetic Schr\"odinger operators is affiliated to the twisted crossed product $C_0(\Omega)\rtimes_{\alpha,\gamma}\RM^d$. 
First we look for affiliation and then we investigate which operators are even resolvent or strongly affiliated.
 
With the usual multi-index notation for differentiable operators, if $I=(i_1,\cdots,i_d)\in\NM^d$ we define
$$\nabla_A^I := {\nabla_{1,A}}^{i_1}\cdots {\nabla_{d,A}}^{i_d}$$
We denote by $C_c^\infty(\Omega)\subset C(\Omega)$ the functions which are compactly supported and norm-smooth w.r.t.\ the action $\alpha$ of $\RM^d$.
\begin{proposition} 
\label{prop:aff_diff_op}
Consider a family $H = (H_\omega)_{\omega\in\Omega}$ of essentially self-adjoint differential operators
	\begin{equation}
		\label{eq:diff_polynomial}
		H_\omega = \sum_{I\in \NM^d} \pi_\omega(c_I) \nabla_A^I
	\end{equation}
	with $c_I\in C^\infty(\Omega)$ 
on $D_0:=C_c^\infty(\RM^d, C_c^\infty(\Omega)) \subset L^2(\Omega, \PM) \otimes L^2(\RM^d) $ (i.e.\ $H$ a direct integral over $\Omega$ of essentially self-adjoint operators on $L^2(\RM^d)$).	
	The closure of $H$ is a self-adjoint $\Aa$-multiplier if and only if $(H+\imath)^{-1}\in \mult(\Aa)$.
\end{proposition}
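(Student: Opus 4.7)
The ``only if'' direction is immediate from Lemma~\ref{lemma:selfadjointmultiplier}, which shows that every self-adjoint $\Aa$-multiplier $\bar H$ has resolvent in $\mult(\Aa)$. For the nontrivial ``if'' direction I plan to verify criterion (ii) of that lemma with $\Aa_\pm := \Aa$, so it suffices to show that $(\bar H\pm\imath)^{-1}\Aa$ is norm-dense in $\Aa$. To that end, I will work with the subspace
\[
\Aa_0 := \mathrm{span}\{\tilde\pi(c)f_1(p_1)\cdots f_d(p_d) : c\in C_c^\infty(\Omega),\; f_i\in C_c^\infty(\RM)\} \subset \Aa,
\]
which is norm-dense in $\Aa$ by \eqref{eq:linear_span_crossedproduct}, and show that every $s\in \Aa_0$ already lies in $(\bar H+\imath)^{-1}\Aa$; the analogous argument with $\imath$ replaced by $-\imath$ will then handle $(\bar H-\imath)^{-1}\Aa$.

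Given $s=\tilde\pi(c)f_1(p_1)\cdots f_d(p_d)$, the plan is to exhibit a bounded element $b\in \Aa$ with $s\Hh \subset \mathrm{Dom}(\bar H)$ and $(\bar H+\imath)\circ s = b$, where $b$ is defined through the formal Leibniz expansion of $Hs$. Two commutator identities will drive the computation. First, the covariance relation $[\nabla_{i,A},\tilde\pi(c)]=\tilde\pi(\delta_i c)$, where $\delta_i$ is the derivation generating the $\RM^d$-action on $C_0(\Omega)$, lets me write $\nabla_A^I\tilde\pi(c)=\sum_{J\leq I}\binom{I}{J}\tilde\pi(\delta^J c)\nabla_A^{I-J}$, and each $\delta^J c$ remains in $C_c^\infty(\Omega)$ because $\delta_i$ preserves compact supports. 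Second, the commutators $[p_i,p_j]=\imath\BB_{ij}$ are central scalars, so an induction on $|I-J|$ rewrites each monomial $p^{I-J}f_1(p_1)\cdots f_d(p_d)$ as a finite sum $\sum_\alpha g_{1,\alpha}(p_1)\cdots g_{d,\alpha}(p_d)$ with $g_{i,\alpha}\in C_c^\infty(\RM)$: a factor $p_i$ either absorbs into $f_i(p_i)$ via $p_if_i(p_i)=(xf_i)(p_i)$ or is commuted past $f_j(p_j)$ for $i\neq j$ at the cost of an additive term $\imath\BB_{ij}f_j'(p_j)$. Multiplying through by the coefficients $c_I$ and using that the product $c_I\cdot \delta^J c$ is compactly supported (because $\delta^J c$ is) then expresses $b=Hs+\imath s$ as a finite sum of elements of the form $\tilde\pi(c')g_1(p_1)\cdots g_d(p_d)$ with $c'\in C_c(\Omega)$, and such an element belongs to $\Aa$ by \eqref{eq:linear_span_crossedproduct}.

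The hard part will be promoting this formal identity to an honest statement about operators, namely that $s\psi\in\mathrm{Dom}(\bar H)$ for every $\psi\in \Hh$ and $\bar Hs\psi=(b-\imath s)\psi$. My plan is to exploit essential self-adjointness of $H$ on $D_0$, which gives $\bar H=H^*$, so it suffices to verify the weak identity $\langle H\phi,s\psi\rangle=\langle\phi,(b-\imath s)\psi\rangle$ for all $\phi\in D_0$. This reduces to an integration by parts in the spatial variables: the boundary terms vanish because $\phi$ is compactly supported, and the manipulation is legitimate because $s\psi$ is smooth — the operator $f_1(p_1)\cdots f_d(p_d)$ acts as a smoothing operator whose integral kernel is Schwartz in the off-diagonal direction up to a magnetic phase, and multiplication by $\tilde\pi(c)$ preserves this smoothness by norm-smoothness of $c$ under the action. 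Once this weak identification is established, both density conditions of Lemma~\ref{lemma:selfadjointmultiplier}(ii) follow and $\bar H$ is an $\Aa$-multiplier.
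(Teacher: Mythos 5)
Your proof is correct, but it organizes the key step differently from the paper's. Both arguments reduce, via Lemma~\ref{lemma:selfadjointmultiplier}, to exhibiting a norm-dense subset of $\Aa$ that left multiplication by $H$ maps into $\Aa$ (your use of criterion (ii) with $\Aa_\pm=\Aa$ and the paper's use of criterion (iii), i.e.\ density of $\Dd_\Aa(H)$, are interchangeable here). The difference lies in the choice of dense subset and the computation. You take the span of the generators $\tilde\pi(c)f_1(p_1)\cdots f_d(p_d)$ and compute $Hs$ by hand through the Leibniz rule $[\nabla_{i,A},\tilde\pi(c)]=\tilde\pi(\delta_i c)$ and the central commutators $[p_i,p_j]$, then verify term by term that the result is again a finite sum of generators. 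The paper instead works with the quantization map $Q_f=\int_{\RM^d}e^{t\cdot\nabla_A}\pi(f(t))\,\mathrm{d}t$ indexed by symbols $f\in D_0$ and the single intertwining identity $HQ_f\psi=Q_{Hf}\psi$: since $H$ preserves $D_0$, the element $Q_{Hf}$ automatically lies in $\Aa$, so the entire commutator bookkeeping and the separate membership check are absorbed into the statement $Hf\in D_0$. Your route is more computational but yields nothing extra; the paper's is shorter and makes transparent that the only input is invariance of the core under $H$. Both proofs face the same final technicality of promoting an identity checked on a core to a statement about the closure; you resolve it with $\bar H=H^*$ and a weak integration-by-parts identity (which requires your smoothing-kernel observation about $f_1(p_1)\cdots f_d(p_d)$), whereas in the paper's setup it follows from closedness of $\bar H$ and boundedness of $Q_{Hf}$. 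Either mechanism is legitimate.
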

\begin{proof}
By Lemma~\ref{lemma:selfadjointmultiplier}(iii) we need to show that $\Dd_\Aa(H)$ is dense in $\Aa$.
For $f\in D_0$ the "quantization map"
$$Q: D_0 \mapsto \Aa, \quad Q_f := \int_{\RM^d} e^{t\cdot \nabla_A}\pi(f(t_1,...,t_d)) \mathrm{d}t$$
is well-defined and has norm-dense range in $\Aa$.
We claim that $\Dd_\Aa(H)$ contains $Q(D_0)$ and hence is dense in $\Aa$.

Each $Q_f$ preserves $D_0\subset \Hh:=L^2(\Omega)\otimes L^2(\RM^d)$ and $HD_0 \subset D_0$. Using the commutation relations of the $e^{t\cdot \nabla_A}$ one computes that
$$H Q_f \psi = Q_{Hf}\psi$$
for all $\psi \in D_0$, i.e. left multiplication of $H$ acts on the symbol exactly as it does under the inclusion $D_0\subset \Hh$. Hence $Q(D_0)\subset \Dd_\Aa(H)$.
\end{proof}
The last result easily extends to matrix valued operators, showing that these are affiliated to $M_N(\Aa)$ under the analogous condition. 

The next question is for which differential operators the resolvent lies in $M_N(\Aa^\sim)$. In the translation-invariant case, that is when $\Omega=\{*\}$ is one point and the magnetic field $0$, this question is simple to study using the Fourier transform: Any differential operator transforms into a matrix-valued function $\RM^d\to M_N(\CM)$ and one can always compute the resolvent using linear algebra, though the resulting expressions may be difficult to analyze by hand already for moderately sized matrices. 

In the presence of a magnetic field and for non-trivial coefficients one has the following result which applies to the most common spaces $\Omega$:

\begin{theorem}
\label{th:resolvent_aff_elliptic}
Assume that $\Omega$ is compact and contains an element $\omega\in \Omega$ with dense orbit under the $\RM^d$-action $\alpha$. The associated representation $\pi_\omega$ is faithful and allows us to identify $C(\Omega)$ with a subalgebra $\Cc\subset C_{\mathrm{buc}}(\RM^d)$ of the bounded uniformly continuous functions by setting $\tilde{f}(x)=(\alpha_x f)(\omega)$ for $f\in C(\Omega)$. 

Let $H = \sum_{I\in \NM^d} \pi_\omega(c_I) \nabla_A^I$, $c_I\in C(\Omega)$, be a self-adjoint operator which is the magnetic Weyl quantization of a smooth elliptic polynomial symbol with coefficients in $\Cc$ in the sense of \cite{LeinEtAl10}, i.e. the magnetic quantization of a real function of the form $\Pp(x,\xi) = \sum_{n\in \NM^d} \tilde{f}_n(x) \xi_1^{n_1} .... \xi_d^{n_d}$ with smooth coefficients $\tilde{f}_n\in \Cc$. Then the resolvents of $H$ lie in $\pi_\omega(C(\Omega) \rtimes_{\alpha,\gamma} \RM^d)$, hence $H$ defines a resolvent-affiliated multiplier of $C(\Omega) \rtimes_{\alpha,\gamma} \RM^d$.
\end{theorem}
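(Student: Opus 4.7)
\medskip\noindent\textbf{Proof plan.} The plan is to exploit the magnetic Weyl pseudodifferential calculus of Mantoi{u}-Purice-Richard (as used in \cite{LeinEtAl10}), in which the twisted crossed product $C(\Omega) \rtimes_{\alpha,\gamma} \RM^d$ is realised via $\pi_\omega$ as the $C^*$-closure of the magnetic Weyl quantisation of a suitable space of $\Cc$-valued symbols. In that framework, symbols which are Schwartz in the momentum variable $\xi$ with coefficients in $\Cc$ quantise to elements of $\pi_\omega(C(\Omega) \rtimes_{\alpha,\gamma} \RM^d)$: their partial Fourier transform in $\xi$ produces a function in $L^1(\RM^d,\Cc)$ and the magnetic Weyl formula then reduces directly to an integral of the form \eqref{eq:fourier_generators_crossedproduct}. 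Symbols of sufficiently negative order belong to this algebra by density and the Calder\'on-Vaillancourt-type boundedness estimates available in the magnetic calculus.

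Given that identification, the first main step is to construct a parametrix for $H-z$, for $z\in \CM\setminus \RM$, inside the magnetic Moyal calculus. Since $\Pp(x,\xi)$ is an elliptic polynomial of degree $m$ with smooth coefficients in $\Cc$, the symbol $\Pp(x,\xi)-z$ is elliptic in the usual H\"ormander sense with respect to $\xi$, and the standard asymptotic-expansion construction (transferred to the magnetic Moyal product $\#^B$) produces a symbol $q_z$ in the magnetic H\"ormander class $S^{-m}$ with coefficients in $\Cc$ whose magnetic Weyl quantisation $Q_z$ satisfies
\begin{equation*}
Q_z(H-z) = 1 - R_z, \qquad (H-z)Q_z = 1 - R'_z,
\end{equation*}
with the remainders $R_z,R'_z$ the quantisations of symbols of arbitrarily negative order, in particular Schwartz in $\xi$ with coefficients in $\Cc$. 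By the first step, $Q_z$, $R_z$ and $R'_z$ all lie in $\pi_\omega(C(\Omega)\rtimes_{\alpha,\gamma}\RM^d)$.

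The second main step is to pass from the parametrix to the actual resolvent. Self-adjointness of $H$ and Proposition~\ref{prop:aff_diff_op} ensure that $(H-z)^{-1}$ already exists as a bounded operator, and the parametrix identity rearranges to
\begin{equation*}
(H-z)^{-1} = Q_z + (H-z)^{-1} R'_z.
\end{equation*}
For $|\mathrm{Im}(z)|$ large enough one has $\|R'_z\|<1$ and the Neumann series $(1-R'_z)^{-1} = 1 + \sum_{n\geq 1} (R'_z)^n$ converges in the unitisation of the crossed product, so $(H-z)^{-1} = Q_z(1-R'_z)^{-1}$ is a norm-convergent sum of products of elements of $\pi_\omega(C(\Omega)\rtimes_{\alpha,\gamma}\RM^d)$ and hence belongs to that algebra. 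The resolvent identity then extends the conclusion to every $z$ in the resolvent set of $H$. Combined with Proposition~\ref{prop:aff_diff_op} to handle the density of $\Dd_\Aa(H)$, this yields resolvent-affiliation in the sense of Definition~\ref{def:B_resolvent_affiliated}.

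The main obstacle is the bookkeeping in the first step: showing that the formal magnetic Moyal parametrix really does represent, as a norm-limit, an element of $\pi_\omega(C(\Omega)\rtimes_{\alpha,\gamma}\RM^d)$ rather than merely of the multiplier algebra. This amounts to verifying that the magnetic Weyl quantisation of symbols whose coefficients lie in the possibly non-separable algebra $\Cc\subset C_{\mathrm{buc}}(\RM^d)$ (inherited from $C(\Omega)$ via the dense orbit) sits inside the crossed product, which is where the assumption that $\omega$ has dense orbit and that the coefficients $c_I$ come from $C(\Omega)$ is used in an essential way; the corresponding technical statements for the magnetic calculus are available in \cite{LeinEtAl10} and can be applied once the symbol classes and their quantisation maps are aligned with the covariant representation \eqref{eq-cov-rep}.
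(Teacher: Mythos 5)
Your proposal is correct and follows essentially the same route as the paper: both identify $C(\Omega)\rtimes_{\alpha,\gamma}\RM^d$, via the dense orbit of $\omega$, with the algebra of negative-order magnetic pseudodifferential operators with coefficients in $\Cc\subset C_{\mathrm{buc}}(\RM^d)$ in the sense of \cite{LeinEtAl10}, and both conclude resolvent-affiliation by combining membership of the resolvent in that algebra with the domain-density argument of Proposition~\ref{prop:aff_diff_op}. The only difference is that the paper invokes \cite[Theorem 3.19]{LeinEtAl10} as a black box for the assertion that this algebra contains the resolvents of elliptic operators, whereas you reconstruct that result via the parametrix and Neumann-series argument.
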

\begin{proof}
If $\omega\in \Omega$ has a dense orbit then $f\in C(\Omega)\mapsto (\phi_\omega f)(x)=(\alpha_x f)(\omega)$ is an $\RM^d$-equivariant isomorphism onto a closed translation-invariant subalgebra $\mathcal{C}\subset C_{\mathrm{buc}}(\RM^d)$. This is exactly the setting for the pseudodifferential calculus developed in \cite{LeinEtAl10}. The algebra generated by the quantizations of negative-order symbols with coefficients in $\Cc$ can be identified with the twisted crossed product algebra $\Cc\rtimes_{\alpha,\gamma} \RM^d$ and by \cite[Theorem 3.19]{LeinEtAl10} that algebra contains the resolvents of elliptic differential operators with coefficients in $\Cc$. We conclude $(H+\imath)^{-1}\in \Cc\rtimes_{\alpha,\gamma}\RM^d\simeq \pi_\omega(C(\Omega) \rtimes_{\alpha,\gamma} \RM^d)$ and since $\pi_\omega$ is faithful we can conclude resolvent-affiliation (noting that the argument of Proposition~\ref{prop:aff_diff_op} already shows dense norm-dense domain as an unbounded multiplier).  
\end{proof}

Different types of additive or multiplicative perturbations can also be more simply incorporated perturbatively. In particular, as already mentioned above, if $H_0$ is an elliptic differential operator and $V=V^*$ a matrix over $C_b(\Omega)$ then
$$H=H_0 + V$$
is again a multiplier by Proposition~\ref{prop-res-perturb}. If more strongly $(H_0+\imath)^{-1} \in M_N(\Aa)$ and $\Omega$ is compact then $(H+V+\imath)^{-1} \in M_N(\Aa)$. If only $(H+\imath)^{-1} \in M_N(\Aa^\sim)$ (which can happen if the symbol of the differential operator is not elliptic, see Example~\ref{sec-shallow})
then whether $(H+V+\imath)^{-1} \in M_N(\Aa^\sim)$ holds depends on the shape of the matrix $V$ and the scalar part of the resolvent.

\subsection{Algebras for halfspace and interface problems}
\label{sec:halfspaces_and_interfaces}

In this section we fix an algebra $\bulkalgebra:=  C(\Omega)\rtimes_{\alpha,\gamma} \RM^d$ based on a compact space $\Omega$ modeling bulk observables. 
A simple approach to model (families of) half-space Hamiltonians or interface Hamiltonians that asymptotically look like ones affiliated to the bulk algebra is to augment the space $\Omega$ by the two-point compactification of the real line $[-\infty, \infty]$ whose elements describe the position of the boundary or interface. 
We thus set $\Omega^* := \Omega\times [-\infty, \infty]$. Since $\Omega$ is compact, we have $C(\Omega^*)\simeq C(\Omega)\otimes C([-\infty,\infty])$.  The space $\Omega^*$ is a compactification of $\Omega^0:= \Omega \times \RM$ for which $C_0(\Omega^0)=C(\Omega)\otimes C_0(\RM)$. The augmented spaces $\Omega^*$, $\Omega^0$ carry an $\RM$ action
$$x\triangleright(r,\omega) = (x\triangleright\omega,r-x_d)$$
where $\pm\infty - x_d = \pm\infty$. It gives rise to a strongly continuous action on $C(\Omega^*)$ which we denote by $\alpha^*$. 

Since the evalutations $q_\pm: C(\Omega^*)\to C(\Omega)$ at $\pm \infty$ are $\RM^d$-equivariant they extend to the crossed product and one gets an exact sequence
\begin{equation}\label{eq-SEShalf}
C_0(\Omega^0)\rtimes_{\alpha^*,\gamma}\RM^d \hookrightarrow C(\Omega^*)\rtimes_{\alpha^*,\gamma}\RM^d \stackrel{q_-\oplus q_+}{\twoheadrightarrow} C(\Omega)\rtimes_{\alpha,\gamma}\RM^d\,\oplus \,C(\Omega)\rtimes_{\alpha,\gamma}\RM^d
\end{equation}
We denote the ideal by $\edgealgebra$, the extension by $\interfacealgebra$ and the quotient by $\bulkalgebra \oplus \bulkalgebra$.
Since $\Omega^*$ is compact Proposition~\ref{prop:aff_diff_op} and Theorem~\ref{th:resolvent_aff_elliptic} imply that $\interfacealgebra$ (or rather its representations on $L^2(\RM^d)$) contains the resolvents of magnetic differential operators which are perturbed by covariant potentials over $\Omega^*$. 
Now, covariance implies that these potentials depend on the distance to the hypersurface at $x_d=0$, which is supposed to mimic the interface, and to converge at $\pm \infty$ to potentials in the multiplier algebra of the bulk algebra. We therefore call  $\interfacealgebra$ the interface algebra. 
Note that the limiting potentials are covariant w.r.t.\ $\Omega$ only. The kernel $\edgealgebra$ of the evaluation map must finally be generated by operators that decay with the distance from the hypersurface.

By restricting the algebras to the kernel of either $q_-$ or $q_+$ one obtains the two exact sequences 
\begin{equation}\label{eq-half-WH}
C_0(\Omega^0)\rtimes_{\alpha^*,\gamma}\RM^d \hookrightarrow C_0(\Omega^\pm)\rtimes_{\alpha^*,\gamma}\RM^d \stackrel{q_\pm}{\twoheadrightarrow} C(\Omega)\rtimes_{\alpha,\gamma}\RM^d
\end{equation}
with $\Omega^+=\Omega\times (-\infty,\infty]$ and $\Omega^-= \Omega\times [-\infty,\infty)$ respectively. 
This approach of defining algebras associated to the half-spaces $\RM^{d-1}\times \RM^{\pm}$ is the one originally used in \cite{KS04, KS04b}, its main advantage is that the above exact sequences are Wiener-Hopf extensions. However, we will use below a compressed version as half-space algebras.

As for the bulk algebra, the interface  algebra admits a family of representations $(\tilde\pi_{\omega,r})_{ (\omega,r)\in \Omega^*}$ giving rise to a covariant family of operators on $L^2(\RM^d)$. 
These are obtained by extending the evaluation representations which induce the covariant representations (\ref{eq-cov-rep}) to the points of $\Omega^*$. Then $\tilde\pi_{\omega,r}$ is the associated integrated representation. There are two interesting features about these representations: The first is that the representation $\tilde\pi_{\omega,\pm\infty}$ for $\omega \in \Omega$ can be interpreted directly as the natural surjection onto $\tilde\pi_\omega(\bulkalgebra)$ which factors through the evaluation map $q_\pm$. The second is that the representations $\tilde\pi_{\omega,r}$ and $\tilde\pi_{re_d\triangleright\omega,0}$ are unitarily equivalent for any finite $r\in \RM$ through conjugation by a so-called dual magnetic translation, hence one can actually ignore the variable $r$ if it is finite and focus on the family of representations $(\tilde\pi_{\omega,0})_{\omega\in \Omega}$. In particular, if $\Omega=\{\omega\}$ is one point only, which is the case if the operators are translation invariant, then  the representation alone $\tilde\pi_{\omega,0}$ is faithful and we can describe the algebras as in the introduction. In the absence of an external magnetic field we then get $\Aa_b=\CM\rtimes \RM^d\cong C_0(\RM^d)$ the isomorphism being given by the Fourier transform on $\RM^d$, and $\hat \Aa_i = \CM\rtimes \RM^{d-1}\otimes C([-\infty,+\infty])\rtimes_{tr}\RM
\cong C_0(\RM^{d-1})\otimes C([-\infty,+\infty])\rtimes_{tr}\RM$. 
The latter isomorphism is given by a partial Fourier transform and $tr$ denotes the translation action. Finally,  
$\Ee\cong C_0(\RM^{d-1})\otimes \Kk(L^2(\RM))$. 

In \cite{KS04, KS04b} the extension $C_0(\Omega^+)\rtimes_{\alpha^*,\gamma}\RM^d$ of (\ref{eq-half-WH}) was used to describe Hamiltonians which live on the positive half-space with the argument that the integral kernels of their resolvents tend to $0$ when $x_d\to -\infty$. However, 
below we want to consider half-space operators which are obtained from restrictions to the half-space of differential operators on $\RM^d$. It is not clear that the integral kernels of their resolvents are given by elements of the above extension. The apparent difficulty is that the set of elements used to define $\interfacealgebra$, notably $\tilde\pi(f)$ with $f\in L^1(\RM^d,C(\Omega^*))$, is built from functions that depend continuously on the position of the boundary, so it is not clear how sharply truncating a differential operator can result in a multiplier. The fact of the matter is, the algebras are larger than one might think: 
\begin{lemma}
\label{lemma:nonsmoothelements}
	Consider the Hilbert-space representations $(\tilde\pi_{\omega,0})_{\omega\in \Omega}$ of $\edgealgebra = C_0(\Omega^0) \rtimes_{\alpha^*, \gamma} \RM^d$ on $L^2(\RM^d)$, which define a faithful non-degenerate representation $\tilde\pi = \int_\Omega^\oplus \tilde\pi_{\omega,0} \mathrm{d}\PM(\omega)$ on $L^2(\RM^d)\otimes L^2(\Omega)$. 
	
	\noindent Let $X$ be the unbounded multiplication operator $$(X  f)(x) = x_d f(x)$$
	which is a self-adjoint operator when defined on a suitable dense subspace of $L^2(\RM^d)$ and thus also on $L^2(\RM^d)\otimes L^2(\Omega)$.
	
	\begin{enumerate}
		\item[(i)] $X$ is an $\edgealgebra$-multiplier when we consider $\edgealgebra$ via $\tilde{\pi}$ to be a $C^*$-subalgebra of $\Bb(L^2(\RM^d)\otimes L^2(\Omega))$.
		\item[(ii)] For every bounded Borel function $f:\RM \to \CM$, $f(X)$ is a multiplier of $\edgealgebra$. Moreover, for every pointwise convergent sequence $(f_n)_{n\in \NM}$ of Borel functions with $\sup_{n\in \NM}\norm{f_n}_\infty < \infty$ one has that $f_n(X)$ is a strictly convergent sequence of multipliers.
		\item[(iii)] The spectral projections $P_\pm =\chi(\pm X > 0)$ are $\interfacealgebra$-multipliers. When we consider $\bulkalgebra$ to be a subalgebra of $\interfacealgebra$ then every element $\hat{a}\in \interfacealgebra$ has a unique decomposition
       \begin{equation}
			\label{eq:decomp_discont}
			\tilde\pi(\hat{a}) = P_+ \tilde\pi(a_+) P_+ + P_- \tilde\pi(a_-)P_- + \tilde\pi(e)
		\end{equation}
        with $a_\pm\in \bulkalgebra$ and $e\in \edgealgebra$.

	\end{enumerate}
\end{lemma}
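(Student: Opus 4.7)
The plan is to establish (i), (ii), (iii) in sequence, using Lemma \ref{lemma:selfadjointmultiplier} for unbounded affiliation and bootstrapping the bounded Borel functional calculus from its continuous counterpart.

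For (i), I would verify the two conditions of Lemma \ref{lemma:selfadjointmultiplier}(iii). The representation $\pi_{\omega,0}$ sends the second coordinate of $\Omega^0$ to $-x_d$, so $(X\pm\imath)^{-1}$ is realised as the multiplier coming from $g_\pm(\omega',r) := (-r\pm\imath)^{-1} \in 1\otimes C_0(\RM) \subset C(\Omega)\otimes C_0(\RM) = C_0(\Omega^0)$; since $C_0(\Omega^0)\subset\mult(\edgealgebra)$ canonically, this gives $(X\pm\imath)^{-1}\in\mult(\edgealgebra)$. For the density of $\Dd_\edgealgebra(X)$ I would use the dense class of elements $\tilde\pi(f)$ with $f\in C_c(\RM^d, C_c(\Omega^0))$: on such $a$ one has $Xa = \tilde\pi(\tilde m\cdot f)$, where $\tilde m(\omega',r) = -r$ is bounded on the compact $r$-support of $f$, so $Xa\in\edgealgebra$.

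For (ii), the continuous bounded functional calculus $C_b(\RM)\to\mult(\edgealgebra)$ is part of the general theory for self-adjoint multipliers. To extend to a bounded Borel $f = \lim f_n$ (pointwise, uniformly bounded), I would verify strict convergence $f_n(X) a \to f(X) a$ on the dense subset of elements of the form $\phi(X) a$ with $\phi\in C_c(\RM)$. Here $(f_n-f)(X)\phi(X) a = ((f_n-f)\phi)(X) a$ involves a uniformly bounded, compactly supported, pointwise-null sequence of functions of $X$, and the required norm convergence follows from the ``fiberwise compactness'' of $\edgealgebra$ along the $X$-direction --- the algebraic counterpart of the classical fact that a strongly null uniformly bounded sequence times a compact operator converges in norm. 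This is transparent in the translation-invariant case via $\edgealgebra\simeq C_0(\RM^{d-1})\otimes\KM(L^2(\RM))$ and in general follows from the $C_0(\RM)$-factor in $C_0(\Omega^0)$ together with the crossed-product structure.

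For (iii), since $\chi(\pm\,\cdot\,>0)$ are bounded Borel functions, (ii) gives $P_\pm\in\mult(\edgealgebra)$. To show that $P_\pm\in\mult(\interfacealgebra)$ and obtain the decomposition, I would take $\hat a\in\interfacealgebra$, set $a_\pm := q_\pm(\hat a)\in\bulkalgebra$, and choose continuous cutoffs $\eta_\pm\in C_b(\RM)$ such that $\eta_\pm(x_d)-\chi(\pm x_d>0)$ is compactly supported. By exactness of (\ref{eq-SEShalf}), $\hat a$ and the ``continuous approximation'' $\tilde\pi(a_+)\eta_+(X) + \tilde\pi(a_-)\eta_-(X)$ agree in the quotient $\bulkalgebra\oplus\bulkalgebra$, so their difference is annihilated by $q_\pm$. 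Moreover, the discrepancy between $P_+\tilde\pi(a_+)P_+ + P_-\tilde\pi(a_-)P_-$ and this continuous approximation involves only products in which factors of the form $P_\pm-\eta_\pm(X)$ appear; these are bounded Borel functions of $X$ with compact $x_d$-support, and by (ii) each such factor forces the product to lie in $\edgealgebra$. Combining the two observations yields (\ref{eq:decomp_discont}) with a unique $e\in\edgealgebra$, and simultaneously shows $P_\pm \interfacealgebra\subset\interfacealgebra$ and $\interfacealgebra P_\pm\subset\interfacealgebra$, i.e.\ $P_\pm\in\mult(\interfacealgebra)$. Uniqueness of $a_\pm$ follows by applying $q_\pm$.

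The main technical obstacle is the norm-convergence step in (ii): converting bounded pointwise convergence into strict convergence in $\mult(\edgealgebra)$ requires genuinely exploiting the transversal-compactness structure of the crossed product, which is immediate in the translation-invariant case but needs a fiberwise argument over $\Omega$ in the general covariant setting.
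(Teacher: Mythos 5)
Your overall strategy coincides with the paper's: (i) via affiliation of $X$ to the coefficient algebra, (ii) via $*$-strong convergence of $f_n(X)$ against ``fiberwise compact'' elements, (iii) via comparison of the sharp projections $P_\pm$ with continuous switch functions. Two points need attention. First, the crux of (ii) --- that elements of $\edgealgebra$ act fiberwise over $\Omega$ as compact operators in the $x_d$-direction --- is exactly the step you yourself flag as the main obstacle and then leave as an assertion about ``the $C_0(\RM)$-factor together with the crossed-product structure''. That is not quite the right reason: the compactness does not come from the $C_0(\RM)$ tensor factor alone but from crossing it with the translation action, i.e.\ from Takai duality, $C_0(\Omega\times\RM)\rtimes_{\alpha^*}\RM \simeq C(\Omega)\rtimes_\alpha\RM\rtimes_{\hat\alpha}\RM\simeq C(\Omega)\otimes\KM(L^2(\RM))$, which is how the paper proves it for $d=1$; for $d>1$ one additionally needs a gauge choice for the magnetic cocycle so that the twisted crossed product factors as an iterated crossed product with the $e_d$-direction innermost. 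Without this isomorphism your (ii) is incomplete in the general covariant (and magnetic) setting.

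Second, in (iii) you invoke (ii) to conclude that any product containing a factor $P_\pm-\eta_\pm(X)$ lies in $\edgealgebra$. But (ii) only asserts that bounded Borel functions of $X$ multiply $\edgealgebra$ into itself; it does not say that a compactly supported Borel function of $X$ maps $\interfacealgebra$ or $\bulkalgebra$ into $\edgealgebra$. To get that you must first peel off a continuous factor: write $P_+-\eta_+(X)=h(X)\,g(X)$ with $g\in C_0(\RM)$ and $h$ bounded Borel (the paper's version is $\Pp_+-P_+=P_+g(X)$); then $g(X)\tilde\pi(\hat a)\in\tilde\pi(\edgealgebra)$ because $1\otimes g$ lies in the ideal $C_0(\Omega^0)$ of $C(\Omega^*)$, and only at that point does (ii) apply to absorb the Borel factor $h(X)$. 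This is a one-line fix, but as written the step is not justified.
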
 

\begin{proof} (i): $X$ is the well known self-adjoint $C_0(\RM)$-multiplier 
extended with the identity to $C_0(\Omega^0) = C_0(\RM)\otimes C(\Omega)$. 
All multipliers of a $C^*$-algebra $\Bb$ canonically define multipliers of all its twisted crossed products $\Bb \rtimes G$.

We prove that $P_\pm$ are $\edgealgebra$-multipliers, first for the case $d=1$. We have
$$\edgealgebra=C_{0}(\Omega\times \RM )\rtimes_{\alpha^*} \RM \simeq C( \Omega)\rtimes_\alpha \RM \rtimes_{\hat\alpha} \RM  \simeq C(\Omega) \otimes \KM(L^2(\RM))$$ 
where $\hat{\alpha}$ is the dual action. Via the explicit form of this Takai duality isomorphism \cite{Raeburn88} it is easy to check that any element $e\in \edgealgebra$ via its family of representations $(\tilde{\pi}_{\omega,0}(e))_{\omega\in \Omega}$ corresponds one-to-one to a norm-continuous family $(k_\omega)_{\omega\in \Omega}$ where $k\in C(\Omega,\KM(L^2(\RM)))$. Any $\omega$-independent bounded operator on $L^2(\RM)$ is therefore a multiplier of $\edgealgebra$, which includes any bounded Borel function $f(X)$ of $X$. Moreover, for a sequence of uniformly bounded pointwise convergent Borel functions, the measurable functional calculus gives rise to a $*$-strong convergent sequence of bounded operators $(f_n(X))_{n\in \NM}$. It is well-known that a product of a strongly-convergent sequence of operators and a compact operator converges in operator-norm, hence the strict convergence of $(f_n(X))_{n\in \NM}$ as an $\edgealgebra$-multiplier follows easily from a density argument, proving (ii) in the case $d=1$. 

To conclude the same for $d>1$ we note that one can choose a gauge for the vector potential of the magnetic field so that one can factor the twisted crossed product
$$C_0(\Omega^0)\rtimes_{\alpha^*,\gamma}\RM^d\simeq \left(C_0(\Omega^0)\rtimes_{\alpha^*_d} \RM\right)\rtimes_{\tilde{\alpha},\tilde{\gamma}}\RM^{d-1}$$
with $\alpha^*_d$ the translation in the direction $e_d$, $\tilde{\gamma}$ the restriction of $\gamma$ and $\tilde{\alpha}$ a modification of the action parallel to the edge which implements the magnetic field. The case $d=1$ shows that $f(X)$ is a multiplier of the innermost crossed product, hence $f(X)$ is an $\edgealgebra$-multiplier also for $d>1$ and any bounded Borel function. 

For (iii) the natural $\RM^d$-equivariant inclusion of $C(\Omega)$ into $C(\Omega^*)$ as a unital subalgebra extends to crossed products and then allows us to identify $\bulkalgebra$ with a subalgebra of $\interfacealgebra$. Choose a continuous switch function $\Theta\in C_0(\RM)$ with $\Theta(-\infty)=0$ and $\Theta(\infty)=1$. 
Setting $\Pp_+ =\Theta(X)$ one easily can read off from \eqref{eq:linear_span_crossedproduct} that left multiplication with $\Pp_+$ defines a bounded multiplier of $\interfacealgebra$. Thus we have a well-defined map $$(a_+,a_-)\in \bulkalgebra\oplus \bulkalgebra \mapsto \Pp_+ \tilde{\pi}(a_+) \Pp_+  + (1-\Pp_+) \tilde{\pi}(a_-) (1-\Pp_+).$$ This provides a linear inverse to the surjection $q: \interfacealgebra \to \bulkalgebra\oplus \bulkalgebra$ since  $q(\Pp_+)=(1,0)$ for the extension of $q$ to the multiplier algebra. By exactness we can then conclude that any element $\hat{a}\in \interfacealgebra$ can be written uniquely in the form 
\begin{equation}
\label{eq:interface_decomp}
\tilde{\pi}(\hat{a})=\Pp_+\tilde{\pi}(a_+)\Pp_+ + (1-\Pp_+)\tilde{\pi}(a_-)(1-\Pp_+) + \tilde{\pi}(e)
\end{equation}
for some $e\in \edgealgebra$ and $a_\pm \in \bulkalgebra$.

We may assume w.l.o.g.\ that the support of $\Theta$ is contained in $\RM^+$. There then exists some $g\in C_0(\RM)$ such that $$\Pp_+ - P_+ = P_+ g(X).$$
As $g(X) \tilde{\pi}(a_\pm)\in \tilde{\pi}(\edgealgebra)$ and $P_+$ is an $\edgealgebra$-multiplier we then conclude that $(\Pp_+-P_+)\tilde{\pi}(a_\pm) \in \tilde{\pi}(\edgealgebra)$.
Successively one can therefore replace all $\Pp_\pm$ in \eqref{eq:interface_decomp} by the projections $P_\pm$ and only produce corrections in $\edgealgebra$.
\end{proof}

\begin{remark}
This shows $\interfacealgebra$ does in fact contain many elements which appear to be  ``discontinuous'' in the direction normal to the boundary. In particular, it contains sharp truncations of resolvents of bulk differential operators to a half-space.

In general, the set of covariant potentials which define a multiplier of some algebra $C_0(\Omega)\rtimes \RM^d$ is usually larger than $\mult(C_0(\Omega))\simeq C_b(\Omega)$, e.g. containing also potentials with jump-discontinuities. One can intuitively make sense of that by noting all generators \eqref{eq:linear_span_crossedproduct} involve a smoothing pseudodifferential operator which regularizes the discontinuities. 
\end{remark}

We can now define the versions of half-space and edge algebras $\pmhsalgebra$, $\pmedgealgebra$ which we use here, as compressions of the interface and edge algebra to half-space, namely
\begin{align*}
\pmhsalgebra &= P_\pm \interfacealgebra P_\pm\\
\pmedgealgebra &= P_\pm \edgealgebra P_\pm.
\end{align*}
Here we consider $P_\pm$ as bounded multipliers of the abstract algebra, thus those are subalgebras of $\interfacealgebra $ and $ \edgealgebra$ respectively.
They fit into the exact sequences 
$$\pmedgealgebra \hookrightarrow \pmhsalgebra \stackrel{q_\pm}{\twoheadrightarrow}\bulkalgebra.$$
Note that, in the introduction where we considered translation invariant operators, we denoted $\Ee_+$ and $\Aa_+$ simply by $\Ee$ and $\Aa$.

\subsection{Affiliation and boundary conditions for translation-invariant differential operators}
\label{sec:aff_interface}
In this section we consider the problem of affiliation to $\hsalgebra$ for matrix-valued self-adjoint differential operators. Since this is difficult in general we restrict ourselves to the translation-invariant case so that Fourier transform reduces it to a family of one-dimensional problems. 
Consider a general translation-invariant (matrix-valued) self-adjoint differential operator $H$ which is affiliated to $M_N(C_0(\RM^d))$, i.e.\ formally 
$$H= \sum_{I\in\NM^d} c_I \partial^I$$ with scalar matrices $c_I\in M_N(\CM)$. 
Restricting $H$ as an operator on $L^2(\RM^d)\otimes \CM^N$ to the domain $C^\infty_c(\RM^{d-1}\times \RM_+) \otimes \CM^N$ and taking the graph closure we obtain a symmetric operator $\mathring{H}$ which after a partial Fourier transform (in the components parallel to the interface line) has the form
$$\mathring{H}=\int_{\RM^{d-1}}^\oplus \mathrm{d}k \, \mathring{H}(k).$$
For almost all $k$, $\mathring{H}(k)$ is a matrix-valued symmetric operator which depends polynomially on $k$ and has the core $C^\infty_c(\RM_+)\otimes\CM^N$. 

The first question is whether $\mathring{H}$ is affiliated to $M_N(\hsalgebra)$, so that we can apply the $C^*$-algebraic version of self-adjoint extension theory to see which self-adjoint extensions of $\mathring{H}$ are affiliated to  $M_N(\hsalgebra)$.
\begin{proposition} 
	\label{prop:extensions_fiberwise}
    Consider a symmetric operator $\mathring{H}$ obtained from a translation invariant differential operator $H$ as above. Suppose that there exists one self-adjoint extension $\mathring{H}_{u_0}$ which is an $M_N(\hsalgebra)$-multiplier and such that $u_0\in \mult(\hsalgebra)\otimes M_N(\CM)$. Then $\mathring{H}$ is a multiplier of $\hsalgebra\otimes M_N(\CM)$. Moreover, a self-adjoint extension $\mathring{H}_u$ of $\mathring{H}$ is a multiplier of $\hsalgebra\otimes M_N(\CM)$ if and only if
    $u\in \mult(\hsalgebra)\otimes M_N(\CM)$.
\end{proposition}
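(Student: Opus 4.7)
The plan is to apply Proposition~\ref{proposition:symmetric_multipliers} with $\Aa = \hsalgebra\otimes M_N(\CM)$. Conditions (i) and (ii) of that proposition are precisely the hypotheses on $\mathring H_{u_0}$ and $u_0$, so the whole task reduces to verifying the density condition (iii): that $\Dd_\Aa(\mathring H)$ is norm-dense in $\Aa$. Once this is established, Proposition~\ref{proposition:symmetric_multipliers} combined with Theorem~\ref{theorem:saextensions} yields both conclusions of the statement simultaneously: $\mathring H$ is a symmetric $\Aa$-multiplier, and a self-adjoint extension $\mathring H_u$ is an $\Aa$-multiplier if and only if $u\in \mult(\hsalgebra)\otimes M_N(\CM)$.

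To verify (iii), I would exploit translation invariance along $\RM^{d-1}$ via partial Fourier transform, which presents $\hsalgebra$ as a continuous field over $\RM^{d-1}$ whose fibers are Wiener--Hopf-type algebras on $L^2(\RM_+)$. Consider the linear span of matrix-valued integral operators
\begin{equation*}
(T_f \psi)(x',x_d) = \int_{\RM^{d-1}\times \RM_+} f(x'-y',x_d,y_d)\,\psi(y',y_d)\,dy'\,dy_d
\end{equation*}
parametrized by $f \in C^\infty_c(\RM^{d-1}\times (0,\infty)\times (0,\infty); M_N(\CM))$. Covariance in the parallel direction is built in, while compact support of $f$ away from the boundary together with Schwartz-continuous dependence on the Fourier variable shows that $T_f \in \edgealgebra\otimes M_N(\CM) \subset \Aa$. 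The crucial point is that $\mathrm{Ran}(T_f)\subset C^\infty_c(\RM^{d-1}\times(0,\infty); \CM^N)\subset \mathrm{Dom}(\mathring H)$, and since $H$ has constant matrix coefficients, $\mathring H\, T_f = T_{Hf}$ where $Hf$ is a kernel of the same type; in particular $\mathring H\,T_f \in \Aa$, so $T_f \in \Dd_\Aa(\mathring H)$. The analogous statement for matrix-valued $f$ is immediate.

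The main obstacle is showing norm-density of $\{T_f\}$ in $\Aa$. Here one invokes the structural decomposition $a = P_+ a_b P_+ + e$ of a general half-space element, with $a_b$ a bulk convolution operator and $e\in \edgealgebra$ (analogous to Lemma~\ref{lemma:nonsmoothelements}(iii) for the interface algebra): the edge summand is a continuous field of compact operators on $L^2(\RM_+)$ and is therefore directly approximable in operator norm by smoothing integral operators with fiberwise $C^\infty_c((0,\infty)^2)$-kernels, while the compressed bulk piece is approximated by convolution with $C^\infty_c$ kernels in the parallel direction, producing further error terms localized near the boundary that are themselves absorbed into the edge approximation. The only genuine technical subtlety I foresee lies in controlling these boundary corrections uniformly in norm; once this is handled via standard mollification and truncation arguments in the $x_d$-variable, the family $\{T_f\}$ is norm-dense in $\Aa$, condition (iii) of Proposition~\ref{proposition:symmetric_multipliers} is verified, and the proof concludes.
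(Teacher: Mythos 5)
Your overall strategy is exactly the paper's: reduce to Proposition~\ref{proposition:symmetric_multipliers}, observe that its conditions (i) and (ii) are precisely the hypotheses on $\mathring H_{u_0}$ and $u_0$, and then verify the density condition (iii) by exhibiting a norm-dense subset of $\Dd_{\Aa}(\mathring H)$ consisting of integral operators with smooth kernels on which $\mathring H$ acts by differentiating the kernel, concluding with Theorem~\ref{theorem:saextensions}. The first two paragraphs of your argument are sound (up to the small point that $\mathrm{Ran}(T_f)$ is not compactly supported in the parallel variables, only in $x_d$, which is harmless for membership in the graph closure).

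The genuine gap is the density claim in your last paragraph. Every $T_f$ with $f\in C^\infty_c(\RM^{d-1}\times(0,\infty)\times(0,\infty);M_N(\CM))$ has a kernel vanishing for large $x_d$ or $y_d$, so, as you note yourself, $T_f\in\hsedgealgebra\otimes M_N(\CM)$. But $\hsedgealgebra\otimes M_N(\CM)$ is the kernel of the surjection $q_+:\hsalgebra\otimes M_N(\CM)\to\bulkalgebra\otimes M_N(\CM)$, hence a proper norm-closed ideal; no subset of it can be norm-dense in $\Aa$. Concretely, the compressed bulk piece $P_+a_bP_+$ with $a_b\neq 0$ sits at distance $\norm{a_b}$ from your entire family, and no mollification or truncation near the boundary can remedy this. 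What is needed, and what the paper's proof uses, is a family that also surjects densely onto the bulk quotient: integral kernels of the form $\sum_i f_i(x_d)\,g_i(x-y)$ with $g_i\in C^\infty_c(\RM^d)$ and $f_i$ smooth and merely \emph{bounded} on $[0,\infty]$, admitting limits at $0$ and $\infty$ (in particular $f_i\equiv 1$ is allowed, reproducing $P_+\,\mathrm{conv}(g_i)\,P_+$ modulo the edge ideal). The price of admitting such kernels is that their ranges no longer lie automatically in $\mathrm{Dom}(\mathring H)$; the paper restores this by passing to a subfamily $\Ss_0$ whose output vanishes near the boundary and invoking the strict-convergence statement of Lemma~\ref{lemma:nonsmoothelements}(ii) to show that $\Ss_0$ is still norm-dense in $\Ss$, hence in $\Aa$. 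Your edge-localized operators only account for the correction terms of that argument, not for the bulk part, so condition (iii) is not established as written.
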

\begin{proof}
To simplify the notation we assume $N=1$.

The conditions Proposition~\ref{proposition:symmetric_multipliers} (i) and (ii) hold by assumption for $H_{u}$, hence it suffices to exhibit a norm-dense subset of $\Dd_{\hsalgebra}(\mathring{H})$.

Let $\Ss$ be the set of integral operators on $L^2(\RM^{d-1}\times \RM_+)$ whose integral kernel $k:(\RM^{d-1}\times \RM_+)\times (\RM^{d-1}\times \RM_+)\to \CM$ has the form
$$k(x,y)= \sum_{i=1}^n f_i(x_d) g_i(x-y)$$
for some finite $n$ and functions $g_i\in C_c^\infty(\RM^d)$ and bounded uniformly $f_i\in C^\infty([0,\infty])$ (i.e. smooth functions on $\RM_+$ which admit limits at $0$ and $\infty$). Define similarly $\Ss_0\subset \Ss$ with the additional condition that the functions $g_i$ vanish in a neighborhood of $0$. It is a consequence of Lemma~\ref{lemma:nonsmoothelements} that $\Ss$ is norm-dense in $\hsalgebra$ since there is a dense subset of $\interfacealgebra$ whose projections to the positive halfspace take this form. By Lemma~\ref{lemma:nonsmoothelements}(ii) the subset $\Ss_0$ is norm-dense in $\Ss$ since approximating $g_i$ as a pointwise limit gives an operator-norm convergent sequence. The elements of $\Ss_0$ map the domain $C^\infty_c((\RM\setminus \{0\})\times \RM^{d-1})$ of $\mathring H$ to itself and $\mathring{H}$ being a linear combination of partial derivatives acts on $\Ss_0$ by differentiating the integral kernels, hence $\mathring{H}\Ss_0\subset \Ss_0$ by construction. In particular, $\Ss_0\subset \Dd_{\hsalgebra}(\mathring{H})$. Thus condition (iii) of Proposition~\ref{proposition:symmetric_multipliers} is also satisfied.

The last statement now follows from Theorem~\ref{theorem:saextensions}.
\end{proof}
Additional bounded or even relatively bounded potentials can be incorporated using Proposition~\ref{prop-res-perturb}, hence it is in most cases enough to work out the extensions for the pure differential operators and then add the potentials afterwards.
\begin{remark}
The proposition as stated also holds for magnetically covariant differential operators; one must merely modify the construction of $\Ss$ with the appropriate magnetic covariance relation. However, since one generally cannot use the Fourier transform to reduce to a family of one-dimensional problems, computation of deficiency subspaces and von Neumann unitaries will be unfeasible in most cases. An exception is the two-dimensional case where one can use the Landau gauge to reduce to a differential operator that is translation-invariant in the direction parallel to the line $x_d=0$. We will leave those problems to future work.
\end{remark}
The second question is which of the affiliated self-adjoint extensions of $\mathring{H}$ are also resolvent-affiliated. It has a similar answer: if one can exhibit one extension which is resolvent-affiliated then this boils down to the study of the von Neumann unitaries determining the extension. 
In the translation invariant case any multiplier $u$ of $\hsedgealgebra \cong C_0(\RM^{d-1})\otimes \KM(L^2(\RM_+))$ can be viewed as a strongly continuous function $\RM^{d-1}\ni k\mapsto u(k)$. If each $u(k)$ is a partial isometry with finite and constant rank then this function is even continuous in the norm topology.
\begin{proposition}
\label{prop:res_aff_vn}
Consider a symmetric operator 
$\mathring{H}=\int_{\RM^{d-1}}^\oplus \mathrm{d}k \, \mathring{H}(k)$ obtained from a translation-invariant differential operator $H$ as above. Suppose that there exists one self-adjoint extension $\mathring{H}_{u_0}$ which is a resolvent-affiliated $M_N(\hsalgebra)$-multiplier with $u_0\in M_N(\mult(\hsalgebra))$. Assume furthermore that the deficiency subspaces of $\mathring{H}(k)$ have finite and equal dimensions. Then
a self-adjoint extension $\mathring{H}_u$ of $\mathring{H}$ is a resolvent-affiliated multiplier of $M_N(\hsalgebra)$ if and only if $u-u_0\in\edgealgebra$. This is equivalent to $k\mapsto h(k)-u_0(k)$ being continuous and
\begin{equation}
\label{eq:vn_unitary_asymptotic}
\lim_{k \to \infty} \norm{u(k) -u_0(k)} = 0,
\end{equation}
or,  $k\mapsto (\mathring{H}(k)_{u(k)}-\imath)^{-1}$ being continuous and
\begin{equation}
\label{eq:resolvent_asymptotic}
\lim_{k \to \infty} \norm{(\mathring{H}(k)_{u(k)}-\imath)^{-1}}=0.
\end{equation}
\end{proposition}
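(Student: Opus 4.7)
The plan is to exploit the reference extension $\mathring{H}_{u_0}$ and reduce everything to the difference $u - u_0$ via the resolvent identity (\ref{eq:resolvent_cayley}), which gives
$$(\mathring{H}_u + \imath)^{-1} - (\mathring{H}_{u_0} + \imath)^{-1} = -2\imath(u - u_0).$$
Since $(\mathring{H}_{u_0}+\imath)^{-1} \in M_N(\hsalgebra)$ by hypothesis, $\mathring{H}_u$ is resolvent-affiliated to $M_N(\hsalgebra)$ if and only if $u - u_0 \in M_N(\hsalgebra)$. The core of the proof is then to upgrade this automatically to $u - u_0 \in M_N(\hsedgealgebra)$.

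For the upgrade I would use the decomposition of elements of $\hsalgebra$ as $P_+ a_b P_+ + e$ with $a_b \in \bulkalgebra$ and $e \in \hsedgealgebra$, unique up to the ideal. In the partial Fourier picture parallel to the interface, each fiber of $P_+ a_b P_+$ is a Wiener--Hopf operator $P_+ a_b(k,-\imath\partial_d) P_+$ on $L^2(\RM_+)$ with symbol $a_b(k,\cdot) \in C_0(\RM)$, and a nonzero such operator is never compact. On the other hand, $u$ and $u_0$ are partial isometries with common initial and final projections $e_\mp$ onto the fiberwise finite-dimensional deficiency subspaces of $\mathring{H}$, so $u(k) - u_0(k)$ is fiberwise of rank at most $2\dim \ker(\mathring{H}(k)^* + \imath)$, hence compact. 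Writing $u - u_0 = P_+ a_b P_+ + e$ with $e \in \hsedgealgebra$, the difference $P_+ a_b(k,\cdot) P_+ = (u-u_0)(k) - e(k)$ must be compact for every $k$, forcing $a_b = 0$ and therefore $u - u_0 \in M_N(\hsedgealgebra)$.

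Using the standard identification $\hsedgealgebra \simeq C_0(\RM^{d-1}) \otimes \KM(L^2(\RM_+))$, membership in $M_N(\hsedgealgebra)$ is equivalent to $k \mapsto u(k) - u_0(k)$ being norm-continuous with $\KM \otimes M_N(\CM)$ values and vanishing in norm at infinity, which is exactly (\ref{eq:vn_unitary_asymptotic}). For the equivalent reformulation (\ref{eq:resolvent_asymptotic}), note that every element of $\hsalgebra$ has fiberwise operator-norm vanishing at infinity: the bulk contribution $P_+ a_b P_+$ inherits this from $a_b \in \bulkalgebra \simeq C_0(\RM^d)$, and the edge contribution vanishes by definition of $\hsedgealgebra$. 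In particular $k \mapsto (\mathring{H}_{u_0}(k) + \imath)^{-1}$ is norm-continuous with $\|(\mathring{H}_{u_0}(k) + \imath)^{-1}\| \to 0$, and substituting into the resolvent identity shows that $(\mathring{H}_u(k) + \imath)^{-1}$ has these same properties if and only if $k \mapsto u(k) - u_0(k)$ does.

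The main obstacle is precisely the upgrade from $u - u_0 \in M_N(\hsalgebra)$ to $u - u_0 \in M_N(\hsedgealgebra)$, which rests on the non-compactness of nonzero Wiener--Hopf operators on the half-line. This is clean in the translation-invariant setting treated here but, as flagged in the preceding remark, would become considerably more delicate under an external magnetic field or for genuinely non-translation-invariant coefficients, where the fibered decomposition used above is no longer available in the same simple form.
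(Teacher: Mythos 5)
Your proof is correct and follows essentially the same route as the paper's: both use the reference extension $\mathring{H}_{u_0}$ and the Cayley/resolvent identity to reduce resolvent-affiliation of $\mathring{H}_u$ to a condition on $u-u_0$, and both use the finite dimensionality of the deficiency spaces to force $u-u_0$ into the edge ideal, with the asymptotic conditions \eqref{eq:vn_unitary_asymptotic} and \eqref{eq:resolvent_asymptotic} read off from the fibered description $\hsedgealgebra\simeq C_0(\RM^{d-1})\otimes\KM(L^2(\RM_+))$. The only cosmetic difference is your justification of the upgrade from $u-u_0\in\hsalgebra$ to $u-u_0\in\hsedgealgebra$ via the non-compactness of nonzero Wiener--Hopf operators in the decomposition $P_+a_bP_+ +e$, where the paper instead observes directly that a norm-continuous, fiberwise finite-rank family vanishing at infinity already lies in $C_0(\RM^{d-1})\otimes\KM$.
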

\noindent{\bf Proof.}
Assume again $N=1$ for simplicity. 
As part of the assumption,  $\mathring{H}_{u_{0}}$ is affiliated to $\hsalgebra$ hence by the last result all self-adjoint extensions of $\mathring{H}$ are affiliated to $\hsalgebra$. 
As $\hsalgebra\cong C_0(\RM^{d-1})\otimes C([-\infty,+\infty])\rtimes_{tr}
\RM$ we see that resolvent-affiliation of a self-adjoint extension $(\mathring{H}(k)_{u(k)}$ is equivalent to the statement that 
$k\mapsto (\mathring{H}(k)_{u(k)}-\imath)^{-1}$ is a $C_0$-function in $k$. This is the last statement.

Now suppose that  $\mathring{H}_{u_{0}}$ and $\mathring{H}_u$ are resolvent-affiliated self-adjoint extensions. Then  $k\mapsto (\mathring{H}(k)_{u(k)}-\imath)^{-1}-(\mathring{H}(k)_{u_0(k)}-\imath)^{-1}$ is $C_0$ in $k$. The latter is equivalent to 
$k\mapsto u(k)-u_0(k)$ being $C_0$, which is \eqref{eq:vn_unitary_asymptotic}. 
As the deficiency spaces are finite-dimensional by assumption, $u(k)-u_0(k)$ has finite rank and hence $k\mapsto u(k)-u_0(k)\in C_0(\RM^{d-1})\otimes C_0(\RM)\rtimes_{tr}\RM\cong \edgealgebra$. 

For the converse, $u-u_0\in \edgealgebra$ implies that the difference of the resolvents belong to $\edgealgebra$ hence $(\mathring{H}_u-\imath)^{-1}\in \hsalgebra$. 
\hfill $\Box$
\begin{remark}
If \eqref{eq:resolvent_asymptotic} holds then it is easy to see that for any finite interval $\Delta$ there can at most be a compact set of $k \in \RM^{d-1}$ such that  $(\mathring{H}_k)_{u(k)}$ has any spectrum in $\Delta$. Moreover, whenever $\Delta$ does not overlap the spectrum of the bulk operator then $\mathring{H}_u$ is invertible modulo $\hsedgealgebra$ in that interval. Since $\hsedgealgebra\simeq C_0(\RM^{d-1})\otimes \KM$ this means $k\in \RM^{d-1} \mapsto (\mathring{H}_k)_{u(k)}-\lambda$ is a family of self-adjoint Fredholm operators for any $\lambda$ not in the bulk spectrum. For fixed $k$ there are therefore at most isolated eigenvalues inside any bulk gap, which form continuous bands of edge modes as one varies $k$. Resolvent-affiliation on the half-space requires in particular that all bands of edge modes eventually leave any finite energy interval as $\abs{k}\to\infty$.
\end{remark}

For a complete characterization of (resolvent-affiliated) boundary conditions we therefore need to find at least one self-adjoint boundary condition for which we can assert that we have a (resolvent-affiliated) $\hsalgebra$-multiplier. Sometimes one can directly find such an operator, for example, for the Laplacian with Dirichlet boundary conditions where it is easy to compute the resolvent explicitly using the reflection principle (see Section~\ref{ssec:laplace} below), thereby providing a reference extension that is known to be a resolvent-affiliated multiplier. 

For the general case it will be useful to have a more constructive way to proceed. Instead of the half-space we now work on $\RM^d$ but add a fiducial boundary at $x_d=0$, i.e. we restrict a translation-invariant differential operator $H$ to a symmetric operator $\mathring{H}$ by restricting its domain to $C^\infty_0(\RM^{d-1}\times (\RM\setminus \{0\}))$. In that case we can still use a partial Fourier transform to decompose into one-dimensional differential operators on
$$H= \sum_{I\in\NM^d} c_I \partial^I, \qquad \mathring{H}=\int_{\RM^{d-1}}^\oplus \mathrm{d}k \, \mathring{H}(k).$$

Since $H$ is affiliated to $\bulkalgebra$ the inclusion $\bulkalgebra\hookrightarrow\interfacealgebra$ implies that it is also a $\interfacealgebra$-multiplier. Therefore we have a canonical self-adjoint extension and can prove exactly as Proposition~\ref{prop:extensions_fiberwise}:

\begin{proposition} 
	\label{prop:extensions_fiberwise2}
	Assume that $H$, $\mathring{H}$ are translation-invariant differential operators as above. We can write $H=\mathring{H}_{u_T}$ for a unique unitary $u_T$ corresponding to the so-called transparent boundary condition.
	
	If $u_T\in \mult(\interfacealgebra)$ then $\mathring{H}$ is a symmetric unbounded multiplier of $\interfacealgebra$.
\end{proposition}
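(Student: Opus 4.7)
The first claim, existence and uniqueness of $u_T$, is purely Hilbert-space: the restriction of $H$ to $C_c^\infty(\RM^d\setminus\{x_d=0\})$ coincides with $\mathring H$ on that core, so $H$ is a self-adjoint extension of the closed symmetric operator $\mathring H$. Ordinary von Neumann theory (recalled in Section~\ref{sec:sa_extensions}) then gives the unique partial isometry $u_T\colon\Ker(\mathring H^*-\imath)\to\Ker(\mathring H^*+\imath)$ with $H=\mathring H_{u_T}$. I would refer to this boundary condition as the transparent one because the operator $H$ on its original domain ``does not see'' the fiducial hypersurface $\{x_d=0\}$.

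Assuming $u_T\in\mult(\interfacealgebra)$, the goal is to apply Proposition~\ref{proposition:symmetric_multipliers} with $\Aa=\interfacealgebra$ and reference extension $H=\mathring H_{u_T}$. Of its three hypotheses, (ii) is the assumption of the proposition, while (i) holds because $H$ is a translation-invariant differential operator, hence affiliated to $\bulkalgebra$ by Proposition~\ref{prop:aff_diff_op} (or Theorem~\ref{th:resolvent_aff_elliptic}); the non-degenerate $\RM^d$-equivariant inclusion $\bulkalgebra\hookrightarrow\interfacealgebra$ induced by $C(\Omega)\hookrightarrow C(\Omega^*)$ then gives $F(H)\in\bulkalgebra\subset\mult(\interfacealgebra)$, so $H$ is an $\interfacealgebra$-multiplier. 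The real work is verifying (iii), norm-density of $\Dd_{\interfacealgebra}(\mathring H)$ in $\interfacealgebra$.

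For (iii), I would adapt the construction from the proof of Proposition~\ref{prop:extensions_fiberwise} almost verbatim, but allowing the $f$-functions to be discontinuous at $x_d=0$. Concretely, let $\Ss\subset \interfacealgebra$ be the set of integral operators on $L^2(\RM^d)$ whose kernel has the form $k(x,y)=\sum_{i=1}^n f_i(x_d)\,g_i(x-y)$, with $g_i\in C_c^\infty(\RM^d)$ and $f_i$ bounded, smooth on $\RM\setminus\{0\}$ and admitting limits at $0^\pm$ and $\pm\infty$; this forces $\Ss\subset\interfacealgebra$ because each such $f_i$ lies in $\mult(\interfacealgebra)$ by Lemma~\ref{lemma:nonsmoothelements}(ii)--(iii), the jumps at $0$ being carried by the multipliers $P_\pm$. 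A density argument analogous to the one in Proposition~\ref{prop:extensions_fiberwise}, using the dense set of generators \eqref{eq:linear_span_crossedproduct} for $\interfacealgebra$ together with the strict convergence in Lemma~\ref{lemma:nonsmoothelements}(ii), shows that $\Ss$ is norm-dense in $\interfacealgebra$. Then I would define $\Ss_0\subset\Ss$ by the additional condition that each $g_i$ vanish in a neighborhood of $0\in\RM^d$; again by pointwise approximation and Lemma~\ref{lemma:nonsmoothelements}(ii), $\Ss_0$ is still norm-dense in $\interfacealgebra$. Each element of $\Ss_0$ maps the core $C_c^\infty(\RM^d\setminus\{x_d=0\})$ of $\mathring H$ into itself, and $\mathring H$ acts on such an operator by differentiating its kernel and therefore sends $\Ss_0$ into $\Ss$. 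Consequently $\Ss_0\subset\Dd_{\interfacealgebra}(\mathring H)$, which gives condition~(iii) of Proposition~\ref{proposition:symmetric_multipliers}.

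The main obstacle I anticipate is in the density step: specifically, in convincing oneself that allowing $f_i$ with a jump at $x_d=0$ actually recovers every element of $\interfacealgebra$ in norm. Unlike the compressed half-space algebra in Proposition~\ref{prop:extensions_fiberwise}, here nothing is automatically killed by a projection $P_+$, so one genuinely needs the discontinuous multipliers $\chi_\pm(X_d)\in\mult(\interfacealgebra)$ from Lemma~\ref{lemma:nonsmoothelements} to access the full decomposition \eqref{eq:decomp_discont}. The extension from bounded smooth $f_i$ with limits at $0^\pm,\pm\infty$ to arbitrary elements of $\interfacealgebra$ via strict convergence is the only subtle point; everything else follows the template of Proposition~\ref{prop:extensions_fiberwise}.
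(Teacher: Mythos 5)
Your proposal follows essentially the same route as the paper, which proves this proposition by reducing it verbatim to the argument of Proposition~\ref{prop:extensions_fiberwise}: condition (i) of Proposition~\ref{proposition:symmetric_multipliers} comes from $H$ being affiliated to $\bulkalgebra$ and hence, via the unital equivariant inclusion $\bulkalgebra\hookrightarrow\interfacealgebra$, to $\interfacealgebra$; condition (ii) is the hypothesis on $u_T$; and condition (iii) is verified by the same $\Ss_0$-type dense subset, now with the $f_i$ allowed to jump at $x_d=0$ thanks to Lemma~\ref{lemma:nonsmoothelements}. The only cosmetic slip is writing $F(H)\in\bulkalgebra$ where one should say $F(H)\in\mult(\bulkalgebra)$ and that the extension of the inclusion to multipliers (Proposition~\ref{prop:morphism}) carries the affiliation over to $\interfacealgebra$.
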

This criterion is, for example, satisfied in all of our examples in Section~\ref{sec:examples}.

Half-space operators then arise naturally from considering interface conditions that decouple the two half-spaces:
\begin{proposition}
Let $\mathring{H}$ be the restriction of a translation-invariant bulk operator as above and assume it is a symmetric $\interfacealgebra$-multiplier.

If the positive halfspace is an invariant subspace for  $\mathring{H}_{{u}}$ in the sense that  $P_+\mathring{H}_{u}\subset \mathring{H}_{{u}}P_+$ for the halfspace projection $P_+\in \mult(\interfacealgebra)$, then $P_+ \mathring{H}_{{u}}P_+$ is a self-adjoint $\hsalgebra$-multiplier. If $\mathring{H}_{u}$ is $\interfacealgebra$-resolvent-affiliated then $P_+\mathring{H}_{u}P_+$ is $\hsalgebra$-resolvent-affiliated.
\end{proposition}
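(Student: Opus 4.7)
The plan is to exploit that the invariance condition makes $P_+$ a \emph{reducing} projection for $\mathring{H}_u$, which lets us transfer all the affiliation data from $\mathring{H}_u$ (as an $\interfacealgebra$-multiplier) to its compression.

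First I would unpack the invariance hypothesis. Taking adjoints of $P_+\mathring{H}_u \subset \mathring{H}_u P_+$ and using self-adjointness of both $\mathring{H}_u$ and $P_+$ gives the reverse inclusion, so $P_+$ commutes with $\mathring{H}_u$ in the sense of reducing subspaces. By standard spectral theory this means $P_+\Hh$ and $(1-P_+)\Hh$ are invariant for the spectral projections of $\mathring{H}_u$, the compression $T := P_+\mathring{H}_u P_+\big|_{P_+\Hh}$ is self-adjoint on $P_+\Hh$, and for every bounded Borel function $g$ one has $g(T) = P_+ g(\mathring{H}_u) P_+$ on $P_+\Hh$. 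In particular this applies to the bounded transform and to all resolvents.

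Next I would verify the two conditions of Lemma~\ref{lem-aff} for $T$ with respect to $\hsalgebra = P_+\interfacealgebra P_+$. For the bounded transform, the reducing property gives $F(T) = P_+ F(\mathring{H}_u) P_+$; since $F(\mathring{H}_u)\in\mult(\interfacealgebra)$ and $P_+ \in \mult(\interfacealgebra)$, the compression $a \mapsto P_+ a P_+$ maps $\mult(\interfacealgebra)$ into $\mult(P_+\interfacealgebra P_+) = \mult(\hsalgebra)$ (this compression is a completely positive contraction that sends multipliers of $\interfacealgebra$ to double centralizers of $\hsalgebra$ because $P_+\interfacealgebra = \interfacealgebra P_+$ holds after adjoining $P_+$ to the multiplier algebra, and $P_+$ acts as a unit on $\hsalgebra$). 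Hence $F(T)\in\mult(\hsalgebra)$. For the density condition I would use Lemma~\ref{lemma:concrete_multiplier}(ii): since $T^2 = P_+ \mathring{H}_u^2 P_+$ on $P_+\Hh$ by the reducing property, functional calculus yields $(1+T^2)^{-1} = P_+(1+\mathring{H}_u^2)^{-1}P_+$. For any $a\in\hsalgebra$, writing $a = P_+ b P_+$ with $b\in\interfacealgebra$ and using that $(1+\mathring{H}_u^2)^{-1}\interfacealgebra$ is dense in $\interfacealgebra$ (which holds because $\mathring{H}_u$ is an $\interfacealgebra$-multiplier), one obtains approximations $(1+\mathring{H}_u^2)^{-1} b_n \to b$ whose compressions $(1+T^2)^{-1}(P_+ b_n P_+) \to a$ show that $(1+T^2)^{-1}\hsalgebra$ is dense in $\hsalgebra$.

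The resolvent-affiliation claim is then immediate: if $(\mathring{H}_u+\imath)^{-1}\in\interfacealgebra$, then the reducing property gives $(T+\imath)^{-1} = P_+(\mathring{H}_u+\imath)^{-1}P_+ \in P_+\interfacealgebra P_+ = \hsalgebra$, which by Definition~\ref{def:B_resolvent_affiliated} is exactly what is needed. The main technical obstacle I anticipate is the clean identification $P_+\mult(\interfacealgebra)P_+ \subseteq \mult(\hsalgebra)$; one must check that left and right multiplication by such a compressed element really leaves $\hsalgebra$ invariant and defines an adjointable operator on it as a right Hilbert module. This should follow from the fact that $P_+$ is a projection in $\mult(\interfacealgebra)$, so the corner $P_+\interfacealgebra P_+$ is a hereditary $C^*$-subalgebra whose multiplier algebra naturally receives the corner $P_+\mult(\interfacealgebra)P_+$.
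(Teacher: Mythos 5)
Your proof is correct and its core is the same as the paper's: both exploit that the invariance hypothesis makes $P_+$ reducing, so that $\mathring{H}_u$ decomposes as a direct sum and the bounded transform (resp.\ resolvent) of the compression is the compression of the bounded transform (resp.\ resolvent), which lands in $\mult(\hsalgebra)$ (resp.\ $\hsalgebra$) because $\hsalgebra=P_+\interfacealgebra P_+$ is a corner. Where you diverge is the density condition of Lemma~\ref{lem-aff}: the paper verifies it by invoking the concrete dense subset $\Ss_0$ of smoothing integral operators constructed in the proof of Proposition~\ref{prop:extensions_fiberwise} and checking $\Ss_0\subset\Dd_{\hsalgebra}(P_+\mathring{H}P_+)$, whereas you compress the dense set $(1+\mathring{H}_u^2)^{-1}\interfacealgebra$ to the corner and use that $P_+$ commutes with the resolvent. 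Your route is more abstract and self-contained --- it would work verbatim for any reducing projection in $\mult(\interfacealgebra)$ without reference to the translation-invariant integral-kernel picture --- while the paper's route recycles machinery it needs anyway and keeps the domain of the \emph{symmetric} operator $\mathring{H}$ in view. One small caution: your parenthetical justification that $P_+\mult(\interfacealgebra)P_+\subseteq\mult(\hsalgebra)$ via ``$P_+\interfacealgebra=\interfacealgebra P_+$'' is not literally true (these are different one-sided ideals); the correct, and entirely elementary, reason is that $(P_+mP_+)(P_+bP_+)=P_+(mP_+bP_+)\in P_+\interfacealgebra P_+$ for $m\in\mult(\interfacealgebra)$ and $b\in\interfacealgebra$, together with the analogous identity on the right and adjointability via $P_+m^*P_+$.
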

\noindent{\bf Proof.} It is a standard result that $\mathring{H}_{{u}}$ is a direct sum of two self-adjoint operators defined on dense subsets of $\Ran(P_+)$ and $\Ran(P_-)$ respectively. Therefore $F(\mathring{H}_{{u}})=F(P_+\mathring{H}_{{u}}P_+)\oplus F(P_-\mathring{H}_{{u}}P_-)$ which is in $\mult(\hsalgebra)\oplus \mult(\neghsalgebra)$ since by definition $\pmhsalgebra = P_\pm \interfacealgebra P_\pm$. To complete the proof that $P_+\mathring{H}_{{u}}P_+$ is an $\hsalgebra$-multiplier one merely needs to note that the dense subset $\Ss_0$ from the proof of Proposition~\ref{prop:extensions_fiberwise} is contained in $\Dd_{\hsalgebra}(P_+\mathring{H}P_+)$.
\hfill $\Box$ 

Note that $\interfacealgebra$-resolvent-affiliation can be characterized exactly by the analogous condition as Proposition~\ref{prop:res_aff_vn}. If the bulk Hamiltonian $H$ is $\bulkalgebra$-resolvent-affiliated then the $\interfacealgebra$-resolvent-affiliated and ultimately the $\hsalgebra$-resolvent-affiliated extensions can therefore be read off by comparing the asymptotic behavior of the von Neumann unitary with that of $u_T$.
\begin{remark} 
	Decoupling matching conditions obviously need not exist. For example, the restriction of the one-dimensional momentum operator $-i\partial$ on $L^2(\RM)$ to $C^\infty_c(\RM\setminus \{0\})$ cannot have self-adjoint extensions which decouple the two half lines, as the momentum operator restricted to the positive halfspace has no self-adjoint extensions.
	In general, the deficiency subspaces each have the finite-dimensional subspace $\Nn_{k,\pm}=P_+ \Ker(\mathring{H}^*_k \pm \imath)$. For fixed $k$, decoupling matching conditions exist if and only if $\mathrm{dim}(\Nn_{k,+})=\mathrm{dim}(\Nn_{k,-})$, which is clearly the case if and only if one and thus both half-space restrictions individually have self-adjoint extensions. To see this, note that the Cayley transform $\Cc(\mathring{H})$ is block-diagonal w.r.t. the decomposition $P_+\oplus P_-$ and hence the resolvent of $\mathring{H}_u$ commutes with $P_\pm$ if and only if the fiberwise finite-rank operator $u$ is also block-diagonal.
\end{remark}

\subsection{Bulk-boundary correspondence}
\label{sec:applications}

We have the bulk-edge exact sequence $\hsedgealgebra \hookrightarrow \hsalgebra \stackrel{q_+}{\twoheadrightarrow}\bulkalgebra$ for the positive halfspace. Throughout the explicit examples we will set $\Omega=\{*\}$, i.e. we there will be no disorder and also vanishing magnetic field, but the abstract statements cover all of those cases as well. In particular, all K-theoretic statements we derive are robust to introduction of homogeneous disorder which breaks the translation-invariance. 

A half-space Hamiltonian is a self-adjoint $\hsalgebra$-multiplier $\hat{H}$. We call $\hat H$ \emph{asymptotically invertible} if it is invertible modulo $\hsedgealgebra$. In particular, the bulk operator $H:=q_+(\hat H)$ then has a spectral gap around $0$. Such $\hat H$ defines the edge class $[\hat H]_1\in K_1(\edgealgebra)$.

The goal of bulk-edge correspondence is to derive as much information as possible about the edge invariants from knowledge about the bulk operators.  For this we can rely on the following strategies developed in Section~\ref{sec:ktheory}:
\begin{enumerate}
    \item[(i)] Simple bulk-edge correspondence: When the conditions for Theorem~\ref{th:bbc_strongly_affiliated} are satisfied then the edge invariant $[\hat{H}]_1$ is determined by applying the boundary map to the class $[H]_0=[P_{\leq 0}(H)]_0-[s(P_{\leq 0}(H))]_0$ of the bulk Fermi projection.
	\item[(ii)] Relative bulk-edge correspondence: If $\hat{H},\hat{H}'$ are  invertible modulo $\hsedgealgebra$ and $\hsalgebra$-comparable then Theorem~\ref{th:rel_bbc} applies and
	$$[\hat{H}]_1-[\hat{H}']_1 = \partial [H, H']_0 = \partial([P_{\leq 0}(H),P_{\leq 0}(H')]) \,.$$
	This applies when $\hat{H}, \hat{H}'$ have the same boundary conditions and are a relatively $\hat{\Aa}$-compact perturbation of each other.
	\item[(iii)] Extension theory: If $\hat{H},\hat{H}'$ are locally $\hsedgealgebra$-comparable self-adjoint extensions of a common symmetric $\hsalgebra$-multiplier corresponding to von Neumann unitaries $u,v$, respectively, then 
	$$[\hat{H}]_1-[\hat{H}']_1 = [1+uv^*-vv^*]_1\, .$$
	This allows us to compare different boundary conditions for the same symmetric operator.
\end{enumerate} 
The methods (ii) and (iii) only make statements over the differences of edge invariants, to obtain complete information one will need to use them to reduce to a system where the simple bulk-edge correspondence applies or is simple enough to compute the edge invariant directly.
\begin{remark}
	As mentioned before, the simple bulk-edge correspondence (i) is just a special case of (ii) which applies precisely when $\hat{H}'$ can be taken to be a scalar matrix, whence $[\hat{H}']_1=0$. 
\end{remark}
Applying the theory requires several analytical/algebraic tasks from verifying affiliation to computing von Neumann unitaries and finding asymptotically invertible extensions etc. Importantly the required algebraic conditions can and need to fail in some cases, thus one needs to check them very carefully. To avoid the technical overhead we will mostly restrict ourselves where $\hat{H}$ is $\hsalgebra$-resolvent-affiliated $(\hat{H}+\imath)^{-1}\in M_N(\hsalgebra)$, which is simultaneously also the case where one can formulate the most robust bulk-edge correspondence.

We want to consider a class of bulk Hamiltonians of the form
$$H_V = D + V$$
where $D$ is a $M_N(\CM)$-valued invertible elliptic differential operator and $V$ is a matrix with entries in $\mult(\bulkalgebra)$. The kinetic part (contained in $D$) is fixed, but one can still add a potential or tune some parameters (mass terms, coupling strengths etc.). For simplicity we assume $V$ is bounded; in principle our formalism can handle relatively $\bulkalgebra$-compact perturbations, such as differential operators of lower order than $D$, but that would lead to substantial complications when considering the eventual half-space problems. We impose the resolvent-affiliation $(D+\imath)^{-1}\in M_N(\bulkalgebra)$ as is appropriate for elliptic $D$ (see Theorem~\ref{th:resolvent_aff_elliptic}) which also implies the resolvent-affiliation $(H_V+\imath)^{-1}\in M_N(\bulkalgebra)$.

The corresponding half-space models then are to take the form
\begin{equation}
	\label{eq:halfspace_model}
	\hat{H}_{\hat{V},u} = \mathring{D}_u + \hat{V}
\end{equation}
where $\mathring{D}$ is the symmetric restriction of $D$ to the positive halfspace of which one chooses a self-adjoint extension corresponding to a von Neumann unitary $u$. Finally one can add a potential $\hat{V}\in C([0,\infty])\otimes M_N(\CM)\otimes \mult(\Aa_b)$ which converges to a scalar matrix at infinity.

Our main result about this class of Hamiltonians is as follows:

\begin{theorem}
	\label{th:bulk_edge_pluscorrection}
	Let $\hat{H}_{\hat{V},u}$ be of the form  $$\hat{H}_{\hat{V},u}= \mathring{D}_u + \hat{V}$$ where $\mathring{D}$ is a symmetric $\hsalgebra$-multiplier, $V=V^*\in \Mm(\hsalgebra)$. Assume there exists a von Neumann unitary $u_0 \in \Mm(\hsalgebra)$ which determines a $\hsalgebra$-resolvent-affiliated self-adjoint extension $(\hat{H}_{\hat{V},u_0}+\imath)^{-1}\in \hsalgebra \otimes \KM$ and that $H_V:= q(\hat{H}_{\hat{V},u})$ is invertible. For any other $\hsalgebra$-resolvent-affiliated $\hat{H}_{\hat{V},u}$ one has
	
	$$[\hat{H}_{\hat{V},u}]_1 = \partial_+([H_V, D]_0)+ [\mathring{D}_{u_0}]_1 + [1+uu_0^*-u_0u_0^*]_1.$$
\end{theorem}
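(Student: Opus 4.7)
The strategy is to introduce the reference Hamiltonian $\mathring{D}_{u_0}$ and the ``clean kinetic'' Hamiltonian $\mathring{D}_u$ (same boundary condition as $\hat{H}_{\hat{V},u}$ but without the potential) as intermediate operators, and then express $[\hat{H}_{\hat{V},u}]_1$ as a telescoping sum of three pieces, each of which is controlled by one of the tools developed in the paper:
\begin{itemize}
\item a relative bulk-edge piece $\bigl[\hat{H}_{\hat{V},u},\mathring{D}_u\bigr]_1$ handled by Theorem~\ref{th:rel_bbc},
\item a boundary-condition piece $\bigl[\mathring{D}_u,\mathring{D}_{u_0}\bigr]_1$ handled by Proposition~\ref{prop:comparison_saext_ktheory},
\item the absolute edge class $[\mathring{D}_{u_0}]_1$ of the reference extension.
\end{itemize}

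The first step is to verify that all relevant operators are invertible modulo $\hsedgealgebra$ so that all absolute $K_1$-classes make sense and Lemma~\ref{lemma:properteis_relative_gapless}(ii) applies to convert relative invariants into differences of absolute ones. Resolvent-affiliation of $\hat{H}_{\hat{V},u_0}$ and of $\hat{H}_{\hat{V},u}$ combined with invertibility of $H_V=q(\hat{H}_{\hat{V},u})$ (and of $D=q(\mathring{D}_{u_0})$, which is part of the standing hypothesis on $D$) shows that both $\hat{H}_{\hat{V},u}$ and $\mathring{D}_{u_0}$ are invertible modulo $\hsedgealgebra$. An application of Proposition~\ref{prop-res-perturb} with the relatively compact bounded perturbation $\hat V\in\Mm(\hsalgebra)$ and the resolvent $(\mathring{D}_{u_0}+\imath)^{-1}\in \hsalgebra$ further shows that $\mathring{D}_u$ (and its perturbation $\hat H_{\hat V,u}$) is also $\hsalgebra$-resolvent-affiliated and hence invertible modulo $\hsedgealgebra$.

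The second step applies Theorem~\ref{th:rel_bbc} to the pair $(\hat{H}_{\hat{V},u},\mathring{D}_u)$. These two operators are $\hsalgebra$-comparable by Proposition~\ref{prop-res-perturb}(ii), since their difference $\hat V$ times $(\mathring{D}_u+\imath)^{-1}\in\hsalgebra$ lies in $\hsalgebra$. Their bulk images $H_V$ and $D$ are both invertible, so Theorem~\ref{th:rel_bbc} yields $\bigl[\hat{H}_{\hat{V},u},\mathring{D}_u\bigr]_1=\partial_+([H_V,D]_0)$.

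The third step applies Proposition~\ref{prop:comparison_saext_ktheory} to the pair $(\mathring{D}_u,\mathring{D}_{u_0})$. Both extensions are $\hsalgebra$-resolvent-affiliated by hypothesis, so by Proposition~\ref{prop:res_aff_vn} one has $u-u_0\in\hsedgealgebra$, which is precisely the hypothesis needed to conclude $\bigl[\mathring{D}_u,\mathring{D}_{u_0}\bigr]_1=[1+uu_0^*-u_0u_0^*]_1$ in $K_1(\hsedgealgebra)$.

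The final step is to collect the pieces. Since all three operators $\hat H_{\hat V,u}$, $\mathring{D}_u$, $\mathring{D}_{u_0}$ are invertible modulo $\hsedgealgebra$, Lemma~\ref{lemma:properteis_relative_gapless}(ii) gives $[A,B]_1=[A]_1-[B]_1$ for each pair, and the additivity $[A,C]_1=[A,B]_1+[B,C]_1$ yields
\[
[\hat{H}_{\hat{V},u}]_1 = \bigl[\hat{H}_{\hat{V},u},\mathring{D}_u\bigr]_1+\bigl[\mathring{D}_u,\mathring{D}_{u_0}\bigr]_1+[\mathring{D}_{u_0}]_1,
\]
which upon substituting the two identities above is the claimed formula. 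The step most likely to require care is verifying $\hsalgebra$-comparability of $\hat H_{\hat V,u}$ with $\mathring D_u$ (rather than merely $\hsalgebra$-resolvent-comparability), since Theorem~\ref{th:rel_bbc} genuinely needs the stronger condition on bounded transforms; here this is not actually an obstacle because $\hat V$ is bounded, but one must be careful to invoke the full statement of Proposition~\ref{prop-res-perturb}(ii) rather than just the resolvent version.
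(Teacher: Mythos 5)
Your proposal is correct and follows essentially the same route as the paper's proof: telescoping through the intermediate operators $\mathring{D}_u$ and $\mathring{D}_{u_0}$, applying Theorem~\ref{th:rel_bbc} to the relatively compact perturbation $\hat V$ and Proposition~\ref{prop:comparison_saext_ktheory} to the change of boundary condition. Your version is in fact somewhat more careful than the paper's about verifying invertibility modulo $\hsedgealgebra$ and the distinction between $\hsalgebra$-comparability and mere resolvent-comparability.
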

\noindent{\bf Proof.} Due to the resolvent-affiliation any perturbation $\hat{V}$ is relatively $\hsalgebra$-compact, hence for equal boundary condition
$$[\hat{H}_{\hat{V},u}]_1-[\mathring{D}_{u}]_1  = \partial_+([H_V,D]_0)$$
by Theorem~\ref{th:rel_bbc} (here we use that $D$ is invertible by assumption). If we instead want to compare with $\mathring{D}_{u_0}$ then apply Proposition~\ref{prop:comparison_saext_ktheory} to obtain
\begin{equation}\label{eq-jk-bdy-cond1}
	[\mathring{D}_u]_1 - [\mathring{D}_{u_0}]_1 = [1+u u_0^*-u_0u_0^*]_1 \in K_1(\hsedgealgebra).
\end{equation}
\hfill $\Box$

In our examples in Section~\ref{sec:examples} we will consider different translation-invariant operators $D$ and parametrized families of self-adjoint boundary conditions (e.g., arising from local boundary equations). Identifying which of the corresponding extensions are resolvent-affiliated and the resulting corrections to the bulk–boundary correspondence is then intricate but tractable in principle, since then all computations reduce to computations with families of one-dimensional differential operators. Both the relative bulk invariant $\partial_+([H_V, D]_0)$ as well as the correction $[1+uu_0^*-u_0u_0^*]_1$ coming from the boundary condition can be computed without knowing the full spectral decomposition of the halfspace operators. To determine from this information the boundary invariant $[\hat{H}_{\hat{V},u}]_1$ one then merely needs to fix the normalization by computing the edge invariant for the model Hamiltonian $\mathring{D}_{u_0}$ for any choice of resolvent-affiliated boundary condition.

\begin{remark}
	
Denote by $\Uu$ the set of von Neumann unitaries for which $\mathring{D}_u$ is self-adjoint and $\Uu_R\subset \Uu$ those for which $\mathring{D}_u$ is resolvent-affiliated. The norm-topology on this set is equivalent to the norm-resolvent topology of the corresponding self-adjoint extensions, since
$$\norm{u-v} = \frac{1}{2}\norm{(\mathring{D}_u+\imath)^{-1} - (\mathring{D}_u+\imath)^{-1}}, \quad \forall u,v\in \Uu.$$
The set $\Uu_R$ can have infinitely many connected components. Fixing any $u_0\in \Uu_R$ to serve as a reference the correction term $[1+uu_0^*-u_0u_0^*]_1$ in \eqref{eq-jk-bdy-cond1}
for each $u\in \Uu_R$ depends only the connected component of $u$ in $\Uu_R$. Conversely, jumps of $[1+uu_0^*-u_0u_0^*]_1$ can only happen when resolvent-affiliation fails. Given a parametrized family of self-adjoint boundary conditions one can in principle work out the connected components and the corresponding corrections to bulk-boundary correspondence to obtain a complete classification of the resolvent-affiliated boundary conditions. Moreover, this classification remains unchanged under bounded perturbations by Proposition~\ref{prop:extensions_perturbation_ktheory}.
\end{remark}

We can similarly also consider interface models, which plays out almost exactly the same. Here we have interface Hamiltonians affiliated to $\interfacealgebra$ of the form
$$\hat{H}_{\hat{V}, u} = \mathring{D}_u + \hat{V}$$
where $\hat{V}\in \Mm(\interfacealgebra)$ and $\mathring{D}$ is the symmetric restriction of the formal differential operator $D$ to $\RM^d\setminus (\RM^{d-1}\times \{0\})$ as considered as in Section~\ref{sec:aff_interface}. Such a Hamiltonian models an interface between asymptotic bulk Hamiltonians $H_{V_\pm}$ where $V_\pm$ is the limit of $\hat{V}$ at $\pm \infty$. At the fiducial line $x_d=0$ one can enforce non-trivial matching conditions by the choice of self-adjoint extension, as is sometimes necessary to model sharp boundaries where two materials meet. There is always a preferred extension $u_T$, the transparent boundary condition, determined by $D=\mathring{D}_{u_T}$.
\begin{theorem}
	\label{th:bulk_interface_pluscorrection}
	Let $\hat{H}_{\hat{V},u}$ be of the form $$\hat{H}_{\hat{V},u}= \mathring{D}_u + \hat{V}$$ where $\mathring{D}$ is a symmetric $\interfacealgebra$-multiplier and $V=V^*\in \mult(\interfacealgebra)\otimes M_N(\CM)$. Assume that there exists a von Neumann unitary $u_T\in \mult(\interfacealgebra) \otimes M_N(\CM)$ (the transparent matching condition) such that $D:=D_{u_T}$ is $\bulkalgebra$-resolvent-affiliated. 
    
    If $H_{V_\pm}:= q_\pm(\hat{H}_{\hat{V},u})=D+ V_\pm$ are invertible $\Aa_b$-multipliers then comparing any resolvent-affiliated matching condition $u$ with $u_T$ one has
	
	$$[\hat{H}_{\hat{V},u}]_1 = \partial_+([H_{V_-},H_{V_+}]_0)+ [uu_T^*-u_Tu_T^*]_1.$$
\end{theorem}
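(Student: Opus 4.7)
The plan is to reduce the statement to two building blocks already established: the extension-theoretic comparison of Proposition~\ref{prop:comparison_saext_ktheory} for the change of matching condition, and the relative bulk-boundary correspondence Theorem~\ref{th:rel_bbc} applied to the interface exact sequence $\edgealgebra \hookrightarrow \interfacealgebra \stackrel{q_-\oplus q_+}{\twoheadrightarrow} \bulkalgebra \oplus \bulkalgebra$. The overall strategy is entirely parallel to the proof of Theorem~\ref{th:bulk_edge_pluscorrection}, with the transparent matching $u_T$ playing the role of the reference extension $u_0$ and the bulk reference $D$ now serving as a common asymptotic reference for both sides of the interface.

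First I would compare $\hat H_{\hat V,u}$ with the transparent extension $\hat H_{\hat V,u_T} = D + \hat V$. Both share the same potential and the same asymptotic bulk operators $q_\pm(\hat H_{\hat V,u}) = q_\pm(\hat H_{\hat V,u_T}) = H_{V_\pm}$. The reference $D$ is $\bulkalgebra$-resolvent-affiliated by assumption, hence $\interfacealgebra$-resolvent-affiliated via the inclusion $\bulkalgebra \hookrightarrow \interfacealgebra$ of Lemma~\ref{lemma:nonsmoothelements}(iii), and this property is preserved under the bounded symmetric perturbation $\hat V$ by Proposition~\ref{prop-res-perturb}. Equality of asymptotes together with the assumed resolvent-affiliation of $\hat H_{\hat V,u}$ forces $u - u_T \in \edgealgebra$ via the Cayley-transform identity \eqref{eq:resolvent_cayley}. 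Proposition~\ref{prop:comparison_saext_ktheory}, supplemented by Proposition~\ref{prop:extensions_perturbation_ktheory_rel_bounded} to absorb the bounded perturbation $\hat V$, then yields
$$[\hat H_{\hat V, u}]_1 - [\hat H_{\hat V, u_T}]_1 = [1+uu_T^* - u_T u_T^*]_1 \in K_1(\edgealgebra).$$

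It remains to show $[\hat H_{\hat V, u_T}]_1 = \partial_+([H_{V_-}, H_{V_+}]_0)$. For this I would apply Theorem~\ref{th:rel_bbc} to the interface exact sequence with $\hat H = \hat H_{\hat V, u_T}$ and reference $\hat H' = D$, the latter viewed as an $\interfacealgebra$-multiplier via $\bulkalgebra \hookrightarrow \interfacealgebra$. The two multipliers differ by the bounded $\hat V \in \mult(\interfacealgebra)$, so they are $\interfacealgebra$-comparable by Proposition~\ref{prop-res-perturb}, while $q(\hat H) = (H_{V_-}, H_{V_+})$ is invertible by hypothesis and $q(\hat H') = (D, D)$ is invertible since $D$ is. Theorem~\ref{th:rel_bbc} then identifies
$$[\hat H_{\hat V,u_T}, D]_1 = \partial\bigl([(H_{V_-}, H_{V_+}),(D,D)]_0\bigr).$$
Since $D$ is invertible, $[D]_1 = 0$ and Lemma~\ref{lemma:properteis_relative_gapless}(ii) reduces the left-hand side to $[\hat H_{\hat V,u_T}]_1$. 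On the right-hand side, the diagonal appearance of $D$ collapses the pair class to the relative bulk-difference class $[H_{V_-}, H_{V_+}]_0$, under which the interface boundary map reduces precisely to $\partial_+$. Combining with the first step yields the claim.

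The delicate step will be the formulation of the pair class $[H_{V_-}, H_{V_+}]_0$ and the identification of the corresponding boundary map $\partial_+$: although both $H_{V_\pm}$ are multipliers of the \emph{same} algebra $\bulkalgebra$, the difference $V_+ - V_-$ need only lie in $\mult(\bulkalgebra)$, so the class lives most naturally in the pair-algebra picture $K_0(\PM(\mult(\bulkalgebra),\bulkalgebra))/t_* K_0(\mult(\bulkalgebra))$ rather than as a standard-picture difference in $K_0(\bulkalgebra)$. Making the collapse step rigorous requires invoking the splitting \eqref{eq:K-pair-splitting} together with naturality of the boundary map under the diagonal embedding $\bulkalgebra \hookrightarrow \bulkalgebra \oplus \bulkalgebra$. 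Step~1, by contrast, is entirely parallel to the halfspace case and should present no additional difficulty.
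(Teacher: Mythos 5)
Your proposal is correct and follows essentially the same route as the paper's proof: reduce to the transparent matching condition via the extension-theoretic comparison (the paper invokes Proposition~\ref{prop:extensions_perturbation_ktheory}, the bounded-perturbation corollary of the result you cite), then apply Theorem~\ref{th:rel_bbc} to the interface exact sequence with $D$ as the common reference and collapse the pair class. The one detail you leave implicit that the paper spells out is the decomposition $\partial = (\partial_+\oplus 0)-(0\oplus\partial_+)$, where the minus sign on the $q_-$ component comes from the orientation reversal in the corresponding Wiener--Hopf extension; this is precisely the ``delicate step'' you flag at the end.
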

\noindent{\bf Proof.}
We have
$$[\hat{H}_{\hat{V},u}]_1 = [\hat{H}_{\hat{V},u_T}]_1 + [u u_T^*-u_Tu_T^*]_1$$
by Proposition~\ref{prop:extensions_perturbation_ktheory} which reduces the computation to the transparent case.

Note that $\hat{H}_{\hat{V},u_T}=D+\hat{V}$ is $\interfacealgebra$-resolvent-affiliated for any bounded multiplier $\hat{V}$ due to $\bulkalgebra$-resolvent-affiliation of $D$. We can therefore apply the relative bulk-edge correspondence Theorem~\ref{th:rel_bbc} under the exact sequence
$$0 \to \edgealgebra \to \interfacealgebra \to \bulkalgebra\oplus \bulkalgebra \to 0$$ to yield
$$[\hat{H}_{\hat{V},u_T}]_1-[D]_1 = \partial[P_{\leq 0}(H_{V_+})\oplus P_{\leq 0}(H_{V_-}), P_{\leq 0}(D)\oplus P_{\leq 0}(D)]_0$$
for the boundary map $\partial: K_0(\bulkalgebra\oplus \bulkalgebra)\to K_1(\Ee)$.
Note that $[D]_1=0$ since $D$ is invertible. The boundary map satisfies
$$\partial =(\partial_+ \oplus 0) - (0\oplus \partial_+)$$
with the homomorphism $\partial_+: K_0(\bulkalgebra)\to K_1(\hsedgealgebra)\simeq K_1(\edgealgebra)$ since one can consider the two copies of $\bulkalgebra$ separately and the minus sign appears, since changing the orientation of the translation action in a Wiener-Hopf extension induces a minus sign for the boundary maps. This gives
\begin{align*}
	&\partial[P_{\leq 0}(H_{V_+})\oplus P_{\leq 0}(H_{V_-}), P_{\leq 0}(D)\oplus P_{\leq 0}(D)]_0 \\ &=\partial_+[P_{\leq 0}(H_{V_+}), P_{\leq 0}(D)]_0 - \partial_+[P_{\leq 0}(H_{V_-}), P_{\leq 0}(D)]_0 \\
	&=\partial_+([P_{\leq 0}(H_{V_+}), P_{\leq 0}(D)]_0 - [P_{\leq 0}(H_{V_-}), P_{\leq 0}(D)]_0 \\
	&=\partial_+([P_{\leq 0}(H_{V_+}), P_{\leq 0}(D)]_0 + [ P_{\leq 0}(D),P_{\leq 0}(H_{V_-})]_0 \\
	&=\partial_+[P_{\leq 0}(H_{V_+}), P_{\leq 0}(H_{V_-})]_0\, .
\end{align*}
\hfill $\Box$

The correction vanishes by definition for transparent boundary conditions, in the case where a domain wall is produced by a continuous variation of a potential  the interface invariant is therefore entirely determined by the bulk Hamiltonians.
\begin{remark}
	Halfspace models are a special case of interfaces where the boundary condition decouples both sides. Since the boundary condition can change the edge invariant in the halfspace setup it is therefore easy to see that there are also non-trivial matching conditions which lead to a non-vanishing corrections $[u u_T^*-u_Tu_T^*]_1$.
	
\end{remark}

\newcommand{\ra}{a}
\section{Examples}
\label{sec:examples}
In this section we consider translation-invariant models in two dimensions with constant coefficients with the aim to obtain explicit expressions for the bulk and edge invariants. 
Let us emphasize, however, that our K-theoretic results apply equally well to aperiodic models and tell us in particular that the observed bulk- and edge invariants are stable under arbitrary disorder that does not close the bulk spectral gap.

We start by recalling a fundamental result about certain pairings of $K$-theory classes with cyclic cohomology classes. 
The exact sequence underlying the bulk-edge correspondence is equivalent to the Wiener-Hopf extension \eqref{eq-SEShalf}, hence its boundary maps are isomorphisms. Moreover, when one derives numerical invariants using the pairing with cyclic cohomology one has the following duality:
\begin{theorem}[{\cite[Theorem 4.5.3]{SSt}}]
The boundary map $\partial_+: K_0(\bulkalgebra)\to K_1(\hsedgealgebra)$ is an isomorphism with 
$$\langle \Ch_d, x\rangle = \langle \Ch_{d-1}, \partial_+(x)\rangle, \qquad \forall x \in K_0(\bulkalgebra)$$
for the Chern cocycles defined from the $\RM^d$- and $\RM^{d-1}$-actions respectively (cf. Example~\ref{ex:chern_cocycles}).
\end{theorem}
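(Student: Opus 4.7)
The plan is to reduce the statement to the standard Toeplitz/Wiener-Hopf index theorem. After a Landau-type gauge choice for the twisting cocycle one obtains the iterated-crossed-product decomposition
$$\bulkalgebra \;\cong\; \bigl(C(\Omega)\rtimes_{\alpha_\parallel,\gamma_\parallel}\RM^{d-1}\bigr)\rtimes_{\tilde{\alpha},\tilde{\gamma}}\RM,$$
in which the outer $\RM$ acts by translation in direction $e_d$, i.e.\ perpendicular to the boundary. Under this decomposition $\hsalgebra$ is the Toeplitz-type algebra for the outer $\RM$-action, and the Morita equivalence $\hsedgealgebra \sim C(\Omega)\rtimes_{\alpha_\parallel,\gamma_\parallel}\RM^{d-1}$ implemented by $\KM(L^2(\RM_+))$ identifies $\partial_+$ with the Pimsner-Voiculescu/Connes-Thom map for the outer $\RM$-action. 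In particular $\partial_+$ is an isomorphism.

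For the cocycle identity I would exploit the cup product on cyclic cohomology of iterated crossed products. The derivations $\partial_1,\ldots,\partial_d$ split into tangential derivations $\partial_1,\ldots,\partial_{d-1}$ and the transverse derivation $\partial_d$, and the top Chern cocycle factorises as
$$\Ch_d \;=\; \Ch_{d-1}\,\#\,\xi_d,$$
where $\xi_d$ is the degree-one cocycle $\xi_d(a_0,a_1)=\tau_\parallel(a_0\,\partial_d a_1)$ assembled from the transverse derivation and the invariant trace on the tangential factor. Naturality of the cup product with respect to the Wiener-Hopf boundary map reduces the identity to the degree-one statement
$$\langle \xi_d,\,y\rangle \;=\; \langle \mathrm{wind},\,\partial_+(y)\rangle,$$
that is, the classical one-dimensional Toeplitz index theorem applied fibrewise over the tangential directions. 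This one-dimensional statement is verified on a Bott/exponential generator and extended by naturality and density; combining it with the factorisation above gives
$$\langle \Ch_d,x\rangle \;=\; \langle \Ch_{d-1}\#\xi_d,\,x\rangle \;=\; \langle \Ch_{d-1},\,\partial_+(x)\rangle.$$

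The main obstacle is keeping each of these ingredients compatible with the present framework of twisted crossed products with unbounded multipliers. One needs the higher Chern cocycles to be well-defined cyclic cocycles with the Fermi projection $P_{\leq 0}(H)$ in their domain of definition (requiring suitable smoothness under the $\RM^d$-action), the Morita equivalence must intertwine the respective edge Chern cocycles up to the correct normalisation, and the cup product must commute with $\partial_+$ in the twisted setting. All three of these technical pieces are assembled rigorously in the cited reference, which allows the argument above to be carried out cleanly.
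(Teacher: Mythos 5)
The paper does not actually prove this statement: it is imported verbatim from \cite[Theorem 4.5.3]{SSt}, so there is no in-text argument to compare yours against line by line. Judged on its own, your strategy is the standard one and is essentially the route of the cited reference and its predecessors: the Wiener--Hopf extension \eqref{eq-half-WH} implements the Connes--Thom isomorphism for the outer $\RM$-action, which yields that $\partial_+$ is an isomorphism, and the duality of pairings follows by factorising $\Ch_d$ as a cup product of the tangential $\Ch_{d-1}$ with a transverse one-cocycle and dualising the Connes--Thom map on cyclic cohomology.

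Two points in your sketch carry essentially all of the content and should not be presented as formal naturality. First, ``naturality of the cup product with respect to the Wiener--Hopf boundary map'' is not an off-the-shelf fact; what you need is the Elliott--Natsume--Nest-type statement that the transpose of $\partial_+$ on cyclic cohomology sends a cocycle $\tau$ on the ideal to $\tau\#\xi_d$ on the quotient, and that \emph{is} the theorem in degree-shifted form. Moreover, your reduction to ``the classical one-dimensional Toeplitz index theorem applied fibrewise'' conflates the discrete ($\ZM$-action, genuine Toeplitz) and continuous ($\RM$-action, Wiener--Hopf) settings: in the continuous case the degree-one statement is the quantization of $\Tt_\parallel\otimes\mathrm{Tr}$ paired against the exponential of a lift, which needs its own argument (the edge trace involves the operator trace on $\KM(L^2(\RM_+))$, not a fibrewise winding number in any literal sense). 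Second, the claimed equality is an exact numerical identity, so the normalisation constants $c_d$, $c_{d-1}$ in \eqref{eq:even-pairing} and \eqref{eq:odd-pairing} must be matched under the cup-product factorisation; this is routine bookkeeping but cannot be omitted when the theorem is used, as here, to equate $\sigma_b$ and $\sigma_e$ on the nose. Your closing worry about $P_{\leq 0}(H)$ lying in the domain of the cocycle is, by contrast, not needed for the theorem as stated: it concerns arbitrary $x\in K_0(\bulkalgebra)$, and every class admits a representative in the smooth subalgebra.
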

We have described the algebras 
relevant for translation invariant differential operators in the introduction and Section~\ref{sec:aff_interface}. Let us briefly recapitulate the situation for the half-sided problem.
The bulk algebra is $\bulkalgebra=\CM \rtimes \RM^2$ (with trivial action), or, for $N\times N$-matrix valued operators $M_N(\bulkalgebra)$. Under the two-dimensional Fourier transformation $\CM \rtimes \RM^2$ is isomorphic to $C_0(\RM^2)$ and so we have $K_0(\bulkalgebra)\cong \ZM$ with the isomorphism induced by the top Chern cocycle 
$$K_0(\bulkalgebra) \ni x \mapsto \langle \Ch_2, x\rangle \in \ZM.$$
For a strongly affiliated bulk Hamiltonian $H$ with spectral gap around $0$ this integer is (up to physical constants) equal to the bulk Hall conductivity
$$\sigma_b =  \langle \Ch_2, [H]_0 \rangle $$
the right-hand side coinciding with \eqref{eq:chern}. Here we recall that a matrix valued operator which is affiliated to $\bulkalgebra$ is strongly affiliated if its bounded transform belongs to $M_N((\CM\rtimes\RM^2)^\sim)$ which is isomorphic to $M_N(C(\SM^2))$, as the $2$-sphere is the one-point compactification of $\RM^2$. For some examples below this is not the case and we need to consider relative pairings between 
$M_N(\bulkalgebra)$-comparable Hamiltonians, see Section~\ref{ssec:massive_dirac}.

The edge algebra $\hsedgealgebra$ is isomorphic to $C_0(\RM)\otimes\Kk(L^2(\RM_+))$ so that $K_1(\hsedgealgebra)\simeq \ZM$. The isomorphism is obtained by pairing the winding number cocycle $\Ch_{1}= \mathrm{Wind}$ where 
$$\langle \mathrm{Wind},[U]_1\rangle = \mathrm{Wind}(U^*,U):= \frac{1}{2\pi \imath} \int_{\RM}\Tr((U^*(k)-\one)\partial_k U(k)) \mathrm{d} k$$
defined for any unitary function $U\in \one + C_0(\RM)\otimes \KM$ representing the class $[U]_1\in K_1(C_0(\RM))$. 
If $\hat{H}$ is a $\hsalgebra$-multiplier which is invertible modulo $\hsedgealgebra$ one can write the pairing in the form (cf. \cite{KRS})
$$\sigma_e = \langle \Ch_{1}, [\hat{H}]_1\rangle = \frac{1}{\abs{\Delta}} \hat{\Tt}(P_{\Delta }(\hat{H}_u) [\hat{H}_u, X_1]), \qquad \in \ZM$$
where the right-hand side physically corresponds to the conductance of the edge states in a small enough spectral interval $\Delta$ around $0$. 

We consider translation-invariant boundary conditions, in which case $\hat H$ admits a partial Fourier transform
$$\hat H = \int_{\RM}^\oplus \hat{H}_k\; d k.$$
Invertibility mod $\hsedgealgebra$ means that all $\hat{H}_k$ are Fredholm operators and invertible for large $\abs{k}$. Therefore the family admits a well-defined spectral flow $\SF(k\in \hat{H}_k)\in \ZM$ counting the number of eigenvalues passing through $0$ from below minus those which pass $0$ from above. The pairing of the winding number cocycle with $[\hat{H}]_1$ coincides with the {\it spectral flow} of a family of Fredholm operators \cite[Proposition 2.6]{Wahl}
$$\langle \mathrm{Wind}, 
[\hat{H}]_1\rangle = 
\SF(k\mapsto \hat{H}_k).$$
If the dependence of the eigenvalues of $\hat{H}_k$ around $0$ on $k$ is regular enough, e.g. differentiable, then the spectral flow counts precisely the signed number of eigenvalues passing through the spectral value $0$.

Boundary or interface effects will be, as in Section~\ref{sec:aff_interface}, obtained by restricting a bulk Hamiltonian $H$ to a half-space or to the region away from a fiducial interface line and extending the restriction to a self-adjoint operator through boundary or matching conditions. The Hatsugai relation follows from the K-theoretic equality
$[\hat{H}]_1 = \partial_+([H]_0)$, provided the latter is well defined.
We will go beyond this and consider relative bulk-edge correspondence as well as corrections induced by the boundary condition. In this context another numerical index arises, namely 
$$\sigma_{cor}:=\langle \mathrm{Wind}, [1+u v^*- e_+]_1\rangle,$$ 
the pairing of the winding number cocycle with the relative $K_1$-class given by the two von Neumann unitaries $u,v\in \mult(\hsedgealgebra)$ defining the boundary conditions (here $e_+=uu^*=vv^*$ is the projection to one of the deficiency subspaces which is finite-dimensional).

\subsection{Self-adjoint extensions using boundary triples}
According to von Neumann's theory, the self-adjoint extensions of a closed symmetric operator $ H$ with equal deficiency indices are parametrized by unitaries $u:\Nn_{-\imath}\to \Nn_{\imath}$ between the deficiency subspaces $\Nn_{\mp \imath}$ where $\Nn_z := \ker({H}^*-z)$ is the eigenspace to eigenvalue $z$ of the adjoint operator ${H}^*$ \cite{ReedSimon}. The self-adjoint extension $H_u$ defined by $u$ is the restriction of ${H}^*$ to the space (its domain)
$$  \mathrm{dom} H_{u} =\{\psi+u(f)+f :\psi\in \mathrm{dom} {H},f\in \Nn_{-\imath} \}.$$
If $ H$ is a one-dimensional differential operator which is the (graph) closure of an operator defined on $C_c^\infty(\Omega)$ for some open $\Omega\subset \RM$ then the self-adjoint extensions are equivalently defined by local boundary conditions, that is, by specifying linear relations between the values of the functions and their derivatives (which belong to the domain of ${H}^*$) at the boundary points of $\Omega$. We are interested in the two cases, 
$\Omega=\RM^{>0}$ and $\Omega = \RM\backslash\{0\}$, corresponding to the half-space and the interface extension, respectively. 
In simple cases, applying these linear relations to $\psi+u(f)+f$ can directly be solved for the von Neumann unitary $u$, but already the regularised Dirac operator is sufficiently complicated so that we make use of the powerful theory of boundary triples. We give a very brief overview of this referring the reader to  \cite{BruningGeylerPankrashkin} for the details.
\begin{definition}\label{def-bdr-triple}
Let $H$ be a symmetric operator on $\Hh$ with isomorphic deficiency subspaces. A boundary triple for $H$ is a triple $(V,\Gamma_1,\Gamma_2)$ where $V$ is an auxiliary Hilbert space and
$\Gamma_{1},\Gamma_2:\Hh\to V$ two linear maps such that, for all $\psi,\phi\in\mathrm{dom} H^*$ 
$$\langle \psi , H^* \phi\rangle_\Hh - \langle H^*\psi , \phi\rangle_\Hh =
\langle \Gamma_1\psi , \Gamma_2 \phi\rangle_V - \langle \Gamma_2\psi , \Gamma_1\phi\rangle_V$$
and moreover,
$\mathrm{dom} H^* \ni \psi \mapsto (\Gamma_1\psi,\Gamma_2\psi)\in V\oplus V$
is surjective.
\end{definition}
Given two  linear operators $A,B:V\to V$ such that $iA+B$ is invertible and $AB^*$ selfadjoint, the restriction $H_{A,B}$ of $H^*$ to all $\psi\in \mathrm{dom} H^*$ which satisfy
\begin{equation}\label{eq-bdry-cond0}
A\Gamma_1\psi=B\Gamma_2\psi
\end{equation}
is a self-adjoint extension. Boundary triples always exist, though are not unique; if we have one then all self-adjoint extensions of $H$ arise in the above way for different choices of $A$ and $B$. 

If $z$ lies in the resolvent set of $H$ then the restrictions $\Gamma_1(z)$,  $\Gamma_2(z)$  of $\Gamma_1$ and $\Gamma_2$ to $\Nn_z$ 
are bijections and we can define the so-called Krein function $Q(z):V\to V$ 
$$Q(z) := \Gamma_2(z)\Gamma_1(z)^{-1}.$$
Furthermore, if $\Im(z)\neq 0$ then 
$$W(z):= A-BQ(z)$$ 
is invertible (still under the requirement that $iA+B$ is invertible and $AB^*$ selfadjoint).
A direct calculation now yields that the von Neumann unitary $u_{A,B}:\Nn_{-\imath}\to \Nn_{\imath}$ describing the extension $H_{A,B}$ is given by
$$ u_{A,B} = -\Gamma_1(\imath)^{-1} 
W(\imath)^{-1} W(-\imath)\Gamma_1(-\imath) .$$
Indeed, if $f\in\Nn_{-\imath}$ then $F = u_{A,B}f + f \in \Nn_{-\imath} + \Nn_{\imath}$ satisfies $A\Gamma_1F = B\Gamma_2 F$ which can be solved for $u_{A, B}$. For $A=1$ and $B=0$ we obtain
$u_{1,0} =  -\Gamma_1(\imath)^{-1} \Gamma_1(-\imath)$
which will serve as our reference system. 
Thus $u_{A,B} {u_{1,0}}^*$ is conjugate to the unitary
$$ U_{A,B} := W(\imath)^{-1} W(-\imath)$$
on the auxiliary Hilbert space.
From Proposition~\ref{prop:res_aff_vn} one can conclude that if $H_{1,0}$ is resolvent affiliated to the half-space algebra (or the interface algebra) and $u_{1,0}$ in its multiplier algebra then $H_{A,B}$ is also resolvent affiliated to this algebra if and only if $u_{A,B}-u_{1,0}$ belongs to the edge algebra. In the context of translationally invariant differential operators on a two-dimensional half-space when we reduce the problem to a family of half line operators parametrized by the wave vector $k$ along the edge as described in the last section, we study their self-adjoint extensions through a family of boundary triples $(V(k),\Gamma_1(k),\Gamma_2(k))$ and matrices $A(k)$, $B(k)$. Now the dependence on $k$ needs extra care.

\begin{lemma}\label{lem-k-dep}
Let $V=V(k)$ be independent of $k$ and $k\in \RM \mapsto \Gamma_i(z, k)$ be norm-continuous for $i=1,2$ and \begin{equation}\label{eq-k-dep}
\sup_{k\in \RM}\norm{\Gamma_1(z,k)}\norm{\Gamma_1(\overline{z},k)^{-1}}<\infty
\end{equation}
with $z=\pm\imath$. Then $k\mapsto u_{A,B}(k)$ and $k\mapsto U_{A.B}(k)$ are norm continuous and 
$u_{A,B}(k)-u_{1,0}(k)\stackrel{k\to\pm\infty}\longrightarrow 0$ if and only if $U_{A,B}(k)\stackrel{k\to\pm\infty}\longrightarrow 1_V$.
\end{lemma}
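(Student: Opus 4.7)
The plan is to prove both statements from a single explicit algebraic identity relating $u_{A,B} - u_{1,0}$ and $U_{A,B} - 1_V$, together with continuity of the operations that enter these formulas.

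For the continuity statement, I would first observe that $\Gamma_1(\pm\imath,k)^{-1}$ is norm-continuous in $k$: inversion is a norm-continuous operation on the set of invertible bounded operators, and by hypothesis $\Gamma_1(\pm\imath,k)$ depends norm-continuously on $k$ and is invertible for every $k$. Combined with continuity of $\Gamma_2(\pm\imath,k)$, this makes the Krein function $Q(\pm\imath,k)=\Gamma_2(\pm\imath,k)\Gamma_1(\pm\imath,k)^{-1}$ and hence $W(\pm\imath,k)=A-BQ(\pm\imath,k)$ norm-continuous. Since $W(\imath,k)$ is invertible (as $iA+B$ is invertible and $AB^*$ is self-adjoint, the boundary-triple theory guarantees invertibility at $z=\pm\imath$), another application of the continuity of inversion yields norm-continuity of $U_{A,B}(k)=W(\imath,k)^{-1}W(-\imath,k)$ and finally of $u_{A,B}(k)=-\Gamma_1(\imath,k)^{-1}U_{A,B}(k)\,\Gamma_1(-\imath,k)$.

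For the equivalence of the two limit conditions, I would exploit the same factorization, writing
\begin{equation*}
u_{A,B}(k)-u_{1,0}(k)\;=\;-\Gamma_1(\imath,k)^{-1}\bigl[U_{A,B}(k)-1_V\bigr]\Gamma_1(-\imath,k),
\end{equation*}
which is immediate from the two defining formulas by factoring out the common $-\Gamma_1(\imath,k)^{-1}(\,\cdot\,)\Gamma_1(-\imath,k)$. Conversely, inverting this relation gives
\begin{equation*}
U_{A,B}(k)-1_V\;=\;-\Gamma_1(\imath,k)\bigl[u_{A,B}(k)-u_{1,0}(k)\bigr]\Gamma_1(-\imath,k)^{-1}
\end{equation*}
on $V$ (using that $u_{A,B}-u_{1,0}$ has initial space $\Nn_{-\imath}$, on which $\Gamma_1(-\imath,k)$ is a bijection onto $V$). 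Taking operator norms yields
\begin{equation*}
\norm{u_{A,B}(k)-u_{1,0}(k)}\leq \norm{\Gamma_1(\imath,k)^{-1}}\,\norm{\Gamma_1(-\imath,k)}\,\norm{U_{A,B}(k)-1_V}
\end{equation*}
and
\begin{equation*}
\norm{U_{A,B}(k)-1_V}\leq \norm{\Gamma_1(\imath,k)}\,\norm{\Gamma_1(-\imath,k)^{-1}}\,\norm{u_{A,B}(k)-u_{1,0}(k)}.
\end{equation*}
Both prefactors are precisely the two crossed products appearing in the hypothesis \eqref{eq-k-dep} (for $z=-\imath$ and $z=\imath$ respectively), hence uniformly bounded in $k$. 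This shows that the two norms vanish at $k\to\pm\infty$ simultaneously, completing the proof.

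The main subtlety is the last point: since the individual factors $\norm{\Gamma_1(\pm\imath,k)}$ and $\norm{\Gamma_1(\pm\imath,k)^{-1}}$ need not be uniformly bounded as $k\to\infty$, one must arrange the estimates so that only the cross-products featured in \eqref{eq-k-dep} appear. The formulas above are set up exactly for this, which is why hypothesis \eqref{eq-k-dep} is phrased with $z$ and $\bar z$ rather than both at the same value.
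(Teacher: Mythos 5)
Your proof is correct and follows essentially the same route as the paper: both rest on the identity $u_{A,B}(k)-u_{1,0}(k)=-\Gamma_1(\imath,k)^{-1}\bigl(U_{A,B}(k)-1_V\bigr)\Gamma_1(-\imath,k)$ and its inverse, estimated so that only the cross-products $\norm{\Gamma_1(z,k)}\norm{\Gamma_1(\overline z,k)^{-1}}$ from \eqref{eq-k-dep} appear. Your write-up is in fact slightly more complete, since it also spells out the norm-continuity of $U_{A,B}(k)$ via continuity of inversion and invertibility of $W(\pm\imath,k)$, which the paper's proof leaves implicit.
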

\begin{proof}
The norm continuity of $k\mapsto u_{A,B}(k)$ follows directy from the assumption. 
We have 
$$u_{A,B}(k)-u_{1,0}(k) = -\|\Gamma_1(-\imath,k)\|\Gamma_1(\imath,k)^{-1} 
(U_{A,B}(k)-1_V)\frac{\Gamma_1(-\imath,k)}{\|\Gamma_1(-\imath,k)\|}$$
showing that $u_{A,B}(k)-u_{1,0}(k)$ is multiplication of $U_{A,B}(k)-1$ by two invertible operators which are uniformly bounded in $k$.
Therefore $U_{A,B}\to 1$ implies $u_{A,B}(k)-u_{1,0}(k)\to 0$. The converse follows in a similar way, as
$$ U_{A,B}(k)-1_V= -\Gamma_1(\imath,k) 
(u_{A,B}(k)-u_{1,0}(k))\Gamma_1(-\imath,k)^{-1}$$
\end{proof}
If $V$ is one-dimensional, the condition (\ref{eq-k-dep}) is always satisfied. More generally, 
as $V$ is finite-dimensional (\ref{eq-k-dep}) is satisfied if the matrix representation of $\Gamma_1(z,k)$ w.r.t.\ orthonormal bases of $\Nn_z$ and $V$ is asymptotically of the form $f(k) C(z)$ where $f$ is a continuous function and $C(z)$ a matrix which does not depend on $k$. 

The spectral values of $H_{A,B}$ which do not belong the spectrum of $H_{1,0}$ are the real values $\lambda$ in the resolvent set of $H_{1,0}$ for which $W(\lambda)$ is not invertible \cite{BruningGeylerPankrashkin}.
This will provide us with an implicit equation for the dispersion relation of the edge (or interface) modes.

\subsection{Laplacian}
\label{ssec:laplace}
The two-dimensional Laplacian is the unbounded operator
$$H = -\partial^2_x -\partial^2_y$$
on $L^2(\RM^2)$. Its spectrum is $[0,\infty)$.
It is strongly affiliated to $\bulkalgebra$, because the Fourier transform of $F(H)$ is $(k_x^2+k_y^2)(1+(k_x^2+k_y^2)^2)^{-\frac12}$ which tends to $1$ as $(k_x^2+k_y^2)\to+\infty$. 
Since $H$ has no bounded gaps there is no non-trivial bulk invariant. 
Nevertheless this operator constitutes an interesting example, because its edge invariants are not necessarily trivial.

\begin{remark}
One can add a covariant potential $V\in C_b(\Omega)$ described by a compact dynamical system $(\Omega,\alpha,\RM^2)$, 
so as to obtain a bulk Hamiltonian $H=-\partial^2 + V$ which is strongly affiliated to the enlarged algebra $\bulkalgebra = C(\Omega)\rtimes_\alpha \RM^2$ and may have bounded gaps in its spectrum. The spectral projections of $-\partial^2+V$ onto states with energies below a gap then define non-trivial elements of $K_0(\bulkalgebra)$. 
\end{remark}

\subsubsection{Affiliation to the half-space algebra and boundary conditions}
\newcommand{\hH}{\hat H}
Let $\mathring H$ be the restriction of $H$ to the core
$C_c^\infty(\RM\times\RM^{>0})$. We will study the local translation invariant boundary conditions
\begin{equation}\label{eq-bc-L}
K\Psi(x,0) + L \partial_x\Psi(x,0) + M \partial_y\Psi(x,0) = 0
\end{equation}
where $K,L,M$ are complex numbers which do not depend on $x$. Following Proposition~\ref{prop:extensions_fiberwise} and Lemma~\ref{lem-k-dep} we can investigate (resolvent-)affiliation once we have found at least one (resolvent-)affiliated self-adjoint extension. For that we note the following:
\begin{proposition}
\label{prop:reflection_principle}
The half-space Laplacians $\Delta_D$ and $\Delta_N$ with Dirichlet respectively Neumann boundary conditions are resolvent-affiliated to $\hsalgebra$ and bounded from below, hence also strongly affiliated.
\end{proposition}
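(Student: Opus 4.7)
The plan is to use the reflection principle to produce an explicit integral kernel for the resolvents of $\Delta_D$ and $\Delta_N$ and to verify by inspection that these kernels represent elements of $\hsalgebra$. Writing $\tilde x = (x_1,-x_2)$ for the reflection across the edge, the Green's function on the half-space takes the form
$$G^{D/N}_{z}(x,y) \;=\; G_z(x-y)\;\mp\; G_z(\tilde x - y), \qquad x,y\in\RM\times\RM_+,$$
where $G_z$ is the convolution kernel of $(-\partial_{x_1}^2-\partial_{x_2}^2-z)^{-1}$ on $L^2(\RM^2)$, and the minus/plus sign corresponds to Dirichlet/Neumann. The first summand is precisely the compression $P_+(-\partial^2-z)^{-1}P_+$; since $(-\partial^2-z)^{-1}\in\bulkalgebra$ (its Fourier symbol $(k^2-z)^{-1}$ lies in $C_0(\RM^2)$), the decomposition of Lemma~\ref{lemma:nonsmoothelements}(iii) places this piece in $\hsalgebra$.

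The task thus reduces to showing that the reflected kernel $G_z(\tilde x - y)$ represents an element of $\hsedgealgebra\simeq C_0(\RM)\otimes\KM(L^2(\RM_+))$. A partial Fourier transform in the edge-parallel variable $x_1$ turns the bulk Green's function into the one-dimensional kernel $(2\kappa_k)^{-1}e^{-\kappa_k|x_2-y_2|}$ with $\kappa_k=\sqrt{k^2-z}$ in the principal branch ($\mathrm{Re}\,\kappa_k>0$). On each Fourier fiber the reflected term becomes the rank-one operator on $L^2(\RM_+)$ with kernel $(2\kappa_k)^{-1}e^{-\kappa_k(x_2+y_2)}$, whose operator norm tends to $0$ as $|k|\to\infty$. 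The family is norm-continuous in $k$, so this contribution is a $C_0$-family of rank-one compact operators, i.e.\ an element of $\hsedgealgebra$.

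Combining both contributions yields $(\Delta_{D/N}+\imath)^{-1}\in\hsalgebra$. To upgrade this to resolvent-affiliation in the sense of Definition~\ref{def:B_resolvent_affiliated}, it remains to verify that $\Delta_{D/N}$ is itself an $\hsalgebra$-multiplier; by Lemma~\ref{lemma:selfadjointmultiplier}(iii) this reduces to norm-density of $\Dd_{\hsalgebra}(\Delta_{D/N})$. Such density is supplied by the integral operators with smooth, compactly supported kernels that satisfy the Dirichlet or Neumann condition in the first argument: these clearly send $L^2(\RM\times\RM_+)$ into $\mathrm{Dom}(\Delta_{D/N})$, the action of $\Delta_{D/N}$ simply differentiates the kernel and preserves the class, and a standard mollifier/cutoff argument shows they are norm-dense in $\hsalgebra$.

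Positivity of $\Delta_D$ and $\Delta_N$ is immediate, so they are bounded below by $0$, and Lemma~\ref{lemma:strong_vs_resolvent}(ii) applied with $\Bb=\hsalgebra^\sim$ automatically promotes the resolvent-affiliation to strong affiliation. The genuinely analytic step of the plan is the fiberwise rank-one and norm-decay analysis of the reflected kernel; once this is in place, the remaining ingredients are routine invocations of the abstract framework already developed in Section~\ref{sec:aff_bbc} and Section~\ref{sec:aff_interface}.
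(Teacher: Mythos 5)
Your proposal is correct and follows essentially the same route as the paper: the reflection-principle decomposition $(\Delta_{D/N}+\imath)^{-1}=P_+(\Delta_b+\imath)^{-1}(1\mp\Rr)P_+$, membership of the compressed bulk resolvent in $\hsalgebra$ via Lemma~\ref{lemma:nonsmoothelements}(iii), the observation that the reflected term lies in $\hsedgealgebra$ because of its decay away from the boundary, density of the domain via the class of integral operators from the proof of Proposition~\ref{prop:extensions_fiberwise}, and positivity combined with Lemma~\ref{lemma:strong_vs_resolvent}(ii) for strong affiliation. Your fiberwise rank-one computation merely spells out in more detail the exponential-decay claim the paper states without proof.
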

\begin{proof}
Since they are norm-densely defined (cf. the proof of Proposition~\ref{prop:extensions_fiberwise}) it is enough to check that some resolvents of $\Delta_D$ and $\Delta_N$ are in $\hsalgebra$. Let $\Rr: L^2(\RM\times \RM_+)\to L^2(\RM\times \RM_+)$ be the reflection operator $(\Rr\Psi)(x_1,x_2)=\Psi(x_1,-x_2)$ and denote by $\Delta_b$ the bulk Laplacian on $\RM^2$. As noted above, one has $(\Delta_b+\imath)^{-1}\in \Aa_b$. By the reflection principle we can write
$$(\Delta_D +\imath )^{-1} = P_+ (\Delta_b+\imath )^{-1} (1-\Rr)P_+$$
and
$$(\Delta_N +\imath )^{-1} = P_+ (\Delta_b+\imath )^{-1} (1+\Rr)P_+$$
since the right-hand sides give the (unique) solutions of the boundary value problems. It is easy to see that the operator $P_+ (\Delta_b+\imath)^{-1} \Rr P_+$ is in $\hsedgealgebra$ since its integral kernel decays exponentially with the distance from the boundary, hence Lemma~\ref{lemma:nonsmoothelements}(iii) finishes the proof. 
\end{proof}
This covers the cases $(K,L,M)\in \{(1,0,0), (0,0,1)\}$ which can therefore serve as references. 

When Fourier transformed along the boundary we obtain the family
$$\mathring H(k)  = k^2 - \partial_y^2$$
on a dense domain of $L^2(\RM^{>0})$ containing differentiable functions which vanish at $y=0$. Furthermore, the boundary conditions become
$$K\Psi(k,0) - i k L \Psi(k,0) + M \Psi'(k,0) = 0$$
where prime denotes now the derivative in $y$ and we denoted the Fourier transform of $x\mapsto \Psi(x,y)$ by $k\mapsto \Psi(k,y)$. 
We study the self-adjoint extensions of $\mathring H(k)$ with the help of the triple $(V,\Gamma_1,\Gamma_2)$, $V=\CM$,
$$\Gamma_1 \Psi = \Psi(k,0),\quad \Gamma_2 \Psi = \Psi'(k,0).$$
The deficiency subspace $\Nn_z$ is spanned by a solution of 
$\Psi'' = (k^2-z)\Psi$
which we take to be
$$\Psi_z(k,y) = \exp(-\mu(k,z) y)$$
where 
$ \mu(k)=\sqrt[+]{k^2-z} $ (the square root with positive real part). We use for $V=\CM$ the canonical basis vector $1$. Then 
$\Gamma_1 \Psi_z = 1$ and $\Gamma_2 \Psi_z = -\mu(k,z)$.
Hence $Q(k,z)  =  -\mu(k,z)$ and
$$W(k,z) = A-BQ(k,z)= K- \imath kL-M\mu(k,z).$$ 
The condition that $A\bar B$ must be real implies that $M$, $K$ and $iL$ must have the same phase.
Without loss of generality we may suppose that they are real. If $M=0$ then we have Dirichlet boundary conditions and $L$ must be $0$ and $K \neq 0$, because otherwise there is $k$ such that $K-\imath kL$ is not invertible. 
\begin{lemma} The self-adjoint extension 
$\hat H_{K,L,M}$ 
is resolvent affiliated to $\hsalgebra$ if and only if $iL\pm M\neq 0$ or $M=0$ or $K\neq 0$.
\end{lemma}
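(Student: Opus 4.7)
The plan is to use Proposition~\ref{prop:res_aff_vn} with Dirichlet ($\hat H_{1,0,0}=\Delta_D$) as the reference extension: by Proposition~\ref{prop:reflection_principle} this extension is resolvent-affiliated to $\hsalgebra$ and the associated von Neumann unitary is precisely $u_{1,0}$. Consequently, $\hat H_{K,L,M}$ is resolvent-affiliated iff $u_{K,L,M}(k)-u_{1,0}(k)\to 0$ as $|k|\to\infty$, and by Lemma~\ref{lem-k-dep} (once I check the uniform bound (\ref{eq-k-dep})) this reduces to showing $U_{K,L,M}(k)\to 1$ at $\pm\infty$.

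First I would dispose of the admissibility questions. The boundary-triple requirements force $K,M\in\RM$ and $L\in i\RM$, and singular values of $iA(k)+B(k)=-M+i(K+k\ell)$ (writing $L=i\ell$ with $\ell\in\RM$) show that either $M\neq 0$ is free, or $M=0$ forces $L=0$ and $K\neq 0$. The latter case is already Dirichlet and so trivially covered, accounting for the clause $M=0$ in the statement; from here on I assume $M\neq 0$. To verify (\ref{eq-k-dep}) I would compute $\|\Psi_z\|^2=1/(2\Re\mu(k,z))$, yielding $\|\Gamma_1(z,k)\|\,\|\Gamma_1(\bar z,k)^{-1}\|=\sqrt{\Re\mu(k,i)/\Re\mu(k,-i)}$, which is continuous in $k$ and converges to $1$ at $\pm\infty$, hence uniformly bounded.

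Next I would carry out the asymptotic analysis. Inserting $\mu(k,\pm i)=|k|\mp i/(2|k|)+O(|k|^{-3})$ into $W(k,z)=K+k\ell-M\mu(k,z)$ gives
\begin{equation*}
W(k,\pm i)=\bigl(K+k\ell-M|k|\bigr)\pm\frac{iM}{2|k|}+O(|k|^{-3}).
\end{equation*}
For $k\to+\infty$ the leading real term is $K+(\ell-M)k$: if $\ell\neq M$ or $K\neq 0$, this term dominates both $W(k,i)$ and $W(k,-i)$ and their ratio tends to $1$; if $\ell=M$ and simultaneously $K=0$, the real part vanishes and one is left with $W(k,\pm i)\sim \pm iM/(2k)$, whose ratio is $-1\neq 1$. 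A symmetric argument for $k\to-\infty$ produces the leading real term $K+(\ell+M)k$ and the analogous dichotomy: $U\to 1$ iff $\ell+M\neq 0$ or $K\neq 0$. Translating $\ell=-iL$, the two asymptotic conditions read $iL+M\neq 0$ or $K\neq 0$ (at $+\infty$) and $iL-M\neq 0$ or $K\neq 0$ (at $-\infty$), whose conjunction is exactly $K\neq 0$ or $iL\pm M\neq 0$.

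The principal obstacle is the borderline regime where the real leading part of $W(k,\pm i)$ cancels at infinity (that is, $\ell\pm M=0$ with $K=0$): here both $W(k,i)$ and $W(k,-i)$ vanish to leading order, and the \emph{subleading} $O(|k|^{-1})$ imaginary correction inherited from the square root $\mu(k,\pm i)$ is what determines the limit of $U_{K,L,M}(k)$ and forces it to $-1$. Keeping track of signs and of which of the two asymptotic directions $k\to\pm\infty$ yields the condition $\ell=M$ versus $\ell=-M$ is the only subtle combinatorial point; everything else is routine scalar asymptotics.
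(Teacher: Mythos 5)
Your proposal is correct and follows essentially the same route as the paper: take Dirichlet as the resolvent-affiliated reference extension, reduce via Proposition~\ref{prop:res_aff_vn} and Lemma~\ref{lem-k-dep} to the condition $U_{K,L,M}(k)\to 1$ as $|k|\to\infty$, insert the expansion $\mu(k,z)\sim|k|-z/(2|k|)$ into $W(k,z)=K-\imath kL-M\mu(k,z)$, and observe that the limit fails to be $1$ exactly when the leading real part cancels, i.e.\ $K=0$ and $iL\pm M=0$. Your explicit verification of condition (\ref{eq-k-dep}) and of the $M=0$ admissibility case only makes explicit what the paper leaves as a remark.
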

\begin{proof} 
It is easily seen that the conditions of Lemma~\ref{lem-k-dep} are satisfied. This implies that $u_{1,0}(k)$ is norm continuous in $k$ and hence an element of $\hsalgebra$. As $\hat H_{1,0}$ is the Dirichlet Laplacian on the half-space, of which we know that it is resolvent affiliated to $\hsalgebra$, we can apply Proposition~\ref{prop:res_aff_vn} to see 
that $\hat H_{K,L,M}$ is resolvent affiliated if and only if 
$U_{K,L,M}\to 1$ as $|k|\to  \infty$. Explicitly we have 
$$U_{K,L,M} \sim_{|k|\to\infty}  \frac{ K - \imath k L  - M \mu(k,-\imath)}{ K - \imath k L  - M \mu(k,\imath)}.$$
Replacing $\mu$ with its asymptotic formula
$$\mu(k,z) \sim_{|k|\to\infty} |k| - \frac{z}{2|k|}$$ we get
$$U_{K,L,M} \sim_{|k|\to\infty}  \frac{ K - k(iL  + \mathrm{sgn}(k)M) - \frac{i}{2|k|} M }{ K - k(iL  + \mathrm{sgn}(k)M) + \frac{i}{2|k|} M}$$
If $iL\pm M\neq 0$ or $M=0$ or $K\neq 0$ then, clearly, $U_{K,L,M}\to 1$. If $K=0$ and 
$M\neq 0$ and $iL + M = 0$ then $U_{K,L,M}\to -1$ for $k\to +\infty$; and if $K=0$ and 
$M\neq 0$ and $iL - M = 0$ then $U_{K,L,M}\to -1$ for $k\to -\infty$.
\end{proof}

\subsubsection{Edge modes}

The extension $\hat H_{A,B}(k)$ has edge modes at energy $\lambda<k^2$ if 
$$A(k)+\mu(k,\lambda)B(k)=0.$$
In particular, there are no edge modes if $A(k)=0$, as $\mu(k,\lambda)$ has no zeros if $\lambda<k^2$, neither if $B(k)=0$, as then $A(k)$ must be non-zero. Otherwise we obtain the equation $\mu(k,\lambda) = -\frac{A(k)}{B(k)}$ which is
$$ \lambda = k^2-\frac{A^2(k)}{B^2(k)}$$
conditioned to $0 > \frac{A(k)}{B(k)}>-\infty$. 

With the above choice (\ref{eq-bc-L}) of local boundary conditions we get, if $M\neq 0$, 
$$\lambda(k) = \frac{-K^2 + k^2 (L^2+M^2) - 2iKkL}{M^2}, \quad +\infty>\frac{K+ikL}{-M}>0$$
Besides the Dirichlet case this is
$$\lambda(k) = k^2 - (K+ikL)^2, \quad {K+ikL}>0$$
and hence there is a flat band if $|L| = 1 \neq 0$ and $K=0$. This band is given by $\lambda(k) = 0$ and exists only for one sign of $k$, namely $ikL>0$. 

\subsubsection{Summary and bulk edge correspondence}
Table~\ref{tab-Laplace} contains the results besides the Dirichlet case for which the extension is resolvent affiliated. The extension is bounded from below if and only if there is no spectral flow. It is then strongly affiliated.

\begin{table}\label{tab-Laplace}
\begin{tabular}{c|c|c|c|c|c|c|c}
$K$ & $|L|$ & $\mathrm{sgn}(iL)$ & $\SF$ & wind & semibounded & res.-aff. \\
\hline
\hline
$\in\RM$ & 0 & n.a. &  0 & n.a. & yes & no \\
\hline
$\in\RM$ & $<1$ & $\pm 1$ & 0 & 0 & yes & yes \\
\hline
$\in\RM$ & $>1$ & $+ 1$ & $-1$ & $-1$ & no & yes\\
\hline
$\in\RM$ & $>1$ & $- 1$ & $+1$ & $+1$ & no & yes\\

\hline
$>0$ & 1 & $+1$  & $-1$ & $-1$ & no & yes\\
\hline
$>0$ & 1 & $-1$ & $+1$ & $+1$ & no & yes\\
\hline
$<0$ & 1 & $\pm 1$ & 0 & 0 & yes & yes\\
\hline
$0$ & 1 & $\pm 1$ & 0 & n.a. & yes & no\\
\end{tabular}
\medskip
	\caption{Spectral flow and winding number for the half-space Laplace operator $\hat H_{K,L,1}$ with $|L| \neq 1$ or $K\neq 0$. $\SF$ is the spectral flow of the edge modes when $k$ varies from $-\infty$ to $\infty$, wind is the winding number of the von Neumann unitary under the same variation of $k$.}
\end{table}
The results confirm our theory, namely that, 
except for the case in which $|M|=|L|$ and $K=0$, in which we do not have resolvent affiliation, 
the winding of $U_{A,B}$ is minus the difference of the spectral flows of the boundary modes of $\hat H_{A,B}$ and $\hat H_{1,0}$. This means that the corrected bulk boundary correspondence is also verified in this example.
\bigskip

The above analysis shows that, even though the bulk is topologically trivial, 
one can still have non-trivial edge invariants and these depend on the boundary conditions. In particular, the simple bulk edge correspondence cannot hold independently of the boundary conditions.
We observe that, if there is no spectral flow of the edge states then $\hat H$ is bounded from below and therefore, except in the case of a flat dispersion relation, strongly affiliated. The simple bulk-edge correspondence holds in this case giving the trivial Hatsugai relation $0=0$. In the other cases, the flat dispersion still excluded, the corrected bulk-edge correspondence holds: the winding number of the von Neumann unitary coincides with the spectral flow.

\subsection{The Dirac operator}
\label{ssec:massive_dirac} The Dirac operator with mass $m\in\RM$ is the matrix valued differential operator
\begin{equation}
	\label{eq:massivedirac_bulk}
	H_m = -\imath(\sigma_x \partial_x + \sigma_y \partial_y) + m\sigma_z 
\end{equation}
acting on $L^2(\RM^2)\otimes\CM^2$. Here we used the standard $2\times 2$ Pauli matrices. The spectrum is the complement of
$(-\abs{m},\abs{m})$, hence neither bounded above nor below. We assume that $m\neq 0$. As $H_m$ is  an elliptic differential operator it is resolvent affiliated.
The Fourier transform of $F(H_m)$ is
$$\frac{\sigma_x k_x + \sigma_y k_y + m \sigma_z}{\sqrt{1+k_x^2+k_y^2+ m^2}} = \begin{pmatrix}
	0 & \arg(k_x+\imath k_y) \\
	\arg(k_x-\imath k_y) & 0
\end{pmatrix} + O((k_x^2+k_y^2)^{-\frac12})$$
which tends for large wave vectors to a direction-dependent matrix. This shows that $H_m$ is $\Bb$-affiliated to $M_2(\bulkalgebra)$ where $\Bb$ is the sub-algebra $M_2(C(\overline{\RM^2}))$ of the multiplier algebra of $M_2(\CM\rtimes \RM^2)\cong M_2(C_0(\RM^2))$ obtained when radially compactifying $\RM^2$ to the closed disk $\overline{\RM^2}$ by adding the circle $\SM^1$ of directions at infinity of $\RM^2$.  But $H_m$ is not strongly affiliated to $\hsalgebra$ because the above matrix depends on $k$. Therefore the spectral projection $P_{\leq 0}(H_m)$ is not an element of $M_2(\bulkalgebra^\sim)\cong M_2(C(\SM^2))$. In particular, it does not define an element of $K_0(\bulkalgebra)$. 
Technically, $P_{\leq 0}(H_m)$ defines a non-trivial element of $K_0(\Bb)$ at least, however, since the disk is contractible one has $K_0(\Bb)\simeq \ZM$ yet the integer is not the Chern number but just the rank of the projection as a vector bundle over the disk. As a consequence, one cannot formulate a meaningful bulk-edge correspondence for a single Dirac Hamiltonian but must go over to the relative theory. 
\begin{lemma}
Any two Dirac Hamiltonians with masses $m_1,m_2\neq 0$ are invertible and $M_2(\bulkalgebra)$-comparable, therefore have a well-defined relative bulk invariant $[H_{m_1},H_{m_2}]_0=[p_{m_1}, p_{m_2}]_0 \in K_0(\bulkalgebra)$ with $p_{m}=P_{\leq 0}(H_{m})$.

Under the isomorphism $K_0(\bulkalgebra)\simeq \ZM$ induced by the Chern number this class is given by
$$\langle \Ch_{2}, [p_{m_1},p_{m_2}]_0\rangle =
		\frac{1}{2}(\sgn(m_1)-\sgn(m_2)).$$
\end{lemma}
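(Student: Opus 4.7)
The Fourier symbols $H_m(k) = \sigma_x k_x + \sigma_y k_y + m\sigma_z$ have eigenvalues $\pm\sqrt{k_x^2+k_y^2+m^2}$, so each $H_m$ has a spectral gap of width $2|m|>0$ around zero. For $M_2(\bulkalgebra)$-comparability of $H_{m_1}$ and $H_{m_2}$ I would compute
\begin{equation*}
F(H_{m_1})(k) - F(H_{m_2})(k) = \frac{\sigma\cdot k + m_1\sigma_z}{\sqrt{1+|k|^2+m_1^2}} - \frac{\sigma\cdot k + m_2\sigma_z}{\sqrt{1+|k|^2+m_2^2}}
\end{equation*}
and observe that the two terms share the same direction-dependent limit $\sigma\cdot k/|k|$ as $|k|\to\infty$, with discrepancy of order $|k|^{-1}$ coming from the $m$-dependence of the denominators and from the $m\sigma_z$ pieces. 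Hence $F(H_{m_1})-F(H_{m_2})\in M_2(C_0(\RM^2)) = M_2(\bulkalgebra)$, and comparability extends from $F$ to all $f\in C([-\infty,+\infty])$ via continuous functional calculus on the bounded self-adjoint multipliers $F(H_{m_i})$ using that $M_2(\bulkalgebra)$ is an ideal in its multiplier algebra.

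\textbf{The relative $K_0$-class.} By the gap, $p_m = g(H_m)$ for any $g\in C([-\infty,+\infty])$ that equals $1$ on $(-\infty,-|m|]$ and $0$ on $[|m|,\infty)$; hence $p_m\in M_2(\mult(\bulkalgebra))$ (in fact $p_m\in M_2(C(\overline{\RM^2})) = M_2(\Bb)$, using the radial-compactification algebra $\Bb$ already identified in the excerpt), while comparability supplies $p_{m_1}-p_{m_2}\in M_2(\bulkalgebra)$. The pair $(p_{m_1},p_{m_2})$ is therefore a pair of projections of the form treated in Section~\ref{ssec:rel_inv}, defining a class $[p_{m_1},p_{m_2}]_0 \in K_0(\bulkalgebra)$ via the quotient isomorphism $K_0(\PM(\Bb,M_2(\bulkalgebra)))/t_*K_0(\Bb) \simeq K_0(\bulkalgebra)$.

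\textbf{Chern pairing via the Berry curvature.} I would parametrize $p_m(k) = \frac{1}{2}\bigl(1 - \hat n_m(k)\cdot\vec\sigma\bigr)$ with $\hat n_m(k) = (k_x,k_y,m)/\sqrt{k_x^2+k_y^2+m^2}\in\SM^2$. A short Pauli-matrix calculation using $[\sigma^i,\sigma^j] = 2\imath\epsilon^{ijk}\sigma^k$ gives
\begin{equation*}
\Tr\bigl(p_m[\partial_{k_x}p_m,\partial_{k_y}p_m]\bigr) = -\tfrac{\imath}{2}\,\hat n_m\cdot(\partial_{k_x}\hat n_m\times\partial_{k_y}\hat n_m),
\end{equation*}
so up to a constant the integrand is the Jacobian of the Gauss map $\hat n_m\colon\RM^2\to\SM^2$. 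For $m>0$ (resp.\ $m<0$) the image is the open upper (resp.\ lower) hemisphere, covered diffeomorphically, so the signed integral of this Jacobian equals $2\pi\sgn(m)$ up to orientation; substituting into the Chern cocycle formula and taking the difference in the pair picture yields the claimed value $\frac{1}{2}(\sgn m_1 - \sgn m_2)$.

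\textbf{Main obstacle.} The conceptual subtlety is that each individual Berry-curvature integral already converges and evaluates to the non-integer $\pm\tfrac{1}{2}$, reflecting the fact that $p_m$ does not define a class in $K_0(\bulkalgebra)$ by itself. One must therefore carry the Chern integral through the pair-algebra framework, verifying that term-wise computation using the convergent individual integrals indeed represents the pairing of $[p_{m_1},p_{m_2}]_0$ with $\Ch_2$, independently of the auxiliary algebra $\Bb$ and of the normalising function $g$. The Pauli-matrix manipulations and the sign bookkeeping between the orientation of $\SM^2$, the sign convention in $\Ch_2$, and the pair ordering $(m_1,m_2)$ are routine but need careful tracking.
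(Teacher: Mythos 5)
Your proposal is correct and follows essentially the same route as the paper: comparability of the two Dirac Hamiltonians, the relative class in $K_0(\PM(\Bb,M_2(\Aa_b)))/t_*K_0(\Bb)$ with $\Bb=M_2(C(\overline{\RM^2}))$, and evaluation of the pairing as the difference of the two individually convergent half-integer Berry-curvature integrals, justified by extending the Chern cocycle to the pair algebra (the gluing of the two hemispherical forms on $\SM^2$ carried out in the paper's appendix). The only local differences are that the paper gets comparability in one line from Proposition~\ref{prop-res-perturb} — the mass difference $(m_1-m_2)\sigma_z$ is a bounded, hence relatively $M_2(\Aa_b)$-compact, perturbation because $(H_m+\imath)^{-1}\in M_2(\Aa_b)$ — rather than by your direct Fourier estimate of $F(H_{m_1})-F(H_{m_2})$, and evaluates the curvature integral analytically rather than via the degree of the Gauss map; both are equivalent, and the sign/orientation bookkeeping you flag is indeed the only point requiring care.
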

\begin{proof}
Since $(H_m+\imath)^{-1}\in M_2(\Aa_b)$ and $H_{m_1}-H_{m_2}=(m_1-m_2)\otimes \sigma_3$ is bounded one has $M_2(\bulkalgebra)$-comparability.

To compute the pairing of the Chern cocycle with this class, we follow the recipe of Appendix~\ref{sec:numerical_pairing} and extend the Chern cocyle to a pair algebra.

Specifically, as in Example~\ref{ex:sphere_cocycle} we identify the upper half-sphere $H\SM^2=\SM^2\cap (\RM^{2}\times \RM_+)$ with the radial compactification $\overline{\RM^2}$ of $\RM^2$. The boundary $\partial H\SM^2\simeq \mathbb{S}^1$ is exactly the circle at infinity of $\RM^2$. In this sense one has $p_{m}\in C(H\mathbb{S}^2)\otimes M_2(\CM)$ and $\left(p_{m_1}-p_{m_2}\right)\rvert_{\partial H\mathbb{S}^2}=0$. 

As seen in Example ~\ref{sec:numerical_pairing} we can then compute the pairing with the Chern cocycle as an integral over the sphere
\begin{equation}
	\begin{split}
		\label{eq-ch-rel}
		\langle &\Ch_{2}, [p_{m_1},p_{m_2}]_0\rangle = \frac{1}{2\pi\imath} \int_{\SM^2} \hat{\omega}_{m_1,m_2} \\
		&= \frac{1}{2\pi \imath}\int_{\RM^2} \Tr(p_{m_1}dp_{m_1}dp_{m_1}-p_{m_2}dp_{m_2}dp_{m_2})  \\
		&= \frac{1}{4\pi} \int_{\RM^2} \left(\frac{m_1}{\sqrt{m_1^2 + k^2}}-\frac{m_2}{\sqrt{m_2^2 + k^2}}\right) dk\\
        &=\frac{1}{2}(\sgn(m_1)-\sgn(m_2)) 
	\end{split}
\end{equation}
where $\hat{\omega}_{m_1,m_2}$ is the unique differential form on $\SM^2$ which pulls back to the form $\Tr(p_{m_i}dp_{m_i}dp_{m_i})$ on the upper ($i=1$) and the lower ($i=2$) half-sphere respectively. The integral over $\RM^2$ can be solved analytically. The fact that the contribution of either projection individually is not an integer reiterates that $p_{m_i}$ fails to define a class in $K_0(\RM^2)$.
\end{proof}
\begin{remark}
The above K-theoretic arguments justify exactly the prescription of \cite{Bal21} which directly defines the relative bulk invariant by gluing the Fermi projections on a sphere along the equator.
\end{remark}

\subsubsection{Extension to the interface}
We now study the self-adjoint extensions arising from local matching conditions at an interface defined by $y=0$. For that we restrict the Dirac operator to the core of matrix valued functions which together with their derivatives vanish on the line defined by $y=0$. 
After a Bloch transform along the boundary it decomposes into the family of symmetric operators
$$\mathring H(k) = \begin{pmatrix} m  & k + \partial_y \\ k - \partial_y & -m \end{pmatrix}= \sigma_x k + Y\partial_y + m \sigma_z,$$
$Y = \begin{pmatrix} 0 & 1 \\ -1 & 0 \end{pmatrix}$.
For $z$ in the resolvent set, the kernel $\ker (H^*(k) -z)$ is spanned by the two vectors 
$$\Psi_{\alpha,z}(k,y) = \phi_{\alpha,z}(k) \exp(-\alpha\mu(k)y)\chi_{\alpha y>0}(y),\quad \phi_{\alpha,z}(k) = 
 \begin{pmatrix} m+z \\ k+\alpha\mu(k,z)  \end{pmatrix} $$ 
where $\alpha \in\{\pm\}$ and 
$$\mu(k) = \sqrt{k^2+m^2-z^2}$$
(square root with positive real part). If $\alpha=+$ then $\Psi_{\alpha,z}$  vanishes on the left and decays exponentially to the right 
whereas for $\alpha=-$ it is the other way around. 

To study the self-adjoint extensions of $\mathring H(k)$ we consider the boundary triple $(\Gamma_1,\Gamma_2,\CM^2)$ given on $\Psi\in \mathrm{dom} H(k)^*$ by 
\newcommand{\op}{0^+}
\newcommand{\om}{0^-}
$$\Gamma_1 \Psi = -\Psi(\op) + \Psi(\om) ,\quad \Gamma_2 \Psi = \frac12 Y(\Psi(\op) + \Psi(\om)).$$
Local matching conditions which are invariant under translation along the boundary are then determined by two $2\times 2$-matrices $A$, $B$ and given in the form 
$A\Gamma_1\Psi = B\Gamma_2 \Psi$. These matrices must satisfy that $A+iB$ is invertible and $AB^*$ self-adjoint. 
For example, the conditions $\Psi(\op) = \Psi(\om)$ correspond to 
$A=1$ and $B=0$. The corresponding self-adjoint extension $\hat H_{1,0}$
coincides with the original Dirac operator on the plane and so these boundary conditions are called the transparent boundary conditions. 

When using the orthonomal basis $(\hat\Psi_{+,z},\hat\Psi_{-,z})$ of normalized eigenfunctions 
 for $\Nn(z)$ and the canonical basis for $V$ we obtain the following matrix representation $\Gamma_1(z)$ and $\Gamma_2(z)$:

$$\Gamma_1(z,k) = \begin{pmatrix} -(m+z) & (m+z)\\ -(k + \mu) & (k-\mu)\end{pmatrix}
\begin{pmatrix}
    \|\Psi_{+,z}(k,\cdot)\|^{-1} & 0 \\ 0 & \|\Psi_{-,z}(k,\cdot)\|^{-1}
\end{pmatrix}$$
$$ \Gamma_2(z,k) = \frac12 \begin{pmatrix} (k+\mu) & (k-\mu)\\ -(m+z) & -(m+z)\end{pmatrix}
\begin{pmatrix}
  \|\Psi_{+,z}(k,\cdot)\|^{-1} & 0 \\ 0 & \|\Psi_{-,z}(k,\cdot)\|^{-1}
\end{pmatrix}$$
Hence 
$Q(z) = \Gamma_2(z)\Gamma_1(z)^{-1}$
is given by
$$Q(z) = 
-\frac1{2 \mu} \begin{pmatrix} m-z & k \\ k & -m-z\end{pmatrix}.$$
\begin{lemma} \label{lem-Dirac}
The self-adjoint extension $\hat H_{A,B}$ defined by $A$ and $B$ is resolvent affiliated to $M_2(\interfacealgebra)$ if and only if 
$\det(A\pm \frac12 B \sigma_x)\neq 0$. 
\end{lemma}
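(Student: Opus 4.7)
The plan is to reduce resolvent affiliation of $\hat H_{A,B}$ to a pointwise asymptotic condition on the unitaries $U_{A,B}(k)$. Since the transparent extension $\hat H_{1,0}$ is simply the bulk Dirac operator $H_m$, which is resolvent affiliated to $M_2(\bulkalgebra)$ and hence to $M_2(\interfacealgebra)$, Proposition~\ref{prop:extensions_fiberwise2} applies and yields that $\mathring H$ is a symmetric $M_2(\interfacealgebra)$-multiplier. The interface analogue of Proposition~\ref{prop:res_aff_vn} (identical proof, replacing $\hsalgebra$ by $\interfacealgebra$ and using $\edgealgebra\simeq C_0(\RM)\otimes\KM$) then shows that $\hat H_{A,B}$ is resolvent affiliated precisely when $u_{A,B}(k)-u_{1,0}(k)\to 0$ in operator norm as $|k|\to\infty$. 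Inspecting the matrix forms of $\Gamma_1(\pm\imath,k)$ above, one sees that they factor as a fixed invertible matrix times diagonal normalisation factors, so condition \eqref{eq-k-dep} of Lemma~\ref{lem-k-dep} holds and the problem reduces to showing $U_{A,B}(k)=W(k,\imath)^{-1}W(k,-\imath)\to 1_V$ as $k\to\pm\infty$.

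Next, I would derive a clean formula for $U_{A,B}(k)-1$. Since $\mu(k,\imath)=\mu(k,-\imath)=\sqrt{k^2+m^2+1}=:\mu$, direct inspection of $Q(k,z)$ yields $Q(k,-\imath)-Q(k,\imath)=-\tfrac{\imath}{\mu}\,1_V$, and therefore
$$
W(k,-\imath)-W(k,\imath)=\frac{\imath}{\mu}B,
\qquad
U_{A,B}(k)-1=\frac{\imath}{\mu}\,W(k,\imath)^{-1}B .
$$
A Taylor expansion $\tfrac{1}{2\mu}\begin{pmatrix}m-z&k\\k&-m-z\end{pmatrix}=\tfrac{\sgn(k)}{2}\sigma_x+O(|k|^{-1})$ then gives $W(k,z)=A_{\sgn(k)}+O(|k|^{-1})$, uniformly in $z\in\{\pm\imath\}$, where $A_\pm:=A\pm\tfrac12 B\sigma_x$.

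The ``if'' direction is now immediate: if $\det A_\pm\neq 0$, then for $|k|$ sufficiently large the matrix $W(k,\imath)$ is a small perturbation of the invertible $A_{\sgn(k)}$, so $\|W(k,\imath)^{-1}\|$ is uniformly bounded; combined with $\mu\to\infty$, this gives $\|U_{A,B}(k)-1\|=O(|k|^{-1})\to 0$. For the ``only if'' direction, suppose without loss of generality $\det A_+=0$ and choose a unit vector $v\in\ker A_+^*$. The self-adjointness hypothesis that $\imath A+B$ be invertible forces $B^*v\neq 0$: otherwise $A_+^*v=A^*v+\tfrac12\sigma_x B^*v=A^*v=0$, so $(\imath A+B)^*v=-\imath A^*v+B^*v=0$, contradicting invertibility. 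Set $s_k:=W(k,\imath)^*v=(W(k,\imath)-A_+)^*v$, so $\|s_k\|=O(|k|^{-1})$ while $(W(k,\imath)^*)^{-1}s_k=v$; applying $B^*$ yields $\|B^*(W(k,\imath)^*)^{-1}\|\ge\|B^*v\|/\|s_k\|$, and taking adjoints in the formula above then gives
$$
\|U_{A,B}(k)-1\|=\frac{1}{\mu}\bigl\|B^*(W(k,\imath)^*)^{-1}\bigr\|\ge\frac{\|B^*v\|}{\mu\,\|s_k\|}\ge c\,\|B^*v\|>0
$$
for some $c>0$ and all $k$ large, so $U_{A,B}(k)\not\to 1$ as $k\to+\infty$. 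The symmetric argument with $A_-$ handles $k\to-\infty$.

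The main obstacle is the ``only if'' direction: one has to use the self-adjointness constraint that $\imath A+B$ be invertible to rule out potential cancellations between the singular behaviour of $W(k,\imath)^{-1}$ and the matrix $B$; without this constraint, configurations with $\text{adj}(A_\pm)B=0$ could conceivably still give $U_{A,B}(k)\to 1$ despite $A_\pm$ being singular.
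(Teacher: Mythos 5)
Your proof is correct and follows essentially the same route as the paper: both reduce resolvent affiliation to the asymptotic condition $U_{A,B}(k)\to 1$ via Lemma~\ref{lem-k-dep} and Proposition~\ref{prop:res_aff_vn}, expand $W(k,z)=A\pm\tfrac12 B\sigma_x+O(|k|^{-1})$, and in the singular case show that $\tfrac{1}{\mu}\,W(k,\imath)^{-1}B$ stays bounded away from $0$ in norm. The only (minor) difference is in that last step: the paper takes $v\in\ker(A+\tfrac12 B\sigma_x)$ and uses the identity $W(k,z)v=\tfrac{1}{2\mu}B\tilde v$ with $\tilde v\to\tilde v_0\neq 0$ directly, whereas you take $v\in\ker\bigl((A+\tfrac12 B\sigma_x)^*\bigr)$ and pass to adjoints, invoking the invertibility of $\imath A+B$ to guarantee $B^*v\neq 0$ --- both arguments are valid.
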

\begin{proof} 
We have $$\Gamma_1(z,k)\sim_{ +\infty} \frac1{\sqrt{2k}}\begin{pmatrix}
0 & m+z \\ 1 & 0    
\end{pmatrix},\quad
\Gamma_1(z,k)\sim_{ -\infty} \frac1{\sqrt{2k}}\begin{pmatrix}
-(m+z) & 0 \\ 0 & -1   
\end{pmatrix}$$ from which we conclude that the conditions of Lemma~\ref{lem-k-dep} are satisfied and so in particular $u_{1,0}\in M_2(\hsalgebra)$. As $\hat H_{1,0}$ is the Dirac operator on the bulk, of which we know that it is resolvent affiliated to $M_2(\bulkalgebra)$ and hence also to $M_2(\interfacealgebra)$, we can apply Proposition~\ref{prop:res_aff_vn} and again Lemma~\ref{lem-k-dep} to see 
that $\hat H_{A,B}$ is resolvent affiliated to $M_2(\interfacealgebra)$ if and only if 
$$W(k,\imath)^{-1} W(k,-\imath)\stackrel{|k|\to +\infty}\longrightarrow 1$$
where $W(k,z) = A -BQ(z)$ is given by 
$$W(k,z) =  A+\frac{\mathrm{sgn}(k)}2 B \sigma_x - \frac1{2\mu} B(\mathrm{sgn}(k)(\mu-|k|)\sigma_x+z 1_2-m\sigma_z).$$
If $A\pm \frac12 B \sigma_x$ is invertible then $W(k,z) \sim_{|k|\to \infty} A + \frac{\mathrm{sgn}(k)}2 B \sigma_x$ which is independent of $z$ and hence $W(k,\imath)^{-1} W(k,-\imath) \stackrel{|k|\to \infty}\longrightarrow 1$.
If $A + \frac12 B \sigma_x$ has a zero eigenvalue with eigenvector $v$ then, for $k>0$, $W(k,z) v = \frac1{2\mu}B \tilde v$ where 
$$\tilde v = ((\mu-|k|)\sigma_x+m\sigma_z+z 1_2) v=:\tilde v_0 + O(k^{-1}),
\quad \tilde v_0 = (m\sigma_z+z 1_2) v.
$$
As $ \frac1{2\mu}W(k,z)^{-1} B \tilde v = v$ we see that the norm of $\frac1{2\mu}W(k,\imath)^{-1} B$ remains bounded from below by $\frac{\|v\|}{\|\tilde v_0\|} + O(k^{-1})$ ($\tilde v_0$ does not vanish). 
Therefore,
$$W(k,\imath)^{-1} W(k,-\imath) = 1 - W(k,\imath)^{-1}(W(k,\imath)-W(k,-\imath)) =
1 - \frac{\imath}{\mu} W(k,\imath)^{-1} B $$
cannot tend to $1$ as $k\to +\infty$. 
If $A - \frac12 B \sigma_x$ has a zero eigenvalue a similar argument shows that $W(k,\imath)^{-1} W(k,-\imath)$ cannot tend to $1$ as $k\to -\infty$.
\end{proof}
For example, the boundary conditions
$$\Psi_1(\op) = \Psi_1(\om)=0 $$
lead to a self-adjoint extension which is not resolvent affiliated, as these correspond to $A=\begin{pmatrix}
1 & 0 \\ 0 & 0    
\end{pmatrix}$ and 
$B=\begin{pmatrix}
0 & 0 \\ 0 & 1    
\end{pmatrix}$.

\subsubsection{Decoupling matching conditions}
Decoupling matching conditions are given by 
$$\Psi_1(0^\pm) = \ra^\pm \Psi_2(0^\pm),\quad \ra^\pm\in\RM\cup{\infty}$$
(one point compactification).  
They correspond to boundary conditions for the half-spaces defined by $y>0$ and $y<0$ separately.
We can rewrite this as $A\Gamma_1(\Psi) = B\Gamma_2(\Psi)$ with
$$A = \frac12\begin{pmatrix} -1 & \ra^+ \\ 1 & -\ra^-\end{pmatrix}
, \quad B= \begin{pmatrix}  \ra^+&1 \\ \ra^- & 1 \end{pmatrix}
.$$
One checks that $AB^*$ is self-adjoint and $A+iB$ invertible. Furthermore, 
$$A+\frac12 B\sigma_x = \begin{pmatrix} 0 & \ra^+ \\ 1 & 0
\end{pmatrix},\quad A-\frac12 B\sigma_x = \begin{pmatrix} -1 & 0 \\ 0 & -\ra^-
\end{pmatrix}$$
from which we conclude by the Lemma~\ref{lem-Dirac} that decoupling conditions lead to 
resolvent affiliated self-adjoint extensions if and only if $\ra^\pm\neq \{0,\infty\}$. We now obtain
\begin{eqnarray*}
A-BQ(z) 
&=& \frac1{-2\mu} \begin{pmatrix}
\mu-k-\ra^+(m-z) & -\ra^+(\mu+k)+m+z \\
-\mu-k-\ra^-(m-z) & \ra^-(\mu-k)+m+z
\end{pmatrix} \\
& = &\frac1{-2\mu} 
\begin{pmatrix}
\frac{\mu-k}{m-z}-\ra^+ & 0 \\
0 & \frac{\mu+k}{m-z}+\ra^-
\end{pmatrix}
\begin{pmatrix}
(m-z) & (\mu+k) \\
-(m-z) & (\mu-k) 
\end{pmatrix}
\end{eqnarray*}
(we used that $\frac{\mu-k}{m-z} = \frac{m+z}{\mu+k}$ and $\frac{\mu+k}{m-z}=\frac{m+z}{\mu-k}$). Its determinant is
$$\det (A-BQ(z)) = \frac1{2\mu(m-z)} ((\mu-k)-\ra^+(m-z))((\mu+k)+\ra^-(m-z)) $$
which is a product of three factors of which the second depends only on $\ra^+$ whereas the third depends on $\ra^-$. As a consequence, the winding number of the von Neumann unitary is a sum of two winding numbers corresponding to the extensions of the two different half-spaces. This is to be expected, as the matching conditions decouple the two half-spaces. Moreover, the fact that the two factors depend on $k$ with opposite sign implies that these winding numbers contribute in an opposite way. 

Let us compute the winding of the von Neumann unitary for the upper half-space $y>0$. Since the transparent matching conditions are not decoupled we choose a new reference system, namely $\ra^+ = 1$ (this is $\Psi_1(0^+) = \Psi_2(0^+)$ and corresponds to infinite mass boundary conditions). Then the relative von Neumann unitary becomes
\begin{eqnarray*}
U^{\ra^+}(k)&=&\frac{(\mu-k)-\ra^+(m-\imath)}{(\mu-k)-\ra^+(m+\imath)}\,\frac{(\mu-k)-(m+\imath)}{(\mu-k)-(m-\imath)}\\
&=& 
\frac{\mu-k-(\ra^++1)m + {\ra^+}\frac{m^2+1}{\mu-k} +(\ra^+-1)\imath}
{\mu-k-(\ra^++1)m + {\ra^+}\frac{m^2+1}{\mu-k} -(\ra^+-1)\imath}
\end{eqnarray*}
from which we infer that 
the numerator goes from $\infty$ to $\mathrm{sgn}(\ra^+)\infty$ when $k$ varies from $-\infty$ to $+\infty$ and since its imaginary part is constant, we see that 
the winding number is $0$ if $\ra^+>0$ and $-1$ if $\ra^+<0$. 

\subsubsection{Edge modes for decoupling conditions}
The edge modes have dispersion relation determined by $\det(A-BQ(\lambda)) = 0$ where $\lambda$ is now real but outside the spectrum of the reference extension. The equation is equivalent to 
$$\mu(k,\lambda)-k = \ra^+(m-\lambda)\quad \mbox{or} \quad \mu(k,\lambda)+k = \ra^-(m-\lambda).$$
The first relation determines the dispersion relation for the edge modes of the upper half plane and the second for the edge modes of the lower one.
We focus on the first relation which we can write ($\ra=\ra^+$)
$$m^2+k^2-\lambda^2 = \ra^2(m-\lambda)^2 +k^2 +2\ra (m-\lambda)k,\quad \ra(m-\lambda)+k>0 .$$
Parametrizing $\epsilon = \mathrm{sgn}(\ra)$ and $\ra = \epsilon e^t$ the solutions is $$\lambda = m\tanh t + k \frac{\epsilon}{\cosh t} ,\quad k {\sinh t} < {m\epsilon}$$
In particular, if $\ra = 0$, then $\lambda(k) = -m$ and $-k<0$, whereas if $|\ra| = \infty$, then $\lambda(k) = m$ and $k<0$. 
These are flat dispersion relations which explains why the corresponding self-adjoint extension is not resolvent affiliated. 
If $|\ra|=1$, that is, $t=0$, then $m$ and $\ra$ must have the same sign and hence $\lambda(k) = m k$, a whole line which does not touch the bulk spectrum. If $t\neq 0$ then $k_* = \frac{m\epsilon}{\sinh t}$ is well defined and 
$$\lambda(k_*) = m (\tanh t + \frac1{\sinh t\cosh t}) = m \sqrt{1+\frac{k_*^2}{m^2}}. $$
This means that the mode  touches the bulk band at $k_*$. 
The inequality  $k\sinh t > m\epsilon$ implies that the spectral curve is not a complete straight line (except if $t=0$) but only a half-line attached to $\lambda(k_*)$. Depending on the mass $m$ and the boundary condition $\ra= \frac{\Psi_1(0)}{\Psi_2(0)}$ this half line may or may not cross the fiducial line at energy $0$ with positive or negative slope. In this way we can determine the spectral flow.  
The results are summarized in Table~\ref{tab-Dirac}. 
\begin{table}
\begin{tabular}{|c|c|c|c|c|c|}
\hline
$\mathrm{sgn} (m)$ & $|\ra|$ & $\mathrm{sgn}(\ra)$ &  wind  & $\SF$ & touched bulk band\\
\hline
\hline
$+$1 & 1 & $+$1 &  0 & $+$1 &  none, $k\in\RM$ \\
\hline
$+$1 & $>1$ & $+$1 &  0 & $+$1 & upper, $k\in (-\infty,k_*)$ \\ 
\hline
$+$1 & $<1$ & $+$1 &  0 & $+1$ & lower, $k\in (k_*,+\infty)$ \\ 
\hline
\hline
$+$1 & $>1$ & $-$1 &  $-$1 & 0 & upper, $k\in (k_*,+\infty)$ \\ 
\hline
$+$1 & $<1$ & $-$1 &  $-$1 & 0 & lower, $k\in (-\infty,k_*)$ \\ 
\hline
\hline
$-$1 & 1 & $-$1 & $-$1 & $-$1 &   none, $k\in\RM$ \\
\hline
$-$1 & $>1$ & $-$1 & $-$1 & $-$1 &  upper, $k\in (-\infty,k_*)$ \\ 
\hline
$-$1 & $<1$ & $-$1 & $-$1 & $-$1 &  lower, $k\in (k_*,+\infty)$ \\ 
\hline
\hline
$-$1 & $>1$ & $+$1 & 0 & 0 &  upper, $k\in (k_*,+\infty)$ \\ 
\hline
$-$1 & $<1$ & $+$1 & 0 & 0 &  lower, $k\in (-\infty,k_*)$ \\ 
\hline
\end{tabular}
\medskip

\caption{Spectral flow and winding number for the half-space Dirac operator $\hat H_\ra$ with $\ra\neq 0,\infty$. 
Connected components in the parameters are separated with double lines.}
\label{tab-Dirac}
\end{table}

\begin{remark}
This family of self-adjoint extensions and their spectral flows were previously also computed in \cite{GL} in a slightly different parametrization.
\end{remark}

\subsubsection{Relative bulk edge correspondence}
Since $H_m$ is not strongly affiliated to the bulk algebra it does not define a $K_0$-class of that algebra and we do not have a simple bulk edge correspondence. But we can employ the relative bulk edge correspondence, or the correspondence for an interface model. 
For $m,m'\neq 0$ and $\ra\notin\{0,\infty\}$ we can apply Theorem~\ref{th:bulk_edge_pluscorrection} to conclude
$$[\hat H_{m,\ra}]_1 = [\hat H_{m',a'}]_1 +  \partial [H_m,H_{m'}]_0 +  [u_\ra u_{\ra'}^*]_1$$
(here $\hat H_{m,\ra}$ is the half-space extension of $H_m$ defined by $\ra$).
The winding number cocycle applied to $\partial [H_m,H_{m'}]_0$ is equal to the Chern number of $[H_m,H_{m'}]_0$. Applied to $[\hat H_{m,\ra},\hat H_{m',\ra'}]_1$ it gives the difference of the spectral flows  
of the edge states of the two operators. As $\SF(k\to \hat H_{m=-1,\ra'=1}(k))=0$
the above equation yields 
\begin{equation}
\label{eq:SF_Dirac}
\SF(k\!\to\!\hat H_{m,\ra}(k))=  
\frac{\sgn(m)+1}2 + \mathrm{wind}(U_\ra^*,U_\ra) = \frac{\sgn(m)+\sgn(a)}2
\end{equation}
which is in agreement with the results of Table~\ref{tab-Dirac}.

\subsubsection{Bulk-interface correspondence}

When we apply Theorem~\ref{th:bulk_interface_pluscorrection} we find: 
\begin{proposition}
Consider the self-adjoint differential operator $$\hat{H}_{m,A,B}(k) = \sigma_x k + Y\partial_y + + m(X_2) \sigma_z$$
subject to a matching condition defined by $A,B$ for which $\hat{H}_{m,A,B}$ is resolvent affiliated as described by Lemma~\ref{lem-Dirac} and with $m$ a piecewise continuous function on $\RM$ which has finitely many singularities and admits finite limits $m_\pm=\lim_{x_2\to\infty}m(x_2)$.

One has \begin{equation}
\label{eq:interface_dirac}
\SF(k\mapsto \hat{H}_{m,A,B}(k)) = \frac{\sgn(m_+)-\sgn(m_-)}2 + \mathrm{Wind}(U_{A,B}^*,U_{A,B})
\end{equation}
\end{proposition}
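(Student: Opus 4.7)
The plan is to fit the family $\hat H_{m,A,B}$ into the framework of Theorem~\ref{th:bulk_interface_pluscorrection} and then pair both sides of the resulting K-theoretic identity with the one-dimensional Chern cocycle $\Ch_1$.

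First I would fix a reference mass $m_0\neq 0$ and write $\hat H_{m,A,B} = \mathring D_{A,B} + \hat V$, where $D = \sigma_x k + Y\partial_y + m_0\sigma_z$ is translation-invariant and elliptic, hence $\bulkalgebra$-resolvent-affiliated with $(D+\imath)^{-1}\in M_2(\bulkalgebra)$, and $\hat V = (m(X_2)-m_0)\sigma_z$ is bounded and self-adjoint. Since $m-m_0$ is a bounded Borel function on $\RM$, Lemma~\ref{lemma:nonsmoothelements}(ii) guarantees that $\hat V$ is a multiplier of $M_2(\interfacealgebra)$ even though $m$ may have finitely many interior jumps. The asymptotic bulk Hamiltonians are then the pure Dirac operators $H_{V_\pm}=H_{m_\pm}$, which are invertible precisely because $m_\pm\neq 0$, and the transparent matching $u_T$ makes $\mathring D_{u_T}=D$ resolvent-affiliated to $\bulkalgebra\subset\interfacealgebra$.

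Applying Theorem~\ref{th:bulk_interface_pluscorrection} yields
$$[\hat H_{m,A,B}]_1 = \partial_+\bigl([H_{m_+},H_{m_-}]_0\bigr) + [1+u_{A,B}u_T^*-u_Tu_T^*]_1 \in K_1(\edgealgebra).$$
Pairing both sides with $\Ch_1$, the left-hand side equals $\SF(k\mapsto \hat H_{m,A,B}(k))$ by the standard identification of the winding-number pairing with the spectral flow of a family of self-adjoint Fredholm operators. For the first term on the right I would invoke the duality $\langle\Ch_1,\partial_+ x\rangle = \langle\Ch_2,x\rangle$, reducing it to the relative bulk pairing computed in \eqref{eq-ch-rel}, which gives $\tfrac12(\sgn m_+-\sgn m_-)$. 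For the second term, Lemma~\ref{lem-k-dep} expresses $u_{A,B}u_T^*$ as conjugate on the deficiency subspaces to the finite-dimensional unitary $U_{A,B}(k)=W(k,\imath)^{-1}W(k,-\imath)$ on $\CM^2$ via uniformly invertible $k$-dependent maps; winding is a similarity invariant under such conjugation, so the pairing equals $\mathrm{Wind}(U_{A,B}^*,U_{A,B})$. Adding the two contributions gives \eqref{eq:interface_dirac}.

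The main obstacle will be verifying rigorously that $\hat V$ genuinely defines a multiplier of $M_2(\interfacealgebra)$ and that the resolvent-affiliation of $\hat H_{m,A,B}$ assumed in the statement is not compromised by the interior jumps of $m$. The cleanest resolution is to exploit that $m(X_2)$ is a bounded Borel function of the normal position operator, hence a multiplier by Lemma~\ref{lemma:nonsmoothelements}(ii), so that the hypotheses of Theorem~\ref{th:bulk_interface_pluscorrection} are preserved. A secondary point to check is the sign convention in the ordering of the pair $[H_{m_+},H_{m_-}]_0$, which must be matched to the formula in \eqref{eq-ch-rel} to yield the correct sign $\sgn m_+-\sgn m_-$ rather than its opposite.
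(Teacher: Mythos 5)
Your proposal is correct and follows essentially the same route as the paper, which simply invokes Theorem~\ref{th:bulk_interface_pluscorrection} with the decomposition into a fixed-mass Dirac operator $D$ plus the bounded multiplier $(m(X_2)-m_0)\sigma_z$, pairs with $\Ch_1$ via the duality $\langle\Ch_1,\partial_+x\rangle=\langle\Ch_2,x\rangle$ and the relative Chern computation \eqref{eq-ch-rel}, and identifies the boundary correction with $\mathrm{Wind}(U_{A,B}^*,U_{A,B})$ through the conjugacy $u_{A,B}u_T^*\sim U_{A,B}$ from the boundary-triple construction. Your two flagged caveats (the multiplier property of the jump-discontinuous mass via Lemma~\ref{lemma:nonsmoothelements}, and the ordering of the pair $[H_{m_+},H_{m_-}]_0$) are exactly the points the paper handles implicitly, so nothing is missing.
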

The correction $\mathrm{Wind}(U_{A,B}^*,U_{A,B})$ vanishes for the transparent matching conditions. As mentioned before, this is in line with the bulk-difference-interface correspondence from \cite{Bal19}. For decoupling matching conditions one easily finds that the correction is non-trivial whenever $\sgn(a^+)\neq \sgn(a^-)$. Within the $\interfacealgebra$-resolvent affiliated boundary conditions the connected components of the decoupled matching conditions are not isolated, i.e. there do exist matching conditions which couple the two half-spaces but have non-trivial contributions to the spectral flow. 
\begin{remark}
The spectral flow is not generally equal to the number of interface modes since there can be cancellations; in the interpretation as interface conductance this is clear since some of the interface states may propagate in opposite directions. 
\end{remark}

\begin{remark}
The infinite mass boundary conditions $a^+ = \pm 1$ can be obtained from the interface model in the limiting case where the Dirac mass $m(x_d)$ tends to $\mp \infty$ for all $x_d < 0$ \cite{BCTS} while maintaining transparent matching conditions. In that limit \eqref{eq:SF_Dirac} and \eqref{eq:interface_dirac} are consistent.
\end{remark}

\subsection{Regularised Dirac operator}
\newcommand{\al}{{\alpha}}
\newcommand{\be}{{\textcolor{blue}{\beta}}}

For the massive Dirac operator the Chern number the spectral projection onto the negative energy states is not a well-defined topological invariant. To avoid this problem one can regularize the Dirac operatorby adding a second-order term $\epsilon (\partial_x^2+\partial_y^2) \sigma_z$ with some small $\epsilon\neq 0$ which we suppose to be smaller in size than $\frac{1}{2}|m|$. This operator is referred to as the regularized (massive) Dirac operator and it is a continuum model for a Chern insulator. On a half-space the spectral flow can be seen to depend on the choice of boundary conditions \cite{TDV20, JudTauber2025}, which we will study using our formalism in this section.

The regularised Dirac operator is the matrix valued operator
\begin{equation}
	\label{eq:reg_massivedirac_bulk}H_{m,\epsilon} = -\imath (\sigma_x \partial_x  + \sigma_y \partial_y) + m \sigma_z - \epsilon (\partial_x^2+\partial_y^2) \sigma_z .
\end{equation}
If $\epsilon$ is small enough then 
the spectrum does not change under the regularisation, it is the complement of  $(-\abs{m},\abs{m})$. The effect of adding this term is that  $H_{m,\epsilon}$ is now strongly affiliated to $\bulkalgebra$. This can be checked by explicitly computing the asymptotic behavior of the Fourier transform of $F(H_{m,\epsilon})$, or simply by noting that the Hamiltonian is a relatively compact perturbation of the strongly affiliated Hamiltonian $\epsilon (\partial_x^2+\partial_y^2) \sigma_z$ in the sense of Proposition~\ref{prop-res-perturb}.  
As a consequence, the spectral projection $P_{\leq 0}(H_{m,\epsilon})$ defines an element of $K_0(\bulkalgebra)$ and its Chern number is well-defined; it is 
\begin{equation}\label{eq-bulk-ch}
\sigma_b=\langle \Ch_{2}, [H_{m,\epsilon}]_0\rangle = \frac{1}{2} (\sgn(m) - \sgn(\epsilon))
\end{equation}
as one can compute by direct evaluation of the integral formula.

\subsubsection{Affiliation to the half-space algebra and boundary conditions} Let $\mathring{H}$ be the closure of the symmetric operator obtained from restricting $H$ to $y>0$, that is, to the core $C_c^{\infty}(\RM\times\RM_+)$.  We are interested to know which boundary conditions furnish us a self-adjoint extension which is resolvent affiliated to the half-space algebra.
We consider here only local translation invariant boundary conditions:
\begin{equation}\label{eq-bc}
K\Psi(x,0) + L \partial_{x}\Psi(x,0) + M \partial_{y}\Psi(x,y) = 0 \end{equation}
where $K,L,M$ are constant complex $2\times 2$ matrices (with some restrictions owing to self-adjointness, see below). 
When performing a Bloch decomposition along the boundary we obtain a family of symmetric Hamiltonians
$$\mathring H(k) = \begin{pmatrix} m + \epsilon (k^2-\partial_y^2) & k + \partial_y \\ k - \partial_y & -m-\epsilon (k^2-\partial_y^2) \end{pmatrix}
$$
and the boundary conditions can be written
\begin{equation}\label{eq-bc-regD}
K \Psi(k,0) - i k L \Psi(k,0) + M \Psi'(k,0) = 0.
\end{equation}
\begin{proposition}
\label{prop:dd_and_dn_affiliation}
The self-adjoint extensions $\hat{H}_{D,D}:=\hat{H}_{\one_2,0,0}$ (Dirichlet-Dirichlet) and $\hat{H}_{D,N}:=\hat{H}_{\diag(1,0),0_2,\diag(0,1)}$ (Dirichlet-Neumann) are resolvent affiliated to $M_2(\hsalgebra)$ and strongly affiliated to $\hsalgebra$.
\end{proposition}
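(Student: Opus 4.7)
The plan is to apply the Kato-Rellich-type perturbation result, Proposition~\ref{prop-res-perturb}, to the decomposition $\hat{H}_{D,D} = \hat{H}_0 + V$ (and analogously for $\hat{H}_{D,N}$), where the reference operator $\hat{H}_0 := -\epsilon\Delta\sigma_z$ carries the same boundary conditions as $\hat{H}_{D,D}$ and $V := -\imath(\sigma_x\partial_x+\sigma_y\partial_y) + m\sigma_z$ collects the first-order and mass part. Because $\sigma_z = \mathrm{diag}(1,-1)$ is diagonal, $\hat{H}_0$ splits into a direct sum of two scalar second-order elliptic blocks, each being $\pm\epsilon$ times a half-space Laplacian with Dirichlet (resp.\ Neumann) boundary condition. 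By Proposition~\ref{prop:reflection_principle} every such block is resolvent-affiliated to $\hsalgebra$ and semibounded, hence strongly affiliated by Lemma~\ref{lemma:strong_vs_resolvent}, so $\hat{H}_0$ itself is a strongly affiliated $M_2(\hsalgebra)$-multiplier.

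Granted this, the key input is the ideal condition $V(\hat{H}_0+\imath)^{-1}\in M_2(\hsalgebra)$. Then Proposition~\ref{prop-res-perturb}(i) gives resolvent-affiliation of $\hat{H}_{D,D}$ as an $M_2(\hsalgebra)$-multiplier, and Proposition~\ref{prop-res-perturb}(ii) applied with ideal $\Ii = M_2(\hsalgebra)\subset M_2(\hsalgebra^\sim)$ produces $F(\hat{H}_{D,D})-F(\hat{H}_0)\in M_2(\hsalgebra)$, so that strong affiliation of $\hat{H}_{D,D}$ is inherited from that of $\hat{H}_0$. The Kato-Rellich bound with arbitrarily small $\alpha$ is automatic since $V$ is first-order while $\hat{H}_0$ is second-order elliptic, and the mass piece $m\sigma_z(\hat{H}_0+\imath)^{-1}$ is trivially in $M_2(\hsalgebra)$, so it remains only to control the first-order derivative terms $\partial_j(\hat{H}_0+\imath)^{-1}$ for $j=x,y$.

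For these I would invoke the reflection-principle expression from the proof of Proposition~\ref{prop:reflection_principle}, which realizes each scalar block of $(\hat{H}_0+\imath)^{-1}$ as $P_+(\pm\epsilon\Delta_b+\imath)^{-1}(\one\mp\Rr)P_+$, differentiate in direction $j$, and split into two pieces. The unreflected piece is the half-space compression of a translation-invariant operator whose Fourier symbol $\imath k_j/(\pm\epsilon|k|^2+\imath)$ lies in $C_0(\RM^2)\cong\bulkalgebra$; by Lemma~\ref{lemma:nonsmoothelements}(iii) its compression lies in $\hsalgebra$. The reflected piece has integral kernel depending on the variable $y+y'$, hence decays jointly in both normal coordinates and belongs to $\hsedgealgebra\subset\hsalgebra$. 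The argument for $\hat{H}_{D,N}$ is identical, differing only in which scalar Laplacian appears in the second block.

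The main obstacle is this concrete kernel analysis for the derivatives of the (reflected) bulk resolvent: verifying the required off-diagonal decay together with joint decay in $y$ and $y'$ after applying $\partial_j$ requires a careful inspection of the integral kernel and its symbol, and subsequently invoking Lemma~\ref{lemma:nonsmoothelements} to place the compressed operators into the correct subalgebras. Once this analytic step is in place, the rest of the proof is a direct application of Proposition~\ref{prop-res-perturb} combined with the already-established properties of the scalar Dirichlet and Neumann Laplacians from Proposition~\ref{prop:reflection_principle}.
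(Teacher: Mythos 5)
Your proposal is correct and follows essentially the same route as the paper: decompose the half-space operator as the second-order diagonal part $-\epsilon\Delta\sigma_z$ with the matching Dirichlet/Neumann boundary conditions (whose resolvent-affiliation and strong affiliation come from Proposition~\ref{prop:reflection_principle} and semiboundedness of each scalar block) plus a relatively compact first-order perturbation, then invoke Proposition~\ref{prop-res-perturb} after checking $V(\hat{H}_0+\imath)^{-1}\in M_2(\hsalgebra)$ via the explicit reflection-principle resolvent kernels. The only cosmetic differences are that the paper verifies strong affiliation of the reference operator by writing $F(\hat{H}^{(0)}_{D,*})$ explicitly as a diagonal of bounded transforms of semibounded Laplacians rather than citing Lemma~\ref{lemma:strong_vs_resolvent}, and that it notes the symmetry of the perturbation on the common domain, which you should state explicitly since Proposition~\ref{prop-res-perturb} requires it.
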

\begin{proof}
Denote $\Delta_D$ and $\Delta_N$ the Dirichlet respectively Neumann Laplacian on $\RM\times \RM_+$. Since those are resolvent-affiliated to $\hsalgebra$ it is easy to show directly that $\hat{H}^{(0)}_{D,D} := -\epsilon \sigma_3 \otimes \Delta_D$ and $\hat{H}_{D,N}^{(0)} := -\diag(\epsilon \Delta_D, -\epsilon \Delta_N)$ are $\hsalgebra$-resolvent-affiliated and moreover strongly affiliated since $$F(\hat{H}_{D,*}^{(0)})=-\diag(F(\epsilon \Delta_D), -F(\epsilon \Delta_*)).$$ One can check that $\hat{V}_{D,*}:=\hat{H}^{(0)}_{D,*}-\hat{H}_{D,*}$ is a first-order differential operator which is symmetric on the given domain for both choices $*\in \{D,N\}$. With 
$$(\hat{H}^{(0)}_{D,*}+\imath)^{-1}= -\diag\left( (\epsilon \Delta_D-\imath)^{-1}, -(\epsilon \Delta_*+\imath)^{-1}\right)$$
and the explicit expressions of Proposition~\ref{prop:reflection_principle} for the resolvents it is simple to verify that $\hat{V}_{D,*}(\hat{H}^{(0)}_{D,*}+\imath)^{-1}\in M_2(\CM)\otimes \hsalgebra$. Therefore, the conditions of Proposition~\ref{prop-res-perturb} are satisfied.
\end{proof}
\begin{remark}
For general boundary conditions this perturbation argument fails since the first-order terms are generally neither symmetric on their own nor relatively bounded.
\end{remark}
It is convenient to set up the boundary triple in such a way that the Dirichlet Hamiltonian $\hat H_{1,0,0}$ corresponds the reference system $(A,B)=(1,0)$.

To derive an expression for the von Neumann unitary $U_{K,L,M}(k)$
we use the boundary triple $(\CM^2,\Gamma_1,\Gamma_2)$,  
$$\Gamma_1\Psi = \Psi(0), \quad \Gamma_2 = -\frac12 Y\Gamma_1 -\epsilon \sigma_z \Gamma_1',\quad \Gamma_1'\Psi = \Psi'(0)$$ where $Y=\begin{pmatrix} 0 & 1 \\ -1 & 0 \end{pmatrix}$. 
Now $A\Gamma_1\Psi = B \Gamma_2\Psi$ corrresponds to (\ref{eq-bc-regD}) if $A +B\frac12Y=K-\imath kL$ and 
$\epsilon B\sigma_z = M$. Recall that  $U_{K,L,M}(k)=W(k,\imath)^{-1} W(k,\imath)$ where 
$W(k,z) = A - B \Gamma_2(k,z) \Gamma_1^{-1} (k,z)$ which we can write as  
$$W(k,\imath) 
= K-\imath kL +
M\Gamma_1'(k,\imath) \Gamma_1^{-1}(k,\imath).$$ 
To obtain an expression for $W(k,z)$ we solve $\mathring{H}^*\Psi = z\Psi$ for $z$ in the resolvent set of $\mathring H$. We obtain the two independent solutions
\begin{equation}\label{eq-base}
\Psi_{\alpha,z}(x)  = \phi_{\alpha,z}  \exp^{-\mu_{\alpha}x},
\end{equation}
$\alpha = \pm$, where 
\begin{equation}\label{eq-sol1}
\phi_{+,z} = \begin{pmatrix} \nu_++z \\ \mu_++k  \end{pmatrix},
\quad 
\phi_{-,z} = \begin{pmatrix} \nu_--z \\ \mu_--k  \end{pmatrix},
\end{equation}
with $\mu_{\alpha}(k,z) = \sqrt{k^2+\zeta_\alpha(z)}, 
\nu_\alpha(z) = m -\epsilon \zeta_{\alpha}(z)$ and 
$$\zeta_{\pm}(z)  = \frac1{2\epsilon^2}\left( 1+2m\epsilon \,\pm\, \sqrt{1+4m\epsilon +4\epsilon^2 z^2}\right) .$$
The square root taken in the definition of $\mu_\alpha$ is that with positive real part. 
Here we assume that $\epsilon$ is small enough so that $\zeta_{\pm}(z)$ is positive real number if $z=\pm \imath$. 
Taking the above two solutions as a basis for $\Nn(z)$ and the canonical basis for $\CM^2$ we obtain the  matrix expressions for $\Gamma_1$ and $\Gamma_1'$, $\Gamma_1 = (\phi_{+,z}\:  \phi_{-,z})$, $\Gamma'_1 = -(\mu_+ \phi_{+,z}\:  \mu_-\phi_{-,z})$. This leads to 
$$\Gamma_1'\Gamma_1^{-1} = \Gamma_1\begin{pmatrix}
-\mu_+ & 0 \\ 0 & -\mu_-\end{pmatrix}\Gamma_1^{-1} = -\frac{\mu_++\mu_-}2 1_2 - \frac{\mu_+-\mu_-}2 \Gamma_1 \sigma_z \Gamma_1^{-1},$$
\begin{eqnarray*}
\Gamma_1 \sigma_z \Gamma_1^{-1} & = & \frac1{\tilde\nu_+ \tilde \nu_- - \tilde\mu_+ \tilde \mu_-} 
\begin{pmatrix} \tilde \nu_+ \tilde\nu_- + \tilde \mu_- \tilde\mu_+& -2\tilde \nu_+ \tilde \mu_-  \\ 2\tilde\mu_+\tilde\nu_-  &  - \tilde \nu_+ \tilde\nu_- - \tilde \mu_- \tilde\mu_+  \end{pmatrix} 
\end{eqnarray*}
where $\tilde \mu_\alpha = \mu_\alpha + \alpha k$ and $\tilde \nu_\alpha = \nu_\alpha + \alpha z$.

\begin{lemma}
If $\det(iL\pm M)\neq 0$ then
the extension $\hat H_{K,L,M}$ is resolvent affiliated to $M_2(\hsalgebra)$. 
\end{lemma}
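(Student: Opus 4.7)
The plan is to apply Proposition~\ref{prop:res_aff_vn} using as reference the Dirichlet--Dirichlet extension $\hat H_{1,0,0}$, which is resolvent-affiliated to $M_2(\hsalgebra)$ by Proposition~\ref{prop:dd_and_dn_affiliation}. Since the closed symmetric $\mathring{H}(k)$ has two-dimensional deficiency subspaces and $u_{1,0}(k)$ is the associated reference von Neumann unitary, resolvent affiliation of $\hat H_{K,L,M}$ reduces to the statement that $u_{K,L,M}(k) - u_{1,0}(k)$ lies in the edge algebra, and by Lemma~\ref{lem-k-dep} this will follow from (a) norm continuity of $k\mapsto \Gamma_i(z,k)$, (b) the uniform bound \eqref{eq-k-dep}, and (c) the convergence $U_{K,L,M}(k)\to 1_V$ as $|k|\to\infty$.

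First I would verify (a) and (b). From the explicit basis \eqref{eq-sol1}, the matrix of $\Gamma_1(z,k)$ in the orthonormal basis $(\hat\Psi_{+,z},\hat\Psi_{-,z})$ is norm continuous in $k$ because the normalising factors $\|\Psi_{\alpha,z}(k,\cdot)\|^2 = |\phi_{\alpha,z}|^2/(2\mu_\alpha)$ are smooth and strictly positive for $z=\pm \imath$ (using that $\epsilon$ is small enough so that $\zeta_\pm(\pm\imath)$ are positive). A direct asymptotic calculation for, say, $k\to+\infty$ shows that in this orthonormal basis $\Gamma_1(z,k)$ behaves like a matrix whose norm and determinant both scale linearly in $\sqrt{|k|}$ in opposite ways, so that the product $\|\Gamma_1(z,k)\|\,\|\Gamma_1(\bar z,k)^{-1}\|$ is uniformly bounded. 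This is the same type of computation as in the proof of Lemma~\ref{lem-Dirac}.

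The heart of the proof is (c). Using the decomposition
\[
\Gamma_1'(k,z)\Gamma_1^{-1}(k,z) = -\frac{\mu_+(k,z)+\mu_-(k,z)}{2}\,1_2 - \frac{\mu_+(k,z)-\mu_-(k,z)}{2}\,\Gamma_1(k,z)\sigma_z\Gamma_1^{-1}(k,z),
\]
and the asymptotics $\mu_\alpha(k,z) = |k| + \zeta_\alpha(z)/(2|k|) + O(|k|^{-3})$, together with the fact that $\Gamma_1\sigma_z\Gamma_1^{-1}$ is conjugate to $\sigma_z$ and in particular uniformly bounded in $k$, I would deduce
\[
\Gamma_1'(k,z)\Gamma_1^{-1}(k,z) = -|k|\,1_2 + O(|k|^{-1}),
\]
where the $O(|k|^{-1})$ term depends on $z$ but is controlled uniformly for $z\in\{\pm\imath\}$. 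Substituting into $W(k,z)=K-\imath kL + M\,\Gamma_1'\Gamma_1^{-1}(k,z)$ gives
\[
W(k,z) = -k\bigl(\imath L + \mathrm{sgn}(k)M\bigr) + K + O(|k|^{-1}).
\]

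Under the hypothesis $\det(\imath L\pm M)\neq 0$, the leading matrix $\imath L + \mathrm{sgn}(k)M$ is invertible for either sign of $k$, so for large $|k|$
\[
W(k,z) = -k\bigl(\imath L + \mathrm{sgn}(k)M\bigr)\bigl[1 + O(|k|^{-1})\bigr],
\]
and hence $W(k,\imath)^{-1}W(k,-\imath) = [1+O(|k|^{-1})][1+O(|k|^{-1})] \to 1$ as $|k|\to\infty$. By Lemma~\ref{lem-k-dep} this is equivalent to $u_{K,L,M}-u_{1,0}\in \edgealgebra\otimes M_2(\CM)$, and Proposition~\ref{prop:res_aff_vn} finishes the proof. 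The main technical obstacle is keeping track of the $z$-dependence in the subleading correction to $\Gamma_1'\Gamma_1^{-1}$, since a coefficient that is merely bounded (rather than $o(1)$) would spoil the cancellation upon multiplying $W(k,\imath)^{-1}W(k,-\imath)$; the point is precisely that the $z$-dependent piece enters at order $|k|^{-1}$ and is absorbed by the invertibility of the leading coefficient $\imath L\pm M$.
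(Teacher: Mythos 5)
Your overall strategy --- compare with the Dirichlet--Dirichlet reference $\hat H_{1,0,0}$ via Proposition~\ref{prop:dd_and_dn_affiliation}, Lemma~\ref{lem-k-dep} and Proposition~\ref{prop:res_aff_vn}, then show $U_{K,L,M}(k)\to 1$ by extracting the leading behaviour of $W(k,z)$ --- is exactly the paper's. But there is a genuine error in the central asymptotic step. You assert that $\Gamma_1\sigma_z\Gamma_1^{-1}$ is ``conjugate to $\sigma_z$ and in particular uniformly bounded in $k$''. Conjugation by the non-unitary, $k$-dependent matrix $\Gamma_1$ does not preserve boundedness, and the explicit formula shows it fails here: for $k\to+\infty$ the $(2,1)$ entry $2\tilde\mu_+\tilde\nu_-/(\tilde\nu_+\tilde\nu_--\tilde\mu_+\tilde\mu_-)$ grows like $|k|$, since $\tilde\mu_+=\mu_++k\sim 2k$ while the denominator stays bounded. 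So $\norm{\Gamma_1\sigma_z\Gamma_1^{-1}}=O(|k|)$, and combined with $\mu_+-\mu_-=O(|k|^{-1})$ the correct estimate is $\Gamma_1'\Gamma_1^{-1}=-|k|\,1_2+O(1)$, not $-|k|\,1_2+O(|k|^{-1})$. Consequently $W(k,z)=-k\bigl(\imath L+\sgn(k)M\bigr)+O(1)$ with a $z$-dependent bounded remainder, which is what the paper proves.

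This does not sink the lemma, but it does invalidate your closing diagnosis that a merely bounded (rather than $o(1)$) $z$-dependent correction ``would spoil the cancellation''. It does not: writing $W(k,\imath)^{-1}W(k,-\imath)=1+W(k,\imath)^{-1}\bigl(W(k,-\imath)-W(k,\imath)\bigr)$, the hypothesis $\det(\imath L\pm M)\neq 0$ makes the leading coefficient invertible, so $\norm{W(k,\imath)^{-1}}=O(|k|^{-1})$, while the difference $W(k,-\imath)-W(k,\imath)$ is only $O(1)$; the product still tends to $0$. This is precisely how the paper's argument closes. To repair your proof, replace the uniform-boundedness claim by the correct $O(|k|)$ bound on $\Gamma_1\sigma_z\Gamma_1^{-1}$ and run the final limit through the resolvent-type identity above rather than through a factorization $[1+O(|k|^{-1})][1+O(|k|^{-1})]$.
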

\begin{proof} While the above matrix representation of $\Gamma_1(z,k)$ is convenient to express $W(z,k)$,  it is not w.r.t.\ to an orthonormal basis of $\Nn_z$. To verify condition (\ref{eq-k-dep}) we need a matrix representation in an orthonormal basis of $\Nn_z$. We can write such a base as $(\Psi_{+,z}(k,\cdot),\Psi_{-,z}(k,\cdot))\mathcal P(z,k)$ with an appropriate $2\times 2$ matrix $\mathcal P(z,k)$. Then the matrix representation of $\Gamma_1$
is given by $(\phi_{+,z},\phi_{-,z})\mathcal P(z,k)$. We only need to know its asymptotic form for large $k$ and there it is helpful that the normalized eigenfunctions $(\hat\Psi_{+,z}(k,\cdot),\hat\Psi_{-,z}(k,\cdot))$ form asymptotically an orthonormal basis. Then we obtain expressions that are very similar to those in the proof of Lemma~\ref{lem-Dirac} and the same arguments show that $(\phi_{+,z},\phi_{-,z})\mathcal P(z,k)$ has asymptotic form $f(k) C(z)$. Thus 
the criterion for resolvent affiliation of $\hat H_{K,L,M}$ is  that $U_{K,L,M}(k) \longrightarrow 1$ as $k$ tends to infinity.

While the diagonal elements of $\Gamma_1 \sigma_z \Gamma_1^{-1} $ are bounded in $k$, the off-diagonal ones are at most of order $k$.  As $\mu_+-\mu_-$ is of order $O(k^{-1})$ we see that
$$ \Gamma'_1  \Gamma_1^{-1} = - |k| 1_2 + O(1).$$
Hence $W(k,z) = - \imath k L -|k|M + O(1)$. This shows 
that under the assumption of the lemma $U_{K,L,M}(k)$
tends to $1$ as $|k|\to \infty$.
\end{proof}
The condition that $\det(iL\pm M)\neq 0$ is sufficient but not necessary for resolvent affiliation. For instance, if $L=M=0$ then $K$ must be invertible in which case we have Dirichlet boundary conditions. As already seen $\hat H_{1,0,0}$ is resolvent affiliated. Another interesting class of boundary conditions is given by
\begin{equation}\label{eq:reg_dirac_bc}
\Psi_1(x,0) = 0,\quad 
ia\partial_{x}\Psi_2(x,0) = \partial_{y} \Psi_2(x,0)
\end{equation}
depending on a real parameter $a$. 
They correspond to
$$K-\imath kL = \begin{pmatrix} 1 & 0 \\ 0 & -ak \end{pmatrix}, \quad 
M = \begin{pmatrix} 0 & 0 \\ 0 & 1\end{pmatrix} $$
and we denote $\hat H_{K,L,M}$ now shorter by $\hat H_a$.
It follows that
$$\det W(k,z) = -ak - \frac{ \mu_- \tilde \nu_- \tilde \nu_+ -\mu_+ \tilde \mu_+ \tilde \mu_- }{\tilde \nu_+  \tilde \nu_- - \tilde \mu_+  \tilde \mu_-}$$
\begin{lemma}
The extension $\hat H_a$ is resolvent affiliated $M_2(\hsalgebra)$  if and only if $a\neq \pm 1$.
\end{lemma}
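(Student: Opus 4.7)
The strategy is to apply Proposition~\ref{prop:res_aff_vn} with $\hat H_{D,D} = \hat H_{1,0,0}$ (the Dirichlet reference extension known to be resolvent affiliated by Proposition~\ref{prop:dd_and_dn_affiliation}) as the reference and to analyse the relative unitary $U_a(k) = W(k,\imath)^{-1} W(k,-\imath)$. The continuity conditions of Lemma~\ref{lem-k-dep} are verified as in the proof of the previous lemma, so the only question is whether $U_a(k)\to 1_2$ as $|k|\to\infty$.

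The key structural observation is that for the boundary conditions \eqref{eq:reg_dirac_bc} we have $K-\imath kL = \mathrm{diag}(1,-ak)$ and $M = \mathrm{diag}(0,1)$, and moreover the first row of $M$ vanishes. Therefore the second term $M\,\Gamma_1'(k,z)\Gamma_1(k,z)^{-1}$ in $W(k,z)$ has zero first row, and $W(k,z)$ is block lower-triangular with $W_{11}=1$, $W_{12}=0$. Consequently $U_a(k)$ is also lower-triangular with $U_{11}=1$, and resolvent affiliation reduces to
$$U_{22}(k) = \frac{-ak + W_{22}(k,-\imath)}{-ak + W_{22}(k,\imath)} \longrightarrow 1,\qquad U_{21}(k)=\frac{W_{21}(k,-\imath)-W_{21}(k,\imath)}{-ak+W_{22}(k,\imath)}\longrightarrow 0$$
as $|k|\to\infty$.

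Now I would use the asymptotic expansions $\mu_\alpha(k,z)=|k|+\zeta_\alpha(z)/(2|k|)+O(|k|^{-3})$, together with $\tilde\mu_+\tilde\mu_- \to \zeta_-(z)$ for $k\to+\infty$ and $\tilde\mu_+\tilde\mu_-\to\zeta_+(z)$ for $k\to-\infty$, to extract the leading behavior of $W_{22}$ and $W_{21}$ from the formulas derived above. A short computation yields
$$W_{22}(k,z) = -|k| + \frac{C_{\pm}(z)}{|k|}+O(|k|^{-3}),$$
with $C_\pm(z)$ a specific combination of $\zeta_\pm(z)$ and $\tilde\nu_\pm(z)$ depending on $\mathrm{sgn}(k)=\pm$, and $W_{21}(k,z)=O(1)$. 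Hence for $k\to+\infty$ the denominator is $-(a+1)k+O(1)$, while for $k\to-\infty$ it is $(1-a)|k|+O(1)$. When $a\neq\pm 1$, both denominators grow linearly, whereas $W_{22}(k,-\imath)-W_{22}(k,\imath)=O(|k|^{-1})$ (the leading $-|k|$ is $z$-independent) and $W_{21}(\pm\imath)$ is bounded, so $U_{22}\to 1$ and $U_{21}\to 0$.

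The converse is where the main work lies. For $a=1$ and $k\to-\infty$ the leading term $-ak$ exactly cancels $-|k|$, so the denominator reduces to $C_-(\imath)/|k|+O(|k|^{-3})$, and similarly for $a=-1$ and $k\to+\infty$. The limit of $U_{22}$ is then $C_\pm(-\imath)/C_\pm(\imath)$. Since $\zeta_\pm(\pm\imath)\in\RM$ but $\tilde\nu_\alpha(z)=\nu_\alpha+\alpha z$ contributes the linear-in-$z$ imaginary part $\pm \imath(\nu_+-\nu_-)=\mp \imath\epsilon(\zeta_+-\zeta_-)$ to $\tilde\nu_+\tilde\nu_-$, the function $C_\pm(z)$ satisfies $C_\pm(-\imath)=\overline{C_\pm(\imath)}$ with a genuinely nonzero imaginary part (this uses $\epsilon\neq 0$ so that $\nu_+\neq\nu_-$). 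Hence $U_{22}(k)$ tends to a number of modulus one distinct from $1$, and $\hat H_a$ is not resolvent affiliated. The main obstacle is precisely this last step: one must track the subleading $1/|k|$ corrections carefully enough to identify $C_\pm(z)$, and then verify algebraically that its imaginary part is nonzero for small $\epsilon\neq 0$—the rest of the proof is a routine asymptotic expansion.
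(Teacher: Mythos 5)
Your proof is correct and follows essentially the same route as the paper: the paper reduces the question to $\det W(k,\imath)^{-1}\det W(k,-\imath)\to 1$ (using $U_{11}=1$ and unitarity) and writes $\det W(k,z)=-ak-|k|+G(k,z)$ with $G=O(k^{-1})$, which is exactly your $(2,2)$-entry analysis since $W_{11}=1$ and $W_{12}=0$ force $\det W=W_{22}$. Your treatment is, if anything, slightly more complete than the paper's, since you bound $U_{21}$ directly rather than via unitarity and you substantiate the necessity direction (the paper merely asserts that $G(k,\imath)^{-1}G(k,-\imath)\not\to 1$, whereas you correctly identify that the $z$-dependence of the $O(|k|^{-1})$ term enters through $\tilde\nu_+\tilde\nu_-=\nu_+\nu_-+1+z(\nu_--\nu_+)$ and is genuinely non-real because $\nu_+\neq\nu_-$ for $\epsilon\neq 0$).
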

\begin{proof}
As the $11$-component of 
$W(k,\imath)^{-1} W(k,-\imath)$ is $1$ the whole matrix tends to $1$ if and only if its determinant tends to $1$ (the matrix is unitary). 
We can write 
$$\det W(k,z) = -ak - |k| + G(k,z) $$
where $G(k,z)$ is of order $O(k^{-1})$.
Therefore the condition $a\neq \pm 1$ is sufficient for resolvent affiliation. 

Furthermore, $G(k,\imath)^{-1} G(k,-\imath)$ does not tend to $1$ if $k$ goes to $+\infty$ or $-\infty$. Hence if $|a|=1$ then
 $W(k,\imath)^{-1} W(k,-\imath)$ does not tend to $1$ as $-ak$ goes to 
 $+\infty$.  
\end{proof}

\subsubsection{Edge modes}
The dispersion relation of the edge modes for energies outside the spectrum of $\hat H_{1,0,0}$ are determined by the relation $W(k,\lambda) = 0$. This equation is analytically difficult to solve and so we content ourself here to some special solutions which are mostly numerically determined.

We start with flat band modes. There are two of them. Indeed
$$\Psi_{-m} = \begin{pmatrix}
0 \\ |k|+k  \end{pmatrix} e^{-|k|y},\quad \Psi_{m} = \begin{pmatrix}
|k|-k \\ 0 \end{pmatrix} e^{-|k|y}$$
are solutions to the eigenvalue equation $\mathring H(k) \Psi = \lambda \Psi$ with energy $\lambda$ which does not depend on $k$. In the first case, $\lambda = -m$ and $k$ must be positive, while in the second $\lambda = m$ and $k$ must be negative. Note that the first mode satisfies Dirichlet boundary conditions in the first component and $(k-\partial_y)\Psi_2(0)= 0$. So it is an edge mode of $\hat H_{a=1}$. This explains why $\hat H_{a=1}$ cannot be resolvent affiliated. The failure of $\hat H_{a=-1}$ to be resolvent affiliated is more subtle, as it has an edge mode with a dispersion relation which is only asymptotically flat. Note that 
the second solution above does not correspond to a mode of $\hat H_a$ for some $a$ as it 
satisfies Dirichlet boundary conditions in the second component and $(k+\partial_y)\Psi_1(0)= 0$.

In general, the dispersion relation for edge modes is given implicitly by the equation $W(k,\lambda(k))=0$ with real $\lambda(k)$ outside of the spectrum of $\hat H_{1,0,0}(k)$. For the operator $\hat H_{1,0,0}$ it needs to be solved directly, that is, through the eigenvalue equation $H^*\Psi = \lambda\Psi$ with eigenfunctions $\Psi$ decaying at $y=+\infty$ and satisfying the boundary conditions $\Psi_1(k,0)=\Psi_2(k,0) = 0$. 
The solutions (\ref{eq-base}) remain valid for $\lambda$ outside of the spectrum of 
$\mathring H(k)$ and we can express
$\Psi$ in the basis , 
$\Psi = c_+\Psi_{+,\lambda} +  c_-\Psi_{-,\lambda}$ to see that $\Psi$ satisfies Dirichelt boundary conditions if and only if 
$$c_+\phi_{+,\lambda} +  c_-\phi_{-,\lambda}=0.$$
This means that the exterior product $\phi_{+,\lambda} \wedge \phi_{-,\lambda}$ must vanish, leading to the relation
$(\nu_++\lambda)(\nu_--\lambda) = (\mu_--k)(\mu_++k)$
which is equivalent to
\begin{equation}\label{eq-disp-D}
m+\lambda =- \epsilon(\mu_++k)(\mu_-+k).
\end{equation}
We have to keep in mind that $\mu_\al$ and $\nu_\al$ depend on 
$\lambda$ through $\zeta_\al$. Note that $\mu_+(k,\lambda)\mu_-(k,\lambda)\geq \frac{|m|}{|\epsilon|}+2|k|^2$ if $\lambda=0$ with equality if also $k=0$.

Therefore, if $m$ and $\epsilon$ have the same sign then $\lambda$ cannot be $0$ and we do not have any spectral flow through $0$. 

Suppose that  $m$ and $\epsilon$ have opposite sign. Then $(\lambda,k)=(0,0)$ is a solution of (\ref{eq-disp-D}) and moreover, it is the only solution when $\lambda=0$. By deriving this relation w.r.t.\ $k$ we see that 
the curve $\lambda(k)$ associated to this solution satisfies $-\epsilon \lambda'(0)>0$. We conclude that the spectral flow is equal to minus the sign of $\epsilon$. 

We thus see that the simple bulk edge correspondence (Hatsugai's relation) $\sigma_b=\sigma_e$ holds in both cases, when the signs of $m$ and $\epsilon$ are equal and when they are opposite. Note that the relation itself is different in the two cases.

For the family of boundary conditions \eqref{eq:reg_dirac_bc} the dispersion relation is given by $\det W(k,\lambda) = 0$ which amounts to
\begin{equation}\label{eq-dispertion}
(\nu_++\lambda)(\mu_-+k)(\mu_--ak)= 
(\nu_-+\lambda)(\mu_++k)(\mu_+-ak) 
\end{equation}
If $a=1$ then this equation simplifies to 
$(\nu_++\lambda)\zeta_-=(\nu_-+\lambda)\zeta_+$
which has solution $\lambda = -m$. However, $-m$ lies in the spectrum of $\hat H_{1,0,0}$ so that the formula  $\det W(k,\lambda) = 0$ is no longer justified. Yet our explicit calculation above shows, that there is an edge mode with $\lambda = -m$, but only for $k>0$. 

A full-fledged analysis of the possible cases is beyond the scope of this paper. Figure~\ref{fig-1} shows the numerical solution to this relation for various values of $a$. It allows to read off the spectral flows which are summarized in Table~~\ref{fig:specflow_regdirac}.
 \begin{figure}\label{fig-1}
\centering
\includegraphics[width=1\textwidth]{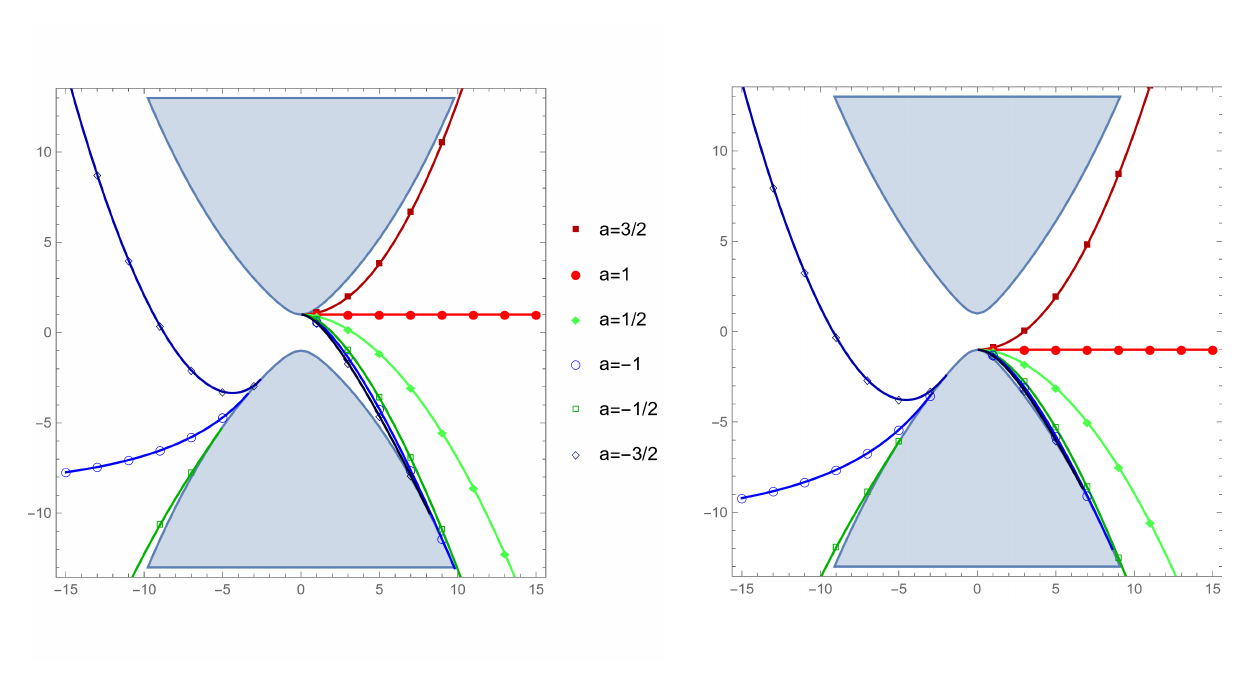}
\caption{Numerical solutions of \eqref{eq-dispertion}
for representative values of $a$, $\epsilon=0.1$ and $m=-1$ (left) and $m=1$ (right).}
\end{figure}
\begin{table}
	\label{fig:specflow_regdirac}
	\begin{tabular}{ r | c  c }
		& $m=-1$ & $m=+1$ \\
		\hline	
		$(K,L,M)=(1,0,0)$ & $-1$ & 0 \\
		$a > 1$ & $-2$ & $-1$ \\
		$-1< a < 1$ & $-1$ & $0$ \\
		$a < -1$ & $0$ & $1$ \\
		\hline
		$\sigma_b$ & $-1$ & 0
	\end{tabular}
	\medskip
	\caption{Spectral flows for the regularized Dirac Hamiltonian on half-space with Dirichlet and with boundary conditions \eqref{eq:reg_dirac_bc}. The last line is the bulk Chern number (\ref{eq-bulk-ch}). Here $\epsilon = 0.1$.}
\end{table}

Those results are in perfect agreement with Theorem~\ref{th:bulk_edge_pluscorrection}, namely the difference of first with the second column in the table is the difference of the bulk Chern numbers. We can see that in the space of resolvent-affiliated boundary conditions there are at least three connected components with non-trivial relative winding numbers. Note in particular that the Hatsugai relation holds for the Dirichlet-Dirichlet boundary condition $(K,L,M)=(1,0,0)$ as well as the connected component $-1 <a < 1$ which contains the Dirichlet-Neumann boundary condition (as the case $a=0$), since we can assert strong affiliation by Proposition~\ref{prop:dd_and_dn_affiliation} and thus the standard bulk-edge correspondence by Theorem~\ref{th:bbc_strongly_affiliated}.

Jud and Tauber \cite{JudTauber2025} have studied the boundary conditions (\ref{eq-bc}) in full detail, also with the aim to see when the Hatsugai relation fails. However, their definition of edge invariant is different from ours. Instead of considering the spectral flow of the eigenvalues across $0$ energy, they count the number of eigenvalues which emanate from the upper bulk spectrum minus the number of eigenvalues which get absorbed into the upper bulk spectrum when varying $k$. This number $n_b$ need not to coincide with our spectral flow. It is then compared to the Chern number of the upper bulk band and their naive expectation of bulk edge correspondence would be that these two numbers coincide. What they find is that there is a quantity $w_\infty$, called the anomaly at infinite energy, which has to be added to their edge invariant $n_b$ to correct their version of bulk edge correspondence. The anomaly at infinity is derived from a relative version of Levinson's theorem. Both, for the definition of their edge invariant and for the anomaly at infinity, the existence of a Brillouin zone (translational symmetry) is crucial. This is rather different from our methods which allow also for a definition of the corrected edge invariant for systems without translation invariance. 

\subsection{Shallow water model}
\label{sec-shallow}
We consider an effective model for shallow water waves introduced in \cite{DelplaceEtAl} to derive a topological explanation of equatorial waves. It has subsequently been under intense study since it provides ample examples of (apparent) violations of bulk-edge correspondence \cite{
TDV19,TDV20,GJT21,TT21,BalYu,GrafTarantola2025}.

In the Hamiltonian formulation the model is described by
the matrix valued Hamiltonian
\begin{equation}
	\label{eq:shw_bulk}
	H_{f,\nu}=\begin{pmatrix}
		0 & p_1 & p_2 \\ 
		p_1 & 0 & \imath(f - \nu p^2)\\
		p_2 & -\imath(f - \nu p^2) & 0
	\end{pmatrix}
\end{equation}
with $f,\nu \in \RM$ the Coriolis parameter and odd viscosity respectively. One has a spectral gap $\sigma(H_{m,\epsilon})=(-\infty, -\abs{f}]\cup [\abs{f},\infty)$. The term proportional to $\nu$ has a similar purpose as the second order term in the regularized Dirac operator: For $\nu\neq 0$ the Hamiltonian is strongly affiliated as can be checked by explicit computation in the Fourier representation. While the symbol is not elliptic one still has $M_3(\Aa_b^\sim)$-resolvent-affiliation for $\nu\neq 0$, namely
\begin{equation}
\label{eq:sh_res_aff}(H_{f,\nu}+\imath)^{-1} \in 
\begin{pmatrix} 
	-\imath & 0 & 0 \\
	0 & 0 & 0\\
	0 & 0 & 0 \end{pmatrix} + M_3(C_0(\RM^2))
\end{equation}
which has a non-vanishing scalar part. This is no longer the case if $\nu=0$, which makes perturbation theory more difficult. Nevertheless, one can check by explicitly computing the resolvents that
$$(H_{f,\nu}+\imath)^{-1}-(H_{f',\nu}+\imath)^{-1} \in M_3(C_0(\RM^2)),$$
for all $f,f',\nu \in \RM$. In particular Hamiltonians with different Coriolis parameter are resolvent-comparable even if $\nu=0$.

For $\nu\neq 0$ the bounded transform of $H_{f,\nu}$ belongs to $M_3(\Aa_b^\sim)$, that is, strong affiliation holds and one can compute 
$$\langle \Ch_{2}, [P_{<0}(H_{f,\nu})]_0\rangle = -\sgn(f)-\sgn(\nu).$$
For the relative Chern number one finds
$$\langle \Ch_{2}, [P_{<0}(H_{f_1,\nu}),P_{<0}(H_{f_2,\nu})]_0\rangle = \sgn(f_2)-\sgn(f_1)$$
which also holds for $\nu=0$.

For $\nu\neq 0$ one can find boundary conditions which lead to a self-adjoint operator which is $M_3(\hsalgebra^\sim)$-resolvent-affiliated: If we restrict $H_{f,\nu}$ to the positive half-space with Dirichlet-Dirichlet boundary conditions
$$\Psi_1\rvert_{x_2 = 0}=0, \quad \Psi_2\rvert_{x_2 = 0}=0$$
we get a self-adjoint half-space operator $\hat{H}_{f,\nu,DD}$. Viewing this operator as a perturbation of its second-order part one can obtain a norm-convergent series expansion for its resolvent from which one can read off that $\hat{H}_{f,\nu,DD}$ is $M_3(\hsalgebra^\sim)$-resolvent-affiliated with the same non-vanishing scalar part as in \eqref{eq:sh_res_aff}. However, $\hat{H}_{f,\nu,DD}$ cannot be strongly affiliated as it turns out that Hatsugai's relation is not satisfied with these boundary conditions. 

By a simpler perturbation argument one may then conclude that $\hat{H}_{f_1,\nu,DD}$ and $\hat{H}_{f_2,\nu,DD}$ for $f_i \neq 0$ are $M_3(\hsalgebra)$-comparable, hence one can apply the relative bulk-edge correspondence of Theorem~\ref{th:rel_bbc} to conclude
$$\langle \Ch_{1}, [\hat{H}_{f_1,\nu,DD}-\epsilon,\hat{H}_{f_2,\nu,DD}-\epsilon]_1\rangle  = \sgn(f_2)-\sgn(f_1)$$
Here we shift the half-space Hamiltonian by some small $\epsilon>0$ since there is bulk spectrum at energy $0$; the edge states we are interested in lie in the bulk gap above that.

For $\nu\neq 0$ the same right-hand side gives the spectral flow 
$$\langle \Ch_{1}, [\hat{H}_I-\epsilon]_1\rangle  = \sgn(f_2)-\sgn(f_1)$$
in an interface model
$$\hat{H}_I=\begin{pmatrix}
	0 & p_1 & p_2 \\ 
	p_1 & 0 & \imath(f(X_2) - \nu p^2)\\
	p_2 & -\imath(f(X_2) - \nu p^2) & 0
\end{pmatrix}
$$
where the Coriolis parameter is now a function which interpolates between $f_1$ and $f_2$. This can be derived by arguing exactly as in Theorem~\ref{th:bulk_interface_pluscorrection} since the Coriolis term is orthogonal to the scalar part of the bulk resolvent, hence $\hat{H}_I$ is $M_3(\interfacealgebra)$-comparable to $H_{f,\nu}$. As seen in \cite{BalYu} the number of interface modes for $\nu=0$ can be different if $f$ has jump discontinuities, thereby violating the bulk-interface correspondence. It is not clear at this point whether those violations also have an interpretation in terms of K-theory.

\appendix

\section{Numerical pairings of \texorpdfstring{$K$}{K}-theory classes}
\label{sec:numerical_pairing}
It is well-known that $K$-theory classes can be paired with cyclic cocycles to obtain numerical invariants. In the applications  we consider, the pairings may be interpreted as linear response coefficients such as conductivities, the bulk boundary correspondence thus relating bulk with boundary conductivities. We use here the formulation of cyclic cocycles as characters of cycles.  
\begin{definition}
	An $n$-cycle $(\Omega, d, \varphi)$ over an algebra $A$ is given by 
	\begin{itemize}
		\item[{\rm (i)}] A differential graded algebra (DGA) $(\Omega,d)$. This is a graded algebra $\Omega = \bigoplus_{j\geq 0} \Omega^j$ with $\Omega^{j_1} \Omega^{j_2} \subset \Omega^{j_1+j_2}$ and a graded differential $d: \Omega \to \Omega$, i.e.\ a linear map with  $d\Omega^j \subset \Omega^{j+1}$, $d^2=0$ and $d(ab)= (da)b + (-1)^{\mathrm{deg}(a)} a (db)$ for homogeneous elements.
		
		\item[{\rm (ii)}] A homomorphism $\rho: A \to \Omega^0$. 
		
		\item[{\rm (iii)}] A closed graded trace $\varphi: \Omega \to \CM$ of top degree $n$, which means $\varphi(ab)=(-1)^{\mathrm{deg}(a) \mathrm{deg}(b)}\varphi(ba)$, $\varphi(da)=0$ and finally $\varphi(a)=0$ for $\mathrm{deg}(a) > n$.
		
	\end{itemize}
\end{definition}
The character of an 
$n$-cycle over $A$ is 
$$
\tilde{\varphi}(a_0, \ldots ,a_n) \;:=\; \varphi\big(\rho(a_0)d\rho(a_1)\cdots d\rho(a_n)\big)
$$
and one can show that any cyclic $n$-cocycle can be written as the character of some $n$-cycle $(\Omega,d,\varphi)$ \cite{Connes-book,SSt}.

A dense subalgebra $A\subset \Aa$ of a $C^*$-algebra $\Aa$ is called smooth if $A$ is closed under holomorphic functional calculus of $\Aa$. Under this condition any class in $K_i(\Aa)$ can be represented by an element in  $ M_N(A^\sim)$.
The pairing of the character $\tilde\varphi$ of an $n$-cycle with an element of $K_i(\Aa)$ is defined with the help of such a representative:
\begin{itemize}
\item
If $n=2k$ is even and $e \in M_N(A^\sim)$ a projection defining the class $[e]_0-[s(e)]_0\in K_0(\Aa)$ then 
$$\langle \tilde \varphi,[e]_0-[s(e)]_0\rangle  =
\left(\frac{1}{2\pi\imath}\right)^k \frac{1}{k!}\, \tilde{\varphi}(e-s(e), e-s(e), \ldots , e-s(e)).$$
\item If $n=2k+1$ is odd and $u \in M_N(A^\sim)$ a unitary defining the class $[u]_1\in K_1(\Aa)$ then 
$$\langle \tilde\varphi, [u]_1\rangle = 2^{-(2k+1)} \left(\frac{1}{\pi \imath}\right)^{k+1}\frac{1}{(2k+1)!!} \tilde\varphi(u^*-1, u-1, u^*-1, \ldots , u-1).$$
\end{itemize}
The normalization constants are chosen such that the pairings are real-valued and in certain cases integers.

Let us now consider the specific class of cocycles which play a major role in solid state physics, the so-called Chern-cocycles:

\newcommand{\mean}{\mu}
\begin{example}
	\label{ex:chern_cocycles}
	\rm{ Consider a twisted crossed product 
		algebra $\Aa = \Cc\rtimes_{\alpha,\gamma}\RM^d$ which we discussed in Section~\ref{ssec:algebras}. Mostly $\Cc = C_0(\Omega)$, and $\Omega$ comes with a translation-invariant measure $\mean$. What is needed to define these Chern-cocycles are, a densely defined $\alpha$-invariant trace $\check \Tt$ on $\Cc$, and $d$ densely defined commuting derivations. The trace $\check\Tt$ extends to a lower-semicontinuous trace $\Tt$ on $\Aa$ given on the generators (\ref{eq:linear_span_crossedproduct}) by 
		$$\Tt( c\, (T_1) ... f_d(T_d)) = \check\Tt( c)\, \prod_{i=1}^d \int_{\RM}  f_i(t)dt.$$ 
		In the case $\Cc = C_0(\Omega)$ the trace $\check\Tt$ is given by integration over $\Omega$ w.r.t.\. the measure $\mean$. 
		The derivations are the derivatives of the action which is dual to $\alpha$. More precisely,  
		$$\nabla_i (c f_1(T_1) ... f_d(T_d)) = c f_1(T_1) \cdots f'_i(T_i) \cdots f_d(T_d).$$
		In the important special case where $\Omega=\{*\}$ is just one point and $\alpha$ trivial, one has $\Aa\simeq C_0(\RM^d)$, $\nabla$ is the usual gradient operator
		and $\Tt$ is the integral over $\RM^d$.
		Let $\Lambda \RM^d $ be the exterior algebra over $\RM^d$, with generators  $e_1,...,e_d$ (of degree one). 
		The algebra $A\otimes \Lambda \RM^d$ is a differential graded algebra with $\mathrm{deg}(a\otimes \omega)=\mathrm{deg}(\omega)$ and the graded derivation $$d(a \otimes \omega) = \sum_{i=1}^n \nabla_{i} a \otimes e_{i}\wedge\omega.$$
		The top Chern-cocycle is the character of the $d$-cycle $(A\otimes \Lambda\RM^d,d,\varphi)$ where the graded trace is given by
		\begin{equation}
			\label{eq:ch_trace}
			\varphi_A(a\otimes \omega) = \begin{cases}
				\Tt(a) & \text{if }\omega=e_1\wedge...\wedge e_d\\
				0 & \text{otherwise}.
			\end{cases}
		\end{equation}
		
		As a result, the top Chern-cocycle $\Ch_d=\tilde\varphi_A$ is given by
		$$\Ch_d(a_0,\cdots,a_d) = c_d \, \sum_{\sigma\in S_d} \mathrm{sgn}(\sigma) \Tt\big(a_0 \nabla_{\sigma(1)} (a_{\sigma(1)}) \cdots \nabla_{\sigma(d)} (a_{\sigma(d)})\big)$$
		
		Chern cocycles of lower degree $n$ can be obtained if one restricts the differential graded algebra to a subspace $\RM^n$ of $\RM^d$ 
		and correspondingly the graded trace. In this case the volume element $e_1\wedge...\wedge e_d$ in formula (\ref{eq:ch_trace}) is to be replaced by a choice of volume element in the subspace $\RM^n$.
		
		For the Chern-cocycles to define a pairing with $K_i(\Aa)$ its domain must include a smooth subalgebra $A$ of $\Aa$. In the given setting there are many possible choices, a convenient one is to consider as in \cite{SSt} the elements of $\Aa$ for which all of the seminorms
		$$\norm{a}_{\scrA_m}:= \sum_{\substack{j \subset \NM^d\\ \abs{j}\leq m}} \norm{\nabla^j a}+ \Tt(|\nabla^j a|)$$
		are finite, where  $\nabla^j=\nabla_{e_1}^{j_1}...\nabla_{e_d}^{j_d}$. This gives a dense Fr\'echet subalgebra $A$ which is smooth in $\Aa$. With the respective normalization constants the pairing with the $K$-groups of $\Aa$ is thus given by as follows:
		If $d$ is even and  $e \in M_N(A^\sim)$ a projection defining a $K_0$-class then
		\begin{equation}\label{eq:even-pairing}
			\langle \Ch_d,[e]_0-[s(e)]_0\rangle  =c_d\,\sum_{\sigma\in S_d} \mathrm{sgn}(\sigma)
			\Tt \big((e-s(e))\nabla_{\sigma(1)} (e) \cdots \nabla_{\sigma(d)} (e)\big)
		\end{equation}
		while, if $d$ is odd and $u \in M_N(A^\sim)$ a unitary defining a $K_1$ class
		\begin{equation}\label{eq:odd-pairing}
			\langle \Ch_d,[u]_1\rangle  =c_d\,\sum_{\sigma\in S_d} \mathrm{sgn}(\sigma)
			\Tt \big((u^*-1)\nabla_{\sigma(1)} (u) \nabla_{\sigma(2)} (u^*)\cdots \nabla_{\sigma(d)} (u)\big)
		\end{equation}
		
	}$\diamond$
\end{example}

Above we have used the standard picture to represent $K$-theory elements. As seen in Section~\ref{sec:ktheory} have seen that otherwise there is a way to represent $K$-theory elements by means of pairs of Hamiltonians, defining elements of 
the pair algebra 
$\PM(\Bb,\Aa)$ where $\Bb$ is a unital $C^*$-algebra which contains $\Aa$ as an ideal. To adapt the pairing with an existing cocycle over $\Aa$ to this case we suppose that 
$\Bb$ contains a smooth unital subalgebra $B$, and that $B$ contains a smooth subalgebra $A$ of $\Aa$ as an ideal. Let then $(\Omega_B,d)$ be a differential graded algebra over $B$ and consider the ideal
$$\Omega_A := \Omega_B (\rho(A)+d\rho(A))\Omega_B$$
which is a differential graded algebra over $A$. 
Define the graded algebra $\Omega=\bigoplus_{j\geq 0} \Omega^j$ with 
	$$\Omega^{j} := \{(\omega_1,\omega_2)\in \Omega^j_B\oplus \Omega^{j}_B:\, \omega_1-\omega_2\in \Omega^{j}_A\}$$
which we equip with the differential $d(\omega_1,\omega_2) = (d\omega_1,d\omega_2)$. This then is a differential graded algebra over 
$$\PM(B,A)=\{(b_1,b_2)\in B\oplus B:\, b_1-b_2\in A\}.$$ Finally let $\varphi_A:\Omega_A^n\to\CM$ be a linear map which is graded cyclic in the stronger sense 
	\begin{equation}
		\label{eq:gradedtrace_extension}
		\varphi_A(\omega_B \omega_A)= (-1)^{\mathrm{deg}(\omega_A)\mathrm{deg}(\omega_B)}\varphi_A(\omega_A \omega_B)
	\end{equation}
where $\omega_A\in \Omega_A^k$ and $\omega_B \in \Omega_B^{n-k}$.
\begin{proposition}
\label{prop:extending_n_cycles}
In the above setting $$\varphi_{\PM(B,A)}((\omega_1,\omega_2)):= \varphi_A(\omega_1-\omega_2)$$
defines a graded trace on the differential graded algebra $(\Omega,d)$ over 
$\PM(B,A)$.
\end{proposition}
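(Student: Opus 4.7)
The goal is to verify three properties of $\varphi_{\PM(B,A)}$: well-definedness (including the fact that it lands in degree $n$), the graded cyclic (trace) property, and closedness under $d$. Well-definedness is essentially built into the construction: for $(\omega_1,\omega_2)\in\Omega^n$ we have $\omega_1-\omega_2\in\Omega^n_A$ by definition of $\Omega^n$, so $\varphi_A(\omega_1-\omega_2)$ is defined; and vanishing above top degree is inherited directly from $\varphi_A$. So the real content is cyclicity and closedness.

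The main step will be the graded trace identity. Given two homogeneous elements $(\omega_1,\omega_2)\in\Omega^k$ and $(\omega_1',\omega_2')\in\Omega^{k'}$ with $k+k'=n$, the product in $\Omega$ is componentwise, so I need to show
$$\varphi_A\big(\omega_1\omega_1'-\omega_2\omega_2'\big) \;=\; (-1)^{kk'}\varphi_A\big(\omega_1'\omega_1-\omega_2'\omega_2\big).$$
The key algebraic trick is to split the left-hand argument into two pieces in which one factor lies in $\Omega_A$:
$$\omega_1\omega_1'-\omega_2\omega_2' \;=\; \omega_1(\omega_1'-\omega_2') + (\omega_1-\omega_2)\omega_2'.$$
Since $\omega_1'-\omega_2'$ and $\omega_1-\omega_2$ lie in $\Omega_A$, the strengthened graded cyclicity (\ref{eq:gradedtrace_extension}) applies to each summand separately, producing a common sign $(-1)^{kk'}$; recombining the two pieces gives exactly $(-1)^{kk'}\varphi_A(\omega_1'\omega_1-\omega_2'\omega_2)$. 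This is where the stronger cyclicity hypothesis on $\varphi_A$ is essential, since the individual terms $\omega_1\omega_1'$ etc.\ need not belong to $\Omega_A$.

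For closedness, given $(\omega_1,\omega_2)\in\Omega^{n-1}$, we have $d(\omega_1,\omega_2)=(d\omega_1,d\omega_2)$ and
$$\varphi_{\PM(B,A)}(d(\omega_1,\omega_2)) \;=\; \varphi_A(d\omega_1-d\omega_2) \;=\; \varphi_A(d(\omega_1-\omega_2)).$$
Since $\omega_1-\omega_2\in\Omega^{n-1}_A$ and $\Omega_A$ is a differential graded subalgebra, this reduces to the closedness of $\varphi_A$ on $\Omega_A$, which is part of the standing hypothesis on $\varphi_A$ as a graded trace of top degree $n$. The anticipated obstacle is purely bookkeeping: one has to be careful that the decomposition used in the cyclicity step respects degrees and that the signs match; the strengthened cyclic property (\ref{eq:gradedtrace_extension}) is tailored precisely for this, so no genuine technical difficulty arises once that identity is in place.
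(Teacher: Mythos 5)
Your proof is correct and follows essentially the same route as the paper: the identical decomposition $\omega_1\omega_1'-\omega_2\omega_2' = \omega_1(\omega_1'-\omega_2') + (\omega_1-\omega_2)\omega_2'$ followed by the strengthened cyclicity \eqref{eq:gradedtrace_extension} applied to each summand is exactly the paper's argument. The additional checks of well-definedness and closedness are routine and consistent with what the paper leaves implicit.
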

\begin{proof} Let $\omega=(\omega_1,\omega_2)\in \Omega^k$, $\tilde{\omega}=(\tilde{\omega}_1,\tilde{\omega}_2)\in \Omega^{n-k}$ then
\begin{align*}
	\varphi_{\PM(B,A)}(\omega,\tilde{\omega}) &= \varphi_A(\omega_1 \tilde{\omega}_1- \omega_2\tilde{\omega}_2)\\&= \varphi_A(\omega_1 (\tilde{\omega}_1-\tilde{\omega}_2)+(\omega_1-\omega_2) \tilde{\omega}_2)\\&= (-1)^{\mathrm{deg}(\omega)\mathrm{deg}(\tilde{\omega})}\varphi_A((\tilde{\omega}_1-\tilde{\omega}_2)\omega_1 +\tilde{\omega}_2(\omega_1-\omega_2) )\\
	&=(-1)^{\mathrm{deg}(\omega)\mathrm{deg}(\tilde{\omega})}\varphi_{\PM(B,A)}(\tilde{\omega},\omega).
\end{align*}
the third equation following from 
(\ref{eq:gradedtrace_extension}).
\end{proof}
As $A$ embeds into $\PM(B,A)$,
$A\stackrel{i}\hookrightarrow \PM(B,A)$, $i(a) = (a,0)$,
the above $n$-cycle can be considered as an extension of the $n$-cycle $(\Omega_A,d,{\varphi_A})$ over $A$. This extension is exactly what one needs to compute the pairings with K-theory under the isomorphism \eqref{eq:K-alt}:

\begin{corollary}
Assume that $A$ is smooth in the $C^*$-algebra $\Aa$ and $\PM(B,A)$ is smooth in $\PM(\Bb,\Aa)$. The pairings with the characters $\tilde \varphi_A, \tilde \varphi_{\PM(B,A)}$ defines homorphisms $\langle\tilde \varphi_A,\cdot\rangle: K_i(\Aa) \to \CM$ and $\langle\tilde \varphi_{\PM(B,A)},\cdot\rangle: K_i(\PM(\Bb,\Aa)) \to \CM$.

For the injection $i_*: K_i(\Aa)\to K_i(\PM(\Bb,\Aa))$ as in \eqref{eq:K-pair-splitting} one has
$$ \langle\tilde \varphi_A,\cdot\rangle = \langle\tilde \varphi_{\PM(B,A)},i_*(\cdot)\rangle$$
as maps on $K_i(\Aa)$. Moreover,
$\langle\tilde \varphi_{\PM(B,A)},{t_*(K_i(\Bb))}\rangle = 0$
and hence $\langle\tilde \varphi_{\PM(B,A)},\cdot\rangle$ descends to a homomorphism 
$$\langle\tilde \varphi_{\PM(B,A)},\cdot\rangle:K_i(\PM(B,A))/t_*(K_i(\Bb))\to \CM$$
which is equivalent to $\langle\tilde \varphi_A,\cdot\rangle$ under the isomorphism \eqref{eq:K-alt}.
\end{corollary}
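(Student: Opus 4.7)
The first claim—that the two characters give well-defined homomorphisms on the corresponding $K$-groups—is standard: the characters $\tilde\varphi_A$ and $\tilde\varphi_{\PM(B,A)}$ are cyclic cocycles, and since $A$ is smooth in $\Aa$ and $\PM(B,A)$ is smooth in $\PM(\Bb,\Aa)$, every element of $K_i(\Aa)$, resp. $K_i(\PM(\Bb,\Aa))$, is represented by a projection or unitary in matrices over the smooth subalgebra on which Connes' pairing formulas apply; independence of representative follows from the standard homotopy invariance of cyclic pairings.

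For the naturality identity $\langle\tilde\varphi_A,x\rangle=\langle\tilde\varphi_{\PM(B,A)},i_\ast x\rangle$ I would first make $i_\ast$ explicit. Since $\PM(B,A)$ is unital with unit $(1_B,1_B)$, the inclusion $i(a)=(a,0)$ extends to $A^\sim\to\PM(B,A)$ by $(a,\lambda)\mapsto(a+\lambda,\lambda)$. Hence a projection $e\in M_N(A^\sim)$ with scalar part $s(e)\in M_N(\CM)$ is sent to $(e,s(e))$, and $[e]_0-[s(e)]_0$ is sent to $[(e,s(e))]_0-[(s(e),s(e))]_0$. Plugging $(e,s(e))$ into the even pairing formula and using that the differential is diagonal, $d(e,s(e))=(de,ds(e))$, together with $ds(e)=0$ (since $s(e)\in M_N(\CM)\subset B$ is a unit multiple so $d$ annihilates it), one obtains the character value as $\varphi_{\PM(B,A)}\bigl(((e-s(e))(de)^{2k},0)\bigr)$. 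By the definition $\varphi_{\PM(B,A)}((\omega_1,\omega_2))=\varphi_A(\omega_1-\omega_2)$ of Proposition~\ref{prop:extending_n_cycles} this collapses to $\varphi_A((e-s(e))(de)^{2k})$, which is precisely $\langle\tilde\varphi_A,[e]_0-[s(e)]_0\rangle$ up to the common constant. The odd case proceeds identically: $u\mapsto(u,s(u))$ with $ds(u)=0$ reduces $\tilde\varphi_{\PM(B,A)}$ on $i_\ast[u]_1$ to $\tilde\varphi_A$ on $[u]_1$.

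The vanishing of $\langle\tilde\varphi_{\PM(B,A)},\cdot\rangle$ on $t_\ast K_i(\Bb)$ is the structurally simplest step: since $t(b)=(b,b)$ and $d(\omega,\omega)=(d\omega,d\omega)$, any element built algebraically from images of $t$ has the form $(\eta,\eta)$, and $\varphi_{\PM(B,A)}((\eta,\eta))=\varphi_A(\eta-\eta)=0$. Applied to the defining formulas this shows both $\langle\tilde\varphi_{\PM(B,A)},t_\ast[p]_0\rangle=0$ and $\langle\tilde\varphi_{\PM(B,A)},t_\ast[w]_1\rangle=0$; in passing it also verifies that the second summand $[(s(e),s(e))]_0=t_\ast[s(e)]_0$ contributes zero in the naturality computation above.

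The final descent statement is then formal. A group homomorphism $K_i(\PM(\Bb,\Aa))\to\CM$ vanishing on $t_\ast K_i(\Bb)$ factors through $K_i(\PM(\Bb,\Aa))/t_\ast K_i(\Bb)$, and under the isomorphism \eqref{eq:K-alt} this factored map pulls back to $\langle\tilde\varphi_A,\cdot\rangle$ by the naturality already established. I expect the only real obstacle to be bookkeeping in the naturality step: one must handle the unitizations $A^\sim$ versus the unital $\PM(B,A)$ cleanly, verify that $d$ annihilates the scalar part of representatives (which is automatic from the fact that scalars sit inside $B$ as constants with $dB\subset\Omega^1_B$ vanishing on them), and confirm that the representation $\rho$ used implicitly in Proposition~\ref{prop:extending_n_cycles} is compatible with $\rho_{\PM(B,A)}\circ i=(\rho_A,0)$. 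Once these compatibilities are laid out, the entire statement reduces to the single identity $\varphi_{\PM(B,A)}((\omega,0))=\varphi_A(\omega)$.
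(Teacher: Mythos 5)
Your proposal is correct and follows essentially the same route as the paper's (much terser) proof: naturality reduces to the identity $\varphi_{\PM(B,A)}((\omega,0))=\varphi_A(\omega)$, i.e.\ to $\tilde\varphi_{\PM(B,A)}$ extending $\tilde\varphi_A$ via the inclusion $\Omega_A\hookrightarrow\Omega$, and the vanishing on $t_*K_i(\Bb)$ comes from diagonal representatives $(b,b)$ giving $\varphi_A(\eta-\eta)=0$. You merely spell out the bookkeeping (the explicit form $i^\sim(e)=(e,s(e))$ and the cancellation of $ds(e)$) that the paper leaves implicit.
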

\begin{proof}
The first statement is clear since $\tilde{\varphi}_{\PM(B,A)}$ is an extension of the original cocycle $\tilde{\varphi}$ if one considers $\Omega_A$ as a subalgebra of $\Omega_{\PM(B,A)}$.

For the remaining statements, note that every class $x\in t_*(K_i(\Bb))\subset K_i(\PM(\Bb,\Aa))$ is represented by a pair $(b,b)$ with a representative in $b \in M_n(\Bb)$, hence it is easy to see that $\langle \varphi_{\PM(B,A)},x\rangle=0$.
\end{proof}

\begin{example}
\rm{
We can apply this prescription to extend the Chern cocycles in the setting of Example~\ref{ex:chern_cocycles} to a pair algebra $\PM(B,A)$. 
Here we take for $B$ all elements of the multiplier algebra $\mult(\Aa)$ which are infinitely often norm-differentiable w.r.t.\ to $\nabla_1,...,\nabla_d$.
As the derivations $\nabla_i$ are well defined on $B$ we obtain a differential graded algebra $(\Omega_B,d)$ with
$\Omega_B= B \otimes \Lambda\RM^d$ and with formally the same differential $d$. The functional $\varphi_A$ from (\ref{eq:ch_trace}) satisfies \eqref{eq:gradedtrace_extension}, since the trace property
$\Tt(b a) = \Tt(a b)$ still holds for all $b\in B$ if $a\in A$. We thus obtain a $d$-cycle $(\Omega,d,\varphi_{\PM(B,A)})$ over $\PM(B,A)$.
One can show that $\PM(B,A)$ is smooth in $\PM(\Bb,\Aa)$ where $\Bb =\overline{B}$ is the $C^*$-closure in $\mult(\Aa)$.

The pairing with the $K$-groups of $\Aa$ can thus be formulated as follows:
If $d$ is even and  $p,q \in M_N(B)$ projections such that $p-q\in M_N(A)$  then 
\begin{equation}
\label{eq:chern_difference}
\langle \Ch_d,[p,q]_0\rangle  =c_d\,\sum_{\sigma\in S_d}(-1)^\sigma
\Tt \big(p\nabla_{\sigma(1)} (p) \cdots \nabla_{\sigma(d)} (p)-q\nabla_{\sigma(1)} (q) \cdots \nabla_{\sigma(d)} (q)
\big)
\end{equation}
There is a similar formula if $d$ is odd and  $u,v \in M_N(B)$ unitaries such that $u-v\in M_N(A)$ which we do not write out, as it is more efficient to use $[u,v]_1=[uv^*,1]_1$ and to employ (\ref{eq:odd-pairing}) with $uv^*$ in place of $u$.
}$\diamond$
\end{example}

For the special case $\alpha=\mathrm{id}$ of a trivial action and trivial twist $\gamma$ there is also a more geometric way extend the cocycles:

\begin{example}
\label{ex:sphere_cocycle}
{\rm We consider the algebra $\Aa=C_0(\RM^d)$ with the dense subalgebra $A=\Ss(\RM^d)$, the Schwartz functions, so that 
$A\otimes\Lambda\RM^d$ is a dense subalgebra of the exterior algebra over $\RM^d$. The top Chern cocycle is an integral over a differential form
$$\Ch_d(a_0,\cdots,a_d) = \int_{\RM^d} a_0 da_1\cdots da_d $$
where $d$ is the exterior derivative. 

Consider $B=C^\infty_{vg}(\RM^d)$ the smooth functions whose derivatives lie in $C^\infty_0(\RM^d)$ (vg stands for vanishing gradient at infinity). Choosing a homeomorphism $\sigma: \RM^d \to H\SM^d$ with the upper half-sphere $H\SM^d=\SM^d\cap (\RM^d\times \RM_+)$ provides an isomorphism $C_0(H\SM^d)\simeq C_0(\RM^d)$. The $C^*$-closure $\Bb=\overline{B}$ corresponds to $C(H\SM^d)$ since $\Bb$ is exactly the algebra of continuous functions on $\RM^d$ which admit continuous radial limits. There is then an isomorphism $\PM(\Bb,\Aa)\simeq C(\SM^d)$ given by mapping the second component of the pair onto the lower half-sphere, which glues continuously since the image of $\Aa$ vanishes at the equator. If $\sigma$ is a diffeomorphism the DGA's $\Omega_A$ and $\Omega_B$ can be pushed forward to differential forms on $H\SM^d$ and the extended cocycle can be seen as
\begin{align*}(\Ch_v)^{\PM(B,A)}((\omega_1,\omega_2))&=\int_{H\SM^d} (\sigma_*(\omega_1)-\sigma_*(\omega_2))=\int_{\SM^d} \hat{\omega}
\end{align*}
where $\hat{\omega}$ is the form on $\SM^d$ obtained by gluing $\sigma_*(\omega_1)$ together with the reflection of $\sigma_*(\omega_2)$ onto the lower half-sphere. In this way the Chern cocycle associated to $\PM(B,A)$ can be thought of as the usual Chern cocycle on $\SM^d$.}$\diamond$
\end{example}

\end{document}